\providecommand{\tabularnewline}{\\}
\DeclareRobustCommand{\mklyxadded}[1]{\textcolor{lyxadded}\bgroup#1\egroup}
\DeclareRobustCommand{\mklyxdeleted}[1]{\textcolor{lyxdeleted}\bgroup\mklyxsout{#1}\egroup}
\DeclareRobustCommand{\mklyxsout}[1]{\ifx\\#1\else\sout{#1}\fi}
\numberwithin{equation}{section}
\numberwithin{figure}{section}
\theoremstyle{plain}
\newtheorem{thm}{\protect\theoremname}[section]
\theoremstyle{remark}
\newtheorem{rem}[thm]{\protect\remarkname}
\theoremstyle{plain}
\newtheorem{lem}[thm]{\protect\lemmaname}
\theoremstyle{remark}
\newtheorem*{rem*}{\protect\remarkname}
\theoremstyle{definition}
\newtheorem{example}[thm]{\protect\examplename}
\theoremstyle{plain}
\newtheorem{prop}[thm]{\protect\propositionname}
\theoremstyle{plain}
\newtheorem{cor}[thm]{\protect\corollaryname}
\theoremstyle{definition}
\newtheorem{defn}[thm]{\protect\definitionname}
\tikzset{
  schraffiert/.style={pattern=horizontal lines,pattern color=#1},
  schraffiert/.default=black
}
\tikzset{
    ultra thin/.style= {line width=0.1pt},
    very thin/.style=  {line width=0.2pt},
    thin/.style=       {line width=0.4pt},% thin is the default
    semithick/.style=  {line width=0.6pt},
    thick/.style=      {line width=0.8pt},
    very thick/.style= {line width=1.2pt},
    ultra thick/.style={line width=2.4pt}
}
\definecolor{hellgrau}{rgb}{0.93,0.93,0.93}
\definecolor{hellergrau}{rgb}{0.97,0.97,0.97}
\definecolor{hellgruen}{rgb}{0.6,1.35,0.5}
\definecolor{grau}{rgb}{0.93,0.93,0.93}
\definecolor{hellblau}{rgb}{0.8,0.8,2.0}
\definecolor{blau}{rgb}{0.3,0.5,2.0}
\definecolor{hellrot}{rgb}{2.0,0.6,0.6}
\definecolor{gruen}{rgb}{0.3,0.75,0.2}
\definecolor{rot}{rgb}{0.9,0.1,0.1}
\definecolor{orang}{rgb}{1.3,0.65,0}
\DeclareMathAlphabet{\mathpzc}{OT1}{pzc}{m}{it}
\newcommand{\Aa}{{\mathcal A}}
\newcommand{\Cc}{{\mathcal C}}
\newcommand{\Ff}{{\mathcal F}}
\newcommand{\Hh}{{\mathcal H}}
\newcommand{\Ll}{{\mathcal L}}
\newcommand{\Pp}{{\mathcal P}}
\newcommand{\supp}{{\operatorname{supp}}}			%support
\newcommand{\precd}{\prec\!\!\prec}					%double prec (intervals)
\newcommand{\strict}{\subseteq_{\text{strict}}} 	%strict inclusion
\newcommand{\sign}{\mbox{sign}}						%sign
\newcommand{\ind}{\mbox{ind}}						%index
\newcommand{\indB}{\mbox{ind}_B}						%B-index
\newcommand{\motw}{=_2}
\newcommand{\rbr}[1]{\left(#1\right)}
\newcommand{\sbr}[1]{\left[#1\right]}
\newcommand{\abs}[1]{\left|#1\right|}
\newcommand{\Hmm}[1]{\leavevmode{\marginpar{\tiny%
$\hbox to 0mm{\hspace*{-0.5mm}$\leftarrow$\hss}%
\vcenter{\vrule depth 0.1mm height 0.1mm width \the\marginparwidth}%
\hbox to 0mm{\hss$\rightarrow$\hspace*{-0.5mm}}$\\\relax\raggedright #1}}}
\definecolor{green}{RGB}{0, 180, 0}
\definecolor{cyan}{RGB}{0, 180, 180}
\definecolor{yellow}{RGB}{211,211,0}
\renewcommand{\subset}{\subseteq}
\providecommand{\corollaryname}{Corollary}
\providecommand{\definitionname}{Definition}
\providecommand{\examplename}{Example}
\providecommand{\lemmaname}{Lemma}
\providecommand{\propositionname}{Proposition}
\providecommand{\remarkname}{Remark}
\providecommand{\theoremname}{Theorem}
\begin{document}
\global\long\def\theenumi{\alph{enumi}}%

\global\long\def\theenumii{\arabic{enumii}}%

%%%%%%%%% Additional symbols from Yannik %%%%%%%%%%%%%%%%%%%%%%%%%%

%%%%%%%%% canonical symbols %%%%%%%%%%%%%%%%%%%%%%%%%%

\global\long\def\ui{\mathbf{\textrm{i}}}%
\global\long\def\ue{\mathbf{\textrm{e}}}%
\global\long\def\ud{\mathbf{\textrm{d}}}%
\global\long\def\sgn{\mathrm{sign}}%
\global\long\def\id{\mathbf{1}}%
\global\long\def\C{\mathbb{C}}%
\global\long\def\R{\mathbb{R}}%
\global\long\def\Q{\mathbb{Q}}%
\global\long\def\N{\mathbb{N}}%
\global\long\def\T{\mathbb{T}}%
\global\long\def\Z{\mathbb{Z}}%
\global\long\def\Id{\mathbf{\mathbb{I}}}%

%%%%%%%%% Caligraphic symbols %%%%%%%%%%%%%%%%%%%%%%%%%%

\global\long\def\Aa{\mathbf{\mathcal{A}}}%

\global\long\def\Cc{\mathbf{\mathcal{C}}}%

\global\long\def\Ee{\mathbf{\mathcal{E}}}%

\global\long\def\Ff{\mathcal{F}}%

\global\long\def\Hh{\mathcal{H}}%

\global\long\def\Ll{\mathcal{L}}%

\global\long\def\Pp{\mathcal{P}}%

%%%%%%%%% Paper specific symbols %%%%%%%%%%%%%%%%%%%%%%%%%%

\global\long\def\co{\textbf{c}}%

\global\long\def\cop{\textbf{c'}}%

\global\long\def\cok{\textbf{c}_{k}}%

\global\long\def\cm{[\textbf{c},m]}%

\global\long\def\cmn{[\textbf{c},m,n]}%

\global\long\def\Co{\mathscr{C}}%

\global\long\def\emp{\emptyset}%
\global\long\def\tr{\textit{tr}}%
\global\long\def\Nz{\N_{0}}%
\global\long\def\Nmo{\N_{-1}}%
\global\long\def\L{L}%

%%%%%%%%% More Paper specific symbols - from Yannik %%%%%%%%%%%%%%%%%%%%%%%%%%

\global\long\def\ext{\varrho}%

\global\long\def\ypi{\mu}%

\global\long\def\yPi{\mathcal{M}}%

\global\long\def\res{R}%

\global\long\def\Des{D}%

\global\long\def\paths{\Gamma}%

%%%%%%%%%%%%%%%%%%%%%%%%%%%%%%%%%%%%%%%%%%%%%%%%%%%%%%
%%%%%%%%% special symbols %%%%%%%%%%%%%%%%%%%%%%%%%%%%
%%%%%%%%%%%%%%%%%%%%%%%%%%%%%%%%%%%%%%%%%%%%%%%%%%%%%%

\global\long\def\precd{\prec_{\mathrm{str}}}%
 %double prec (intervals)

\global\long\def\strict{\subseteq_{\text{str}}}%

\global\long\def\Istr{I_{\mathrm{str}}}%

%strict inclusion
\global\long\def\ind{\mbox{ind}}%
%index 
\global\long\def\indA{\mbox{ind}_{A}}%
%A-index 
\global\long\def\indB{\mbox{ind}_{B}}%
%B-index 
\global\long\def\indC{\mbox{ind}_{C}}%
%C-index 
\global\long\def\Im{\mathrm{Im}}%
\global\long\def\dR{\delta_{R}}%
%Whether an eigenvalue is right edge of spectral band
\global\long\def\motw{=_{2}}%
\global\long\def\tA{\mathit{A}}%
\global\long\def\tB{\mathit{B}}%
\global\long\def\tC{\mathit{G}}%

%%%%%%%%% Spectrum %%%%%%%%%%%%%%%%%%%%%%%%%%%%

\global\long\def\emap{E_{\alpha,V}}%

\global\long\def\spec{\mathrm{spec}}%
\global\long\def\sigc{\sigma_{\co}}%
\global\long\def\sigcm{\sigma_{[\co,m]}}%
\global\long\def\sigcmn{\sigma_{[\co,m,n]}}%
\global\long\def\sigcz{\sigma_{[\co,0]}}%

\global\long\def\bc{b_{\co,V}}%
\global\long\def\bcop{b_{\cop,V}}%
\global\long\def\bcm{b_{[\co,m],V}}%
\global\long\def\bcmn{b_{[\co,m,n],V}}%

\global\long\def\Sc{\Sigma_{\co}}%
\global\long\def\Scs{\Sigma_{\co}^{\textrm{spec}}}%
\global\long\def\Scop{\Sigma_{\cop}}%
\global\long\def\Scops{\Sigma_{\cop}^{\textrm{spec}}}%
\global\long\def\Scm{\Sigma_{[\co,m]}}%
\global\long\def\Scms{\Sigma_{[\co,m]}^{\textrm{spec}}}%
\global\long\def\Scmn{\Sigma_{[\co,m,n]}}%
\global\long\def\Scmns{\Sigma_{[\co,m,n]}^{\textrm{spec}}}%
\global\long\def\Sck{\Sigma_{\co_{k}}}%
\global\long\def\Scks{\Sigma_{\cok}^{\textrm{spec}}}%

\global\long\def\Lc{L_{\co,V}}%
\global\long\def\Lcop{L_{\cop,V}}%
\global\long\def\Lcm{L_{[\co,m],V}}%
\global\long\def\Lcmn{L_{[\co,m,n],V}}%
\global\long\def\Lck{L_{\co_{k},V}}%

%%%%%%%%% Lambdas %%%%%%%%%%%%%%%%%%%%%%%%%%%%

\global\long\def\lc{\lambda_{\co}}%
\global\long\def\lcm{\lambda_{[\co,m]}}%
\global\long\def\lcmn{\lambda_{[\co,m,n]}}%
\global\long\def\lo{\lambda_{\mathbf{o}}}%

%%%%%%%%% Omegas %%%%%%%%%%%%%%%%%%%%%%%%%%%%

\global\long\def\ohn{\omega_{\frac{p_{k}}{q_{k}}}}%
\global\long\def\ohnm{\omega_{\frac{p_{k-1}}{q_{k-1}}}}%
\global\long\def\ohnmm{\omega_{\frac{p_{k-2}}{q_{k-2}}}}%
\global\long\def\ohnp{\omega_{\frac{p_{k+1}}{q_{k+1}}}}%
\global\long\def\ohrat{\omega_{\pq}}%

%%%%%%%%% Hamiltonians %%%%%%%%%%%%%%%%%%%%%%%%%%%%

\global\long\def\Ham{H_{\alpha,V}}%
\global\long\def\Hrat{H_{\frac{p}{q},V}}%
\global\long\def\Hc{H_{\co,V}}%
\global\long\def\Hcm{H_{[\co,m]}}%
\global\long\def\Hcmn{H_{[\co,m,n]}}%
\global\long\def\Hcmo{H_{[\co,m,1]}}%
\global\long\def\Hcz{H_{[\co,0]}}%
\global\long\def\Hco{H_{[\co,1]}}%

%%%%%%%%% Characteristic Polynomial %%%%%%%%%%%%%%%%%%%%%%%%%%%%

\global\long\def\polyc{P_{\co,V}}%
\global\long\def\Pc{P_{\co}}%
% Characteristic polynomials 

%%%%%%%%% Thetas %%%%%%%%%%%%%%%%%%%%%%%%%%%%

\global\long\def\thc{\theta_{\co}}%
\global\long\def\thcm{\theta_{[\co,m]}}%
\global\long\def\thcmn{\theta_{[\co,m,n]}}%

%%%%%%%%% Fractions  %%%%%%%%%%%%%%%%%%%%%%%%%%%%

\global\long\def\alk{\alpha_{k}}%

\global\long\def\pq{\frac{p}{q}}%
\global\long\def\pqk{\frac{p_{k}}{q_{k}}}%
\global\long\def\pc{p_{\co}}%
\global\long\def\pcm{p_{[\co,m]}}%
\global\long\def\pcmn{p_{[\co,m,n]}}%
\global\long\def\qc{q_{\co}}%
\global\long\def\qcm{q_{[\co,m]}}%
\global\long\def\qcmn{q_{[\co,m,n]}}%

%%%%%%%%% Traces %%%%%%%%%%%%%%%%%%%%%%%%%%%%
\global\long\def\tc{t_{\co}}%
\global\long\def\tcm{t_{[\co,m]}}%
\global\long\def\tcmn{t_{[\co,m,n]}}%
\global\long\def\tco{t_{[\co,1]}}%
\global\long\def\tcmo{t_{[\co,-1]}}%
\global\long\def\tcz{t_{[\co,0]}}%

%%%%%%%%% Transfer Matrices %%%%%%%%%%%%%%%%%%%%%%%%%%%%

\global\long\def\Mc{M_{\co}}%
\global\long\def\Mz{M_{[0]}}%
\global\long\def\Mzz{M_{[0,0]}}%

%%%%%%%%% Polynomials %%%%%%%%%%%%%%%%%%%%%%%%%%%%

\global\long\def\CP{\mathrm{S}}%
 %Chebychev Polynomials 

%%%%%%%%% Fourier Transform %%%%%%%%%%%%%%%%%%%%%%%%%%%%

\global\long\def\f{f}%
\global\long\def\Span{\mathrm{Span}}%

%%%%%%%%% Other stuff %%%%%%%%%%%%%%%%%%%%%%%%%%%%

\global\long\def\db{\delta_{B}}%

\global\long\def\Nc{N_{\co}}%
\global\long\def\Nc{N_{\co}}%
\global\long\def\Ncm{N_{[\co,m]}}%
\global\long\def\Ncmn{N_{[\co,m,n]}}%
\global\long\def\Ic{I_{\co}}%
\global\long\def\Icop{I_{\cop}}%
\global\long\def\Ico{I_{[\co,1]}^{1}}%
\global\long\def\Icz{I_{[\co,0]}}%

\global\long\def\Jcz{J_{[\co,0]}}%
\global\long\def\Kcz{K_{[\co,0]}}%
\global\long\def\cz{[\co,0]}%
\global\long\def\Sturm{\omega_{\alpha}}%
\global\long\def\IDS{N_{\alpha,V}}%
\global\long\def\Jcm{J_{[\co,m]}}%
\global\long\def\Kcm{K_{[\co,m]}}%
\global\long\def\Icm{I_{[\co,m]}}%
\global\long\def\Icmn{I_{[\co,m,n]}}%
\global\long\def\Icmo{I_{[\co,m,1]}}%
\global\long\def\Icmoi{I_{[\co,m,1]}^{i}}%
\global\long\def\Icmi{I_{[\co,m]}^{i}}%
\global\long\def\Icmj{I_{[\co,m]}^{j}}%
\global\long\def\Icmni{I_{[\co,m,n]}^{i}}%
\global\long\def\Icmnj{I_{[\co,m,n]}^{j}}%
\global\long\def\Icmii{I_{[\co,m]}^{i+1}}%
\global\long\def\Icmnii{I_{[\co,m,n]}^{i+1}}%

\global\long\def\fl#1{\emph{\ensuremath{\left\lfloor #1\right\rfloor }}}%
\global\long\def\set#1#2{\left\{  #1~:~#2\right\}  }%
\global\long\def\newmacroname{\{\}}%

\global\long\def\vect#1{\ensuremath{\begin{pmatrix}#1\end{pmatrix}}}%

\global\long\def\Mat#1{\ensuremath{{\begin{pmatrix}#1\end{pmatrix}}}}%

\global\long\def\sign{\mbox{sign}}%

\global\long\def\supp{\ensuremath{{\operatorname{supp}}}}%

\global\long\def\la{\langle}%
 
\global\long\def\ra{\rangle}%

\global\long\def\rbr#1{\ensuremath{\left(#1\right)}}%

\global\long\def\sbr#1{\ensuremath{\left[#1\right]}}%

\global\long\def\abs#1{\ensuremath{\left|#1\right|}}%

\global\long\def\gst{\;\middle|\;}%
 
\global\long\def\oalpha{\overline{\alpha}}%
 
\global\long\def\obeta{\overline{\beta}}%

\global\long\def\traceOld#1{t_{\rbr{#1}}}%
 
\global\long\def\spectOld#1{\sigma_{\rbr{#1}}}%

\global\long\def\mat#1{M_{#1}}%
 
\global\long\def\matv#1{M_{\sbr{#1}}}%
 
\global\long\def\trace#1{t_{#1}}%
 
\global\long\def\tracev#1{t_{\sbr{#1}}}%
 
\global\long\def\spect#1{\sigma_{#1}}%
 
\global\long\def\spectv#1{\sigma_{\sbr{#1}}}%
 
\global\long\def\op#1{H_{#1}}%
 
\global\long\def\opv#1{H_{\sbr{#1}}}%

\global\long\def\typeA{\mathcal{A}}%
 
\global\long\def\typeB{\mathcal{B}}%

\global\long\def\pprec{\prec\!\!\!\prec}%

\let\oldsubset\subset 
\global\long\def\subset{\subseteq}%
 
\global\long\def\subsets{\oldsubset_{s}}%

\global\long\def\w{\gamma}%
 
\global\long\def\wo{\gamma}%

\title[Sturmian Hamiltonians with a large coupling constant]{a review of a work by Raymond: Sturmian Hamiltonians with a large
coupling constant - periodic approximations and gap labels}
\author{Ram Band, Siegfried Beckus, Barak Biber, Laurent Raymond, Yannik Thomas}
\address{Department of Mathematics\\
Technion - Israel Institute of Technology\\
Haifa, Israel\\
and\\
Institute of Mathematics\\
University of Potsdam\\
Potsdam, Germany}
\email{ramband@technion.ac.il}
\address{Institute of Mathematics\\
University of Potsdam\\
Potsdam, Germany}
\email{beckus@uni-potsdam.de}
\address{Department of Mathematics and the Henry and Marilyn Taub Faculty of
Computer Science\\
Technion - Israel Institute of Technology\\
Haifa, Israel}
\email{biber.barak@campus.technion.ac.il}
\address{Aix Marseille Univ\\
Université de Toulon\\
CNRS, CPT, Marseille, France}
\email{laurent.raymond@univ-amu.fr}
\address{Institute of Mathematics\\
University of Potsdam\\
Potsdam, Germany}
\email{yannik.thomas@uni-potsdam.de}
\begin{abstract}
We present a review of the work \cite{Raym95,Raym95-thesis}. The
review aims at making this work more accessible and offers adaptations
of some statements and proofs. In addition, this review forms an applicable
framework for the complete solution of the Dry Ten Martini Problem
for Sturmian Hamiltonians as appears in \cite{BanBecLoe_24}.

A Sturmian Hamiltonian is a one-dimensional Schr\"odinger operator
whose potential is a Sturmian sequence multiplied by a coupling constant,
$V\in\R$. The spectrum of such an operator is commonly approximated
by the spectra of designated periodic operators. If $V>4$, then the
spectral bands of the periodic operators exhibit a particular combinatorial
structure. This structure provides a formula for the integrated density
of states. Employing this, it is shown that if $V>4$, then all the
gaps, as predicted by the gap labelling theorem, are there.
\end{abstract}

\maketitle
\tableofcontents{}

\section{Introduction}

\subsection{The motivation for this review}

The starting point of this paper is the unpublished work of Raymond,
\cite{Raym95} and his PhD thesis \cite{Raym95-thesis} (see also
\cite{MFO2011_Raym}). The first two authors became aware of \cite{Raym95}
via a private communication with Damanik. Band and Beckus were influenced
by \cite{Raym95} in their joint work with Loewy, \cite{BanBecLoe_24}
and found it beneficial to refer to parts of \cite{Raym95} in \cite{BanBecLoe_24}.
Indeed, \cite{Raym95} is a very stimulating work, contains some foundational
results and is referred to numerous times (see e.g., the surveys \cite{Dam07_survey,Jit07,DaEmGo15-survey,Dam17-Survey}
and references within), in spite of being unpublished. We started
to write the current review with three goals in mind. First, it might
be worthwhile to elaborate on some of the proofs and fill in some
gaps. Second, by adapting some notations and conventions, we create
a unified framework towards providing the complete solution for the
Dry Ten Martini Problem for Sturmian Hamiltonians, \cite{BaBeLo23-MFO,BanBecLoe_24}.
Finally, we felt that the whole community might benefit from having
a published version of Raymond's work upon reaching its thirtieth
anniversary. Hence, we joined forces to produce the current review,
with Raymond joining as well after this review was already initiated.
While this review was in final stages of preparation, we became aware
that a similar publication is planned in \cite{Raym-AperiodicOrder},
as part of the book series initiated by Baake and Grimm \cite{BaaGri_book1,BaaGri_book2}.

In this review we make the connection to \cite{Raym95} as transparent
as possible. In particular, throughout the review we clarify as much
as possible where we merely rephrase statements from \cite{Raym95}
and where we elaborate or bring new statements and terminology. When
writing the current review, we were trying to provide an appropriate
balance between two objectives. On the one hand, our desire is to
reflect the original work \cite{Raym95} with no substantial changes.
On the other hand, at times we felt that the exposition may profit
by including adaptations based on later papers and recent progress
in the field.

We should emphasize that the current review covers only the first
five sections of \cite{Raym95} that form the starting point for resolving
the Dry Ten Martini Problem for Sturmian Hamiltonians in \cite{BanBecLoe_24}.
We do not treat here the last section of \cite{Raym95} about the
Hausdorff dimension of the Fibonacci Hamiltonian. This part in \cite{Raym95}
led to further progress in the study of the fractal dimensions of
the spectrum of Sturmian Hamiltonians, see e.g. \cite{KiKiLa03,LiuWen04,Damanik_Tcheremchantsev07,Damanik2008,DaGoYe16}.
Reviewing this part of \cite{Raym95} is not included here since the
focus is on the study of the integrated of states and the gap labels.

\subsection{A short historical review}

Let us start by introducing the model. We consider bounded linear
operators $\Ham:\ell^{2}(\Z)\to\ell^{2}(\Z)$, given by 
\begin{align}
(\Ham\psi)(n) & :=\psi(n+1)+\psi(n-1)+V\chi_{\left[1-\alpha,1\right)}(n\alpha\mod 1)\thinspace\psi(n),\label{eq:Hamiltonian=000020defined}
\end{align}
where $V\in\R$ is the \emph{coupling constant} and $\chi_{\left[1-\alpha,1\right)}$
is the characteristic function of the interval $\left[1-\alpha,1\right)$.
Whenever $\alpha\in\R\setminus\Q$, the operator $\Ham$ is aperiodic
(in the sense that its potential sequence is not periodic) and it
is known as a\textit{ Sturmian Hamiltonian}.

We provide a short summary on the developments for the spectral theory
of Sturmian Hamiltonians and refer the reader to the surveys \cite{Dam07_survey,Jit07,DaEmGo15-survey,Dam17-Survey}
and references therein for more details. This class of operators serves
as the guiding example for one-dimensional quasicrystals and was introduced
in \cite{KohKadTan_prl83,OstKim_phys85}. This model is also called
\emph{Kohmoto model} and a plot of the associated spectra, as they
vary with $\alpha$ - called the Kohmoto butterfly - can be found
in Figure~\ref{fig:Kohmoto=000020Butterfly}.

\begin{figure}[hbt]
\includegraphics[scale=1.4]{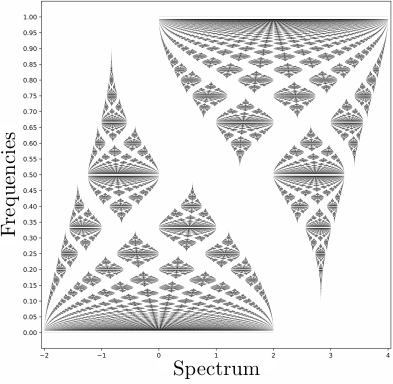}\caption{The Kohmoto buttefly for $V=2$.}
\label{fig:Kohmoto=000020Butterfly}
\end{figure}
A first mathematical analysis of the so called Fibonacci Hamiltonian
$\Ham$ with $\alpha=\frac{\sqrt{5}-1}{2}$ was developed in \cite{Casdagli1986}.
Shortly after it was proven that the Fibonacci Hamiltonian has Cantor
spectrum of Lebesgue measure zero and the spectral measure is purely
singular continuous, \cite{Sut87,Sut89}. For all Sturmian Hamiltonians
(i.e., all $\alpha\notin\Q$ and $V\neq0$), Cantor spectrum of Lebesgue
measure zero was proven in \cite{BIST89}. This result was generalized
in \cite{Len02} to a large class of one-dimensional dynamical systems.
The absence of point spectrum and upper bounds on the growth of solutions
for Sturmian Hamiltonians were thoroughly studied as well \cite{DamLen99-AbsEig,DamLen99_II,DamKilLenz00}.

Influenced by these results, one may ask whether all the spectral
gaps that are predicted by the gap labeling theorem \cite{Bell92-Gap,BelBovGhe92},
appear. This is the so-called Dry Ten Martini Problem for Sturmian
Hamiltonians. Such a question was originally asked by Kac in 1981
for the almost Mathieu operator (``are all gaps there?''), see \cite{Sim82-review}.
For large enough coupling constant, $V>4$, it was proven in \cite{Raym95}
that all gaps are there, and this is reviewed in the current paper.
For the Fibonacci Hamiltonian and small enough coupling $V$, it was
proven in \cite{DamanikGorodetski2011} that all spectral gaps are
there. This result was extended in \cite{Mei14} for $\alpha\in[0,1]\setminus\Q$
with eventually periodic continued fraction expansion and small enough
coupling constant. In a remarkable study of the Fibonacci Hamiltonian
\cite{DaGoYe16}, it was proven that all gaps are there for all $V\neq0$
and $\alpha=\frac{\sqrt{5}-1}{2}$. Finally, a complete solution of
the Dry Ten Martini Problem for Sturmian Hamiltonians for all $\alpha\in[0,1]\setminus\Q$
and all $V\neq0$ is provided in \cite{BanBecLoe_24}. Moreover, the
hierarchical structure of the periodic approximations spectra (initiated
in \cite{Raym95}) was extended in \cite{BanBecLoe_24} to all $V\neq0$.

This hierarchical structure also laid the ground to estimate the Hausdorff
dimension for the Fibonacci Hamiltonian in \cite{Raym95}. It influenced
the study of the fractal dimension and the transport exponent for
Sturmian Hamiltonians during the last decades, see e.g. \cite{KiKiLa03,LiuWen04,Damanik_Tcheremchantsev07,Damanik2008,Liu2014,Damanik2015,DaGoYe16,CaoQu_arXiv23}.

\subsection*{Organization of the paper}

The paper is structured as follows. Section~\ref{sec:=000020The=000020Sturmian=000020Potential}
discusses the Sturmian sequences and their periodic mechanical words.
In addition, we introduce there a designated space of finite continued
fraction expansions following the lines of \cite{BanBecLoe_24}. In
Section~\ref{sec:=000020Transfer=000020Matrices=000020and=000020Discriminant}
we present the standard Floquet-Bloch theory via transfer matrices
and the discriminant. Various useful identities of the discriminants
are presented there. Section~\ref{sec:=000020spectra=000020of=000020periodic=000020approximants}
describes the spectra of the periodic approximations and their special
combinatorial structure - first in general and then specializing for
the case $V>4$. Section~\ref{sec:=000020The=000020IDS} applies
the aforementioned combinatorial structure for the study of the integrated
density of states and the gap labelling for $V>4$.

\subsection*{Acknowledgments}

We are grateful for David Damanik and Michael Baake for connecting
some of the authors. First, in 2018, David Damanik introduced RB and
SB to the original work of LR, encouraging to further explore it,
and suggesting useful references along the way. Then, on December
2023, Michael Baake made the physical connection and kindly hosted
four of the authors in Bielefeld. We thank our colleague Raphael Loewy
who provided us with a critical and constructive viewpoint on this
work.

We thank Israel Institute of Technology and the University of Potsdam
for providing excellent working conditions during our mutual visits.
This work was partially supported by the Deutsche Forschungsgemeinschaft
{[}BE 6789/1-1 to S.B.{]} and the Maria-Weber Grant 2022 offered by
the Hans Böckler Stiftung. RB was supported by the Israel Science
Foundation (ISF Grant No. 844/19).

\section{The Sturmian Potential \protect\label{sec:=000020The=000020Sturmian=000020Potential}}

This section is dedicated to studying the Sturmian sequence $\chi_{\left[1-\alpha,1\right)}(n\alpha\mod 1)$,
which serves as the potential of the Sturmian Hamiltonian, (\ref{eq:Hamiltonian=000020defined}).
In particular, we will consider rational values of $\alpha$, which
give rise to periodic sequences and periodic Hamiltonians. Most of
the content of this section does not appear in \cite{Raym95} and
our main motivation is to use already in the current review some tools
and notations which are essential for \cite{BanBecLoe_24}.

\subsection{The space $\protect\Co$ of finite continued fraction expansions\protect\label{subsec:=000020space=000020of=000020continued=000020fractions}}

Let $\Nz:=\N\cup\{0\}$ and $\Nmo:=\N\cup\{-1,0\}$ and define the
space of \textit{finite continued fractions} by 
\[
\Co:=\left\{ [0],~[0,0]\right\} \cup\bigcup_{k\in\N}\set{[0,0,c_{1},\ldots,c_{k}]}{c_{1},\ldots,c_{k-1}\in\N,~c_{k}\in\N_{-1}}.
\]

This notation uses the convention that the two first entries of all
$\co\in\Co$, satisfy $c_{-1}=c_{0}=0$. As described below $\Co$
extends the common set of continued fractions, as we allow also $c_{k}\in\{-1,0\}$,
for the last entry in the continued fraction expansion.

For $\co=[0,c_{0},c_{1},\ldots,c_{k}]\in\Co$ and $m\in\Nmo$, we
will use the notation 
\[
[\co,m]:=[0,c_{0},c_{1},\ldots,c_{k},m]\in\Co,
\]
whenever it is defined. We use frequently in this work the condition
$[\co,m]\in\Co$ implicitly implying that $[0,0,c_{1},\ldots,c_{k},m]$
is actually an element of $\Co$. This puts some constraints on $c_{k}$.
For instance, if $\co=[0]$, then $m=0$ or if $k\in\N$, then $c_{k}\geq1$
must hold.

We connect the set of continued fractions with rational numbers by
introducing a map $\varphi:\Co\to\R\cup\{\infty\}$. It is defined
for all $\co\in\Co\backslash\left\{ [0],[0,0,-1]\right\} $ by 
\begin{equation}
\varphi([0,c_{0},c_{1},\dots,c_{k}]):=\begin{cases}
\varphi([0,c_{0},c_{1}\dots,c_{k-2},c_{k-1}-1]), & k\in\N\textrm{ and }c_{k}=-1,\\[0.1cm]
\varphi([0,c_{0},c_{1}\dots,c_{k-2}]), & k\in\N\textrm{ and }c_{k}=0,\\
c_{0}+\frac{1}{c_{1}+\frac{1}{\ddots+\frac{1}{c_{k}}}}, & \text{otherwise}.
\end{cases}\label{eq:=000020phi=000020map=000020from=000020C=000020to=000020Q}
\end{equation}
In addition to that we set $\varphi([0]):=\infty$ and $\varphi([0,0,-1])=-1$.

Via the definition (\ref{eq:=000020phi=000020map=000020from=000020C=000020to=000020Q}),
the space $\Co$ extends the common convention of continued fraction
expansions, which is 
\[
\frac{p}{q}=c_{0}+\frac{1}{c_{1}+\frac{1}{\ddots+\frac{1}{c_{k}}}},
\]
by allowing additionally $c_{k}\in\{-1,0\}$. The first line in the
right hand side of (\ref{eq:=000020phi=000020map=000020from=000020C=000020to=000020Q})
is equivalent to substituting $c_{k}=-1$ in the continued fraction
expansion. The second line is more delicate; if one allows taking
$c_{k}\in\R$ then one gets
\[
\lim_{c_{k}\rightarrow0}\left(c_{0}+\frac{1}{c_{1}+\frac{1}{\ddots+\frac{1}{c_{k}}}}\right)=c_{0}+\frac{1}{c_{1}+\frac{1}{\ddots+\frac{1}{c_{k-2}}}},
\]
which is the rationale standing behind the definition $\varphi([0,c_{0},c_{1}\dots,c_{k-2},c_{k-1},0]):=\varphi([0,c_{0},c_{1}\dots,c_{k-2}])$
in (\ref{eq:=000020phi=000020map=000020from=000020C=000020to=000020Q}).
\begin{rem}
From the definition of the map $\varphi$, we get $\Im(\varphi)\subseteq(\Q\cap[0,1])\cup\{-1\}\cup\{\infty\}$.
A basic yet important observation is that the map $\varphi$ is not
injective. This may be seen already from its definition in (\ref{eq:=000020phi=000020map=000020from=000020C=000020to=000020Q}).
In addition, 
\[
\varphi([0,c_{0},c_{1}\dots,c_{k-2},c_{k-1},c_{k},1])=\varphi([0,c_{0},c_{1}\dots,c_{k-2},c_{k-1},c_{k}+1]),
\]
which is a common dual representation within continued fraction expansions,
\cite[ch. I.4]{Khinchin_book64}. In addition, one can check that
\[
\varphi(\co)=\infty\quad\Leftrightarrow\quad\co\in\left\{ [0],~[0,0,0],~[0,0,1,-1]\right\} .
\]
\end{rem}

The motivation behind using continued fraction expansions is for approximating
irrational $\alpha\in\left[0,1\right]\backslash\Q$ by rational values,
which allows to approximate aperiodic Hamiltonians, (\ref{eq:Hamiltonian=000020defined}),
by periodic ones. Specifically, each $\alpha\in\left[0,1\right]\backslash\Q$
may be uniquely presented in terms of an \emph{infinite} continued
fraction expansion, 
\begin{equation}
\alpha=c_{0}+\frac{1}{c_{1}+\frac{1}{c_{2}+\frac{1}{\ddots}}},\label{eq:=000020infinite=000020continued=000020fraction=000020expansion}
\end{equation}
where $c_{0}=0$ and $c_{n}\in\N$ for all $n\in\N$. We use this
to define for each $k\in\N$, 
\begin{align*}
\cok & :=[0,0,c_{1},\ldots,c_{k}]\qquad\textrm{and}\qquad\alk & :=\varphi(\cok).
\end{align*}
The values $\alk$ offer an optimal way to approximate $\alpha$ in
the sense $\lim_{k\rightarrow\infty}\alk=\alpha$, and thus we refer
to $\alpha_{k}$ as the \textit{$k$-th convergent} of $\alpha$,
\cite[ch. I.3]{Khinchin_book64}.

We further denote $\pqk:=\alpha_{k}$, where $p_{k},q_{k}\in\N$ are
chosen to be coprime. It is useful to extend this notation so that
it includes also the values $k\in\{-1,0\}$. This is done by setting
\begin{align*}
\alpha_{-1} & :=\varphi([0])=\infty,\quad p_{-1}=1,\quad q_{-1}=0,\\
\alpha_{0} & :=\varphi([0,0])=0,\quad p_{0}=0,\quad q_{0}=1.
\end{align*}
Note that for $k=-1$, we adopt the formal convention, $\alpha_{-1}=\frac{p_{-1}}{q_{-1}}=\infty$.
. The reason for introducing $p_{-1},p_{0,}q_{-1}$ and $q_{0}$ is
given by the following recursive formulas, \cite[thm. 1]{Khinchin_book64},
for $k\in\N_{0}$
\begin{align}
p_{k+1} & =c_{k+1}p_{k}+p_{k-1}\qquad\textrm{and}\qquad q_{k+1}=c_{k+1}q_{k}+q_{k-1}.\label{eq:=000020recursion=000020for=000020p_k,=000020q_k}
\end{align}

\begin{rem}
\label{rem:=000020why=000020we=000020need=000020C=000020space} It
is beneficial to make the analogy between the notations introduced
above and the notations in \cite{Raym95}. The notation $(k,p)$,
appearing first in \cite[prop. 2.2]{Raym95}, is replaced in this
review by $[0,0,c_{1},\ldots,c_{k-1},p]=[\co_{k-1},p]$. We do so,
since we find in \cite{BanBecLoe_24} that it is essential to keep
track of all numbers in the continued fraction expansion simultaneously
and consider values of $\co\in\Co$ which correspond to different
$\alpha\notin\Q$. This matter is not raised in \cite{Raym95}, where
it is sufficient to fix a single $\alpha\notin\Q$ and for that the
notation $(k,p)$ is adequate.
\end{rem}

\subsection{Sturmian words and mechanical words\protect\label{subsec:=000020Sturmian=000020words=000020and=000020mechanical=000020words}}

We present here a brief introduction to Sturmian words and mechanical
words. For elaborate surveys see \cite{Lothaire2002,Berstel_Sturmian_survey}.

We start by denoting for $\alpha\in[0,1]$ and $n\in\Z,$ 
\begin{equation}
\Sturm(n):=\chi_{\left[1-\alpha,1\right)}(n\alpha\mod 1).\label{eq:=000020def=000020Sturmian=000020sequence}
\end{equation}
Another equivalent representation of the sequence $\omega_{\alpha}\in\left\{ 0,1\right\} ^{\Z}$
is the following.
\begin{lem}
\label{lem:=000020lower=000020mechanical=000020word} \cite[lem. 1]{BIST89},\cite[def. 2.1]{Raym95}~\\
 Let $\alpha\in[0,1]$ and $n\in\Z$. Then
\begin{equation}
\Sturm(n)=\lfloor(n+1)\alpha\rfloor-\lfloor n\alpha\rfloor,\label{eq:=000020lower=000020mechanical=000020word}
\end{equation}
where $\left\lfloor \phantom{x}\right\rfloor $ is the floor function.
\end{lem}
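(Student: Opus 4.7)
The plan is to reduce everything to the identity $\lfloor (n+1)\alpha\rfloor-\lfloor n\alpha\rfloor=\lfloor \{n\alpha\}+\alpha\rfloor$, where $\{x\}:=x-\lfloor x\rfloor\in[0,1)$ denotes the fractional part, and then to observe that this floor takes only the values $0$ or $1$, with the value $1$ occurring in exactly the same range of $\{n\alpha\}$ as the indicator $\chi_{[1-\alpha,1)}$.

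First I would fix $\alpha\in[0,1]$ and $n\in\Z$ and write $n\alpha=\lfloor n\alpha\rfloor+\{n\alpha\}$ with $\{n\alpha\}=n\alpha\bmod 1$. Adding $\alpha$ and using the fact that $\lfloor n\alpha\rfloor$ is an integer, one obtains
\[
\lfloor (n+1)\alpha\rfloor-\lfloor n\alpha\rfloor=\lfloor \{n\alpha\}+\alpha\rfloor.
\]
Since $\{n\alpha\}\in[0,1)$ and $\alpha\in[0,1]$, the sum $\{n\alpha\}+\alpha$ lies in $[0,2)$, so its floor is either $0$ or $1$.

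Next I would split into the two cases separating these floor values. If $\{n\alpha\}<1-\alpha$, then $\{n\alpha\}+\alpha<1$, so the floor is $0$; simultaneously $\{n\alpha\}\notin[1-\alpha,1)$, so $\chi_{[1-\alpha,1)}(\{n\alpha\})=0$. If instead $\{n\alpha\}\geq 1-\alpha$, then $\{n\alpha\}+\alpha\geq 1$ (and still $<2$), so the floor equals $1$; simultaneously $\{n\alpha\}\in[1-\alpha,1)$, so the indicator also equals $1$. Hence in both cases
\[
\lfloor (n+1)\alpha\rfloor-\lfloor n\alpha\rfloor=\chi_{[1-\alpha,1)}(\{n\alpha\})=\Sturm(n),
\]
as claimed.

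There is no substantial obstacle here; the only mild subtlety is to check the boundary cases $\alpha\in\{0,1\}$, which both fit the same dichotomy (for $\alpha=0$ both sides are $0$, and for $\alpha=1$ the interval $[1-\alpha,1)=[0,1)$ contains every fractional part and both sides are $1$), and to be careful that the interval $[1-\alpha,1)$ is closed on the left so that the threshold $\{n\alpha\}=1-\alpha$ is correctly assigned to the ``value $1$'' case.
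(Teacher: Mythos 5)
Your proof is correct and follows essentially the same route as the paper: both arguments note that the difference of floors lies in $\{0,1\}$ and then match the condition for the value $1$ with the condition $\{n\alpha\}\in[1-\alpha,1)$. Your explicit intermediate step $\lfloor(n+1)\alpha\rfloor-\lfloor n\alpha\rfloor=\lfloor\{n\alpha\}+\alpha\rfloor$ is just a spelled-out version of the paper's chain of equivalences, and your boundary checks for $\alpha\in\{0,1\}$ are a harmless extra verification.
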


\begin{proof}
First, observe that $\lfloor\alpha(n+1)\rfloor-\lfloor\alpha n\rfloor\in\{0,1\}$,
for all $n\in\Z$. Using that the claim follows from 
\begin{align*}
\omega_{\alpha}(n)=1 & \iff n\alpha\mod 1\in[1-\alpha,1)\\
 & \iff\exists m\in\Z:n\alpha\in[m+1-\alpha,m+1)\\
 & \iff\lfloor\alpha(n+1)\rfloor-\lfloor\alpha n\rfloor=1.
\end{align*}
\end{proof}
We use the notation $\omega_{\alpha}$ for both rational and irrational
values of $\alpha$. The infinite words defined by $\lfloor\alpha(n+1)\rfloor-\lfloor\alpha n\rfloor$
are also called \textit{(lower) mechanical words (with slope $\alpha$)}
\cite[sec. 2.1.2]{Lothaire2002}. If $\alpha=\frac{p}{q}\in[0,1]\cap\Q$,
then it is elementary to see that $\omega_{\alpha}$ is $q$-periodic,
i.e., $\omega_{\alpha}(n+q)=\omega_{\alpha}(n)$ for all $n\in\Z$.
If $\alpha\notin\Q$ then $\omega_{\alpha}$ is called a Sturmian
sequence, which is not a periodic word. In this case, it is useful
study the ($q_{k}$-)periodic words $\omega_{\alpha_{k}}$ as approximations
of $\omega_{\alpha}$ where $\alpha_{k}=\frac{p_{k}}{q_{k}}$ are
the $k$th convergents of $\alpha$ and $p_{k},q_{k}$ are coprime.

We have seen in (\ref{eq:=000020recursion=000020for=000020p_k,=000020q_k})
that there is a recursive formula which connects the period lengths,
$q_{k}$, for three subsequent $k$ values. We show next that the
periods themselves (i.e., the finite sub-words of length $q_{k}$)
are also connected via a recursive relation. We denote these periods
by $W_{k}\in\left\{ 0,1\right\} ^{q_{k}}$, setting 
\begin{equation}
W_{k}(i):=\omega_{\alpha_{k}}(i),\qquad0\leq i\leq q_{k}-1\label{eq:=000020Sturmian=000020period=000020def}
\end{equation}
and claiming the following.
\begin{lem}
\label{lem:=000020WordRecursion} The periods of the mechanical words
satisfy the following. 
\[
W_{0}=0,\quad\quad W_{1}=\underbrace{0\ldots0}_{c_{1}-1}1.
\]
If $k\geq2$ then
\[
W_{k}=\begin{cases}
W_{k-2}W_{k-1}^{c_{k}},\quad & k\equiv0\mod 2,\\
W_{k-1}^{c_{k}}W_{k-2},\quad & k\equiv1\mod 2,
\end{cases}
\]
where the power means a concatenation of words.
\end{lem}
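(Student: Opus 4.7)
The plan is to use Lemma~\ref{lem:=000020lower=000020mechanical=000020word} to express each symbol of $W_k$ via the floor function, $W_k(n)=\lfloor(n+1)\alpha_k\rfloor-\lfloor n\alpha_k\rfloor$, and to compare $W_k$ letter by letter with $W_{k-1}$ and $W_{k-2}$ using the standard convergent identities
\[
p_k q_{k-1}-p_{k-1}q_k=(-1)^{k-1},\qquad p_k=c_k p_{k-1}+p_{k-2},\qquad q_k=c_k q_{k-1}+q_{k-2}.
\]
First I would verify the base cases $W_0=0$ and $W_1=0^{c_1-1}1$ by direct evaluation using $\alpha_0=0$ and $\alpha_1=1/c_1$. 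Since $|W_k|=q_k=c_k q_{k-1}+q_{k-2}$ equals the total length of the asserted right-hand side, it then suffices to prove equality position by position.

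The workhorse of the argument will be the two identities
\[
\alpha_k-\alpha_{k-1}=\frac{(-1)^{k-1}}{q_k q_{k-1}},\qquad \alpha_k-\alpha_{k-2}=\frac{(-1)^k c_k}{q_k q_{k-2}},
\]
together with the strict inequality $c_k/q_k<1/q_{k-1}$, which is equivalent to $c_k q_{k-1}<q_k$ and follows from $q_{k-2}>0$. These say that $\alpha_k$ is an extremely small perturbation of $\alpha_{k-1}$ and $\alpha_{k-2}$, with a sign dictated by the parity of $k$.

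For $k$ even I would first show $W_k(n)=W_{k-2}(n)$ on $0\le n<q_{k-2}$ by proving that the perturbation $n(\alpha_k-\alpha_{k-2})$ is bounded by $c_k/q_k<1/q_{k-2}$ and is therefore too small to cross any nonzero multiple of $1/q_{k-2}$; hence $\lfloor n\alpha_k\rfloor=\lfloor n\alpha_{k-2}\rfloor$ for all $0\le n\le q_{k-2}$. For the remaining positions I would write $n=q_{k-2}+m$ with $0\le m<c_k q_{k-1}$ and compute the fractional part $\{q_{k-2}\alpha_k\}=c_k/q_k$ using $q_{k-2}p_k\equiv c_k\pmod{q_k}$. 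The same perturbation estimate then gives $\lfloor c_k/q_k+m\alpha_k\rfloor=\lfloor m\alpha_{k-1}\rfloor$ for $0\le m\le c_k q_{k-1}$, and combined with the $q_{k-1}$-periodicity of $\omega_{\alpha_{k-1}}$ this produces exactly $c_k$ copies of $W_{k-1}$. The case $k$ odd is analogous with signs reversed: I would first match the opening $c_k q_{k-1}$ positions of $W_k$ with $\omega_{\alpha_{k-1}}$ (where now $\alpha_k-\alpha_{k-1}>0$), and then the closing $q_{k-2}$ positions with $\omega_{\alpha_{k-2}}$ after shifting by $c_k q_{k-1}$ and computing $\{c_k q_{k-1}\alpha_k\}=c_k/q_k$.

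The main obstacle is the endpoint analysis at positions where $n\alpha_j$ is exactly an integer, because there an arbitrarily small perturbation of $\alpha_j$ can flip the floor by $\pm 1$. The sign of this flip is governed by the parity $(-1)^k$ in the identities above, and this is precisely the mechanism that forces the two different orderings in the statement: for even $k$ the perturbation $\alpha_k-\alpha_{k-2}$ is positive, so $W_{k-2}$ can occupy the opening block without a spurious jump at $n=q_{k-2}$, giving $W_{k-2}W_{k-1}^{c_k}$; for odd $k$ the same perturbation is negative and $W_{k-2}$ must be shifted to the closing block, giving $W_{k-1}^{c_k}W_{k-2}$. Tracking these boundary cases, together with the strict inequality $c_k/q_k<1/q_{k-1}$, is the technical heart of the proof.
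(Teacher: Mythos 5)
Your proposal is correct, and it rests on the same basic mechanism as the paper's proof: the floor-function representation from Lemma~\ref{lem:=000020lower=000020mechanical=000020word} together with the convergent identities $\alpha_k-\alpha_{k-1}=\frac{(-1)^{k-1}}{q_kq_{k-1}}$ and $\alpha_k-\alpha_{k-2}=\frac{(-1)^kc_k}{q_kq_{k-2}}$, and the observation that the resulting perturbation of $n\alpha_j$ is too small (by $c_k/q_k<1/q_{k-1}$, resp.\ $c_k/q_k<1/q_{k-2}$) to push $\lfloor n\alpha_j\rfloor$ across an integer. The organization, however, is genuinely different. The paper proves three stand-alone lemmas in Appendix~\ref{sec:RecursiveRelation_Sturmian} --- a prefix lemma (Lemma~\ref{lem:=000020period=000020prefixes}), a suffix lemma (Lemma~\ref{lem:=000020period=000020suffixes}) whose proof goes through the palindromicity of $W_k|_{\{1,\ldots,q_k-2\}}$, and a shift-invariance lemma (Lemma~\ref{lem:=000020sub-periods=000020of=000020periods}) --- and then glues the prefix block, the final copy of $W_{k-1}$ (resp.\ $W_{k-2}$), and the intermediate copies obtained by shifting. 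You instead split $W_k$ directly into its two blocks and, for the second block, compute the fractional offset $\{q_{k-2}\alpha_k\}=c_k/q_k$ (resp.\ $\{c_kq_{k-1}\alpha_k\}=c_k/q_k$) explicitly and rerun the perturbation estimate against $\alpha_{k-1}$ (resp.\ $\alpha_{k-2}$), letting the $q_{k-1}$-periodicity of $\omega_{\alpha_{k-1}}$ produce the $c_k$ copies. This buys a shorter argument that dispenses entirely with the palindrome property and the separate sub-period lemma; the paper's route, in exchange, yields the prefix/suffix/shift statements in stronger stand-alone form (e.g.\ $W_k(i)=W_{k-1}(i\bmod q_{k-1})$ for all $0\le i\le q_k-2$ when $k$ is odd). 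Your diagnosis that the sign $(-1)^k$ of the perturbations is what forces the two concatenation orders, and that the endpoints $n=q_{k-2}$ and $n=c_kq_{k-1}$ are the delicate places, is exactly right; the one detail to watch when writing this up in full is the degenerate case $k=2$, $c_1=1$ (so $q_0=q_1=1$), which the paper also handles separately at the start of its appendix proofs.
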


In fact, it will be shown in the following sections that we care of
the words $W_{k}$ only up to a cyclic shift. In this sense, the expressions
$W_{k-2}W_{k-1}^{c_{k}}$ and $W_{k-1}^{c_{k}}W_{k-2}$ are the same.
Therefore, in the literature (and in particular in \cite[eq. (2)]{Raym95})
only the expression $W_{k-1}^{c_{k}}W_{k-2}$ is used. Indeed, $W_{k}$
equals to $W_{k-1}^{c_{k}}W_{k-2}$ up to a possible cyclic shift
is proven and used in various works, see e.g., \cite{BIST89}, \cite[prop. 2.2]{DamLen99-AbsEig}
and \cite[thm. 2.15]{Dam07_survey}. Another difference between the
common viewpoint and ours, is that usually, the periods $W_{k}$ of
the mechanical words $\omega_{\alk}$ are compared to the Sturmian
sequence $\omega_{\alpha}$, whereas we wish to compare between the
various periods to themselves, $W_{k}$, $W_{k-1}$ and $W_{k-2}$.

We decided to supplement the discussion in the current review by treating
the precise sub-word $W_{k}$, as it is defined in (\ref{eq:=000020Sturmian=000020period=000020def}),
and not only up to cyclic shift. We employ this exact representation
in Appendix~\ref{sec:=000020Floquet-Bloch=000020Theory} when defining
the finite dimensional Hamiltonian matrices (\ref{eq:=000020finite-dim=000020Ham=000020matrices})
for the Floquet-Bloch theory. These matrices also play a substantial
role in \cite{BanBecLoe_24}. For these reasons we have Lemma~\ref{lem:=000020WordRecursion}
as written here (and not only up to a cyclic shift) and its proof.
Statements which are similar to Lemma~\ref{lem:=000020WordRecursion}
can be also found in \cite[eq. (2.8)]{LuckPetritis86} and \cite[problem 2.2.10]{Lothaire2002}.

The reader is referred to Appendix~\ref{sec:RecursiveRelation_Sturmian}
for the proof of Lemma~\ref{lem:=000020WordRecursion} and related
results.

\section{Transfer matrices and the Discriminant \protect\label{sec:=000020Transfer=000020Matrices=000020and=000020Discriminant}}

In this section, we study the spectrum of the operator $\Ham$ (\ref{eq:Hamiltonian=000020defined})
while our main focus lies on rational $\alpha\in[0,1]$. In this work,
we use the rational approximations $\alpha_{k}$ to study the spectrum
of $\Ham$ for $\alpha\in[0,1]\setminus\Q$. If $\alpha=\frac{p}{q}\in[0,1]$
is rational, then $\omega_{\alpha}$ is $q-$periodic. Hence, the
spectrum of $\Ham$ is given by Floquet-Bloch theory using transfer
matrices and the discriminant, as is described in the following. We
note that there is an equivalent approach to Floquet-Bloch theory
by employing $q\times q$ Hamiltonian matrices which depend on the
Bloch parameter. This equivalent approach (and its connections to
transfer matrices) is described in Appendix~\ref{sec:=000020Floquet-Bloch=000020Theory}
and extensively used in \cite{BanBecLoe_24}.

\subsection{The spectrum of periodic operators via transfer matrices and the
discriminant \protect\label{subsec:=000020spectrum=000020via=000020transfer=000020matrices}}

We briefly present here some basic Floquet-Bloch theory using the
transfer matrix formalism. We keep the exposition as short as possible
and mainly intend to set the notation and the tools to be used in
the sequel. Two good sources for a more thorough introduction to the
one-dimensional discrete Floquet-Bloch theory are \cite[ch. 5]{Simon2011}
and \cite[ch. 7]{Tes00}.

Let $V\in\R$ and $\alpha\in[0,1]$. The difference equations associated
to $\Ham$ are 
\begin{align}
Eu(n)=u(n-1)+u(n+1)+V\omega_{\alpha}(n)u(n),\quad E\ \in\R,\,n\in\Z.\label{eq:EVEquation}
\end{align}

Solutions of this equation are studied via the so-called one-step
transfer matrices
\[
A_{\alpha}(n)(E,V):=\begin{pmatrix}E-V\omega_{\alpha}(n) & -1\\
1 & 0
\end{pmatrix},\qquad E\in\R,\,n\in\Z.
\]
Writing the difference equations in a matrix form, we obtain the following.
\begin{lem}
\label{lem:DefTransMat} Let $V\in\R$, $\alpha\in[0,1]$, $u:\Z\to\C$
and $E\in\R$ be such that the equations (\ref{eq:EVEquation}) are
satisfied for all $n\in\Z$. Then we have for all $n\in\Z$ 
\[
A_{\alpha}(n)(E,V)\begin{pmatrix}u(n)\\
u(n-1)
\end{pmatrix}=\begin{pmatrix}u(n+1)\\
u(n)
\end{pmatrix}\qquad\textrm{and}\qquad\det\left(A_{\alpha}(n)(E,V)\right)=1.
\]
\end{lem}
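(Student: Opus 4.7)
The plan is to verify both assertions by direct computation, since the lemma is essentially a reformulation of the scalar recurrence (\ref{eq:EVEquation}) as a vectorial one together with an immediate determinant evaluation.

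First I would rearrange the difference equation (\ref{eq:EVEquation}) to isolate $u(n+1)$, obtaining
\[
u(n+1) = (E - V\omega_{\alpha}(n))\, u(n) - u(n-1), \qquad n \in \Z.
\]
Then I would compute the matrix-vector product on the left-hand side of the claimed identity:
\[
A_{\alpha}(n)(E,V)\begin{pmatrix} u(n) \\ u(n-1) \end{pmatrix} = \begin{pmatrix} (E - V\omega_{\alpha}(n))\, u(n) - u(n-1) \\ u(n) \end{pmatrix}.
\]
By the rearranged recurrence, the first component equals $u(n+1)$, while the second is trivially $u(n)$, which is the desired identity.

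For the determinant assertion, I would just expand the $2\times 2$ determinant of $A_{\alpha}(n)(E,V)$:
\[
\det A_{\alpha}(n)(E,V) = (E - V\omega_{\alpha}(n)) \cdot 0 - (-1)\cdot 1 = 1.
\]
Since the argument reduces to one rearrangement and two elementary computations, I do not anticipate any real obstacle; the only thing worth emphasizing is that the identity holds pointwise at each $n\in\Z$ and uses nothing about $\alpha$ being rational or about the global structure of $\omega_{\alpha}$, so the same statement will apply both to periodic approximants and to the aperiodic Sturmian case used later.
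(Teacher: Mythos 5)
Your proof is correct and matches the paper's intent: the paper dismisses the lemma with ``This follows by a short computation,'' and your direct rearrangement of the recurrence plus the elementary $2\times 2$ determinant expansion is precisely that computation written out.
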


\begin{proof}
This follows by a short computation.
\end{proof}
Let $\co\in\Co$ with $\frac{p}{q}:=\varphi(\co)\in[0,1]$ with $p,q$
coprime. We observed in the previous section that the potential $\omega_{\alpha}$
is $q$-periodic for rational $\alpha=\frac{p}{q}$. Thus, it is advantageous
to define the ($q$-step) transfer matrix 
\begin{equation}
\Mc:=A_{\pq}(q-1)\cdot A_{\pq}(q-2)\dots A_{\pq}(1)\cdot A_{\pq}(0)\label{eq:=000020lem-transfer=000020matrix=000020defined}
\end{equation}
\[
\]
and get the following immediate implication.
\begin{lem}
\label{lem:=000020transfer=000020matrix=000020defined} Let $V\in\R$,
$\co\in\Co$ with $\frac{p}{q}:=\varphi(\co)\in[0,1].$ Then 
\[
\Mc(E,V)\begin{pmatrix}u(0)\\
u(-1)
\end{pmatrix}=\begin{pmatrix}u(q)\\
u(q-1)
\end{pmatrix}
\]
holds for all $u:\Z\to\C$ and $E\in\R$ satisfying (\ref{eq:EVEquation}).
In addition, $\det M_{\co}=1$.
\end{lem}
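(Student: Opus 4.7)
The statement is essentially an iterated application of Lemma~\ref{lem:DefTransMat}, so my plan is to prove it by a short induction on the number of one-step transfer matrices applied, followed by a one-line determinant computation.

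For the first assertion, I would prove by induction on $n\in\{0,1,\ldots,q-1\}$ that
\[
A_{p/q}(n)\cdot A_{p/q}(n-1)\cdots A_{p/q}(0)\begin{pmatrix}u(0)\\ u(-1)\end{pmatrix}=\begin{pmatrix}u(n+1)\\ u(n)\end{pmatrix}.
\]
The base case $n=0$ is exactly Lemma~\ref{lem:DefTransMat} applied at $n=0$. For the inductive step, one multiplies the induction hypothesis on the left by $A_{p/q}(n+1)$ and uses Lemma~\ref{lem:DefTransMat} at index $n+1$, which converts $\begin{pmatrix}u(n+1)\\ u(n)\end{pmatrix}$ into $\begin{pmatrix}u(n+2)\\ u(n+1)\end{pmatrix}$. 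Taking $n=q-1$ yields exactly the claimed identity for $\Mc$ as defined in (\ref{eq:=000020lem-transfer=000020matrix=000020defined}).

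For the determinant identity, I would use multiplicativity of the determinant together with the fact $\det A_{p/q}(n)(E,V)=1$ from Lemma~\ref{lem:DefTransMat}:
\[
\det \Mc(E,V)=\prod_{n=0}^{q-1}\det A_{p/q}(n)(E,V)=1.
\]

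Both steps are entirely routine; there is no substantive obstacle here, as the lemma is merely the packaging of $q$ one-step applications into a single $q$-step transfer matrix. The only minor point worth a remark is that the statement makes sense for every $\co\in\Co$ with $\varphi(\co)\in[0,1]$ because the recursion for $u$ and the sequence $\omega_{p/q}$ depend only on $p/q=\varphi(\co)$, which is well defined by the conventions of Section~\ref{subsec:=000020space=000020of=000020continued=000020fractions}.
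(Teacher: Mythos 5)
Your proof is correct and follows exactly the approach the paper intends: the paper simply states "This is an immediate consequence of Lemma~\ref{lem:DefTransMat}," and your induction plus the determinant multiplicativity argument is precisely the routine unpacking of that one-liner. Nothing differs in substance.
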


\begin{proof}
This is an immediate consequence of Lemma~\ref{lem:DefTransMat}.
\end{proof}
The Lemma~\ref{lem:=000020transfer=000020matrix=000020defined} extends
to $\co=[0,0,-1]$ for which $\varphi(\co)=-1$. To do so, we set
$\frac{p}{q}=\frac{-1}{1}$ and apply the definition of the mechanical
word from lemma~\ref{lem:=000020lower=000020mechanical=000020word}
to get for all $n\in\Z,$
\[
\omega_{-1}(n):=\left\lfloor (n+1)(-1)\right\rfloor -\left\lfloor n(-1)\right\rfloor =-1,
\]
and
\begin{equation}
A_{-1}(n)(E,V)=\begin{pmatrix}E+V & -1\\
1 & 0
\end{pmatrix}=M_{[0,0,-1]}(E,V).\label{eq:M_=00005B0,0,-1=00005D}
\end{equation}

We continue by using the recursive structure of the Sturmian words,
as expressed in Lemma~\ref{lem:=000020WordRecursion}, in order to
provide the recursive relations between the transfer matrices.
\begin{lem}
\label{lem:=000020transfer=000020matrix=000020formula} Let $E\in\R$
and $V\in\R$. Denote 
\[
\Mz(E,V):=\begin{pmatrix}1 & -V\\
0 & 1
\end{pmatrix},
\]
and let $\co=[0,0,c_{1},\ldots,c_{k}]\in\Co$ with $\varphi(\co)\in[0,1]\cup\{-1\}$.
\begin{enumerate}
\item \label{enu:=000020transfer=000020matrix=000020formula=000020=00005B0,0=00005D}If
$\co=[0,0]$, then 
\[
\Mzz(E,V)=\begin{pmatrix}E & -1\\
1 & 0
\end{pmatrix}.
\]
\item \label{enu:=000020transfer=000020matrix=000020formula=000020_=000020general}If
$k\in\N$ and $c_{k}\in\N_{0}$, then 
\begin{equation}
\Mc(E,V)=\begin{cases}
M_{[0,0,c_{1},\ldots,c_{k-1}]}(E,V)^{c_{k}}\cdot M_{[0,0,c_{1}\ldots,c_{k-2}]}(E,V),\quad & k\equiv0\mod 2,\\
M_{[0,0,c_{1}\ldots,c_{k-2}]}(E,V)\cdot M_{[0,0,c_{1},\ldots,c_{k-1}]}(E,V)^{c_{k}},\quad & k\equiv1\mod 2.
\end{cases}\label{eq:=000020lem-transfer=000020matrix=000020formula}
\end{equation}
\item \label{enu:=000020transfer=000020matrix=000020formula=000020_=000020trace=000020recursion}If
$k\in\N$ and $c_{k}\in\N_{-1}$, then
\begin{equation}
\tr\left(\Mc\right)=\tr\left(M_{[0,0,c_{1}\ldots,c_{k-2}]}\cdot M_{[0,0,c_{1},\ldots,c_{k-1}]}{}^{c_{k}}\right).\label{eq:=000020lem-transfer=000020matrix=000020formula=000020-=000020trace}
\end{equation}
\end{enumerate}
\end{lem}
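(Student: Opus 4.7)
The plan is to deduce parts (a)--(c) from Lemma~\ref{lem:=000020WordRecursion} together with direct computations for small indices. Part~(a) is immediate from the definitions: since $\varphi([0,0])=0$, the period has length $q=1$ and $\omega_{0}(0)=0$, so $\Mzz(E,V)=A_{0}(0)(E,V)$ is exactly the claimed matrix.

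For part~(b), I first treat the base case $k=1$ separately. Here $W_{1}=\underbrace{0\ldots0}_{c_{1}-1}1$, so only the leftmost (i.e., last-to-be-read) one-step matrix in the product $\Mc=A_{1/c_{1}}(c_{1}-1)\cdots A_{1/c_{1}}(0)$ carries a potential shift of $-V$; a direct $2\times 2$ calculation then verifies $\Mc=\Mz\cdot M_{[0,0]}^{\,c_{1}}$, which is the odd case of the recursion with the stated $\Mz$. For $k\geq 2$, the product defining $\Mc$ is ordered right-to-left in the position index $n$. When $k$ is even, $W_{k}=W_{k-2}W_{k-1}^{c_{k}}$, so positions $0,\dots,q_{k-2}-1$ contribute one-step matrices encoding $W_{k-2}$ and positions $q_{k-2},\dots,q_{k}-1$ contribute $c_{k}$ blocks encoding $W_{k-1}$. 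Reading right-to-left, the transfer-matrix product then yields the factor $M_{[\co_{k-2}]}$ followed by $c_{k}$ factors of $M_{[\co_{k-1}]}$, giving $\Mc=M_{[\co_{k-1}]}^{\,c_{k}}\cdot M_{[\co_{k-2}]}$. The odd case is symmetric, and $c_{k}=0$ is consistent via $\varphi([\co_{k-1},0])=\varphi([\co_{k-2}])$.

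Part~(c) splits according to $c_{k}$. For $c_{k}\in\N_{0}$ the identity follows from part~(b) and cyclicity of the trace, which exchanges the two parity cases. The case $c_{k}=-1$ is the main obstacle. First, $\Mc$ depends on $\co$ only through $\varphi(\co)$, so the reduction rule defining $\varphi$ gives $\Mc=M_{[\co_{k-2},\,c_{k-1}-1]}$. Second, applying part~(b) to this shorter expression (legitimate since $c_{k-1}\geq 1$ whenever $\co\in\Co$ has $c_{k}=-1$) and comparing with the analogous factorization of $M_{[\co_{k-1}]}=M_{[\co_{k-2},\,c_{k-1}]}$ yields $M_{[\co_{k-2},\,c_{k-1}-1]}=M_{[\co_{k-2}]}^{-1}\cdot M_{[\co_{k-1}]}$, or the reversed product, according to the parity of $k-1$. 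Third, the $SL(2,\R)$ identity $\tr(X^{-1})=\tr(X)$, a direct consequence of $\det X=1$, combined with $\tr(XY)=\tr(YX)$, converts either factorization into $\tr(M_{[\co_{k-2}]}\cdot M_{[\co_{k-1}]}^{-1})$, as required.

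The principal obstacle is the $c_{k}=-1$ case: the $\varphi$-reduction shortens the continued fraction by one entry, causing the parity governing part~(b) to flip, and without the identity $\tr(X)=\tr(X^{-1})$ for $X\in SL(2,\R)$ the expression coming out of the direct computation does not match the form demanded by the lemma.
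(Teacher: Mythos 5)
Your proof follows essentially the same route as the paper: part (a) by direct inspection, part (b) from the word recursion of Lemma~\ref{lem:=000020WordRecursion} together with the order reversal built into the definition of $\Mc$, and part (c) from cyclicity of the trace for $c_{k}\ge 0$ and from the $\varphi$-reduction plus $\tr(X)=\tr(X^{-1})$ for $X\in SL(2,\R)$ when $c_{k}=-1$. The reduction argument you give for $c_{k}=-1$, namely factoring both $M_{[\co_{k-2},\,c_{k-1}-1]}$ and $M_{[\co_{k-1}]}$ via part (b) and cancelling a common factor, is a mild repackaging of the chain of trace identities the paper uses, and is correct as far as it goes.

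However, there is a genuine gap at the base case $k=1$, $c_{k}=-1$, i.e.\ $\co=[0,0,-1]$. Your justification that ``$c_{k-1}\ge 1$ whenever $\co\in\Co$ has $c_{k}=-1$'' is false there, since $c_{k-1}=c_{0}=0$ by the paper's convention. More substantively, for $k=1$ part (b) cannot be invoked with index $k-1=0$, and the $\varphi$-reduction rule does not lower $[0,0,-1]$ to anything shorter inside $\Co$; the paper instead assigns $\varphi([0,0,-1])=-1$ directly. So the displayed identity $\tr(M_{[0,0,-1]})=\tr(M_{[0]}\cdot M_{[0,0]}^{-1})$ must be verified by a separate direct computation (both sides equal $E+V$), exactly as the paper does. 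Your argument as written silently skips a case that falls under the hypotheses of part (c), so you should add this check.
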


\begin{rem*}
~
\begin{enumerate}
\item We clarify the lower recursive relations in the Lemma~\ref{lem:=000020transfer=000020matrix=000020formula}
by explicitly writing
\[
M_{[0,0,c_{1}]}=M_{[0]}\cdot M_{[0,0]}{}^{c_{1}}\quad\mathrm{and}\quad M_{[0,0,c_{1},c_{2}]}=M_{[0,0,c_{1}]}{}^{c_{2}}\cdot M_{[0,0]}(E,V).
\]
\item In addition, we note that (\ref{eq:=000020lem-transfer=000020matrix=000020formula})
does not hold for $c_{k}=-1$ (or rather, should be appropriately
modified), whereas (\ref{eq:=000020lem-transfer=000020matrix=000020formula=000020-=000020trace})
does hold also for all $c_{k}\in\Nmo$. This property of the trace
is important and will be used in the next subsection.
\end{enumerate}
\end{rem*}
\begin{proof}
For $k=0$, we get $\co=[0,0]$ and since $\omega_{\varphi([0,0])}=\omega_{0}=0^{\infty}$
we have $q_{0}=1$ and 
\[
\Mzz=A_{\frac{p_{0}}{q_{0}}}(0)=\begin{pmatrix}E & -1\\
1 & 0
\end{pmatrix},
\]
proving (\ref{enu:=000020transfer=000020matrix=000020formula=000020=00005B0,0=00005D}).

Next, we prove (\ref{enu:=000020transfer=000020matrix=000020formula=000020_=000020general}).
If $k=1$, then $\co=[0,0,c_{1}]$ with $c_{1}\in\N$, as otherwise
(i.e., if $c_{1}=0$) $\varphi(\co)=\infty$. Hence, $\varphi(\co)=\frac{1}{c_{1}}$
and Lemma~\ref{lem:=000020WordRecursion} implies 
\[
\omega_{\frac{1}{c_{1}}}(0)\omega_{\frac{1}{c_{1}}}(1)\dots\omega_{\frac{1}{c_{1}}}(c_{1}-2)\omega_{\co}(c_{1}-1)=00\dots01.
\]
Using the definition of the transfer matrix, (\ref{eq:=000020lem-transfer=000020matrix=000020defined}),
we get 
\begin{align*}
M_{[0,0,c_{1}]} & =A_{\frac{1}{c_{1}}}(c_{1}-1)\cdot A_{\frac{1}{c_{1}}}(c_{1}-2)\dots A_{\frac{1}{c_{1}}}(1)\cdot A_{\frac{1}{c_{1}}}(0)\\
 & =\begin{pmatrix}E-V & -1\\
1 & 0
\end{pmatrix}\begin{pmatrix}E & -1\\
1 & 0
\end{pmatrix}^{c_{1}-1}\\
 & =\begin{pmatrix}1 & -V\\
0 & 1
\end{pmatrix}\begin{pmatrix}E & -1\\
1 & 0
\end{pmatrix}\begin{pmatrix}E & -1\\
1 & 0
\end{pmatrix}^{c_{1}-1}\\
 & =M_{[0]}\cdot M_{[0,0]}{}^{c_{1}},
\end{align*}
which verifies the statement for $k=1$ and $c_{k}\in\N$. Note that
the case $k=1$ and $c_{k}=0$ results in $\co=[0,0,0]$ satisfying
$\varphi(\co)=\infty$ which is excluded by assumption. If $k\geq2$
and $c_{k}\notin\{-1,0\}$, then (\ref{enu:=000020transfer=000020matrix=000020formula=000020_=000020general})
follows from Lemma~\ref{lem:=000020WordRecursion}.

Now, let $k\geq2$ and $c_{k}=0$ . We have $\co=[0,0,c_{1},\ldots,c_{k-2},c_{k-1},0]$
and by definition of the evaluation map, we get $\varphi(\co)=\varphi([0,0,c_{1},\ldots,c_{k-2}])$.
Thus, $\Mc(E,V)=M_{[0,0,c_{1}\ldots,c_{k-2}]}(E,V)$ follows since
$M_{\co}$ only depends on the evaluation $\varphi(\co)$. In particular,
(\ref{eq:=000020lem-transfer=000020matrix=000020formula}) holds also
for $c_{k}=0$ (regardless of the parity of $k$).

It is left to prove (\ref{eq:=000020lem-transfer=000020matrix=000020formula=000020-=000020trace}).
As a matter of fact, the cyclic property of the trace yields that
(\ref{eq:=000020lem-transfer=000020matrix=000020formula=000020-=000020trace})
is a direct consequence of (\ref{eq:=000020lem-transfer=000020matrix=000020formula})
if $c_{k}\neq-1$.

If $c_{k}=-1$ we have $\co=[0,0,c_{1},\ldots,c_{k-1},-1]$ and $\varphi(\co)=\varphi([0,0,c_{1},\ldots,c_{k-1}-1])$
and by definition $M_{\co}=M_{[0,0,c_{1},\ldots,c_{k-1}-1]}$. If
$k\geq2$ we get 
\begin{align*}
\tr\left(M_{\co}\right) & =\tr\left(M_{[0,0,c_{1},\ldots,c_{k-1}-1]}\right)\\
 & =\tr\left(M_{[0,0,c_{1}\ldots,c_{k-3}]}\cdot M_{[0,0,c_{1},\ldots,c_{k-2}]}{}^{c_{k-1}-1}\right)\\
 & =\tr\left(\left(M_{[0,0,c_{1}\ldots,c_{k-3}]}\cdot M_{[0,0,c_{1},\ldots,c_{k-2}]}{}^{c_{k-1}}\right)\cdot M_{[0,0,c_{1},\ldots,c_{k-2}]}{}^{-1}\right)\\
 & =\tr\left(M_{[0,0,c_{1}\ldots,c_{k-1}]}\cdot M_{[0,0,c_{1},\ldots,c_{k-2}]}{}^{-1}\right)\\
 & =\tr\left(M_{[0,0,c_{1}\ldots,c_{k-2}]}\cdot M_{[0,0,c_{1},\ldots,c_{k-1}]}{}^{-1}\right),
\end{align*}
where in the second and fourth equality we used (\ref{eq:=000020lem-transfer=000020matrix=000020formula})
together with the cyclic property of the trace (which allows not to
distinguish between even and odd $k$'s) and in the last equality
we used that $\tr(M)=\tr(M^{-1})$ whenever $\det M=1$ (and this
holds for transfer matrices by Lemma~\ref{lem:DefTransMat}). All
is left is to check the case $k=1$ and $c_{k}=-1$. In this case,
$\co=[0,0,-1]$, $\varphi(\co)=-1$ and a straightforward computation
invoking (\ref{eq:M_=00005B0,0,-1=00005D}) gives
\[
\tr\left(M_{[0]}M_{[0,0]}^{-1}\right)=V+E=\tr\left(M_{[0,0,-1]}\right).
\]
\end{proof}
By standard Floquet-Bloch theory (applied to one-dimensional Jacobi
operators), the spectrum of $\Hrat$ (for $\frac{p}{q}=\varphi(\co)$)
may be described by the trace of $M_{\co}$. Therefore, define the
discriminant $\tc$ for $\co\in\Co$ by
\[
\tc(E,V):=\tr(\Mc(E,V))
\]
and 
\[
\sigc(V):=\tc(\cdot,V)^{-1}([-2,2]).
\]

\begin{example}
\label{exa:Spectrum_=00005B0=00005D=000020and=000020=00005B0,0,-1=00005D}Observe
that if $\varphi(\co)=\infty$, then $t_{\co}(E,V)=2$ and so $\sigma_{\co}(V)=\R$
hold. If $\varphi(\co)=-1$, then (\ref{eq:M_=00005B0,0,-1=00005D})
leads to $t_{[0,0,-1]}(E,V)=E+V$ and so $\sigma_{[0,0,-1]}(V)=[-2-V,2-V]$
for all $V\in\R$.
\end{example}

If $\varphi(\co)\neq\infty$, we bring here a summary of useful properties
which may be found for example in \cite[sec. 5.4]{Simon2011}, \cite[sec. 7.1]{Tes00}.
\begin{prop}
\label{prop:=000020Floquet-Bloch=000020via=000020transfer=000020matrix}
Let $\co\in\Co$ with $\varphi(\co)\neq\infty$ and $V\in\R$.

Then the following properties hold:
\begin{enumerate}
\item $\sigma_{\co}(V)=\sigma\left(H_{\varphi(\co),V}\right).$
\item Denoting $\frac{p}{q}=\varphi(\co)$, the set $\tc(\cdot,V)^{-1}((-2,2))$
consists of exactly $q$ open intervals.
\item The discriminant $\tc$ is monotone on each connected component of
$\tc(\cdot,V)^{-1}((-2,2))$.
\end{enumerate}
\end{prop}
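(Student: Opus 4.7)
The plan is to prove the three claims using standard Floquet--Bloch theory for periodic one-dimensional Jacobi operators.

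The preliminary step is to establish that $\tc(E,V)$ is a monic polynomial in $E$ of degree exactly $q$. This follows by induction on the product definition of $\Mc$, since each factor $A_{p/q}(n)(E,V)$ has top-left entry $E-V\omega_{p/q}(n)$ that is monic linear in $E$ while the other entries are constant; each multiplication raises the leading degree of the top-left entry by one, and after $q$ steps one reads off $\tc(E,V)=E^{q}+\text{lower order}$.

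For claim (1), I would invoke the Floquet criterion. Since $\det\Mc(E,V)=1$ by Lemma~\ref{lem:=000020transfer=000020matrix=000020defined}, the eigenvalues of $\Mc(E,V)$ are $\lambda,\lambda^{-1}$ with $\lambda+\lambda^{-1}=\tc(E,V)$. By the standard theory of periodic Jacobi operators (cf.\ \cite{Simon2011,Tes00}), $E\in\sigma(H_{p/q,V})$ if and only if polynomially bounded solutions of \eqref{eq:EVEquation} exist, which happens precisely when $|\lambda|=1$, i.e.\ $\tc(E,V)\in[-2,2]$. This gives $\sigma(H_{p/q,V})=\sigc(V)$.

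For claims (2) and (3), I would combine the polynomial structure with the classical band analysis. Since $\tc$ is monic of degree $q$, $|\tc(E,V)|\to+\infty$ as $E\to\pm\infty$, so $\tc^{-1}((-2,2))$ is bounded. The key point, which I take to be the main obstacle, is to prove that $\tc$ is strictly monotone on each connected component of $\tc^{-1}((-2,2))$; once this is established, together with the degree constraint and the intermediate value theorem, it produces exactly $q$ components. The monotonicity is proved via a Wronskian identity that relates $\partial_{E}\tc(E,V)$ to the two linearly independent Floquet solutions at $E$; the identity shows that this derivative is nonzero whenever $|\tc(E,V)|<2$. Equivalently, parameterizing the unimodular eigenvalues of $\Mc(E,V)$ as $e^{\pm i\theta(E)}$ on a band, one checks that $\theta(E)$ is a strictly monotone function of $E$, and hence so is $\tc=2\cos\theta$ on the open connected components of each band. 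For the detailed Wronskian identity and the band-counting consequence, I would refer to \cite[Ch.~5]{Simon2011} or \cite[Ch.~7]{Tes00}, and simply verify that the hypotheses used there apply in the present Jacobi setting.
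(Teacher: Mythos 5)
Your proposal is correct and takes essentially the same route as the paper: the paper offers no proof of this proposition and simply cites \cite[sec.~5.4]{Simon2011} and \cite[sec.~7.1]{Tes00} for the classical Floquet--Bloch facts, which is exactly what you do, supplemented by the (accurate) observations that $\tc(\cdot,V)$ is monic of degree $q$ and that strict monotonicity on bands follows from the Wronskian/phase argument. One small point worth keeping in mind if you ever fill in the details: strict monotonicity plus the degree bound only gives \emph{at most} $q$ components; to get \emph{exactly} $q$ one also needs that $\tc(\cdot,V)-c$ has $q$ real roots for each $c\in(-2,2)$, which the cited references obtain e.g.\ by realizing these roots as the eigenvalues of a $q\times q$ Hermitian Floquet matrix (as in Appendix~\ref{sec:=000020Floquet-Bloch=000020Theory} of this paper).
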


The connected components $\tc(\cdot,V)^{-1}((-2,2))$ mentioned in
Proposition~\ref{prop:=000020Floquet-Bloch=000020via=000020transfer=000020matrix}
are the interior of the so-called \emph{spectral bands} of $\sigc(V)$.
The spectral bands are closed intervals whose edge points are given
by $\tc(\cdot,V)^{-1}\left(\left\{ -2,2\right\} \right)$. In general,
it is possible that different spectral bands overlap at their endpoint.
However, this is not the case for the approximations of the Sturmian
Hamiltonian if $V\neq0$, see Proposition~\ref{prop:=000020there=000020are=000020q=000020spectral=000020bands}.
\begin{rem*}
We were trying to keep the notation here close to the one in \cite{Raym95}
and in particular use the notations $M$ and $t$ for the transfer
matrix and its trace (discriminant). Nevertheless, we deviate in the
subscript notation, using $M_{\co}$ instead of $M_{k}$ and $\tc$
instead of $t_{(k,p)}$. The reasons for this change are exactly the
ones which are specified in Remark~\ref{rem:=000020why=000020we=000020need=000020C=000020space}.
\end{rem*}

\subsection{Algebraic identities of the transfer matrices and their traces}

In this subsection, we develop some identities for the traces $\tc$.
These identities will be used in the following sections to derive
spectral properties of the periodic operators $H_{\varphi(\co),V}$.
Some of these identities can be found in \cite{BIST89,Raym95}. However,
we use here a slightly different notation (to fit \cite{BanBecLoe_24})
and in particular, use the mechanical word sequences $\omega_{\varphi(\co)}$
for all values of $\co\in\Co$, rather than a single fixed Sturmian
sequence $\omega_{\alpha}$ which is the approach used in \cite{BIST89,Raym95}.
For the sake of a self-contained presentation, we provide here complete
proofs of all the relevant identities.

We start by noting a basic property of the discriminant $\tc=\tr(M_{\co})$:
even though it is a function of $\co$, it depends only on the value
$\varphi(\co)$. This fundamental property deserves an explicit mention
here, as it is substantially used in the sequel.
\begin{lem}
\label{lem:=000020Trace=000020Depends=000020On=000020Value=000020Only}
Let $\co,\widetilde{\co}\in\Co$ be such that $\varphi(\co)=\varphi(\widetilde{\co})$,
then 
\[
\tc(\cdot,V)=t_{\widetilde{\co}}(\cdot,V)\textrm{\ensuremath{\quad}and\ensuremath{\quad}}\sigma_{\co}(V)=\sigma_{\widetilde{\co}}(V)\quad\textrm{for all}~V\in\R.
\]
\end{lem}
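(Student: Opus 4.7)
The claim about $\sigma_{\co}$ is an immediate consequence of the claim about $t_{\co}$, since by definition $\sigma_{\co}(V)=t_{\co}(\cdot,V)^{-1}([-2,2])$. Hence it suffices to prove $t_{\co}(\cdot,V)=t_{\widetilde{\co}}(\cdot,V)$. The plan is to split by the value of the common evaluation $\varphi(\co)=\varphi(\widetilde{\co})$, which by the remark preceding the statement lies in $(\Q\cap[0,1])\cup\{-1\}\cup\{\infty\}$, and treat each case separately.

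First, suppose $\varphi(\co)\in\Q\cap[0,1]$. Write it in lowest terms as $p/q$. The key observation is that the definition \eqref{eq:=000020lem-transfer=000020matrix=000020defined} of the transfer matrix,
\[
M_{\co}=A_{p/q}(q-1)\cdot A_{p/q}(q-2)\cdots A_{p/q}(0),
\]
only involves the reduced fraction $p/q=\varphi(\co)$ through the one-step transfer matrices $A_{p/q}(n)$. In particular $M_{\co}$ is literally a function of $\varphi(\co)$ alone, and is independent of which representative $\co\in\Co$ is chosen. Hence $M_{\co}=M_{\widetilde{\co}}$, and taking traces gives $t_{\co}=t_{\widetilde{\co}}$.

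Second, suppose $\varphi(\co)=-1$. I would argue that $\co=[0,0,-1]$ is the unique element of $\Co$ with this value. To see this, note that for $\co=[0,0,c_{1},\ldots,c_{k}]$ with $c_{k}\in\N$, the ``otherwise'' branch in \eqref{eq:=000020phi=000020map=000020from=000020C=000020to=000020Q} gives a value in $(0,1]$; and if $c_{k}\in\{-1,0\}$ the first two branches reduce to a strictly shorter tuple. An induction on the length $k$ (with the base case $\co=[0,0,-1]$ handled by the special convention $\varphi([0,0,-1])=-1$) shows that $-1$ cannot arise from any other element of $\Co$. Thus $\co=\widetilde{\co}=[0,0,-1]$ and the equality is trivial.

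Third, suppose $\varphi(\co)=\infty$. By the remark following \eqref{eq:=000020phi=000020map=000020from=000020C=000020to=000020Q}, both $\co$ and $\widetilde{\co}$ lie in the explicit set $\{[0],\,[0,0,0],\,[0,0,1,-1]\}$. Example~\ref{exa:Spectrum_=00005B0=00005D=000020and=000020=00005B0,0,-1=00005D} records $t_{\co}(\cdot,V)\equiv 2$ (and hence $\sigma_{\co}(V)=\R$) for every such $\co$, so again the claim is immediate. The only point that requires care is the case $\co=[0,0,1,-1]$: here one can double-check $t_{\co}=2$ using the recursive trace identity \eqref{eq:=000020lem-transfer=000020matrix=000020formula=000020-=000020trace} of Lemma~\ref{lem:=000020transfer=000020matrix=000020formula}, obtaining $t_{[0,0,1,-1]}=\tr\bigl(M_{[0,0]}\cdot M_{[0,0,1]}^{-1}\bigr)=\tr\bigl(M_{[0]}^{-1}\bigr)=2$.

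The main potential obstacle is the edge-case bookkeeping: the definition of $M_{\co}$ is stated only for $\varphi(\co)\in[0,1]\cup\{-1\}$, so one must confirm that at the excluded value $\infty$ the trace is consistently defined (independently of the presentation of $\co$). This is precisely what the explicit enumeration above, together with Example~\ref{exa:Spectrum_=00005B0=00005D=000020and=000020=00005B0,0,-1=00005D} and the trace identity from Lemma~\ref{lem:=000020transfer=000020matrix=000020formula}, settles.
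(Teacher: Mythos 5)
Your proof is correct and follows essentially the same approach as the paper: observe that the transfer matrix $M_{\co}$ defined in (\ref{eq:=000020lem-transfer=000020matrix=000020defined}) depends only on the reduced fraction $\varphi(\co)$, so the traces agree whenever $\varphi(\co)\neq\infty$; handle $\varphi(\co)=\infty$ by a direct check that $t_{\co}\equiv 2$ in all three cases; and deduce the spectrum claim from the trace claim. Your separate treatment of the case $\varphi(\co)=-1$ (via uniqueness of $[0,0,-1]$ in $\Co$) is a harmless detour — the paper simply includes $-1$ in the ``depends purely on $\varphi(\co)$'' argument since (\ref{eq:M_=00005B0,0,-1=00005D}) already extends the transfer-matrix definition there — but your uniqueness claim is in fact true, and your explicit verification that $t_{[0,0,1,-1]}=2$ fills in a computation the paper only alludes to.
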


\begin{proof}
This property is immediate from the definition of $M_{\co}$, (\ref{eq:=000020lem-transfer=000020matrix=000020defined}),
which depends purely on the value of $\varphi(\co)$, if $\varphi(\co)\neq\infty$.
For $\varphi(\co)=\infty$, the matrix $M_{\co}$ does depend on $\co\in\{[0],[0,0,0],[0,0,1-1]\}$.
However, a short computation gives $\tc(E,V)=2$ if $\varphi(\co)=\infty$.
The statement $\sigma_{\co}(V)=\sigma_{\widetilde{\co}}(V)$ follows
directly from the equality of the traces.
\end{proof}
For the sake of representation, we write $\tc$ instead of $\tc(E,V)$.
As an immediate corollary we get.
\begin{cor}
\label{cor:FirstTraceIds} Let $[0,c_{0},c_{1},\dots,c_{k}]\in\Co$
with $c_{k}\in\N$. Then the following identities hold 
\begin{align*}
t_{[0,c_{0},\ldots,c_{k-1},c_{k},0]} & =t_{[0,c_{0},\ldots,c_{k-1}]}\\
t_{[0,c_{0},\ldots,c_{k-1},c_{k},-1]} & =t_{[0,c_{0},\ldots,c_{k-1},c_{k}-1]}\\
t_{[0,c_{0},\ldots,c_{k-1},c_{k},1]} & =t_{[0,c_{0},\ldots,c_{k-1},c_{k}+1]}.
\end{align*}
\end{cor}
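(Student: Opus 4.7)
The plan is to deduce all three identities as immediate consequences of Lemma~\ref{lem:=000020Trace=000020Depends=000020On=000020Value=000020Only}, which asserts that the discriminant $t_{\co}$ depends on $\co\in\Co$ only through the value $\varphi(\co)$. Hence each of the three claimed trace equalities reduces to verifying a matching identity of the form $\varphi(\co_{1})=\varphi(\co_{2})$ for the corresponding pair of tuples, together with checking that both tuples actually lie in $\Co$ (this is where the hypothesis $c_{k}\in\N$ is used: it ensures that all ``middle'' entries of every tuple appearing on either side are in $\N$, and that the trailing entries are in $\N_{-1}$).

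For the first identity, I would appeal to the second clause of the defining equation \eqref{eq:=000020phi=000020map=000020from=000020C=000020to=000020Q}: the tuple $[0,c_{0},\dots,c_{k-1},c_{k},0]$ terminates in $0$ at a positive length index, so $\varphi$ drops its last two entries and returns $\varphi([0,c_{0},\dots,c_{k-1}])$. For the second identity, the first clause of \eqref{eq:=000020phi=000020map=000020from=000020C=000020to=000020Q} applies since the tuple terminates in $-1$, directly yielding $\varphi([0,c_{0},\dots,c_{k-1},c_{k},-1])=\varphi([0,c_{0},\dots,c_{k-1},c_{k}-1])$.

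For the third identity, I would invoke the classical dual continued-fraction representation recorded in the remark following \eqref{eq:=000020phi=000020map=000020from=000020C=000020to=000020Q}. Because $c_{k}\in\N$ and $c_{k}+1\in\N$, both tuples fall under the ``otherwise'' case of \eqref{eq:=000020phi=000020map=000020from=000020C=000020to=000020Q}, so one can evaluate the two finite continued fractions as ordinary expressions and observe the elementary simplification $c_{k}+\tfrac{1}{1}=c_{k}+1$. This produces $\varphi([0,c_{0},\dots,c_{k},1])=\varphi([0,c_{0},\dots,c_{k-1},c_{k}+1])$.

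I do not anticipate any real obstacle; once the three $\varphi$-equalities are in place, a single application of Lemma~\ref{lem:=000020Trace=000020Depends=000020On=000020Value=000020Only} to each pair gives the three discriminant identities at once. The only mildly delicate point is bookkeeping: one must confirm that each of the six tuples involved truly lies in $\Co$, which is routine from $c_{k}\in\N$ and the assumption that the original $[0,c_{0},\dots,c_{k}]$ belongs to $\Co$.
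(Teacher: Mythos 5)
Your proof is correct and takes essentially the same approach as the paper: both reduce the three trace identities to the corresponding $\varphi$-equalities via Lemma~\ref{lem:=000020Trace=000020Depends=000020On=000020Value=000020Only}, with the $\varphi$-equalities following from the case distinction in \eqref{eq:=000020phi=000020map=000020from=000020C=000020to=000020Q} and the dual continued-fraction identity. The paper is terser and omits the membership-in-$\Co$ bookkeeping you spell out, but the argument is identical in substance.
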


\begin{proof}
This is an implication of Lemma~\ref{lem:=000020Trace=000020Depends=000020On=000020Value=000020Only}
together with the identities 
\[
\varphi([\co,0])=\varphi([0,c_{0},\ldots,c_{k-1}]),\qquad\varphi([\co,-1])=\varphi([0,c_{0},\ldots,c_{k-1},c_{k}-1]),
\]
and $\varphi([\co,1])=\varphi([0,c_{0},\ldots,c_{k-1},c_{k}+1])$
for $\co=[0,c_{0},\ldots,c_{k}]$.
\end{proof}

\begin{lem}
\label{lem:TraceEvolution} \cite[prop. 2.2]{Raym95} Let $\co\in\Co$
and $m\in\Nz$ such that $[\co,m]\in\Co$. Then 
\[
t_{[\co,m+1]}=\tc t_{[\co,m]}-t_{[\co,m-1]}.
\]
\end{lem}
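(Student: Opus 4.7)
The plan is to reduce the three-term recurrence to the Cayley--Hamilton identity for unimodular $2{\times}2$ matrices. Throughout, fix $(E,V)$ and suppress it from the notation.

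Write $\co=[0,0,c_{1},\ldots,c_{k}]$, and first treat the main case $k\geq 1$. Set
\[
A:=\Mc,\qquad B:=M_{[0,0,c_{1},\ldots,c_{k-1}]}.
\]
By Lemma~\ref{lem:=000020transfer=000020matrix=000020formula}(\ref{enu:=000020transfer=000020matrix=000020formula=000020_=000020trace=000020recursion}), for every $n\in\Nmo$ such that $[\co,n]\in\Co$ we have
\[
t_{[\co,n]}=\tr\!\bigl(B\cdot A^{n}\bigr).
\]
The point is that although Lemma~\ref{lem:=000020transfer=000020matrix=000020formula}(\ref{enu:=000020transfer=000020matrix=000020formula=000020_=000020general}) places $B$ on the right or left of $A^{n}$ depending on the parity of $k+1$, the cyclic property of the trace makes the two orderings produce the same scalar, so the displayed identity is valid irrespective of parity.

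Next, by Lemma~\ref{lem:=000020transfer=000020matrix=000020defined} we have $\det(A)=1$, so the Cayley--Hamilton theorem gives $A^{2}-\tr(A)\,A+I=0$. Multiplying by $A^{m-1}$ yields
\[
A^{m+1}=\tr(A)\,A^{m}-A^{m-1}.
\]
Left-multiplying by $B$ and taking the trace produces
\[
\tr\!\bigl(B A^{m+1}\bigr)=\tr(A)\,\tr\!\bigl(B A^{m}\bigr)-\tr\!\bigl(B A^{m-1}\bigr),
\]
which in view of the identification above and the equality $\tc=\tr(A)$ is precisely the claim.

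It remains to handle the small-$\co$ boundary cases. If $\co=[0,0]$ (so $k=0$), the same computation applies verbatim with $A:=M_{[0,0]}$ and $B:=M_{[0]}$, using Lemma~\ref{lem:=000020transfer=000020matrix=000020formula}(\ref{enu:=000020transfer=000020matrix=000020formula=000020_=000020general}). If $\co=[0]$, the only admissible value is $m=0$ with $[\co,0]=[0,0]$; one then uses $t_{[0]}=2$ (Example~\ref{exa:Spectrum_=00005B0=00005D=000020and=000020=00005B0,0,-1=00005D}) together with the Cayley--Hamilton relation $A+A^{-1}=\tr(A)I$ applied to $A=M_{[0,0]}$ to conclude. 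The main (very mild) obstacle is bookkeeping: one must verify that the trace formula in Lemma~\ref{lem:=000020transfer=000020matrix=000020formula}(\ref{enu:=000020transfer=000020matrix=000020formula=000020_=000020trace=000020recursion}) indeed lets us extract the $n$-dependence into a clean power $A^{n}$, which is exactly where the cyclic invariance of the trace does the work.
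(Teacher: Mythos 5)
Your proof is correct and follows essentially the same route as the paper: both reduce $t_{[\co,n]}$ to $\tr(M_{\co'}\Mc^{n})$ via Lemma~\ref{lem:=000020transfer=000020matrix=000020formula}~(\ref{enu:=000020transfer=000020matrix=000020formula=000020_=000020trace=000020recursion}) and then apply the Cayley--Hamilton identity $\Mc^{2}=\tr(\Mc)\Mc-\id_{2}$ together with linearity of the trace. Your extra attention to the boundary cases $\co=[0,0]$ and $\co=[0]$ is a harmless elaboration (the paper implicitly excludes $\co=[0]$, for which $[\co,m\pm1]\notin\Co$ anyway), not a different argument.
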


\begin{proof}
Let $\co'\in\Co$ and $c_{k}\in\N_{0}$ be such that $\co=[\co',c_{k}]$.
Observe that $A^{2}=\tr(A)A-\det(A)\id_{2}$ for complex $2\times2$
matrices (this is actually a special case of Cayley-Hamilton theorem).
In particular, we will use this identity for the transfer matrices,
for which $\det\left(M_{\co}\right)=1$ by Lemma~\ref{lem:DefTransMat}.
With this at hand we get 
\begin{align*}
t_{[\co,m+1]}=t_{[\co',c_{k},m+1]} & =\tr(M_{[\co',c_{k},m+1]})\\
 & =\tr(M_{\co'}\Mc^{m+1})\\
 & =\tr(M_{\co'}\Mc^{m-1}\Mc^{2})\\
 & =\tr(M_{\co'}\Mc^{m-1}[\tr(\Mc)\Mc-\det(\Mc)\id_{2}])\\
 & =\tr(\Mc)\tr(M_{\co'}\Mc^{m})-\tr(M_{\co'}\Mc^{m-1})\\
 & =\tr(\Mc)\tr(M_{[\co,m]})-\tr(M_{[\co,m-1]})\\
 & =\tc t_{[\co,m]}-t_{[\co,m-1]},
\end{align*}
where in the second and sixth lines we used (\ref{eq:=000020lem-transfer=000020matrix=000020formula=000020-=000020trace})
of Lemma~\ref{lem:=000020transfer=000020matrix=000020formula}.
\end{proof}
Next, we aim at generalizing Lemma~\ref{lem:TraceEvolution}. To
do so, we introduce the \textit{dilated Chebyshev polynomials of the
second kind} $\CP_{l}:\R\to\R$ (see \cite[eq. (18.1.3)]{DigLibMathFunc}).
These polynomials are inductively defined by
\begin{equation}
\CP_{-1}(x):=0,\quad\CP_{0}(x):=1,\quad\CP_{l}(x)=x\CP_{l-1}(x)-\CP_{l-2}(x).\label{eq:=000020Chebyshev=000020recursion}
\end{equation}
Appendix~\ref{sec:=000020Cheby=000020poly} contains an elaborate
account on these polynomials, their connection to the 'usual' Chebyshev
polynomials of the second kind and various useful identities which
are used in this review as well as in \cite{BanBecLoe_24}.
\begin{lem}
\label{lem:=000020Trace=000020Cheby=000020Formula} \cite[prop. 2.2]{Raym95}
Let $\co\in\Co$ and $m\geq l\geq-1$ such that $[\co,m]\in\Co$ ,
then 
\[
t_{[\co,m+1]}=\CP_{l+1}(\tc)t_{[\co,m-l]}-\CP_{l}(\tc)t_{[\co,m-l-1]}.
\]
\end{lem}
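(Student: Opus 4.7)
My plan is to prove the identity by induction on $l \geq -1$, with $\co$ and $m$ varying (only required to satisfy $m \geq l$ and $[\co,m]\in\Co$). The two-term Chebyshev recursion $\CP_{l+2}(x) = x\CP_{l+1}(x) - \CP_l(x)$ and the two-term trace recursion from Lemma~\ref{lem:TraceEvolution} fit together naturally to propel the induction.

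For the base cases, I would take $l = -1$ and $l = 0$. When $l = -1$, the claimed identity reads $t_{[\co,m+1]} = \CP_0(\tc)\, t_{[\co,m+1]} - \CP_{-1}(\tc)\, t_{[\co,m]}$, which is trivially true from $\CP_0 \equiv 1$ and $\CP_{-1} \equiv 0$. When $l = 0$, it reads $t_{[\co,m+1]} = \CP_1(\tc)\, t_{[\co,m]} - \CP_0(\tc)\, t_{[\co,m-1]} = \tc\, t_{[\co,m]} - t_{[\co,m-1]}$, which is exactly the content of Lemma~\ref{lem:TraceEvolution}.

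For the inductive step, assuming the identity holds at some $l \geq 0$ (for all admissible $\co, m$), I would fix $\co$ and $m$ with $m \geq l+1$ and start from the inductive hypothesis
\[
t_{[\co,m+1]} = \CP_{l+1}(\tc)\, t_{[\co,m-l]} - \CP_l(\tc)\, t_{[\co,m-l-1]}.
\]
Since $m - l - 1 \geq 0$, I may apply Lemma~\ref{lem:TraceEvolution} to rewrite $t_{[\co,m-l]} = \tc\, t_{[\co,m-l-1]} - t_{[\co,m-l-2]}$ (here $m-l-2 \geq -1$, so the last term makes sense via the $\N_{-1}$-convention on $\Co$). Substituting and grouping coefficients of $t_{[\co,m-l-1]}$ and $t_{[\co,m-l-2]}$ yields
\[
t_{[\co,m+1]} = \bigl(\tc\, \CP_{l+1}(\tc) - \CP_l(\tc)\bigr)\, t_{[\co,m-l-1]} - \CP_{l+1}(\tc)\, t_{[\co,m-l-2]},
\]
and the Chebyshev recursion (\ref{eq:=000020Chebyshev=000020recursion}) collapses the first bracket to $\CP_{l+2}(\tc)$, giving the identity at level $l+1$.

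The routine part is the algebra; the only genuine care is bookkeeping to ensure every trace that appears corresponds to an element of $\Co$. I expect this to be the only real obstacle: one must check that whenever $[\co,m]\in\Co$ and $m \geq l \geq -1$, the indices $m+1, m-l, m-l-1, m-l-2$ all lie in $\N_{-1}$ and that the induction step condition $m \geq l+1$ is the natural one — which it is, since the induction pushes the smallest index from $m-l-1$ down to $m-l-2$, and this requires $m-l-2 \geq -1$. All other elements of $\co$ remain unchanged throughout, so admissibility is preserved.
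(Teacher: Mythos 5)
Your proof is correct and follows essentially the same route as the paper: induction over $l$ anchored at $l=-1$, with the inductive step applying Lemma~\ref{lem:TraceEvolution} to $t_{[\co,m-l]}$ and collapsing the resulting coefficient via the Chebyshev recursion (\ref{eq:=000020Chebyshev=000020recursion}). The extra base case $l=0$ and the explicit index bookkeeping are harmless additions; nothing is missing.
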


\begin{proof}
We fix $m\in\N_{-1}$ and prove the statement by induction over $l\in\N_{-1}$.
For $l=-1$, we use Lemma~\ref{lem:TraceEvolution} to get
\[
\CP_{0}(\tc)t_{[\co,m-l]}-\CP_{-1}(\tc)t_{[\co,m-l-1]}=1\cdot t_{[\co,m+1]}+0\cdot t_{[\co,m]}=t_{[\co,m+1]}.
\]
Now assume the statement is correct for $m>l\geq-1$. We then get
\begin{align*}
t_{[\co,m+1]} & =\CP_{l+1}(\tc)t_{[\co,m-l]}-\CP_{l}(\tc)t_{[\co,m-l-1]}\\
 & =\CP_{l+1}(\tc)\left[\tc t_{[\co,m-l-1]}-t_{[\co,m-l-2]}\right]-\CP_{l}(\tc)t_{[\co,m-l-1]}\\
 & =\left[\tc\CP_{l+1}(\tc)-\CP_{l}(\tc)\right]t_{[\co,m-l-1]}-\CP_{l+1}(\tc)t_{[\co,m-l-2]}\\
 & =\CP_{l+2}(\tc)t_{[\co,m-(l+1)]}-\CP_{l+1}(\tc)t_{[\co,m-(l+1)-1]},
\end{align*}
where we used Lemma~\ref{lem:TraceEvolution} in the second equality,
and the Chebyshev polynomial recursion in the last equality.
\end{proof}
In the following, an extra parameter $\ell\in\left\{ -1,0\right\} $
is introduced. Later in this review so-called spectral bands $\Ic$
in $\sigc$ of backward type $A$ and $B$ are introduced that are
defined by adding to $\co$ the digit $0$ if $\Ic$ is backward type
$A$ and $-1$ if $\Ic$ is backward type $B$, see Definition~\ref{def:=000020backward=000020type}
below.
\begin{cor}
\label{cor:ExtChebyForm1} Let $\co\in\Co$ and $m\in\N$ be such
that $[\co,m]\in\Co$. For $\ell\in\{-1,0\}$, we have 
\[
t_{[\co,m]}=\CP_{m-\ell-1}(\tc)t_{[\co,1+\ell]}-\CP_{m-\ell-2}(\tc)t_{[\co,\ell]}.
\]
\end{cor}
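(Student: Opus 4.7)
The claim is an immediate specialization of Lemma~\ref{lem:=000020Trace=000020Cheby=000020Formula}, with a carefully chosen identification of the indices. My plan is therefore to perform the substitution, verify the index hypothesis of that lemma, and check that all symbols on the right-hand side are legitimate objects.

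Writing Lemma~\ref{lem:=000020Trace=000020Cheby=000020Formula} with dummy indices $m',l'$,
\[
t_{[\co,m'+1]}=\CP_{l'+1}(\tc)\,t_{[\co,m'-l']}-\CP_{l'}(\tc)\,t_{[\co,m'-l'-1]},
\]
I would set $m':=m-1$, so that the left-hand side becomes $t_{[\co,m]}$, and $l':=m-\ell-2$. Then $l'+1=m-\ell-1$, $l'=m-\ell-2$, $m'-l'=\ell+1$ and $m'-l'-1=\ell$, which is exactly the identity to be proven. The only hypothesis of Lemma~\ref{lem:=000020Trace=000020Cheby=000020Formula} that needs checking is $m'\geq l'\geq -1$: the inequality $m'\geq l'$ reduces to $\ell\geq -1$, which holds by assumption on $\ell\in\{-1,0\}$, while $l'\geq -1$ reduces to $m\geq \ell+1$, which holds since $m\in\N$ and $\ell\leq 0$.

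The only other point to verify is that $[\co,\ell]$ and $[\co,\ell+1]$ are legitimate elements of $\Co$, so that the right-hand side is well-defined. This follows from the assumption $[\co,m]\in\Co$ with $m\in\N$: this forces $\co$ either to equal $[0,0]$ or to end with a digit in $\N$, and in either case $\co$ may be extended by any digit in $\Nmo$; in particular $[\co,-1],[\co,0],[\co,1]\in\Co$.

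There is no real obstacle; the argument is purely an index-shift exercise. The mild subtlety is verifying the admissibility of the extensions $[\co,\ell]$ and $[\co,\ell+1]$ in $\Co$, which is handled by the structural description of $\Co$ given in Section~\ref{subsec:=000020space=000020of=000020continued=000020fractions}.
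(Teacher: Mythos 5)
Your proof is correct and is exactly the argument the paper intends, which it records only as ``a direct consequence of Lemma~\ref{lem:=000020Trace=000020Cheby=000020Formula}''; you have merely made the index substitution $m'=m-1$, $l'=m-\ell-2$ explicit and verified the hypotheses. No issues.
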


\begin{proof}
This is a direct consequence of Lemma~\ref{lem:=000020Trace=000020Cheby=000020Formula}.
\end{proof}
We proceed to apply this corollary to get another useful identity
involving Chebyshev polynomials and the traces.
\begin{lem}
\label{lem:ExtChebyForm2} Let $\co\in\Co$ and $m\in\N$ be such
that $[\co,m]\in\Co$. For any $\xi\in\{0,1\}$ and $\ell\in\{-1,0\}$
we have 
\[
\CP_{m-1-\ell}(\tc)\left[t_{[\co,m-1]}+(-1)^{\xi}t_{[\co,1+\ell]}\right]=\left[\CP_{m-2-\ell}(\tc)+(-1)^{\xi}\right]\left[t_{[\co,m]}+(-1)^{\xi}t_{[\co,\ell]}\right].
\]
\end{lem}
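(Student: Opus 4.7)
The plan is to reduce the identity to the classical Cassini-type relation for the dilated Chebyshev polynomials, namely $\CP_{n-1}(x)^{2}-\CP_{n}(x)\CP_{n-2}(x)=1$, which follows by a one-line induction from (\ref{eq:=000020Chebyshev=000020recursion}) and belongs to the standard collection in Appendix~\ref{sec:=000020Cheby=000020poly}.

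Assume first that $m\geq 2$, so that Corollary~\ref{cor:ExtChebyForm1} may be applied simultaneously to $t_{[\co,m]}$ and $t_{[\co,m-1]}$. Abbreviate $S_{k}:=\CP_{k}(\tc)$, $a:=t_{[\co,\ell]}$, $b:=t_{[\co,1+\ell]}$, and $n:=m-\ell-1$. The corollary then yields $t_{[\co,m]}=S_{n}b-S_{n-1}a$ and $t_{[\co,m-1]}=S_{n-1}b-S_{n-2}a$. Substituting these two expressions into the left hand side of the claimed identity and collecting like terms produces $S_{n}\bigl(S_{n-1}+(-1)^{\xi}\bigr)\,b - S_{n}S_{n-2}\,a$, while the right hand side expands, using $(-1)^{2\xi}=1$, to $S_{n}\bigl(S_{n-1}+(-1)^{\xi}\bigr)\,b + \bigl(1-S_{n-1}^{2}\bigr)\,a$. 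The coefficients of $b$ agree trivially, and equality of the coefficients of $a$ is exactly the Cassini identity $S_{n-1}^{2}-S_{n}S_{n-2}=1$. Note that under our assumptions $n\geq 1$, so the identity is applied within its range of validity.

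The remaining case $m=1$ is checked by hand for each pair $(\ell,\xi)\in\{-1,0\}\times\{0,1\}$: for $\ell=0$ both sides simplify to $t_{[\co,0]}+(-1)^{\xi}t_{[\co,1]}$ by a direct rearrangement; for $\ell=-1$, $\xi=1$ both sides vanish; and for $\ell=-1$, $\xi=0$ the equality becomes $\tc\,t_{[\co,0]}=t_{[\co,1]}+t_{[\co,-1]}$, which is Lemma~\ref{lem:TraceEvolution} with shift index $0$. The main obstacle, to the extent there is one, is spotting that the whole identity collapses onto the Cassini relation; once that reduction is in place, everything else is pure bookkeeping.
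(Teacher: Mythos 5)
Your proof is correct and follows essentially the same route as the paper's: both expand the traces via Corollary~\ref{cor:ExtChebyForm1} and reduce the identity to the Cassini-type relation $\CP_{n-1}^{2}-\CP_{n}\CP_{n-2}=1$ (which the paper derives in-line from the recursion), with the degenerate small-$m$ case handled separately. The only cosmetic difference is that you expand both $t_{[\co,m]}$ and $t_{[\co,m-1]}$ and compare coefficients, while the paper expands only $t_{[\co,m-1]}$ and reassembles $t_{[\co,m]}$ at the end.
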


\begin{proof}
First, we use twice the recursion relation for the Chebyshev polynomials
to get 
\begin{align*}
S_{l}(x)S_{l-2}(x)-S_{l-1}(x)^{2} & =xS_{l-1}(x)S_{l-2}(x)-S_{l-1}(x)^{2}-S_{l-2}(x)^{2}\\
 & =S_{l-1}(x)S_{l-3}(x)-S_{l-2}(x)^{2}.
\end{align*}
In particular, we get that this expression is independent of $l$
and therefore 
\[
S_{l}(x)S_{l-2}(x)-S_{l-1}(x)^{2}=S_{1}(x)S_{-1}(x)-S_{0}(x)^{2}=-1.
\]
Using this identity, Lemma~\ref{lem:=000020Trace=000020Cheby=000020Formula}
and Corollary~\ref{cor:ExtChebyForm1}, the lemma follows by straightforward
computation. For example, for $\ell=-1$, 
\begin{align*}
 & \CP_{m}(\tc)\left[t_{[\co,m-1]}+(-1)^{\xi}t_{[\co,0]}\right]\\
 & =\CP_{m}(\tc)\left[\CP_{m-1}(\tc)t_{[\co,0]}-S_{m-2}(\tc)t_{[\co,-1]}+(-1)^{\xi}t_{[\co,0]}\right]\\
 & =\CP_{m}(\tc)\CP_{m-1}(\tc)t_{[\co,0]}-\CP_{m}(\tc)S_{m-2}(\tc)t_{[\co,-1]}+(-1)^{\xi}\CP_{m}(\tc)t_{[\co,0]}\\
 & =\CP_{m}(\tc)\CP_{m-1}(\tc)t_{[\co,0]}-\left[\CP_{m-1}^{2}(\tc)-1\right]t_{[\co,-1]}+(-1)^{\xi}\CP_{m}(\tc)t_{[\co,0]}\\
 & =\CP_{m}(\tc)t_{[\co,0]}\left[\CP_{m-1}(\tc)+(-1)^{\xi}\right]-\left[\CP_{m-1}^{2}(\tc)-1\right]t_{[\co,-1]}\\
 & =\left[\CP_{m-1}(\tc)+(-1)^{\xi}\right]\left[\CP_{m}(\tc)t_{[\co,0]}-\left(\CP_{m-1}(\tc)+(-1)^{\xi+1}\right)t_{[\co,-1]}\right]\\
 & =\left[\CP_{m-1}(\tc)+(-1)^{\xi}\right]\left[\CP_{m}(\tc)t_{[\co,0]}-\CP_{m-1}(\tc)t_{[\co,-1]}+(-1)^{\xi}t_{[\co,-1]}\right]\\
 & =\left[\CP_{m-1}(\tc)+(-1)^{\xi}\right]\left[t_{[\co,m]}+(-1)^{\xi}t_{[\co,-1]}\right].
\end{align*}
The statement for $\ell=0$ follows the same lines except that the
case $m=1$ needs to be treated separately (since we used Corollary~\ref{cor:ExtChebyForm1}
moving from the first to the second line, which cannot applied if
$\ell=0$ and $m=1$).
\end{proof}

\subsection{The Fricke-Vogt invariant\protect\label{subsec:=000020Fricke-Vogt}}

The Fircke-Vogt invariant serves an important role in the spectral
analysis of Sturmian Hamiltonians. We review here this well-known
part of the theory, and rephrase it according to our convention to
use the space $\Co$.

Denote by $\left[\cdot,\cdot\right]$ the matrix commutator $\left[A,B\right]:=AB-BA$.
Note that $\tr\left(\left[A,B\right]\right)=0,$ as $\tr$ is linear
and $\tr(AB)=\tr(BA)$.
\begin{lem}
\label{lem:CommutatorID}\cite[prop. 2.3]{Raym95} Let $V\in\R$,
$\co=[c_{-1},c_{0},\ldots,c_{k}]\in\Co$ with $k\in\Nmo$ and $[\co,m,n]\in\Co$
with $m\in\N_{0}$ and $n\in\N_{-1}$. Then 
\[
\left[M_{[\co,m]},M_{\co}M_{[\co,m]}^{n}\right]^{2}=V^{2}\id_{2},
\]
where $\id_{2}$ is the $2\times2$ identity matrix.
\end{lem}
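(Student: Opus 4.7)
The plan is to reduce the claimed matrix identity to a scalar statement about a determinant, and then verify that determinant by induction on the length of $\co$.

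First, the commutator $C:=[M_{[\co,m]},M_{\co}M_{[\co,m]}^{n}]$ is automatically traceless. By Cayley--Hamilton, a traceless $2\times 2$ matrix $C$ satisfies $C^{2}=-\det(C)\id_{2}$, so the lemma reduces to showing $\det(C)=-V^{2}$. Using the derivation identity $[A,YZ]=[A,Y]Z+Y[A,Z]$ with $A=Z=M_{[\co,m]}$ and $Y=M_{\co}$, together with $[M_{[\co,m]},M_{[\co,m]}^{n}]=0$, we obtain $C=[M_{[\co,m]},M_{\co}]\cdot M_{[\co,m]}^{n}$. Since $\det M_{[\co,m]}=1$ by Lemma~\ref{lem:DefTransMat}, this gives
\[
\det(C)=\det\bigl([M_{[\co,m]},M_{\co}]\bigr).
\]

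Next, we apply Lemma~\ref{lem:=000020transfer=000020matrix=000020formula}(b) to decompose $M_{[\co,m]}$. Writing $\cop$ for the length-$(k-1)$ truncation of $\co$, that lemma gives either $M_{[\co,m]}=M_{\co}^{m}M_{\cop}$ or $M_{[\co,m]}=M_{\cop}M_{\co}^{m}$, depending on parity. A short computation in either case yields
\[
[M_{[\co,m]},M_{\co}]=M_{\co}^{m}\,[M_{\cop},M_{\co}]\quad\text{or}\quad[M_{\cop},M_{\co}]\,M_{\co}^{m},
\]
so $\det(C)=\det\bigl([M_{\cop},M_{\co}]\bigr)$, and the proof is reduced to the auxiliary claim that $\det\bigl([M_{\cop},M_{\co}]\bigr)=-V^{2}$ for every consecutive pair $\co=[\cop,c_{k}]$ in $\Co$.

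This auxiliary claim is established by induction on the length of $\co$. The base case $\cop=[0]$, $\co=[0,0]$ is settled by a direct $2\times 2$ computation using the explicit form of $M_{[0]}$ and $M_{[0,0]}$ given in Lemma~\ref{lem:=000020transfer=000020matrix=000020formula}(a), which yields $\det\bigl([M_{[0]},M_{[0,0]}]\bigr)=-V^{2}$. For the inductive step, write $\cop=[{\co}'',c_{k-1}]$ and use Lemma~\ref{lem:=000020transfer=000020matrix=000020formula}(b) to express $M_{\co}=M_{\cop}^{c_{k}}M_{{\co}''}$ (or the reverse order). The same commutator manipulation as above gives $\det\bigl([M_{\cop},M_{\co}]\bigr)=\det\bigl([M_{\cop},M_{{\co}''}]\bigr)=\det\bigl([M_{{\co}''},M_{\cop}]\bigr)$, which equals $-V^{2}$ by the inductive hypothesis.

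The main bookkeeping subtleties are the parity-dependent product order in Lemma~\ref{lem:=000020transfer=000020matrix=000020formula}(b), which splits each step into two algebraically symmetric subcases but yields the same determinant in both, and the degenerate elements $\co\in\{[0],[0,0]\}$ which have few or no predecessors for the recursive decomposition. The case $\co=[0]$ forces $m=0$ and hence $M_{[\co,m]}=M_{[0,0]}$, so the middle step is replaced by a direct appeal to the base case. Terminal digits $c_{k}\in\{-1,0\}$ cause no real difficulty because, by Lemma~\ref{lem:=000020Trace=000020Depends=000020On=000020Value=000020Only}, one can replace $\co$ by any representative of its $\varphi$-equivalence class before applying the recursion.
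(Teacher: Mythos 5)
Your proposal is correct and follows essentially the same route as the paper: reduce to $\det(C)=-V^{2}$ via tracelessness and Cayley--Hamilton, peel off the commuting factor $M_{[\co,m]}^{n}$ (which has determinant one), and then run an induction on the length of the continued fraction with base case $\det\left(\left[M_{[0,0]},M_{[0]}\right]\right)=-V^{2}$. The only difference is cosmetic -- you perform the reduction from $\det\left(\left[M_{[\co,m]},M_{\co}\right]\right)$ to $\det\left(\left[M_{\cop},M_{\co}\right]\right)$ as a separate step before the induction, whereas the paper absorbs it into the induction step itself.
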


\begin{proof}
Denote $A:=\left[M_{[\co,m]},M_{\co}M_{[\co,m]}^{n}\right]$. As for
each $2\times2$ matrix (e.g., as a special case of Cayley-Hamilton
theorem) we have 
\[
A^{2}=\tr(A)A-\det(A)\id_{2}=-\det(A)\id_{2},
\]
where in the second equality we used that $\tr\left(A\right)=0$.
Hence, to validate the statement we need to show $\det(A)=-V^{2}$.
Computing the determinant gives 
\begin{align*}
\det(A) & =\det\left(M_{[\co,m]}M_{\co}M_{[\co,m]}^{n}-M_{\co}M_{[\co,m]}^{n}M_{[\co,m]}\right)\\
 & =\det\left(M_{[\co,m]}M_{\co}-M_{\co}M_{[\co,m]}\right)\det(M_{[\co,m]})^{n}\\
 & =\det\left(\left[M_{[\co,m]},M_{\co}\right]\right),
\end{align*}
where we used that the determinant of a transfer matrix is one by
Lemma~\ref{lem:=000020transfer=000020matrix=000020defined}. To finish
the proof, we use induction over $k\in\Nmo$ to show $\det\left(\left[M_{[\co,m]},M_{\co}\right]\right)=-V^{2}$
(for any $[\co,m]\in\Co$ and $\co=[c_{-1},c_{0},\ldots,c_{k}]$).
For the induction base, we observe 
\begin{align*}
\left[\Mzz,\Mz\right] & =\begin{pmatrix}E & -1\\
1 & 0
\end{pmatrix}\begin{pmatrix}1 & -V\\
0 & 1
\end{pmatrix}-\begin{pmatrix}1 & -V\\
0 & 1
\end{pmatrix}\begin{pmatrix}E & -1\\
1 & 0
\end{pmatrix}\\
 & =\begin{pmatrix}E & -VE-1\\
1 & -V
\end{pmatrix}-\begin{pmatrix}E-V & -1\\
1 & 0
\end{pmatrix}\\
 & =\begin{pmatrix}V & -VE\\
0 & -V
\end{pmatrix},
\end{align*}
and indeed $\det\left(\left[\Mzz,\Mz\right]\right)=-V^{2}$. For the
induction step, suppose the statement is true for $k\in\N_{-1}$.
Note that for $2\times2$-matrices $B$ and $C$, we have $[B,C]=-[C,B]$
and $\det(-B)=\det(B).$ Thus, the previous identity on the determinant
of the commutator yields 
\begin{align*}
\det\left(\left[M_{[\co,c_{k+1},m]},M_{[\co,c_{k+1}]}\right]\right) & =\det\left(\left[M_{[\co,c_{k+1}]},M_{[\co,c_{k+1},m]}\right]\right)\\
 & =\det\left(\left[M_{[\co,c_{k+1}]},\Mc M_{[\co,c_{k+1}]}^{m}\right]\right)\\
 & =\det\left(\left[M_{[\co,c_{k+1}]},\Mc\right]\right)=-V^{2},
\end{align*}
where in the second equality we used (\ref{eq:=000020lem-transfer=000020matrix=000020formula})
of Lemma~\ref{lem:=000020transfer=000020matrix=000020formula} assuming
$k$ is odd.If $k$ is even, as similar computation leads to the result.
\end{proof}
\begin{prop}
\cite[prop. 2.3]{Raym95} [Fricke-Vogt-Invariant]\label{prop:=000020Fricke-Vogt}
Let $V\in\R$, $\co\in\Co$ and $m\in\Nz$ such that $[\co,m-1]\in\Co$,
then 
\[
\tc^{2}+\tcm^{2}+t_{[\co,m-1]}^{2}-\tc\tcm t_{[\co,m-1]}=4+V^{2}.
\]
\end{prop}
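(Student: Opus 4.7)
The plan is to combine the commutator identity of Lemma~\ref{lem:CommutatorID} with the trace recursion of Lemma~\ref{lem:TraceEvolution}. Setting $A:=M_{[\co,m]}$ and $B:=M_\co$, Lemma~\ref{lem:DefTransMat} places both matrices in $SL_2(\R)$, and Lemma~\ref{lem:CommutatorID} specialized to $n=0$ yields $[A,B]^2=V^2\id_2$. Since $\tr[A,B]=0$, Cayley--Hamilton in the form $C^2=\tr(C)C-\det(C)\id_2$ applied to $C=[A,B]$ reduces this to $\det[A,B]=-V^2$.

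The next step is to invoke the Fricke trace identity for $SL_2$,
\[
\det[A,B]=4+\tr(A)\tr(B)\tr(AB)-\tr(A)^2-\tr(B)^2-\tr(AB)^2,
\]
valid for any $A,B\in SL_2(\C)$. This is a classical algebraic identity; it can be derived by a short direct $2\times 2$ computation or, more slickly, by iterated application of $X+X^{-1}=\tr(X)\id_2$ for $X\in SL_2$. Two of the three trace inputs are immediate, namely $\tr(A)=t_{[\co,m]}$ and $\tr(B)=\tc$. The third, $\tr(M_{[\co,m]}M_\co)$, is handled via Lemma~\ref{lem:=000020transfer=000020matrix=000020formula}: in either parity of $k$, the product $M_{[\co,m]}M_\co$ coincides with $M_{[\co,m+1]}$ up to a cyclic rearrangement of the factors, so cyclicity of trace gives $\tr(AB)=t_{[\co,m+1]}$. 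Lemma~\ref{lem:TraceEvolution} then rewrites this as $\tc\, t_{[\co,m]}-t_{[\co,m-1]}$.

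Substituting into the Fricke identity with the shorthand $u:=t_{[\co,m]}$, $v:=\tc$, $w:=t_{[\co,m-1]}$ gives
\[
-V^{2}=4+uv(vu-w)-u^{2}-v^{2}-(vu-w)^{2},
\]
which expands and cancels to $u^{2}+v^{2}+w^{2}-uvw=4+V^{2}$, the asserted identity.

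A useful sanity check (and an alternative route, if one prefers to avoid the Fricke identity) is to verify directly from Lemma~\ref{lem:TraceEvolution} that the quantity $F(m):=\tc^{2}+t_{[\co,m]}^{2}+t_{[\co,m-1]}^{2}-\tc\, t_{[\co,m]}\, t_{[\co,m-1]}$ is independent of $m$: the difference $F(m+1)-F(m)$ factorizes as $\bigl(t_{[\co,m+1]}-t_{[\co,m-1]}\bigr)\bigl(t_{[\co,m+1]}+t_{[\co,m-1]}-\tc\, t_{[\co,m]}\bigr)$, whose second factor vanishes by the recursion, so it suffices to evaluate $F$ at a single $m$ --- which can be done by the commutator argument above. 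The main obstacle is really the Fricke $SL_2$ identity; once it is established, the rest is mechanical, with only minor bookkeeping needed to ensure that the parity split in Lemma~\ref{lem:=000020transfer=000020matrix=000020formula} does not disturb the identification $\tr(M_{[\co,m]}M_\co)=t_{[\co,m+1]}$, and that the various edge cases in $\Co$ for small $m$ (where $[\co,m-1]\in\Co$ forces conditions on the last digit of $\co$) do not require a separate treatment.
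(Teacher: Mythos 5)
Your proof is correct and rests on the same pivot as the paper's: Lemma~\ref{lem:CommutatorID} supplies $\det\left(\left[A,B\right]\right)=-V^{2}$ (via $\tr\left(\left[A,B\right]\right)=0$ and Cayley--Hamilton), and the remaining work is to convert this into a relation among $\tc$, $\tcm$ and $t_{[\co,m-1]}$. The difference is organizational rather than conceptual. The paper expands $\tr(A^{2})$ for $A=[M_{\co},M_{\co'}M_{\co}^{m}]$ term by term, using Lemma~\ref{lem:=000020trace(AB)} and the recursion of Lemma~\ref{lem:TraceEvolution} on each piece; you instead quote the classical Fricke identity $\det\left(\left[A,B\right]\right)=4+\tr(A)\tr(B)\tr(AB)-\tr(A)^{2}-\tr(B)^{2}-\tr(AB)^{2}$ for $A,B$ of determinant one, after which you only need $\tr(M_{[\co,m]}M_{\co})=t_{[\co,m+1]}=\tc\tcm-t_{[\co,m-1]}$ and a mechanical substitution. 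Since the Fricke identity is not proved in the paper, you would need to include its short derivation (e.g.\ from $X^{2}=\tr(X)X-\det(X)\id_{2}$ applied to $A$, $B$ and $AB$); once that is in place your route is a genuine shortcut over the paper's expansion, and your observation that $F(m+1)-F(m)$ factors through the trace recursion is precisely the sense in which the quantity is an ``invariant.'' Two minor points: invoking Lemma~\ref{lem:CommutatorID} ``with $n=0$'' technically presupposes $[\co,m,0]\in\Co$, which fails for $m=0$; what you actually need, $\det\left(\left[M_{[\co,m]},M_{\co}\right]\right)=-V^{2}$, is exactly the intermediate claim established by the induction inside that lemma's proof (the paper's own invocation has the same wrinkle). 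Your handling of the parity split in identifying $\tr(M_{[\co,m]}M_{\co})$ with $t_{[\co,m+1]}$ by cyclicity is correct.
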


To prove this proposition we use the following algebraic identity.
\begin{lem}
\label{lem:=000020trace(AB)} Let $A,B$ be two real $2\times2$ matrices
with $\det(A)=\det(B)=1$. Then 
\[
\tr(AB)=\tr(A)\tr(B)-\tr(A^{-1}B).
\]
\end{lem}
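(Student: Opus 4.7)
The plan is to invoke the Cayley--Hamilton theorem for $2\times 2$ matrices. Since $A$ is a real $2\times 2$ matrix with $\det(A)=1$, its characteristic polynomial is $\lambda^2-\tr(A)\lambda+1$, so Cayley--Hamilton gives
\[
A^2-\tr(A)\,A+\id_{2}=0.
\]
Because $\det(A)=1$, the matrix $A$ is invertible; multiplying the above identity by $A^{-1}$ on the left yields
\[
A=\tr(A)\,\id_{2}-A^{-1},\qquad\text{i.e.,}\qquad A^{-1}=\tr(A)\,\id_{2}-A.
\]

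Next I would multiply this identity by $B$ on the right to obtain $A^{-1}B=\tr(A)\,B-AB$, and then take the trace of both sides. Using linearity of the trace gives
\[
\tr(A^{-1}B)=\tr(A)\,\tr(B)-\tr(AB),
\]
which, after rearrangement, is the asserted identity. Note that the hypothesis $\det(B)=1$ is actually not used in this argument; the statement is true whenever $\det(A)=1$, and the condition on $B$ is recorded merely because it is the setting in which the lemma will be applied (e.g., to transfer matrices via Lemma~\ref{lem:DefTransMat}).

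There is essentially no obstacle here: the whole content of the lemma is the $2\times 2$ Cayley--Hamilton relation $A+A^{-1}=\tr(A)\id_{2}$, which is a two-line computation. The main thing to be careful about is simply to note why $A^{-1}$ exists (namely $\det A=1\ne 0$) before manipulating it.
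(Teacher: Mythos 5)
Your proof is correct and is essentially identical to the paper's: both rest on the $2\times2$ identity $A+A^{-1}=\tr(A)\,\id_{2}$ (a consequence of Cayley--Hamilton with $\det A=1$), followed by right-multiplication by $B$ and linearity of the trace. Your observation that $\det(B)=1$ is not actually needed is also accurate.
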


\begin{proof}
Since $A$ is a $2\times2$ matrix with $\det(A)=1$, we conclude
$A+A^{-1}=\tr(A)\id_{2}$. Hence, $\tr(AB)=\tr\left(\tr(A)B-A^{-1}B\right)=\tr(A)\tr(B)-\tr(A^{-1}B).$
\end{proof}
\begin{proof}
[Proof of Proposition \ref{prop:=000020Fricke-Vogt}] Let $\co'\in\Co$
and $c_{k}\in\N_{0}$ be such that $\co=[\co',c_{k}]$. . Denoting
$A:=\left[\Mc,M_{\co'}\Mc^{m}\right]$ and applying Lemma~\ref{lem:CommutatorID}
yields 
\[
\tr(A^{2})=\tr(V^{2}\id_{2})=2V^{2}.
\]
On the other hand, a direct computation of $\tr(A^{2})$ gives

\begin{align*}
\tr(A^{2}) & =\tr\left((\Mc M_{\co'}\Mc^{m}-M_{\co'}\Mc^{m+1})^{2}\right)\\
 & =\tr((\Mc M_{\co'}\Mc^{m})^{2})+\tr((M_{\co'}\Mc^{m+1})^{2})-2\tr(\Mc M_{\co'}\Mc^{m}M_{\co'}\Mc^{m+1})\\
 & =2\tr((M_{\co'}\Mc^{m+1})^{2})-2\tr(M_{\co'}\Mc^{m}M_{\co'}\Mc^{m+2}).
\end{align*}
For the first term we use the identity $B^{2}=\tr(B)B-\det(B)\id_{2}$
for the $2\times2$-matrix $B=M_{\co'}\Mc^{m+1}$ and then Lemma~\ref{lem:=000020transfer=000020matrix=000020formula}~(\ref{enu:=000020transfer=000020matrix=000020formula=000020_=000020trace=000020recursion}),
Lemma~\ref{lem:=000020transfer=000020matrix=000020formula}~(\ref{enu:=000020transfer=000020matrix=000020formula=000020_=000020trace=000020recursion})
andLemma~\ref{lem:TraceEvolution} lead to
\begin{align*}
\tr((M_{\co'}\Mc^{m+1})^{2}) & =\left(\tr(M_{\co'}\Mc^{m+1})\right)^{2}-2\\
 & =t_{[\co,m+1]}^{2}-2\\
 & =(\tc\tcm-t_{[\co,m-1]})^{2}-2\\
 & =\tc^{2}\tcm^{2}+t_{[\co,m-1]}^{2}-2\tc\tcm t_{[\co,m-1]}-2.
\end{align*}

For the second term we apply Lemma~\ref{lem:=000020trace(AB)} and
Lemma~\ref{lem:=000020transfer=000020matrix=000020formula}~(\ref{enu:=000020transfer=000020matrix=000020formula=000020_=000020trace=000020recursion})
to get 
\begin{align*}
\tr(M_{\co'}\Mc^{m}M_{\co'}\Mc^{m+2}) & =\tr(M_{\co'}\Mc^{m})\tr(M_{\co'}\Mc^{m+2})-\tr((M_{\co'}\Mc^{m})^{-1}M_{\co'}\Mc^{m+2})\\
 & =\tcm t_{[\co,m+2]}-\tr(\Mc^{2})\\
 & =\tcm t_{[\co,m+2]}-\tc^{2}+2\\
 & =\tcm(\tc t_{[\co,m+1]}-\tcm)-\tc^{2}+2\\
 & =\tc\tcm t_{[\co,m+1]}-\tcm^{2}-\tc^{2}+2\\
 & =\tc\tcm(\tc\tcm-t_{[\co,m-1]})-\tcm^{2}-\tc^{2}+2\\
 & =\tc^{2}\tcm^{2}-\tc\tcm t_{[\co,m-1]}-\tcm^{2}-\tc^{2}+2,
\end{align*}
where in the third equality we used the identity $B^{2}=\tr(B)B-\det(B)\id_{2}$
with $B=M_{\co}$, and in the fourth and sixth equality we used Lemma~\ref{lem:TraceEvolution}.

Combining the identities above provides the statement of the proposition.
\end{proof}

\section{The spectra of periodic approximations of Sturmian Hamiltonians\protect\label{sec:=000020spectra=000020of=000020periodic=000020approximants}}

We start applying the tools from the previous section in order to
study the spectral bands of the periodic approximations of the Sturmian
Hamiltonian, as is done in \cite[sec. 3.1]{Raym95}. We start by providing
general results for all Sturmian Hamiltonians (Subsection~\ref{subsec:=000020spectral=000020properties=000020V>0})
and then restrict to $V>4$ where further analysis may be obtained
(Subsection~\ref{subsec:=000020spectral=000020properties=000020V>4}).

\subsection{Basic spectral properties for all $V\protect\neq0$ \protect\label{subsec:=000020spectral=000020properties=000020V>0}
}

We provide basic properties on the spectrum of a periodic approximation
of a Sturmian Hamiltonians, i.e., $\Hrat$. To do so, we mainly use
the transfer matrices and the discriminant, as was introduced in the
previous section. The results in this subsection appeared already
in \cite{Casdagli1986,Sut87,BIST89}. . Since the results here apply
for all $V\neq0$, we tend to omit (only in this subsection) the notation
$V$ from the the proofs.

The following proposition is a refinement of Proposition~\ref{prop:=000020Floquet-Bloch=000020via=000020transfer=000020matrix}
for the operators $\Hrat$. Its first part appears in \cite[prop. 3.1,(i)]{Raym95}.
\begin{prop}
\label{prop:=000020there=000020are=000020q=000020spectral=000020bands}
Let $V\neq0$ and $\co\in\Co$ with $\pq:=\varphi(\co)\neq\infty$
and $p$ and $q$ coprime. Then the following assertions hold.
\begin{enumerate}
\item \label{enu:=000020prop-spectral=000020bands-1}The spectrum $\sigc(V)=\sigma(\Hrat)$
consists of exactly $q$ connected components which are closed intervals.\\
As usual, we call these intervals, the \emph{spectral bands} of $\Hrat$
(or of $\sigma(\Hrat)$).
\item \label{enu:=000020prop-spectral=000020bands-2}The restriction of
the discriminant $\tc$ to each of the spectral bands is strictly
monotone.
\end{enumerate}
\end{prop}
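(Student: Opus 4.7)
By Proposition~\ref{prop:=000020Floquet-Bloch=000020via=000020transfer=000020matrix}, $\tc(\cdot,V)^{-1}((-2,2))$ already consists of exactly $q$ open intervals on each of which $\tc(\cdot,V)$ is monotone. Since $\tc(\cdot,V)$ is a non-constant polynomial of degree $q$, this monotonicity is automatically strict on every open interval (a monotone polynomial on an open interval cannot be constant). Two adjacent open components can have closures sharing an endpoint $E_0$ only if $\tc(E_0,V)=\pm 2$ and $E_0$ is simultaneously a critical point of $\tc(\cdot,V)$. Hence both parts of the proposition follow once one shows
\[
V\neq 0 \text{ and } \tc(E_0,V)=\pm 2 \quad\Longrightarrow\quad \tc'(E_0,V)\neq 0.
\]

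The first step of my plan is to rule out the degeneracy $\Mc(E_0,V)=\pm\id_2$ using Lemma~\ref{lem:CommutatorID}: applied with $m=1$ and $n=0$ (admissible since $[\co,1,0]\in\Co$ whenever $\varphi(\co)\in[0,1]$) it gives $\bigl[M_{[\co,1]},\Mc\bigr]^2=V^2\id_2$, hence
\[
\det\bigl[M_{[\co,1]}(E_0,V),\Mc(E_0,V)\bigr]=-V^2\neq 0.
\]
A commutator with $\pm\id_2$ would vanish, so $\Mc(E_0,V)\neq\pm\id_2$. Combined with $\det\Mc=1$ and $\tc(E_0,V)=2\epsilon$, $\epsilon=\pm 1$, this forces $\Mc(E_0,V)$ to have $\epsilon$ as a double eigenvalue while failing to be $\epsilon\id_2$, so $\Mc(E_0,V)$ is similar to a non-trivial Jordan block.

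The second step derives a contradiction from the additional hypothesis $\tc'(E_0,V)=0$ via the Fricke--Vogt invariant. Evaluating Proposition~\ref{prop:=000020Fricke-Vogt} with $m=1$ at $E_0$ gives $\bigl(t_{[\co,1]}(E_0,V)-\epsilon\,t_{[\co,0]}(E_0,V)\bigr)^2=V^2$, and differentiating the invariant and using $\tc(E_0)=2\epsilon$ together with $\tc'(E_0)=0$ yields
\[
\bigl(t_{[\co,1]}(E_0,V)-\epsilon\,t_{[\co,0]}(E_0,V)\bigr)\bigl(t_{[\co,1]}'(E_0,V)-\epsilon\,t_{[\co,0]}'(E_0,V)\bigr)=0.
\]
The first factor equals $\pm V\neq 0$, so $t_{[\co,1]}'(E_0,V)=\epsilon\,t_{[\co,0]}'(E_0,V)$, and iterating through the trace recursion of Lemma~\ref{lem:TraceEvolution} propagates $t_{[\co,m]}'(E_0,V)=\epsilon^m t_{[\co,0]}'(E_0,V)$ to every admissible $m$. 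The analogous recursion for $t_{[\co,m]}(E_0,V)$ itself rigidly determines these values linearly in $m$ (up to the sign $\epsilon^m$). Reconciling this rigid linear-in-$m$ structure with the Jordan-block form of $\Mc(E_0)$ and with the commutator determinant identity $\det\bigl[M_{[\co,m]}(E_0),\Mc(E_0)\bigr]=-V^2$ taken at several values of $m$ ultimately forces $V=0$, contradicting the hypothesis.

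\textbf{Main obstacle.} The first step is short once Lemma~\ref{lem:CommutatorID} is invoked. The real work lies in the second step: the propagation of the derivative identity through Lemma~\ref{lem:TraceEvolution} is immediate, but closing the loop to a genuine contradiction requires carefully combining the Jordan-block structure of $\Mc(E_0,V)$ with the commutator determinant formula at several values of $m$. Once the key implication $\tc(E_0,V)=\pm 2\Rightarrow \tc'(E_0,V)\neq 0$ is established, the strict monotonicity on each band closure and the disjointness of the $q$ closures follow at once from the reduction described in the first paragraph.
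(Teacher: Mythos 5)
Your reduction to the single implication ``$\tc(E_0,V)=\pm 2\Rightarrow\tc'(E_0,V)\neq 0$'' is a legitimate way to package both parts of the proposition, and your Step 1 --- applying Lemma~\ref{lem:CommutatorID} with $m=1,n=0$ to get $\det\bigl[M_{[\co,1]}(E_0),M_{\co}(E_0)\bigr]=-V^{2}\neq 0$ and hence $M_{\co}(E_0)\neq\pm\id_{2}$ --- is correct and is essentially the same use of the commutator identity that the paper makes.

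The gap is in Step 2, and it is not a small one. You need to pass from ``$M_{\co}(E_0)$ is a nontrivial Jordan block'' to ``$\tc'(E_0)\neq 0$'', and you try to do this using only the algebraic identities (Lemma~\ref{lem:TraceEvolution}, Proposition~\ref{prop:=000020Fricke-Vogt}, Lemma~\ref{lem:CommutatorID}). But the facts you actually extract under the hypothesis $\tc'(E_0)=0$ are mutually consistent and yield no contradiction: the differentiated Fricke--Vogt relation gives $t_{[\co,m]}'(E_0)=\epsilon^{m}t_{[\co,0]}'(E_0)$; the trace recursion gives $t_{[\co,m]}(E_0)=\epsilon^{m}\bigl(t_{[\co,0]}(E_0)+mD\bigr)$ with $D=\pm\epsilon V$; and the commutator determinant $\det\bigl[M_{[\co,m]}(E_0),M_{\co}(E_0)\bigr]=-V^{2}$ is the same constant for every $m$, so ``taking it at several values of $m$'' produces nothing new. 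These identities constrain the trace values and impose relations among their $E$-derivatives, but they do not determine $\tc'(E_0)$ and cannot force it to be nonzero; abstractly, $SL(2,\R)$-valued families with a nontrivial Jordan block at $E_0$ and vanishing trace-derivative there exist, so the conclusion must exploit the specific structure of the one-step matrices $A_{\alpha}(n)$, which none of the invoked lemmas encode. The missing implication (a band-touching point, equivalently a point with $|\tc|=2$ and $\tc'=0$, forces $M_{\co}=\pm\id_{2}$) is exactly the analytic content of \cite[thm.~5.4.3]{Simon2011}, which the paper simply cites; once that theorem is available the proposition follows from your Step 1 alone. Your plan to re-derive that theorem from the trace identities cannot be completed as sketched, and your closing ``main obstacle'' paragraph effectively concedes that the contradiction has not been produced.
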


\begin{proof}
We need to prove only the first part of the proposition, as the second
part is classical (see e.g., \cite[thm. 5.4.2]{Simon2011}). By \cite[thm. 5.4.2]{Simon2011}
the spectrum of a $q$-periodic Jacobi operator (such as $\Hrat$)
consists of $q$ closed intervals, which might overlap only at their
boundaries. Assume by contradiction that $E$ is such a point where
two intervals overlap. By \cite[thm. 5.4.3]{Simon2011} this implies
that $\Mc(E)=\pm\id_{2}$, for $\co\in\Co$ such that $\varphi(\co)=\frac{p}{q}$.
Substituting this in Lemma~\ref{lem:CommutatorID} gives $V^{2}\id_{2}=\left[M_{[\co,m]},M_{\co}M_{[\co,m]}^{n}\right]^{2}=0$,
for any $m,n\in\N$. Hence, we get $V=0$ and a contradiction.
\end{proof}
Next, we rephrase a statement from \cite[prop. 4]{BIST89} and immediately
apply it to connect the spectra $\sigc$.
\begin{lem}
\label{lem:TraceStaysBig} Let $V\neq0$ and $\co=[0,c_{0},c_{1},\ldots,c_{k}]\in\Co$.
Let $E\in\R$ and $i<k$. If 
\[
\left|t_{[0,c_{0},c_{1},\ldots,c_{i-2}]}(E)\right|>2\quad\textrm{and}\quad\left|t_{[0,c_{0},c_{1},\ldots,c_{i-1}]}(E)\right|>2
\]
thenthere exists $C>1$ such that for all $i\leq j\leq k$, $\left|t_{[0,c_{0},c_{1},\ldots,c_{j}]}(E)\right|>2C^{q_{j}}$,
where $\varphi([0,c_{0},c_{1},\ldots,c_{j}]\}=\frac{p_{j}}{q_{j}}$
with $p_{j},q_{j}$ coprime.
\end{lem}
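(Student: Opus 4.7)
The plan is to proceed by induction on $j$ from $j=i$ to $j=k$, combining the Chebyshev form of the trace recursion with the Fricke--Vogt invariant. Writing $\tau_{j}:=t_{\co_{j}}(E)$ and $T_{j}:=t_{[\co_{j},-1]}(E)$, Corollary~\ref{cor:ExtChebyForm1} applied with $\co=\co_{j}$, $m=c_{j+1}$, $\ell=-1$ gives
\[
\tau_{j+1}=\CP_{c_{j+1}}(\tau_{j})\,\tau_{j-1}-\CP_{c_{j+1}-1}(\tau_{j})\,T_{j}.
\]
When $|\tau_{j}|>2$, the roots $\lambda_{\pm}$ of $\lambda^{2}-\tau_{j}\lambda+1=0$ are real with $|\lambda_{+}|>1>|\lambda_{-}|$ and $\lambda_{+}\lambda_{-}=1$. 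Binet's formula $\CP_{m}(\tau_{j})=(\lambda_{+}^{m+1}-\lambda_{-}^{m+1})/(\lambda_{+}-\lambda_{-})$ then recasts the identity above as
\[
\tau_{j+1}=A\,\lambda_{+}^{c_{j+1}}+B\,\lambda_{-}^{c_{j+1}},\qquad A=\frac{\lambda_{+}\tau_{j-1}-T_{j}}{\lambda_{+}-\lambda_{-}},\quad B=\frac{T_{j}-\lambda_{-}\tau_{j-1}}{\lambda_{+}-\lambda_{-}}.
\]

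The coefficients $A,B$ are controlled via the Fricke--Vogt invariant (Proposition~\ref{prop:=000020Fricke-Vogt}) applied with $\co=\co_{j}$ and $m=0$, which reads $\tau_{j}^{2}+\tau_{j-1}^{2}+T_{j}^{2}-\tau_{j}\tau_{j-1}T_{j}=4+V^{2}$. Combining this with $(\lambda_{+}-\lambda_{-})^{2}=\tau_{j}^{2}-4$ and the factorization $(\lambda_{+}\tau_{j-1}-T_{j})(T_{j}-\lambda_{-}\tau_{j-1})=\tau_{j}\tau_{j-1}T_{j}-\tau_{j-1}^{2}-T_{j}^{2}$, a short algebraic manipulation yields the key identity
\[
AB=1-\frac{V^{2}}{\tau_{j}^{2}-4}.
\]
Since $V\neq 0$, both $|A|,|B|\to 1$ as $|\tau_{j}|\to\infty$, and for $|\tau_{j}|$ sufficiently large the leading term $A\lambda_{+}^{c_{j+1}}$ dominates $B\lambda_{-}^{c_{j+1}}$, giving an estimate of the form $|\tau_{j+1}|\gtrsim|\lambda_{+}|^{c_{j+1}}|\tau_{j-1}|\gtrsim (|\tau_{j}|/2)^{c_{j+1}}|\tau_{j-1}|$.

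To close the induction I would choose $C>1$ small enough that the hypotheses supply $|\tau_{i-2}|\geq 2C^{q_{i-2}}$ and $|\tau_{i-1}|\geq 2C^{q_{i-1}}$ as the base case, which is possible since both quantities strictly exceed $2$. Combined with $q_{j+1}=c_{j+1}q_{j}+q_{j-1}$ from (\ref{eq:=000020recursion=000020for=000020p_k,=000020q_k}), the estimate above propagates as
\[
|\tau_{j+1}|\gtrsim (|\tau_{j}|/2)^{c_{j+1}}|\tau_{j-1}|\geq 2\,C^{c_{j+1}q_{j}+q_{j-1}}=2\,C^{q_{j+1}},
\]
closing the induction.

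The delicate point is ruling out near-cancellation between $A\lambda_{+}^{c_{j+1}}$ and $B\lambda_{-}^{c_{j+1}}$ in the early stages of the induction, where $|\tau_{j}|$ may still be comparable to the critical value $\sqrt{4+V^{2}}$ and $|AB|$ correspondingly small. The explicit formula for $AB$ quantifies this cancellation, and together with $V\neq 0$ shows that once $|\tau_{j}|$ is amplified strictly beyond the critical value (which happens in finitely many steps), $|AB|$ is bounded below by a positive constant, the leading-term estimate is uniform, and the exponential growth persists through all subsequent $j\leq k$.
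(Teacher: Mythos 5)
Your route is genuinely different from the paper's: the paper simply cites \cite[prop.~4]{BIST89} and observes that the hypothesis can be matched by choosing an irrational $\alpha$ whose first continued-fraction digits agree with $c_0,\ldots,c_k$, whereas you attempt a self-contained proof via the Binet form of the Chebyshev recursion and the Fricke--Vogt invariant. Your algebraic identities are all correct: the decomposition $\tau_{j+1}=A\lambda_+^{c_{j+1}}+B\lambda_-^{c_{j+1}}$ with $A+B=\tau_{j-1}$, and the computation $AB=1-V^2/(\tau_j^2-4)$ (which follows from Proposition~\ref{prop:=000020Fricke-Vogt} with $m=0$, after noting $\lambda_++\lambda_-=\tau_j$, $\lambda_+\lambda_-=1$, $(\lambda_+-\lambda_-)^2=\tau_j^2-4$) both check out.

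However, the argument has a genuine gap precisely at the ``delicate point'' you flag, and the resolution you offer does not close it. Knowing that $|AB|$ is bounded away from zero tells you nothing about $|A|$ individually: since $A+B=\tau_{j-1}$ with $|\tau_{j-1}|$ possibly large, one of $|A|,|B|$ can still be arbitrarily small, and it is specifically the coefficient $A$ of the \emph{growing} eigenvalue that you need bounded below. Worse, the hypothesis only gives $|\tau_j|>2$, not $|\tau_j|>2+\varepsilon$, so you must also handle the regime $\tau_j^2\to 4^+$. There $\lambda_\pm\to\pm 1$ (so there is no spectral gap to exploit), $\lambda_+-\lambda_-\to 0$, and $A,B$ blow up with opposite signs and near-total cancellation; the Binet form then gives no effective lower bound at all. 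Your claim that ``once $|\tau_j|$ is amplified strictly beyond the critical value (which happens in finitely many steps)'' the estimate becomes uniform is exactly the content of the lemma --- that the traces escape the band $[-2,2]$ exponentially fast --- and asserting it rather than proving it makes the argument circular. To repair this you would need an additional lemma that, under the hypothesis $|\tau_{j-1}|,|\tau_j|>2$, controls the sign structure and magnitude of $A$ (equivalently, the relative position of $T_j$ in the interval $[\lambda_-\tau_{j-1},\lambda_+\tau_{j-1}]$), or you would need to replace the Binet estimate in the first step(s) by a direct argument from the Fricke--Vogt invariant showing that $|\tau_{j+1}|$ strictly increases. This latter route is essentially what \cite{BIST89} does; absent it, your induction does not get off the ground.
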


\begin{proof}
This follows from \cite[prop. 4]{BIST89} by fixing an $\alpha\in[0,1]\setminus\Q$
with where the first digits of the continuous fraction expansion of
$\alpha$ coincide with $c_{0},c_{1},\ldots,c_{k}$.
\end{proof}
\begin{lem}
[{\cite[prop. 3.1,(ii)]{Raym95}} spectral monotonicity property]\label{lem:MonotonicityProp}
Let $V\neq0$ and let $\co=[0,c_{0},c_{1},\ldots,c_{k}]\in\Co$ with
$\varphi(\co)\geq0$ and $k\in\N_{0}$. Then 
\[
\sigma_{\co}(V)\subseteq\sigma_{[0,c_{0},c_{1},\ldots,c_{k-2}]}(V)\cup\sigma_{[0,c_{0},c_{1},\ldots,c_{k-1}]}(V).
\]
In addition, if $[\co,-1]\in\Co$, then 
\[
\sigc(V)\subseteq\sigcz(V)\cup\sigma_{[\co,-1]}(V).
\]
\end{lem}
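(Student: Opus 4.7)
The plan is to prove the first inclusion by contraposition using Lemma~\ref{lem:TraceStaysBig} and then to derive the second inclusion from the first via the dual representation of continued fractions. Throughout I write $\co_{j}:=[0,c_{0},c_{1},\ldots,c_{j}]$ for $-1\leq j\leq k$, so in particular $\co=\co_{k}$.

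For the first inclusion, suppose for contradiction that $E\in\sigc(V)$ but $E\notin\sigma_{\co_{k-2}}(V)\cup\sigma_{\co_{k-1}}(V)$. By Proposition~\ref{prop:=000020Floquet-Bloch=000020via=000020transfer=000020matrix}, this means $|t_{\co}(E)|\leq 2$ while $|t_{\co_{k-2}}(E)|>2$ and $|t_{\co_{k-1}}(E)|>2$. Lemma~\ref{lem:TraceStaysBig}, applied with the index chosen so that the two large-trace hypotheses are precisely those at positions $k-2$ and $k-1$, then propagates the growth one step further and forces $|t_{\co}(E)|>2$, contradicting $E\in\sigc(V)$. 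The degenerate cases $k\in\{0,1\}$ are trivial: for $k=1$, the claim reads $\sigc(V)\subseteq\sigma_{[0]}(V)\cup\sigma_{[0,0]}(V)=\R$ in view of Example~\ref{exa:Spectrum_=00005B0=00005D=000020and=000020=00005B0,0,-1=00005D}, and for $k=0$ the statement is vacuous.

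For the second inclusion, Corollary~\ref{cor:FirstTraceIds} together with Lemma~\ref{lem:=000020Trace=000020Depends=000020On=000020Value=000020Only} gives $\sigma_{[\co,0]}(V)=\sigma_{\co_{k-1}}(V)$ and $\sigma_{[\co,-1]}(V)=\sigma_{[\co_{k-1},c_{k}-1]}(V)$. I split on the value of $c_{k}\in\N$. If $c_{k}=1$, a further application of Corollary~\ref{cor:FirstTraceIds} yields $\sigma_{[\co_{k-1},0]}(V)=\sigma_{\co_{k-2}}(V)$, so the desired inclusion becomes $\sigc(V)\subseteq\sigma_{\co_{k-1}}(V)\cup\sigma_{\co_{k-2}}(V)$, which is precisely the first inclusion. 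If $c_{k}\geq 2$, I introduce the dual continued fraction $\widetilde{\co}:=[\co_{k-1},c_{k}-1,1]$, which lies in $\Co$ because $c_{k}-1\geq 1$ (so no forbidden zero appears in an interior position). The standard dual identity $\varphi([\co_{k-1},c_{k}-1,1])=\varphi([\co_{k-1},c_{k}])$ recorded in Corollary~\ref{cor:FirstTraceIds}, together with Lemma~\ref{lem:=000020Trace=000020Depends=000020On=000020Value=000020Only}, implies $\sigma_{\widetilde{\co}}(V)=\sigc(V)$. Applying the already-established first inclusion to $\widetilde{\co}$, whose final two digits are $c_{k}-1$ and $1$, produces
\[
\sigma_{\widetilde{\co}}(V)\subseteq\sigma_{\co_{k-1}}(V)\cup\sigma_{[\co_{k-1},c_{k}-1]}(V)=\sigma_{[\co,0]}(V)\cup\sigma_{[\co,-1]}(V).
\]

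The principal obstacle is the proper matching of indices in Lemma~\ref{lem:TraceStaysBig}: its stated range is $i<k$, whereas our contraposition really wants the conclusion at $j=k$ from hypotheses at positions $k-2$ and $k-1$, effectively forcing $i=k$. I would handle this either by reading the lemma's one-step conclusion at $j=k$ (which is what its inductive proof delivers in any case), or by passing first to $[\co,1]\in\Co$, applying the lemma there with $i=k<k+1$, and then invoking Lemma~\ref{lem:=000020Trace=000020Depends=000020On=000020Value=000020Only}. Once this bookkeeping is settled, the case split for the second inclusion and the trace identities of Corollary~\ref{cor:FirstTraceIds} close the argument.
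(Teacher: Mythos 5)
Your proof follows the paper's argument essentially verbatim: contraposition via Proposition~\ref{prop:=000020Floquet-Bloch=000020via=000020transfer=000020matrix} and Lemma~\ref{lem:TraceStaysBig} for the first inclusion, and the dual representation $\widetilde{\co}=[\co_{k-1},c_{k}-1,1]$ fed back into the first inclusion for the second. One small omission remains in the second inclusion: you split on $c_{k}\in\N$, but when $k=0$ (so $\co=[0,0]$ and $c_{k}=c_{0}=0\notin\N$) the split does not apply even though $[\co,-1]=[0,0,-1]\in\Co$ is allowed; the paper disposes of this separately by noting $\sigma_{[\co,0]}(V)=\sigma_{[0]}(V)=\R$, which makes the inclusion trivial, and you should add the same remark. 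On the other hand, your concern about the index constraint $i<k$ in Lemma~\ref{lem:TraceStaysBig} is well placed and more careful than the paper, which applies the lemma silently at the endpoint $i=k$; your proposed fix of passing to $[\co,1]$ and invoking the lemma with $i=k<k+1$ together with Lemma~\ref{lem:=000020Trace=000020Depends=000020On=000020Value=000020Only} makes that step rigorous.
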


\begin{proof}
We start by proving the first inclusion. If $E\notin\sigma_{[0,c_{0},c_{1},\ldots,c_{k-2}]}\cup\sigma_{[0,c_{0},c_{1},\ldots,c_{k-1}]}$,
then Proposition~\ref{prop:=000020Floquet-Bloch=000020via=000020transfer=000020matrix}
implies 
\[
\left|t_{[0,c_{0},c_{1},\ldots,c_{k-2}]}(E)\right|>2\quad\textrm{and}\quad\left|t_{[0,c_{0},c_{1},\ldots,c_{k-1}]}(E)\right|>2.
\]
Thus, Lemma~\ref{lem:TraceStaysBig} leads to $\left|\tc(E)\right|>2$
and by Proposition~\ref{prop:=000020Floquet-Bloch=000020via=000020transfer=000020matrix},
$E\notin\sigc$, which proves the first inclusion.

To prove the second inclusion, note first that if $k=0$ and $\co=[0,0]$,
then the inclusion is trivial as $\sigma_{[\co,0]}=\sigma_{[0]}=\R$.
Suppose now $k\geq1.$Then the condition $[\co,-1]\in\Co$ implies
that $c_{k}\geq1$ (in particular $c_{k}\notin\{-1,0\}$). We assume
first that $c_{k}>1$. Then, by Corollary~\ref{cor:FirstTraceIds},
$\tc=t_{[0,c_{0},c_{1},\ldots,c_{k}-1,1]}$ and therefore $\sigc=\sigma_{[0,c_{0},c_{1},\ldots,c_{k}-1,1]}$.
Applying the first part of the lemma on $\widetilde{\co}=[0,c_{0},c_{1},\ldots,c_{k}-1,1]$
gives
\begin{align*}
\sigc=\sigma_{[0,c_{0},c_{1},\ldots,c_{k}-1,1]} & \subseteq\sigma_{[0,c_{0},c_{1},\ldots,c_{k-1}]}\cup\sigma_{[0,c_{0},c_{1},\ldots,c_{k}-1]},
\end{align*}
and this yields the second part of the lemma since $t_{[\co,0]}=t_{[0,c_{0},c_{1},\ldots,c_{k-1}]}$
and $t_{[\co,-1]}=t_{[0,c_{0},c_{1},\ldots,c_{k}-1]}$ by Corollary~\ref{cor:FirstTraceIds}
and Proposition~\ref{prop:=000020Floquet-Bloch=000020via=000020transfer=000020matrix}.
To complete the proof assume that $c_{k}=1$. In this case $t_{[\co,0]}=t_{[0,c_{0},c_{1},\ldots,c_{k-1}]}$
and $t_{[\co,-1]}=t_{[0,c_{0},c_{1},\ldots,c_{k-2}]}$ (the latter
is by applying twice Corollary~\ref{cor:FirstTraceIds}) and once
again the second part of the lemma follows from the first.
\end{proof}
We end by connecting the spectrum of an aperiodic Sturmian Hamiltonian,
$\Ham$, with $\alpha\notin\Q$ with the spectra of periodic operators
which approximate it. To do so, we apply the following result from
\cite{BIST89}.
\begin{prop}
[{\cite{BIST89}}] \label{prop:SpecBddTrace} Let $\alpha\in[0,1]\setminus\Q$
with infinite continued fraction expansion $\left(c_{i}\right)_{i=0}^{\infty}$.
Then
\[
\sigma(\Ham)=\set{E\in\R}{\{t_{[0,c_{0},c_{1},\ldots,c_{k}]}(E)\}_{k\in\N}\text{ is a bounded sequence }}.
\]
\end{prop}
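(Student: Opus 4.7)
The plan is to prove the two inclusions separately, in both cases exploiting the strong convergence $H_{\alpha_k,V}\to H_{\alpha,V}$ (which follows from pointwise convergence of the potentials $\omega_{\alpha_k}\to\omega_\alpha$ together with the uniform bound $|V|$), together with the trace estimate of Lemma~\ref{lem:TraceStaysBig} and the monotonicity of Lemma~\ref{lem:MonotonicityProp}.

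First I would handle the direction $\supseteq$: assume that $\{t_{\co_k}(E,V)\}_{k\in\N}$ is bounded. The contrapositive of Lemma~\ref{lem:TraceStaysBig} forbids two consecutive values $|t_{\co_{k-1}}(E,V)|$ and $|t_{\co_k}(E,V)|$ from simultaneously exceeding $2$, since otherwise $|t_{\co_j}(E,V)|>2C^{q_j}$ would blow up. Hence for every $k$ at least one of them is $\leq 2$, so $E\in\sigma_{\co_{k-1}}(V)\cup\sigma_{\co_k}(V)$, and in particular $E\in\sigma_{\co_k}(V)$ for infinitely many $k$. Since $\omega_{\alpha_k}$ agrees with $\omega_\alpha$ on a growing symmetric interval around the origin, a bounded solution of the eigenvalue equation at one of these periodic scales (available because $|t_{\co_k}(E,V)|\leq 2$) can be truncated and $\ell^2$-normalized into a Weyl sequence for $H_{\alpha,V}$ at the energy $E$, placing $E$ in $\sigma(H_{\alpha,V})$.

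For the $\subseteq$ direction, let $E\in\sigma(H_{\alpha,V})$. Iterating Lemma~\ref{lem:MonotonicityProp} shows that the closed sets $\sigma_{\co_{k-1}}(V)\cup\sigma_{\co_k}(V)$ are nested decreasingly in $k$, and combining with strong convergence of the approximations one deduces that $\sigma(H_{\alpha,V})$ is contained in each of them. Therefore, for every $k$ at least one of $|t_{\co_{k-1}}(E,V)|\leq 2$ or $|t_{\co_k}(E,V)|\leq 2$ holds. Lemma~\ref{lem:TraceStaysBig} again prohibits two consecutive traces from both exceeding $2$, so the indices with $|t_{\co_k}(E,V)|>2$ occur only in isolation. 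To upgrade this qualitative pattern to uniform boundedness of the full sequence, one invokes the Fricke--Vogt invariant (Proposition~\ref{prop:=000020Fricke-Vogt}), which provides a quadratic algebraic constraint among three related traces and hence bounds an isolated large trace by its two bounded neighbors, together with the trace recursion of Lemma~\ref{lem:TraceEvolution} to relate the needed triples back to consecutive convergents.

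The main obstacle is this last bootstrap in the $\subseteq$ direction: turning the qualitative statement "no two consecutive traces are large" into quantitative uniform boundedness of every trace. This is where the algebraic identities of Section~\ref{sec:=000020Transfer=000020Matrices=000020and=000020Discriminant} — in particular the Fricke--Vogt invariant together with the Chebyshev-type recursions — must be brought to bear. The remaining ingredients (strong convergence of the periodic approximations and the Weyl-sequence construction, monotonicity of the spectra, and the exponential trace-growth estimate) follow directly from results already established in the text.
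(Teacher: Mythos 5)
Your route is necessarily different from the paper's, because the paper does not actually prove this proposition: its ``proof'' consists of the single sentence deferring to \cite{BIST89}. What you have written is essentially a reconstruction of the Bellissard--Iochum--Scoppola--Testard argument from the toolkit the paper has already assembled, and the outline is sound. For the inclusion ``$\supseteq$'', boundedness plus Lemma~\ref{lem:TraceStaysBig} indeed forces $|t_{\co_k}(E)|\le 2$ for infinitely many $k$, and since strong convergence only yields lower semicontinuity of spectra, your truncated Bloch-wave Weyl sequence is exactly the right device to convert ``$E\in\sigma_{\co_k}(V)$ for infinitely many $k$'' into $E\in\sigma(\Ham)$. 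For ``$\subseteq$'', deriving $\sigma(\Ham)\subseteq\sigma_{\co_k}(V)\cup\sigma_{\co_{k+1}}(V)$ from the nestedness supplied by Lemma~\ref{lem:MonotonicityProp} together with spectral inclusion under strong resolvent convergence is correct and, importantly, avoids the circularity you would incur by quoting Corollary~\ref{cor:SpectrumInTwoSubsequentSpectra}, whose proof in the paper uses the very proposition you are proving. (Your second appeal to Lemma~\ref{lem:TraceStaysBig} in that step is redundant: the ``no two consecutive large traces'' property already follows from membership in $\sigma_{\co_k}(V)\cup\sigma_{\co_{k+1}}(V)$ for every $k$.)

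The one step you leave open --- the quantitative bootstrap --- does go through with exactly the tools you name, and here is how to close it. Fix $k$ with $|t_{\co_k}(E)|>2$, put $x:=t_{\co_k}(E)$ and $u_j:=t_{[\co_k,j]}(E)$, so that $u_{j+1}=xu_j-u_{j-1}$ by Lemma~\ref{lem:TraceEvolution}, while $u_0=t_{\co_{k-1}}(E)$ and $u_{c_{k+1}}=t_{\co_{k+1}}(E)$ both have modulus at most $2$. Proposition~\ref{prop:=000020Fricke-Vogt} gives $u_j^2+u_{j-1}^2-xu_ju_{j-1}=4+V^2-x^2$ for all $j\ge1$. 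Writing $x=\lambda+\lambda^{-1}$ with $|\lambda|>1$ and $u_j=A\lambda^{j}+B\lambda^{-j}$, this invariant becomes $-AB(x^2-4)=4+V^2-x^2$, i.e.\ $AB=1-\tfrac{V^2}{x^2-4}$. If $x^2\ge 4+2V^2$, then $AB\ge\tfrac12$, so $A$ and $B$ have the same sign; together with $|A|+|B|=|A+B|=|u_0|\le2$ this forces $|A|,|B|\ge\tfrac14$, whence $|u_{c_{k+1}}|\ge|A|\,|\lambda|^{c_{k+1}}\ge\tfrac14|\lambda|$, and $|u_{c_{k+1}}|\le2$ yields $|\lambda|\le8$ and hence $|x|\le9$. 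So in every case $|t_{\co_k}(E)|\le\max\{9,\sqrt{4+2V^2}\}$, and the whole trace sequence is bounded by a constant depending only on $V$. With that computation inserted, your proof is complete.
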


\begin{proof}
This is proven in \cite{BIST89}.
\end{proof}
\begin{cor}
\label{cor:SpectrumInTwoSubsequentSpectra} Let $\alpha\in(0,1)\setminus\Q$
with infinite continued fraction expansion $\left(c_{i}\right)_{i=0}^{\infty}$.
Then we get for all $k\in\N$ 
\[
\sigma(\Ham)\subseteq\sigma_{[0,0,c_{1},\ldots,c_{k}]}(V)\cup\sigma_{[0,0,c_{1},\ldots,c_{k+1}]}(V).
\]
\begin{proof}
Let $E\in\sigma(\Ham)$ and assume by contradiction that there is
some $k\in\N$ such that $E\notin\sigma_{[0,0,c_{1},\ldots,c_{k}]}(V)\cup\sigma_{[0,0,c_{1},\ldots,c_{k+1}]}(V)$.
By Proposition~\ref{prop:=000020Floquet-Bloch=000020via=000020transfer=000020matrix},
$\left|t_{[0,0,c_{1},\ldots,c_{k}]}(E)\right|>2$ and $\left|t_{[0,0,c_{1},\ldots,c_{k+1}]}(E)\right|>2$.
Applying Lemma~\ref{lem:TraceStaysBig} we get that there exists
$C>1$ such that 
\[
|t_{[0,0,c_{1},\ldots,c_{n}]}(E)|>C^{q_{n}}\quad\text{for all }n\in\N.
\]
In particular $\{t_{[0,0,c_{1},\ldots,c_{n}]}(E)\}_{k\in\N}$ is an
unbounded sequence, but this contradicts $E\in\sigma(\Ham)$ by Proposition~\ref{prop:SpecBddTrace}.
\end{proof}
\end{cor}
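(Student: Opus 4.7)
The plan is to argue by contradiction, combining the characterization of $\sigma(H_{\alpha,V})$ via boundedness of trace sequences (Proposition~\ref{prop:SpecBddTrace}) with the exponential growth estimate of Lemma~\ref{lem:TraceStaysBig}. The key observation is that if $E$ lies outside two \emph{consecutive} periodic spectra, then the traces are already $>2$ in absolute value at two consecutive levels of the continued fraction, which is precisely the input required to trigger Lemma~\ref{lem:TraceStaysBig}.

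Concretely, I would proceed as follows. Suppose for contradiction that there exists $k \in \N$ and $E \in \sigma(H_{\alpha,V})$ with
\[
E \notin \sigma_{[0,0,c_{1},\ldots,c_{k}]}(V) \cup \sigma_{[0,0,c_{1},\ldots,c_{k+1}]}(V).
\]
First, since $E$ is outside both periodic spectra, Proposition~\ref{prop:=000020Floquet-Bloch=000020via=000020transfer=000020matrix} (which characterizes the spectrum as the preimage of $[-2,2]$ under the discriminant) gives
\[
\bigl|t_{[0,0,c_{1},\ldots,c_{k}]}(E)\bigr| > 2 \quad \text{and} \quad \bigl|t_{[0,0,c_{1},\ldots,c_{k+1}]}(E)\bigr| > 2.
\]
This is exactly the hypothesis of Lemma~\ref{lem:TraceStaysBig} (with the index shifted appropriately), so we obtain a constant $C>1$ such that for every $n \geq k+1$,
\[
\bigl|t_{[0,0,c_{1},\ldots,c_{n}]}(E)\bigr| > 2 C^{q_{n}}.
\]
Since the denominators $q_n$ of the convergents of an irrational $\alpha$ tend to infinity, the sequence $\bigl\{t_{[0,0,c_{1},\ldots,c_{n}]}(E)\bigr\}_{n \in \N}$ is unbounded. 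By Proposition~\ref{prop:SpecBddTrace}, however, $E \in \sigma(H_{\alpha,V})$ forces this sequence to be bounded, yielding the desired contradiction.

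The argument is essentially a routine stringing together of three previously established facts, so there is no serious obstacle. The only point that merits minor care is checking that the precondition of Lemma~\ref{lem:TraceStaysBig} is satisfied at two \emph{consecutive} levels $k$ and $k+1$ of the continued fraction expansion: this is why the conclusion involves the union of two successive periodic spectra rather than a single one, and it is also why the statement cannot be weakened to $\sigma(H_{\alpha,V}) \subseteq \sigma_{[0,0,c_{1},\ldots,c_{k}]}(V)$. Once this is in place, monotonicity of $q_n \to \infty$ and the exponential growth bound close the argument.
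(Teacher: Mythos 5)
Your proposal matches the paper's proof essentially line for line: the same contradiction setup, the same use of the Floquet-Bloch characterization to get $|t|>2$ at two consecutive levels, the same invocation of Lemma~\ref{lem:TraceStaysBig} to get exponential growth, and the same appeal to Proposition~\ref{prop:SpecBddTrace} to close the contradiction. Your wording is slightly more careful (stating the growth bound for $n\geq k+1$ and noting that $q_n\to\infty$), but the argument is identical in substance.
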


Both Lemma~\ref{lem:MonotonicityProp} and Corollary~\ref{cor:SpectrumInTwoSubsequentSpectra}
provide monotonicity statements of the spectra. In addition to those,
we also have the following spectral convergence result.
\begin{prop}
\label{prop:=000020Convergence=000020to=000020aperiodic=000020spectrum}

Let $V\in\R$ and $\alpha\notin\Q$ with infinite continued fraction
expansion $\left(c_{i}\right)_{i=0}^{\infty}$. For $k\in\N_{0}$,
set $\cok=[0,0,c_{1},\ldots,c_{k}]\in\Co$. Then
\[
\sigma\left(\Ham\right)=\lim_{k\rightarrow\infty}\sigma_{[\cok,1]}(V)=\lim_{k\rightarrow\infty}\sigma_{\cok}(V)=\bigcap_{k\in\N_{0}}\left(\sigma_{\cok}(V)\cup\sigma_{[\cok,1]}(V)\right).
\]
\end{prop}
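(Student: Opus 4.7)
The plan is to establish the central set-theoretic identity $\sigma(\Ham)=\bigcap_{k\in\N_{0}}U_{k}$ with $U_{k}:=\sigma_{\cok}(V)\cup\sigma_{[\cok,1]}(V)$, and then to deduce the two limit identities from a monotonicity property of the family $(U_{k})_{k}$.

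For the inclusion $\sigma(\Ham)\subseteq U_{k}$ I would argue by contradiction. Fix $k$, let $E\in\sigma(\Ham)$, and suppose $|t_{\cok}(E)|>2$ and $|t_{[\cok,1]}(E)|>2$. The goal is to deduce $|t_{\co_{k+1}}(E)|>2$, since together with $|t_{\cok}(E)|>2$ Lemma~\ref{lem:TraceStaysBig} then forces $\{t_{\co_{j}}(E)\}_{j}$ to grow exponentially, contradicting Proposition~\ref{prop:SpecBddTrace}. To get $|t_{\co_{k+1}}(E)|>2$, I would apply Corollary~\ref{cor:ExtChebyForm1} with $\ell=0$ and $m=c_{k+1}$ to write
\[
t_{\co_{k+1}}=t_{[\cok,c_{k+1}]}=\CP_{c_{k+1}-1}(t_{\cok})\,t_{[\cok,1]}-\CP_{c_{k+1}-2}(t_{\cok})\,t_{\co_{k-1}},
\]
and exploit that $|t_{\cok}|>2$ makes $\CP_{n}(t_{\cok})$ grow exponentially in $n$, so the right-hand side is generically large. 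The exceptional cancellation case is ruled out using the Fricke-Vogt invariant (Proposition~\ref{prop:=000020Fricke-Vogt}) applied to the triple $(\co_{k-1},\cok,[\cok,1])$: the degeneracy would force the specific value $t_{\cok}(E)^{2}=4+V^{2}$, and iterating the recursion one more step together with Proposition~\ref{prop:SpecBddTrace} forbids this from persisting along the sequence of convergents of $\alpha$.

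For the reverse inclusion $\bigcap_{k}U_{k}\subseteq\sigma(\Ham)$ I would verify the boundedness criterion of Proposition~\ref{prop:SpecBddTrace}. The hypothesis gives, for each $k$, that at least one of $|t_{\cok}(E)|,|t_{[\cok,1]}(E)|$ is at most $2$. The Fricke-Vogt invariant, viewed as a quadratic in $t_{\cok}$ with coefficients depending on $t_{\co_{k-1}}$ and $t_{[\cok,1]}$, turns $|t_{[\cok,1]}(E)|\le 2$ into a bound on $|t_{\cok}(E)|$ in terms of $|t_{\co_{k-1}}(E)|$ and $V$. Combined with the contrapositive of Lemma~\ref{lem:TraceStaysBig}, which forbids two consecutive traces $|t_{\co_{j-1}}|,|t_{\co_{j}}|$ from both exceeding $2$ without triggering exponential blow-up, this propagates boundedness of the sequence $\{t_{\co_{j}}(E)\}_{j}$.

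Finally, the two limit identities follow. Using Lemma~\ref{lem:MonotonicityProp} applied to $\co_{k+1}=[\cok,c_{k+1}]$ and to $[\co_{k+1},1]$, together with a short telescoping argument, I would show $U_{k+1}\subseteq U_{k}$, so $(U_{k})_{k}$ is a decreasing sequence of compact subsets of $\R$ whose Hausdorff limit coincides with $\bigcap_{k}U_{k}=\sigma(\Ham)$. The sandwiches $\sigma_{\cok}\subseteq U_{k}$ and $\sigma_{[\cok,1]}\subseteq U_{k}$ then supply the Hausdorff upper bound, while strong-resolvent convergence $H_{\varphi(\cok),V}\to\Ham$ and $H_{\varphi([\cok,1]),V}\to\Ham$, obtained from pointwise convergence of the potentials $\omega_{\varphi(\cok)}\to\omega_{\alpha}$, supplies the reverse approximation. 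The principal obstacle will be the first inclusion: systematically excluding the degenerate branch $t_{\cok}(E)^{2}=4+V^{2}$ of the Chebyshev recursion, which demands a careful iteration of the Fricke-Vogt invariant one step beyond the first application.
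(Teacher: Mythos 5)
The paper proves this much more directly than you propose, and several of your steps contain genuine gaps.

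For the two limit identities, the paper does not use strong resolvent convergence; it cites a result of Bellissard--Iochum--Testard (your reference \cite{BIT91}) stating that $\beta\mapsto\sigma(H_{\beta,V})$ is continuous in the Hausdorff metric at every irrational $\beta$. Since $\varphi(\co_k)\to\alpha$ and $\varphi([\co_k,1])\to\alpha$, both limit identities follow at once. Your route via pointwise convergence of potentials and strong resolvent convergence only yields spectral \emph{lower} semicontinuity (each point of $\sigma(\Ham)$ is approximated by periodic eigenvalues); it does not give the upper semicontinuity needed for Hausdorff convergence, and that direction is exactly what is nontrivial here. This is a real gap, not a detail.

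For the inclusion $\sigma(\Ham)\subseteq\Lambda_k$, you are re-deriving from scratch something the paper already has: Corollary~\ref{cor:SpectrumInTwoSubsequentSpectra} gives $\sigma(\Ham)\subseteq\sigma_{\co_k}\cup\sigma_{\co_{k+1}}$, and iterating the second part of Lemma~\ref{lem:MonotonicityProp} gives $\sigma_{[\co_k,m]}\subseteq\sigma_{\co_k}\cup\sigma_{[\co_k,m-1]}$, hence $\sigma_{\co_{k+1}}=\sigma_{[\co_k,c_{k+1}]}\subseteq\sigma_{\co_k}\cup\sigma_{[\co_k,1]}=\Lambda_k$. No Chebyshev expansion and no case analysis about a "degenerate branch" is required, and your sketch for ruling it out via one more iteration of Fricke--Vogt is not an argument.

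The most serious flaw is in your reverse inclusion. You claim the Fricke--Vogt identity, "viewed as a quadratic in $t_{\co_k}$," turns $|t_{[\co_k,1]}(E)|\le 2$ into a \emph{bound} on $|t_{\co_k}(E)|$. It does not. Writing the invariant with $\co=\co_k$, $m=1$,
\[
t_{\co_k}^2+t_{[\co_k,1]}^2+t_{\co_{k-1}}^2-t_{\co_k}t_{[\co_k,1]}t_{\co_{k-1}}=4+V^2,
\]
and completing the square in $t_{\co_k}$ gives
\[
\left(t_{\co_k}-\tfrac{1}{2}t_{[\co_k,1]}t_{\co_{k-1}}\right)^2+t_{\co_{k-1}}^2\left(1-\tfrac{1}{4}t_{[\co_k,1]}^2\right)=4+V^2-t_{[\co_k,1]}^2,
\]
which, under $|t_{[\co_k,1]}|\le 2$, says the right-hand side is at least $V^2$; this is a \emph{lower} bound on a sum of nonnegative terms, not an upper bound on $|t_{\co_k}|$. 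There is no control on $t_{\co_{k-1}}$, so nothing in this identity propagates boundedness. The paper sidesteps this entirely with a Hausdorff-metric argument: having established $\Lambda_{k+1}\subseteq\Lambda_k$ and Hausdorff convergence $\Lambda_k\to\sigma(\Ham)$, any $E\notin\sigma(\Ham)$ has a neighbourhood missing $\sigma(\Ham)$, hence eventually missing $\Lambda_k$, hence $E\notin\bigcap_k\Lambda_k$. You should adopt that argument, or find a genuinely different mechanism; the Fricke--Vogt route as written fails.
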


\begin{proof}
By \cite[thm. 1]{BIT91}, the spectral map $[0,1]\ni\beta\mapsto\sigma\left(H_{\beta,V}\right)$
is continuous at all irrational $\beta\in[0,1]$ and for all $V\in\R$.
Observe that $\lim_{k\to\infty}\varphi(\co_{k})=\lim_{k\to\infty}\varphi([\co_{k},1])=\alpha$
where $\varphi$ is the evaluation map. Thus, 
\[
\sigma\left(\Ham\right)=\lim_{k\rightarrow\infty}\sigma_{[\cok,1]}(V)=\lim_{k\rightarrow\infty}\sigma_{\cok}(V)
\]
follows using $\alpha\not\in\Q$ and $\sigma_{\co}(V)=\sigma\left(H_{\varphi(\co),V}\right)$
for $\co\in\Co$ proven in Proposition~\ref{prop:=000020Floquet-Bloch=000020via=000020transfer=000020matrix}.

Set $\Lambda_{k}(V):=\sigma_{\cok}(V)\cup\sigma_{[\cok,1]}(V)$ for
$k\in\N_{0}$ and $V\in\R$. Lemma~\ref{lem:MonotonicityProp} and
Corollary~\ref{cor:SpectrumInTwoSubsequentSpectra} imply $\sigma\left(\Ham\right)\subseteq\Lambda_{k+1}(V)\subseteq\Lambda_{k}(V)$.
Thus, $\sigma\left(\Ham\right)\subseteq\bigcap_{k\in\N_{0}}\Lambda_{k}(V)$
follows. By the convergence of $\sigma_{[\cok,1]}(V)$ and $\sigma_{\cok}(V)$,
we conclude that $\left\{ \Lambda_{k}(V)\right\} _{k\in\N_{0}}$ converge
monotonically in the Hausdorff metric to $\sigma\left(\Ham\right)$.
Thus, if $E\notin\sigma\left(\Ham\right)$, then there is an $\varepsilon>0$
such that $B_{\varepsilon}(E):=\set{E'\in\R}{|E-E'|<\varepsilon}$
does not intersect $\sigma\left(\Ham\right)$. Then the Hausdorff
convergence of $\left\{ \Lambda_{k}(V)\right\} _{k\in\N_{0}}$ to
$\sigma\left(\Ham\right)$ implies that there is a $k_{0}\in\N_{0}$
such that $B_{\varepsilon/2}(E)\cap\Lambda_{k}(V)=\emptyset$ for
all $k\geq k_{0}$. Hence, $E\not\in\bigcap_{k\in\N_{0}}\Lambda_{k}(V)$
is derived proving $\bigcap_{k\in\N_{0}}\Lambda_{k}(V)=\sigma\left(\Ham\right)$.
\end{proof}

\subsection{Spectral bands structure for large coupling constant, \textmd{$V>4$}
\protect\label{subsec:=000020spectral=000020properties=000020V>4}}

From this point on until the end of the paper, we specialize our discussion
for the case $V>4$. Under this assumption, one can prove quite a
few useful connections between the periodic spectra, $\{\sigc\}_{\co\in\Co}.$

We start with the three-intersection-property. This observation can
essentially be found in \cite{Casdagli1986} for the Fibonacci Hamiltonian
and was generalized in \cite[prop. 3.1,(iii)]{Raym95}. This property
starts failing if $|V|\leq4$ and this is one major obstacle to treat
the small coupling regime.
\begin{prop}
\cite{Casdagli1986,Raym95}\label{prop:ThreeConsecSpecCantIntersect}
Let $V>4$, $\co\in\Co$ and $m\in\Nz$ such that $[\co,m-1]\in\Co$.
Then 
\[
\sigma_{\co}(V)\cap\sigma_{[\co,m]}(V)\cap\sigma_{[\co,m-1]}(V)=\emptyset.
\]
\end{prop}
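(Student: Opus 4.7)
The plan is to combine the Fricke--Vogt invariant (Proposition \ref{prop:=000020Fricke-Vogt}) with the band characterization from Proposition \ref{prop:=000020Floquet-Bloch=000020via=000020transfer=000020matrix}, and reduce the statement to an elementary inequality on a cube. Assume, toward a contradiction, that
$$E \in \sigma_{\co}(V) \cap \sigma_{[\co,m]}(V) \cap \sigma_{[\co,m-1]}(V).$$
By Proposition \ref{prop:=000020Floquet-Bloch=000020via=000020transfer=000020matrix}, this means $|t_{\co}(E,V)|\leq 2$, $|t_{[\co,m]}(E,V)|\leq 2$, and $|t_{[\co,m-1]}(E,V)|\leq 2$. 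Writing for brevity $x := t_{\co}(E,V)$, $y := t_{[\co,m]}(E,V)$, $z := t_{[\co,m-1]}(E,V)$, the Fricke--Vogt identity reads
$$x^2 + y^2 + z^2 - xyz = 4 + V^2, \qquad (x,y,z)\in[-2,2]^3.$$

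The proof then reduces to the elementary claim that $f(x,y,z) := x^2+y^2+z^2-xyz$ satisfies $f \leq 20$ on the cube $[-2,2]^3$. I would prove this directly from the two bounds
$$x^2+y^2+z^2 \leq 12 \qquad \text{and}\qquad |xyz|\leq 8,$$
which give $f \leq 12 + |xyz| \leq 20$. Combined with the Fricke--Vogt identity, this would yield $4+V^2 \leq 20$, contradicting $V>4$.

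No step in this plan is really an obstacle: all the serious work has been front-loaded into the proof of Proposition \ref{prop:=000020Fricke-Vogt}, after which the rest is a two-line optimization on a cube. It is, however, instructive to note that the threshold $V=4$ is sharp for this approach: the maximum value $f=20$ on $[-2,2]^3$ is attained at corners such as $(2,2,-2)$ or $(-2,-2,-2)$, so at $|V|=4$ the identity can be saturated at the corners of the box, and the contradiction collapses. This is consistent with the remark following the statement that the three-intersection property begins to fail precisely at $|V|\leq 4$, where the current argument must be replaced by more refined tools.
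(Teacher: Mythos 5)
Your proposal is correct and is precisely the paper's argument: invoke Proposition~\ref{prop:=000020Floquet-Bloch=000020via=000020transfer=000020matrix} to put the three traces in $[-2,2]$, plug into the Fricke--Vogt identity (Proposition~\ref{prop:=000020Fricke-Vogt}), and conclude $4+V^2\le 20$, contradicting $V>4$. The only difference is that you spell out the elementary bound $x^2+y^2+z^2-xyz\le 20$ on the cube and note its sharpness at the corners, which the paper leaves implicit.
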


\begin{proof}
Assume by contradiction that there is some $E\in\sigma_{\co}(V)\cap\sigma_{[\co,m]}(V)\cap\sigma_{[\co,m-1]}(V).$
By Proposition~\ref{prop:=000020Floquet-Bloch=000020via=000020transfer=000020matrix},
we obtain 
\[
|\tc(E)|,|\tcm(E)|,|t_{[\co,m,-1]}(E)|\leq2.
\]
Substituting this in Proposition~\ref{prop:=000020Fricke-Vogt},
$V>4$ yields 
\[
20\geq\tc^{2}(E)+\tcm^{2}(E)+t_{[\co,m-1]}^{2}(E)-\tc(E)\tcm(E)t_{[\co,m-1]}(E)=4+V^{2}>20,
\]
a contradiction.
\end{proof}
\begin{cor}
\label{cor:=000020monotonicity=000020of=000020single=000020E} Let
$V>4$, and $\co\in\Co$ such that $[\co,-1]\in\Co$. If $E\in\sigc(V)$,
then either
\[
E\in\sigcz(V)\quad\text{or}\quad E\in\sigma_{[\co,-1]}(V),
\]
but not both.
\end{cor}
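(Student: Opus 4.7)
The corollary is essentially a combination of two results already established in the excerpt, so my plan is to simply identify the right specializations and glue them together.

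First, I would establish the inclusion $\sigc(V) \subseteq \sigcz(V) \cup \sigma_{[\co,-1]}(V)$. This is exactly the second statement of Lemma~\ref{lem:MonotonicityProp}, which is valid precisely under the hypothesis $[\co,-1]\in\Co$ that we are given. So the "either/or" (inclusive) part requires no new work: if $E \in \sigc(V)$ then $E$ must lie in at least one of $\sigcz(V)$ or $\sigma_{[\co,-1]}(V)$.

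Second, I would rule out the possibility that $E$ lies in both. For this I would invoke Proposition~\ref{prop:ThreeConsecSpecCantIntersect} with the parameter choice $m=0$. The hypothesis of that proposition requires $[\co,m-1]=[\co,-1]\in\Co$, which is exactly what we are assuming. The conclusion is
\[
\sigma_{\co}(V)\cap\sigma_{[\co,0]}(V)\cap\sigma_{[\co,-1]}(V)=\emptyset,
\]
so $E\in\sigc(V)$ cannot lie in both $\sigcz(V)=\sigma_{[\co,0]}(V)$ and $\sigma_{[\co,-1]}(V)$ simultaneously. Combining the two observations gives the exclusive "either/or" claimed.

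There is no real obstacle here, since both ingredients are already in place. The only subtlety worth double-checking is that the $m=0$ instance of Proposition~\ref{prop:ThreeConsecSpecCantIntersect} is legitimate, i.e. that the proposition allows $m=0$: its statement requires $m\in\Nz=\N\cup\{0\}$, so this is indeed permitted, and the condition $[\co,-1]\in\Co$ we assume is precisely the $[\co,m-1]\in\Co$ condition required there. This is also the only place where the assumption $V>4$ enters, via the Fricke--Vogt invariant underlying Proposition~\ref{prop:ThreeConsecSpecCantIntersect}; the union inclusion from Lemma~\ref{lem:MonotonicityProp} holds for all $V\neq 0$.
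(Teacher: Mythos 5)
Your proposal is correct and matches the paper's proof exactly: the inclusion comes from the second part of Lemma~\ref{lem:MonotonicityProp}, and the exclusivity from Proposition~\ref{prop:ThreeConsecSpecCantIntersect} with $m=0$. Your check that $m=0\in\Nz$ is admissible and that $[\co,-1]\in\Co$ is precisely the required hypothesis $[\co,m-1]\in\Co$ is the right thing to verify and is correct.
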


\begin{proof}
Assume $E\in\sigc(V)$. By Lemma~\ref{lem:MonotonicityProp}, we
get $E\in\sigcz(V)\cup\sigma_{[\co,-1]}(V)$. Now, apply Proposition~\ref{prop:ThreeConsecSpecCantIntersect}
with $m=0$ and get that either $E\in\sigcz(V)$ or $E\in\sigma_{[\co,-1]}(V),$
but not both.
\end{proof}
\begin{prop}
\label{prop:BandsInclusion} Let $V>4$, and $\co\in\Co$ such that
$[\co,-1]\in\Co$. If $I\subseteq\sigc(V)$ is a spectral band, then
$I$ is either contained in a spectral band of $\sigcz(V)$ or in
a spectral band of $\sigma_{[\co,-1]}(V)$, but not in both.
\end{prop}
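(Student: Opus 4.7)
The plan is to combine the pointwise dichotomy from Corollary~\ref{cor:=000020monotonicity=000020of=000020single=000020E} with a straightforward connectedness argument, and then invoke the band structure from Proposition~\ref{prop:=000020there=000020are=000020q=000020spectral=000020bands}. First, I would recall that a spectral band $I\subseteq\sigc(V)$ is by definition a closed interval, hence connected, and that Lemma~\ref{lem:MonotonicityProp} gives the inclusion $I\subseteq\sigcz(V)\cup\sigma_{[\co,-1]}(V)$.

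The key step is to observe that, by Corollary~\ref{cor:=000020monotonicity=000020of=000020single=000020E}, every $E\in I$ lies in exactly one of $\sigcz(V)$ and $\sigma_{[\co,-1]}(V)$. Consequently,
\[
I\;=\;\bigl(I\cap\sigcz(V)\bigr)\;\sqcup\;\bigl(I\cap\sigma_{[\co,-1]}(V)\bigr),
\]
a disjoint union of two sets that are both closed in $I$, since $\sigcz(V)$ and $\sigma_{[\co,-1]}(V)$ are closed subsets of $\R$. Because $I$ is connected and nonempty, one of the two pieces must be empty, so either $I\subseteq\sigcz(V)$ or $I\subseteq\sigma_{[\co,-1]}(V)$. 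The exclusivity (``but not in both'') is then immediate: containment of $I$ in both spectra would contradict Corollary~\ref{cor:=000020monotonicity=000020of=000020single=000020E} applied to any single $E\in I$.

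It remains to upgrade the inclusion in a spectrum to an inclusion in a single spectral band. Assume without loss of generality that $I\subseteq\sigcz(V)$. By Proposition~\ref{prop:=000020there=000020are=000020q=000020spectral=000020bands}(\ref{enu:=000020prop-spectral=000020bands-1}), applied with $V\neq 0$, the spectrum $\sigcz(V)$ is a finite union of pairwise disjoint closed intervals, namely its spectral bands. A connected subset of such a finite disjoint union of closed intervals must sit inside one of them, which yields the desired conclusion; the same argument applies verbatim if $I\subseteq\sigma_{[\co,-1]}(V)$.

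I do not expect a substantial obstacle here, as every ingredient has already been set up. The only subtlety to double-check is that the spectral bands of $\sigcz(V)$ and of $\sigma_{[\co,-1]}(V)$ are genuinely \emph{pairwise disjoint} (rather than only having disjoint interiors), because otherwise a connected $I$ could straddle two touching bands. This non-touching property is exactly what Proposition~\ref{prop:=000020there=000020are=000020q=000020spectral=000020bands} provides under the standing assumption $V\neq 0$, so the argument goes through cleanly.
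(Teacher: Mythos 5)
Your proof is correct and follows essentially the same route as the paper's: decompose $I$ via Corollary~\ref{cor:=000020monotonicity=000020of=000020single=000020E} into two disjoint pieces, each closed in $I$, use connectedness of $I$ to conclude one piece is empty, and then apply connectedness again to localize $I$ within a single band (the paper's ``using the same argument'' step). You are somewhat more explicit than the paper in citing Proposition~\ref{prop:=000020there=000020are=000020q=000020spectral=000020bands} for the pairwise disjointness of the bands; just note that that proposition assumes $\varphi\neq\infty$, which fails exactly for $\co=[0,0]$ (where $\sigcz(V)=\R$) and for $\co=[0,0,1]$ (where $\sigma_{[\co,-1]}(V)=\R$), but in those edge cases the relevant spectrum consists of a single connected component, so the conclusion is immediate without invoking the proposition.
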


\begin{proof}
Since $I$ is a spectral band, we conclude that $I$ is closed and
connected. Now both $I\cap\sigcz(V)$ and $I\cap\sigma_{[\co,-1]}(V)$
are closed too and according to Corollary~\ref{cor:=000020monotonicity=000020of=000020single=000020E},
we have the following disjoint union $I=\left(I\cap\sigma_{[\co,0]}(V)\right)\sqcup\left(I\cap\sigma_{[\co,-1]}(V)\right)$.
Since $I$ is connected, one of the closed sets $I\cap\sigma_{[\co,0]}(V)$
and $I\cap\sigma_{[\co,-1]}(V)$ must be empty and the other equals
to $I$. Hence, $I$ is contained in either $\sigma_{[\co,0]}(V)$
or $\sigma_{[\co,-1]}(V)$. Using the same argument we may conclude
that $I$ is contained in a single connected component (spectral band)
of $\sigma_{[\co,0]}(V)$ or $\sigma_{[\co,-1]}(V)$.
\end{proof}
Proposition~\ref{prop:BandsInclusion} motivates a classification
of the spectral bands into two types. We start employing such a dichotomy
of the spectral bands and see that it leads to a hierarchical structure
of the spectral bands from different spectra, $\sigc$. This structure
is developed and described in detail in the rest of this section.

Let $I=[a,b]$ and $J=[c,d]$ be two closed intervals. We say that
$I$ is \emph{strictly included} in $J$ and denote $I\strict J$
if $c<a$ and $b<d$. Note that this implies the (weaker) inclusion
$I\subseteq J$.
\begin{defn}
\noindent\begin{flushleft}
\label{def:=000020backward=000020type}Let $V\in\R$ and $\co\in\Co$
be such that $\varphi(\co)\in[0,1]$ and $[\co,0],[\co,-1]\in\Co$.
A spectral band $I(V)$ of $\sigma_{\co}(V)$ is called
\par\end{flushleft}
\begin{itemize}
\item \begin{flushleft}
\emph{backward type $\tA$} \\
if there exists a spectral band $J(V)$ in $\sigma_{[\co,0]}(V)$
such that $\mbox{\ensuremath{I(V)\strict J(V)}}$.
\par\end{flushleft}
\item \begin{flushleft}
\emph{weak backward type $\tA$} \\
if there exists a spectral band $J(V)$ in $\sigma_{[\co,0]}(V)$
such that $\mbox{\ensuremath{I(V)\subseteq J(V)}}$.
\par\end{flushleft}
\item \begin{flushleft}
\emph{backward type $\tB$} \\
if there exists a spectral band $J(V)$ in $\sigma_{[\co,-1]}(V)$
such that $\mbox{\ensuremath{I(V)\strict J(V)}}$.
\par\end{flushleft}
\item \begin{flushleft}
\emph{weak backward type $\tB$} \\
if there exists a spectral band $J(V)$ in $\sigma_{[\co,-1]}(V)$
such that $\mbox{\ensuremath{I(V)\subseteq J(V)}}$.
\par\end{flushleft}
\end{itemize}
\end{defn}

\begin{rem*}
In \cite[def. 3.2]{Raym95} a spectral band of weak backward type
$A$ is called a type $III$ band, and a spectral band of weak backward
type $B$ is called a type $II$ band. After that, the notations $A$
and $B$ (for such bands) also appeared in the literature, see e.g.,
\cite{KiKiLa03,Damanik2008,DaGoYe16}. . We prefer to use here (and
also in \cite{BanBecLoe_24}) the notations $A$,$B$ for visual reasons
and to distinguish those from the notation $G$ introduced in the
sequel for spectral gaps. In addition, only the notions of weak backward
types appear in \cite[def. 3.2]{Raym95} (though not in this name).
We introduce here also the stronger notion of (non-weak) backward
types and use them to prove slightly stronger statements, since those
are needed in order to obtain further results for $V<4$ in \cite{BanBecLoe_24}.

In \cite[def. 3.2]{Raym95} also the notion of type $I$ gap is introduced
being a spectral band $I(V)$ in $\sigma_{[\co,1]}(V)$ that is contained
in $\sigma_{[\co,1,-1]}(V)$ (so a weak backward $B$ band). By Proposition~\ref{prop:BandsInclusion},
$I(V)\cap\sigc(V)=\emptyset$ and so $I(V)$ is contained in a spectral
gap of $\sigc(V)$. As mentioned before, we omit this terminology
here but when coding the spectrum in Section~\ref{subsec:=000020Coding}
the label $G$ is rather used. These bands are a placeholder for the
corresponding $B$ band one level higher, confer Definition~\ref{def:CoveringSets}.
\end{rem*}
Using Definition~\ref{def:=000020backward=000020type} and Proposition~\ref{prop:BandsInclusion}
we conclude the following.
\begin{cor}
\label{cor:=000020Weak=000020Backward=000020Type} For all $V>4$
and $\co\in\Co$ with $\varphi(\co)\in[0,1]$, every spectral band
in $\sigma_{\co}(V)$ is either of weak backward type $A$ or weak
backward type $B$, but not both.
\end{cor}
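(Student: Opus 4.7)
The plan is to derive the corollary as a direct consequence of Proposition~\ref{prop:BandsInclusion} once the definition of the weak backward types is unpacked. Given $V>4$ and $\co\in\Co$ with $\varphi(\co)\in[0,1]$, the hypotheses $[\co,0],[\co,-1]\in\Co$ (implicit in the definition of backward types) are in force, so Proposition~\ref{prop:BandsInclusion} applies to every spectral band $I$ of $\sigma_\co(V)$.

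First, I would fix an arbitrary spectral band $I\subseteq\sigma_\co(V)$ and invoke Proposition~\ref{prop:BandsInclusion}, which yields a dichotomy: $I$ is contained in a spectral band of $\sigma_{[\co,0]}(V)$, or $I$ is contained in a spectral band of $\sigma_{[\co,-1]}(V)$, and these two possibilities are mutually exclusive. Next, I would observe that by Definition~\ref{def:=000020backward=000020type}, the first alternative is precisely the statement that $I$ is of weak backward type $\tA$, while the second is precisely that $I$ is of weak backward type $\tB$. This immediately yields both the existence claim (at least one weak backward type) and the exclusivity claim (not both).

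The main (and only) subtle point is to double-check that Proposition~\ref{prop:BandsInclusion} is applicable, i.e.\ that $[\co,-1]\in\Co$ for all $\co$ to which the corollary is being applied. Since the notions of backward types $\tA$ and $\tB$ are themselves defined only when $[\co,0],[\co,-1]\in\Co$, we may simply stipulate that the corollary is implicitly quantifying over such $\co$ (the boundary cases $\co=[0,0]$ and $\co=[0,0,c_1,\ldots,c_k]$ with $c_k\geq1$ both satisfy $[\co,-1]\in\Co$ by the definition of $\Co$ in Section~\ref{subsec:=000020space=000020of=000020continued=000020fractions}). So no additional case analysis is required, and no nontrivial obstacle arises — the corollary is purely a repackaging of Proposition~\ref{prop:BandsInclusion} in the language of Definition~\ref{def:=000020backward=000020type}.
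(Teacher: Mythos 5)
Your proposal is correct and matches the paper's own proof, which simply states that the corollary is a reformulation of Proposition~\ref{prop:BandsInclusion}. Your added care about the implicit hypothesis $[\co,0],[\co,-1]\in\Co$ (needed for Definition~\ref{def:=000020backward=000020type} to make sense and for Proposition~\ref{prop:BandsInclusion} to apply) is sound, though worth noting it does not follow from $\varphi(\co)\in[0,1]$ alone but is built into the statement's scope since the weak backward types are only defined under that condition.
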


\begin{proof}
This is just a reformulation of Proposition~\ref{prop:BandsInclusion}.
\end{proof}
We note that according to Definition~\ref{def:=000020backward=000020type},
whether a spectral band is a (weak) backward type $\tA$ or $\tB$
(or not at all) depends on the value of $V$. We see later (Theorem~\ref{Thm-V>4_AllTypeA_B})
that as long as $V>4$, the type of a spectral band does not depend
on the value of $V$. This statement is generalized in \cite[thm. 2.15]{BanBecLoe_24}
for all $V\neq0$. Note that there is no use to consider the backward
type properties for $V=0$, as in this case all spectra of all operators
$H_{\alpha,V}$ are equal to $[-2,2]$.

If $\co\in\Co$ with $\varphi(\co)\in(0,1)$ and $[\co,0],[\co,-1]\in\Co$,
then there are $c_{1},c_{2},\ldots,c_{k}\in\N$ for some $k\in\N$
such that either $\co=[0,0,c_{1},\ldots,c_{k}+1]$ or $\co=[0,0,c_{1},\ldots,c_{k},1]$.
Indeed, the rational number $\varphi(\co)$ has exactly two different
continued fraction expansion \cite[ch. I.4]{Khinchin_book64}. Then
the weak backward type of a spectral band in $\sigc$ depends on the
chosen representation. More precisely, a straightforward computation
using Corollary~\ref{cor:FirstTraceIds} (see details in \cite[prop. 2.10]{BanBecLoe_24})
yields
\begin{itemize}
\item $I(V)$ is of weak backward type $A$ in $\sigma_{[0,0,c_{1},\ldots,c_{k}+1]}(V)$
if and only if \\
$I(V)$ is of weak backward type $B$ in $\sigma_{[0,0,c_{1},\ldots,c_{k},1]}(V)$
and
\item $I(V)$ is of weak backward type $B$ in $\sigma_{[0,0,c_{1},\ldots,c_{k}+1]}(V)$
if and only if \\
$I(V)$ is of weak backward type $A$ in $\sigma_{[0,0,c_{1},\ldots,c_{k},1]}(V)$.
\end{itemize}
We note that this duality does not show up in \cite{Raym95}. There
one considers a fixed $\alpha\in[0,1]\setminus\Q$ with a fixed infinite
continued fraction expansion $(\tilde{c}_{k})_{k\in\N_{0}}$ and rational
number $\alpha_{k}=\varphi([0,0,\tilde{c}_{1},\ldots,\tilde{c}_{k}])$.
Hence, $\alpha_{k}$ has a unique finite continued fraction expansion.
Here and in \cite{BanBecLoe_24}, we consider all elements of $\Co$
and this is why this duality is evident.

We demonstrate the classification of spectral bands to weak backward
types, by explicitly computing a few spectral bands in the following.
\begin{example}
\label{exa:=000020backward=000020type=000020=00005B0,0=00005D=000020and=000020=00005B0,0,1=00005D}
According to Example~\ref{exa:Spectrum_=00005B0=00005D=000020and=000020=00005B0,0,-1=00005D}
we have for $V\in\R$,
\[
\sigma_{[0]}(V)=\sigma_{[0,0,0]}(V)=\sigma_{[0,0,1,-1]}(V)=\R\quad\textrm{and}\quad\sigma_{[0,0,-1]}(V)=[-2-V,2-V].
\]

We examine $\sigma_{[0,0]}(V)=[-2,2]$ and wish to determine its backward
type. To do so we need to examine $\sigma_{[0,0,0]}(V)=\R$ and $\sigma_{[0,0,-1]}(V)=[-2-V,2-V]$.
By Definition~\ref{def:=000020backward=000020type}, we see that
for all $V\neq0$ the spectral band $I_{[0,0]}(V):=[-2,2]$ is of
backward type $A$ but not of weak backward type $B$.

A few additional spectra are: 
\[
\sigma_{[0,0,1]}(V)=[-2+V,2+V]\quad\textrm{and}\quad\sigma_{[0,0,1,0]}(V)=\sigma_{[0,0]}(V)=[-2,2].
\]
Given these spectra, one sees that for all $V\neq0$, the spectral
band $I_{[0,0,1]}(V):=[-2+V,2+V]$ of $\sigma_{[0,0,1]}(V)$ is of
backward type $B$ but not of weak backward type $A$.
\end{example}

The spectral bands considered in this example are actually of a well-defined
backward type and not just \emph{weak} backward type. This is stronger
than what is currently proved in Corollary~\ref{cor:=000020Weak=000020Backward=000020Type}.
This stronger version indeed holds in general for all spectral bands
as we prove in in Theorem~\ref{Thm-V>4_AllTypeA_B}.

Next, we extend the classification of spectral bands into types by
adding forward types to the backward type (later we show that they
are actually the same).

Let $I=[a,b]$ and $J=[c,d]$ be two closed intervals. We say that
$I$ is \emph{to the left} of $J$ (or $J$ is \emph{to the right}
of $I$) and denote $I\prec J$ if $a<c$ and $b<d$. Moreover, we
say$I$ is \emph{strictly to the left} of $J$ (or $J$ is \emph{strictly
to the right }of $I$) and denote $I\precd J$ if $b<c$. Observe
that $I\precd J$ holds if and only if $I\prec J$ and $I\cap J=\emptyset$.
\begin{defn}
\label{def:ForwardType} Let $V\in\R\backslash\{0\}$. Let $\co\in\Co$
and $m\in\N$ be such that$[\co,m]\in\Co$. A spectral band $I_{\co}(V)$
of $\sigma_{\co}(V)$ is called of \emph{$m$-forward type $\tA$
}with $M=m-1$ (respectively\emph{ $m$-forward type $\tB$} with
$M=m$) if the following holds.
\begin{enumerate}[align=left, leftmargin={*}, label=(\Alph*)]
\item  \label{enu:=000020A=000020property} There exist $M$ spectral bands
of $\sigma_{[\co,m]}(V)$ (denoted $I_{[\co,m]}^{1}(V),\ldots,I_{[\co,m]}^{M}(V)$)
which satisfy
\begin{enumerate}[label=(A\arabic*)]
\item  \label{enu:A1Property} $I_{[\co,m]}^{i}(V)\strict I_{\co}(V)$
for all $1\leq i\leq M$.\\
In particular, these bands are of backward type $A$.\\
~
\item \label{enu:A2Property} $I_{[\co,m]}^{i}(V)$ is not of weak backward
type $B$ for all $1\leq i\leq M$.\\
~
\end{enumerate}
\item \label{enu:=000020B=000020property} For each $n\in\N$, there exist
$M+1$ spectral bands of $\sigma_{[\co,m,n]}(V)$\\
(denoted $I_{[\co,m,n]}^{1}(V),\ldots,I_{[\co,m,n]}^{M+1}(V)$) which
satisfy
\begin{enumerate}[label=(B\arabic*)]
\item  \label{enu:B1Property}$I_{[\co,m,n]}^{j}(V)\strict I_{[\co,m,n-1]}^{j}(V)$
for all $1\leq j\leq M+1$, where $I_{[\co,m,0]}^{j}(V):=I_{\co}(V)$.\\
In particular, these bands are of backward type $B$.\\
~
\item \label{enu:B2Property} $I_{[\co,m,n]}^{j}(V)$ is not of weak backward
type $A$ for all $1\leq j\leq M+1$.
\end{enumerate}
\end{enumerate}
~
\begin{enumerate}[align=left, leftmargin={*}, label=(I)]
\item \label{enu:IProperty} For each $n\in\N$, we have 
\[
I_{[\co,m,n]}^{1}(V)\prec I_{[\co,m]}^{1}(V)\prec I_{[\co,m,n]}^{2}(V)\prec I_{[\co,m]}^{2}(V)\ldots\prec I_{[\co,m]}^{M}(V)\prec I_{[\co,m,n]}^{M+1}(V).
\]
\end{enumerate}
We say $\Ic(V)$ satisfies the stronger interlacing property if \ref{enu:IProperty}
is replaced by
\begin{equation}
\tag{\ensuremath{\Istr}}I_{[\co,m,n]}^{1}(V)\precd I_{[\co,m]}^{1}(V)\precd I_{[\co,m,n]}^{2}(V)\precd\ldots\precd I_{[\co,m]}^{M}(V)\precd I_{[\co,m,n]}^{M+1}(V).\label{eq:Interlacing_strict}
\end{equation}
\end{defn}

\begin{rem*}
Definition~\ref{def:ForwardType} rephrases the content of \cite[lem. 3.3]{Raym95}.
A few notes should be made about the similarities and differences
of both. First, as is commonly done in this review, we use the notation
$\co\in\Co$ rather than $(k,p)$ as in \cite{Raym95}. Second, we
state the lemma from \cite{Raym95} as a definition here, since in
\cite{BanBecLoe_24} we need to keep the separation between backward
types and forwards type for the sake of some of the proofs (even if
at the end we realize that both concepts are equivalent). Third, Definition~\ref{def:ForwardType}
introduces a slightly stronger notion of forward type than the one
which appears explicitly\footnote{In the original paper \cite{Raym95} the strict inclusion and strict
order were implicitly assumed without an explicit proof.} in \cite[lem. 3.3]{Raym95}; the strengthening is by using everywhere
the strict inclusion $\strict$ rather than $\subset$ and also by
having in \ref{enu:B1Property} $I_{[\co,m,n]}^{j}(V)\strict I_{[\co,m,n-1]}^{j}(V)$
for all $n\in\N$ rather than just $I_{[\co,m,n]}^{j}(V)\strict I_{\co}^{j}(V)$.
This strengthening is crucial in \cite{BanBecLoe_24}, and that is
why we choose to deviate from the original exposition in \cite[lem. 3.3]{Raym95}.
\end{rem*}
Our next task is to show that indeed each spectral band has a well-defined
forward type as in Definition~\ref{def:ForwardType}. Actually, we
will see that if a spectral band is of weak backward type $A$ (respectively
$B$) then it is also of $m$-forward type $A$ (respectively $B$)
for all $m\in\N$. This will be stated in Proposition~\ref{prop:=000020backward=000020implies=000020forward}.
But before doing so, we need to prove two preparatory lemmas (Lemma~\ref{lem:=000020backward=000020implies=000020forward=000020-=000020partial}
and Lemma~\ref{lem:=000020Tower=000020property}).

\begin{lem}
\label{lem:=000020backward=000020implies=000020forward=000020-=000020partial}
Let $V>4$. Let $\co\in\Co$ and $m\in\N$ be such that $[\co,m]\in\Co$.
Let $\Ic(V)$ be a spectral band in $\sigma_{\co}(V)$ of weak backward
type $A$ with $M=m-1$ (respectively\emph{ of weak backward type
$\tB$} with $M=m$). Then the following holds (compare with Definition~\ref{def:ForwardType}):
\begin{enumerate}
\item \label{enu:=000020backward=000020implies=000020forward=000020-=000020partial=000020_=000020A=000020bands}There
exist \emph{exactly} $M$ spectral bands of $\sigma_{[\co,m]}(V)$
(denoted $I_{[\co,m]}^{1}(V),\ldots,I_{[\co,m]}^{M}(V)$) which are
contained in $\Ic(V)$.\\
These spectral bands satisfy properties \ref{enu:A1Property} and
\ref{enu:A2Property} from Definition~\ref{def:ForwardType}.
\item \label{enu:=000020backward=000020implies=000020forward=000020-=000020partial=000020_=000020B=000020bands}There
exist exactly $M+1$ spectral bands of $\sigma_{[\co,m,1]}(V)$ (denoted
$I_{[\co,m,1]}^{1}(V),\ldots,$ $I_{[\co,m,1]}^{M+1}(V)$) which are
contained in $\Ic(V)$. These spectral bands satisfy:
\begin{enumerate}
\item $I_{[\co,m,1]}^{j}(V)\subset\Ic(V)$, for all $1\leq j\leq M+1$.
In particular, these bands are of weak backward type $B$.
\item $I_{[\co,m,1]}^{j}(V)$ is not of weak backward type $A$ for all
$1\leq j\leq M+1$.
\end{enumerate}
\item \label{enu:=000020backward=000020implies=000020forward=000020-=000020partial=000020_=000020interlacing}The
following interlacing property holds: 
\[
I_{[\co,m,1]}^{1}(V)\precd I_{[\co,m]}^{1}(V)\precd I_{[\co,m,1]}^{2}(V)\precd I_{[\co,m]}^{2}(V)\ldots\precd I_{[\co,m]}^{M}(V)\precd I_{[\co,m,1]}^{M+1}(V).
\]
\end{enumerate}
\end{lem}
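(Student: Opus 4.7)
The strategy is to parametrize the spectral band $\Ic$ by a phase variable $\theta \in [0,\pi]$ via $\tc = 2\cos\theta$ (legitimate by the strict monotonicity of $\tc$ on $\Ic$ from Proposition~\ref{prop:=000020Floquet-Bloch=000020via=000020transfer=000020matrix}), and then derive a clean closed form for $t_{[\co,m]}$ as a function of $\theta$ from which all the claims follow. In the weak backward type $A$ case, Corollary~\ref{cor:ExtChebyForm1} with $\ell=0$, together with the identity $\CP_n(2\cos\theta) = \sin((n+1)\theta)/\sin\theta$, yields
\begin{equation*}
t_{[\co,m]}\,\sin\theta \;=\; \sin(m\theta)\,t_{[\co,1]} \;-\; \sin((m-1)\theta)\,t_{[\co,0]}.
\end{equation*}
Since $\Ic$ lies in a spectral band of $\sigma_{[\co,0]}$, we have $|t_{[\co,0]}|\leq 2$ on $\Ic$ and may write $t_{[\co,0]} = 2\cos\phi(\theta)$; solving the Fricke--Vogt invariant (Proposition~\ref{prop:=000020Fricke-Vogt}) as a quadratic in $t_{[\co,1]}$ gives $t_{[\co,1]} = 2\cos\theta\cos\phi \pm \sqrt{V^2 + 4\sin^2\theta\sin^2\phi}$. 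Substituting and simplifying via $\sin(m\theta)\cos\theta - \sin((m-1)\theta) = \cos(m\theta)\sin\theta$ produces the closed form
\begin{equation*}
t_{[\co,m]}(\theta) \;=\; 2\cos\phi(\theta)\cos(m\theta) \;\pm\; \frac{\sin(m\theta)}{\sin\theta}\sqrt{V^2 + 4\sin^2\theta\sin^2\phi(\theta)}.
\end{equation*}
The weak backward type $B$ case is handled in parallel by the analogous identity with $\ell=-1$.

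From this closed form the counting is direct. At the $m-1$ interior roots $\theta_k = k\pi/m$ of $\sin(m\theta)$ the formula gives $t_{[\co,m]}(\theta_k) = (-1)^k t_{[\co,0]}(\theta_k) \in [-2,2]$ with alternating signs. At the midpoints $(2k+1)\pi/(2m)$ the first term vanishes and $|t_{[\co,m]}| \geq V/\sin\theta \geq V > 2$. At the endpoints of $\Ic$ (where $\sin\theta \to 0$) the limit $\sin(m\theta)/\sin\theta \to \pm m$ and the boundedness of the first term force $|t_{[\co,m]}| \geq mV - 2 > 2$, which is where the hypothesis $V > 4$ enters decisively. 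By continuity and the strict monotonicity of $t_{[\co,m]}$ on each spectral band (Proposition~\ref{prop:=000020there=000020are=000020q=000020spectral=000020bands}), this isolates exactly $M = m-1$ intervals inside $\Ic$ on which $|t_{[\co,m]}|\leq 2$, each a single spectral band $I^i_{[\co,m]}$ of $\sigma_{[\co,m]}$ strictly contained in $\Ic$ (establishing property A1). Property A2 follows at once: since $I^i_{[\co,m]} \subseteq \Ic \subseteq \sigma_{[\co,0]}$, Corollary~\ref{cor:=000020monotonicity=000020of=000020single=000020E} applied at the level $[\co,m]$ rules out any intersection with $\sigma_{[[\co,m],-1]}$, so $I^i_{[\co,m]}$ cannot be of weak backward type $B$.

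For the bands of $\sigma_{[\co,m,1]}$ the identification $t_{[\co,m,1]} = t_{[\co,m+1]}$ from Corollary~\ref{cor:FirstTraceIds} reduces everything to the same oscillation analysis with $m$ replaced by $m+1$, producing exactly $M+1 = m$ bands $I^j_{[\co,m,1]}$ sitting near $\tilde{\theta}_j = j\pi/(m+1)$. The ordering in part (3) of the lemma follows from the classical interlacing of $\{k\pi/m\}_{k=1}^{m-1}$ and $\{j\pi/(m+1)\}_{j=1}^{m}$ in $(0,\pi)$; strict separation $\precd$ (rather than $\prec$) between adjacent bands is precisely the content of Proposition~\ref{prop:ThreeConsecSpecCantIntersect}, since a touch would place a point into $\sigma_{\co} \cap \sigma_{[\co,m]} \cap \sigma_{[\co,m+1]}$. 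The weak backward type $B$ case is strictly parallel, with $t_{[\co,0]}$ and $t_{[\co,-1]}$ swapped, yielding $M = m$ and $M+1 = m+1$ bands. The main obstacle is the clean derivation of the closed form for $t_{[\co,m]}(\theta)$; once it is in hand, the counts, strict inclusions, and strict interlacing all follow from elementary trigonometric bookkeeping on the intervals $(\theta_k, \theta_{k+1})$.
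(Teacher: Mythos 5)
Your route differs genuinely from the paper's, but both rest on the same key evaluation. The paper (Step~1) shows $t_{[\co,m]}(A_i)=(-1)^i t_{[\co,-\db]}(A_i)$ at the $M$ points $A_i$ where $S_{m-1+\db}(\tc(A_i))=0$, using only Corollary~\ref{cor:ExtChebyForm1} and the Chebyshev identity; you reach the same equalities at the $\theta_k=k\pi/m$ by first eliminating $t_{[\co,1]}$ through the Fricke--Vogt invariant, which gives a closed form for $t_{[\co,m]}(\theta)$. The payoff of your closed form is real: it delivers $|t_{[\co,m]}|\geq V-2>2$ at the midpoints $(2k{+}1)\pi/(2m)$ and at the endpoints of $\Ic$ immediately, giving strict inclusion $\strict$ and a localization of the bands into the $m-1$ humps in one stroke, whereas the paper obtains strict inclusion only after establishing the interlacing (its Step~3). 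Your derivation of the interlacing and of $\precd$ from Proposition~\ref{prop:ThreeConsecSpecCantIntersect}, and of property~\ref{enu:A2Property} via Corollary~\ref{cor:=000020monotonicity=000020of=000020single=000020E}, matches the paper's mechanism. One small loose end is the branch $\pm$: you should note that the discriminant $V^2+4\sin^2\theta\sin^2\phi\geq V^2>0$ never vanishes, so the sign is constant along $\Ic$ by continuity.

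There is, however, a genuine gap: you do not prove that there are \emph{exactly} $M$ (resp.\ $M+1$) bands in $\Ic$, which is part of the statement and is emphasized by the authors as an addition relative to Definition~\ref{def:ForwardType}. Your analysis shows that $\{\theta\in(0,\pi):|t_{[\co,m]}(\theta)|\leq 2\}$ avoids the midpoints and endpoints and meets each of the $m-1$ humps at $\theta_k$; this gives \emph{at least} $m-1$ bands, one per hump. It does not rule out two or more disjoint bands of $\sigma_{[\co,m]}$ inside a single hump: your closed form $t_{[\co,m]}(\theta)=2\cos\phi(\theta)\cos(m\theta)\pm S_{m-1}(2\cos\theta)\sqrt{V^2+4\sin^2\theta\sin^2\phi(\theta)}$ involves the uncontrolled $\phi(\theta)$ (defined by $t_{[\co,0]}=2\cos\phi$), and you never show that $|t_{[\co,m]}|$ has a unique excursion below $2$ on each hump. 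The constraint $|S_{m-1}(\tc)|\leq 4/V<1$ that follows from the closed form confines $\tc$ to $m-1$ disjoint neighborhoods of the zeros $2\cos(k\pi/m)$, but this is still only a localization, not a count. The paper closes exactly this gap in its Step~4: given any band $J\subset\sigma_{[\co,m]}$ with $J\subset\Ic$, an intermediate-value argument (controlling signs via Proposition~\ref{prop:ThreeConsecSpecCantIntersect}) produces $E_0\in J$ with $t_{[\co,m]}(E_0)+(-1)^\xi t_{[\co,-\db]}(E_0)=0$, and the algebraic identity of Lemma~\ref{lem:ExtChebyForm2} then forces $S_{m-1+\db}(\tc(E_0))=0$, hence $E_0=A_i$ and $J=I^i_{[\co,m]}$. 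Your proof needs an analogue of that step before the counting claim can be asserted.
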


\begin{rem*}
We can colloquially phrase Lemma~\ref{lem:=000020backward=000020implies=000020forward=000020-=000020partial}
as follows: if the spectral band $\Ic(V)$ is of a weak backward type
$A$ (or $B$), then for all $m\in\N$ it is ``partially'' $m$-forward
type $A$ or $B$, correspondingly. By ``partially'' we mean that
$\Ic(V)$ fully satisfies properties \ref{enu:A1Property} and \ref{enu:A2Property}
(in Definition~\ref{def:ForwardType}), but it satisfies properties
\ref{enu:B1Property}, \ref{enu:B2Property} and the strong interlacing
(\ref{eq:Interlacing_strict}) only for $n=1$ and property \ref{enu:B1Property}
is satisfied only in its weak version, i.e., that all $I_{[\co,m,1]}^{j}$
are of \emph{weak} backward type $B$. Another difference between
Lemma~\ref{lem:=000020backward=000020implies=000020forward=000020-=000020partial}
and Definition~\ref{def:ForwardType} ($m$-forward type) goes in
the other direction: in this lemma we state that the spectral bands
$\left\{ I_{[\co,m]}^{i}\right\} _{i=1}^{M}$ and $\left\{ \Icmnj\right\} _{j=1}^{M+1}$
are unique, which is not part of Definition~\ref{def:ForwardType}.
\end{rem*}
\begin{proof}
First, we fix the value of $V>4$ throughout the proof, but for brevity
we omit $V$ from the various notations (for example, writing just
$I$ and $\sigc$). We fix the following auxiliary variable,
\[
\db:=\begin{cases}
0,\quad & I~\textrm{is of backward type }A,\\
1,\quad & I~\textrm{is of backward type }B,
\end{cases}
\]
which allows us to prove the lemma simultaneously for both these cases.
Note that with this notation $M=m-1+\db$, for the value $M$ which
is introduced in the statement.

We introduce two other notations which will help throughout the proof.
Given $\co\in\Co$ and a spectral band $\Ic\subseteq\sigma_{\co}$,
we know by Proposition~\ref{prop:=000020Floquet-Bloch=000020via=000020transfer=000020matrix}
that $\tc(\Ic)=\left[-2,2\right]$ and $\left.\tc\right|_{\Ic}$ is
strictly monotone. Hence, for each $x\in\left[-2,2\right]$ we may
denote by $E_{\co}^{\Ic}(x)$ the unique value in $\Ic$ such that
$\tc(E_{\co}^{\Ic}(x))=x$.

The proof consists of four steps, which we briefly summarize before
going into the details. To obtain the candidates for the spectral
bands $\Icmi$ in property \ref{enu:=000020A=000020property} and
the spectral bands $I_{[\co,m,1]}^{j}$ in property \ref{enu:=000020B=000020property},
we indicate specific energy values$\left\{ A_{i}\right\} _{i=1}^{M}$
and $\left\{ B_{i}\right\} _{i=1}^{M+1}$ in $I$ and find the spectral
bands of $\sigcm$ and $\sigma_{[\co,m,1]}$ which contain these values.
This forms the first two steps of the proof. The third step would
be to prove property \ref{enu:IProperty} by observing the order between
the aforementioned energy values $\left\{ A_{i}\right\} _{i=1}^{M}$
and $\left\{ B_{i}\right\} _{i=1}^{M+1}$. The last step is to show
that there are no other spectral bands in $\sigma_{[c,m]}$ and $\sigma_{[c,m,1]}$
which satisfy those properties, implying the uniqueness which is mentioned
in \ref{enu:=000020A=000020property} and \ref{enu:=000020B=000020property}.

\textbf{Step 1: Defining the spectral bands $\{I_{[\co,m]}^{i}\}_{i=1}^{M}$}
and proving \ref{enu:A2Property} and partially \ref{enu:A1Property}:

Define $A_{i}:=E_{\co}^{\Ic}(2\cos(\frac{i\pi}{m+\db}))$ for $i=1,\dots,m-1+\db=M$
satisfying $\tc(A_{i})=2\cos(\frac{i\pi}{m+\db})$. We use these values
to define spectral bands in $\sigma_{[\co,m]}$, and show later that
those are exactly\textbf{ $\{I_{[\co,m]}^{i}\}_{i=1}^{M}$} from Definition~\ref{def:ForwardType},\ref{enu:=000020A=000020property}.

Corollary~\ref{cor:ExtChebyForm1} (applied for $\ell=-\db$) implies

\begin{align*}
t_{[\co,m]}(A_{i}) & =\CP_{m-1+\db}(\tc(A_{i}))t_{[\co,1-\db]}(A_{i})-\CP_{m-2+\db}(\tc(A_{i}))t_{[\co,-\db]}(A_{i}).
\end{align*}

The dilated Chebyshev polynomials satisfy $S_{l}(2\cos\theta)=\frac{\sin(l+1)\theta}{\sin\theta}$,
see Lemma~\ref{lem:=000020Cheby=000020Poly=000020-=000020explicit=000020expression}.
Using this and $\tc(A_{i})=2\cos(\frac{i\pi}{m+\db})$, we evaluate
the dilated Chebyshev polynomials which appear in the last equation:
\begin{align*}
\CP_{m-1+\db}(\tc(A_{i})) & =\CP_{m-1+\db}\left(2\cos\left(\frac{i\pi}{m+\db}\right)\right)=\frac{\sin\left((m+\db)\frac{i\pi}{m+\db}\right)}{\sin\left(\frac{i\pi}{m+\db}\right)}=0,
\end{align*}
and
\begin{align*}
\CP_{m-2+\db}(\tc(A_{i})) & =\CP_{m-2+\db}\left(2\cos\left(\frac{i\pi}{m+\db}\right)\right)\\
 & =\frac{\sin\left((m-1+\db)\frac{i\pi}{m+\db}\right)}{\sin\left(\frac{i\pi}{m+\db}\right)}\\
 & =\frac{\sin\left(i\pi-\frac{i\pi}{m+\db}\right)}{\sin\left(\frac{i\pi}{m+\db}\right)}\\
 & =\frac{\sin(i\pi)\cos\left(\frac{i\pi}{m+\db}\right)-\cos(i\pi)\sin\left(\frac{i\pi}{m+\db}\right)}{\sin\left(\frac{i\pi}{m+\db}\right)}\\
 & =-\cos(i\pi)=(-1)^{i+1}.
\end{align*}

Combining the computations above gives 
\[
t_{[\co,m]}(A_{i})=(-1)^{i}t_{[\co,-\db]}(A_{i}).
\]

Since $\Ic$ has a well-defined weak backward type (either $A$ or
$B$), by the definition of $\db$ we get that $\big|\left.t_{[\co,-\db]}\right|_{\Ic}\big|\leq2$.
Since $A_{i}\in\Ic$, the equation above implies $|t_{[\co,m]}(A_{i})|\leq2.$
This means that $\left\{ A_{i}\right\} _{i=1}^{M}\subset\sigma_{[\co,m]}$.
Hence, for each $1\leq i\leq M$ we may denote by $I_{[\co,m]}^{i}$
the spectral band in $\sigma_{[\co,m]}$ which contains $A_{i}$.
At this point, we note that it could be that different $A_{i_{0}}\neq A_{i_{1}}$
give rise to the same spectral bands $I_{[\co,m]}^{i_{0}}$ and $I_{[\co,m]}^{i_{1}}$.
However, we prove below in step 3 that $A_{i_{0}}\neq A_{i_{1}}$
implies $I_{[\co,m]}^{i_{0}}\neq I_{[\co,m]}^{i_{1}}$.

We show now that for each i the spectral band $I_{[\co,m]}^{i}$ is
of weak backward type $A$ and not of weak backward type $B$. By
Corollary~\ref{cor:=000020monotonicity=000020of=000020single=000020E},
we have that either $A_{i}\in\sigma_{[\co,m,0]}=\sigc$ or $A_{i}\in\sigma_{[\co,m,-1]}$.
Since $A_{i}\in\Ic\subset\sigc$ we get that $A_{i}\notin\sigma_{[\co,m,-1]}$.
By Proposition~\ref{prop:BandsInclusion}, we have that $\Icmi$
is either contained in a spectral band of $\sigma_{[\co,m,0]}$ or
of $\sigma_{[\co,m,-1]}$. But, since $A_{i}\notin\sigma_{[\co,m,-1]}$
the former option holds and we get that $\Icmi$ is contained in $\Ic$.
In particular $\Icmi$ is of weak backward type $A$ and not of weak
backward type $B$. This shows property \ref{enu:A2Property}, but
it does not yet show property \ref{enu:A1Property} since we only
proved that $\Icmi$ is of weak backward type $A$. We will complete
the proof of property \ref{enu:A1Property} in step 3, where we also
prove that $A_{i_{0}}\neq A_{i_{1}}$ implies $I_{[\co,m]}^{i_{0}}\neq I_{[\co,m]}^{i_{1}}$
and that they satisfy \ref{enu:A1Property}.

\textbf{Step 2: Defining the spectral bands $\{I_{[\co,m,1]}^{j}\}_{j=1}^{M+1}$:}

We proceed similar as in step 1. Define $B_{j}:=E_{\co}^{\Ic}(2\cos(\frac{j\pi}{m+\db+1}))$
for $j=1,\dots,m+\db=M+1$ satisfying $\tc(B_{j})=2\cos(\frac{j\pi}{m+\db+1})$.
Similarly to the previous step in the proof, we will now use these
values to define spectral bands in $\sigma_{[\co,m,1]}$. Corollary~\ref{cor:FirstTraceIds}
and Corollary~\ref{cor:ExtChebyForm1} (applied for $\ell=-\db$)
lead to 
\begin{align*}
t_{[\co,m,1]}(B_{j}) & =t_{[\co,m+1]}(B_{j})\\
 & =\CP_{m+\db}(\tc(B_{j}))t_{[\co,1-\db]}(B_{j})-\CP_{m+\db-1}(\tc(B_{j}))t_{[\co,-\db]}(B_{j}).
\end{align*}

The dilated Chebyshev polynomials satisfy $S_{l}(2\cos\theta)=\frac{\sin(l+1)\theta}{\sin\theta}$,
see Lemma~\ref{lem:=000020Cheby=000020Poly=000020-=000020explicit=000020expression}.
Using this and $\tc(B_{j})=2\cos(\frac{j\pi}{m+\db+1})$, we evaluate
the dilated Chebyshev polynomials which appear in the last equation:

\begin{align*}
\CP_{m+\db}(\tc(B_{j})) & =\CP_{m+\db}\left(2\cos\left(\frac{j\pi}{m+\db+1}\right)\right)=\frac{\sin\left(\left(m+\db+1\right)\frac{j\pi}{m+\db+1}\right)}{\sin\left(\frac{j\pi}{m+\db+1}\right)}=0,
\end{align*}

and

\begin{align*}
\CP_{m+\db-1}(\tc(B_{j})) & =\CP_{m+\db-1}\left(2\cos\left(\frac{j\pi}{m+\db+1}\right)\right)\\
 & =\frac{\sin\left(\left(m+\db\right)\frac{j\pi}{m+\db+1}\right)}{\sin\left(\frac{j\pi}{m+\db+1}\right)}\\
 & =\frac{\sin\left(j\pi-\frac{j\pi}{m+\db+1}\right)}{\sin\left(\frac{j\pi}{m+\db+1}\right)}\\
 & =\frac{\sin(j\pi)\cos\left(\frac{j\pi}{m+\db+1}\right)-\cos(j\pi)\sin\left(\frac{j\pi}{m+\db+1}\right)}{\sin\left(\frac{j\pi}{m+\db+1}\right)}\\
 & =-\cos(j\pi)=(-1)^{j+1}.
\end{align*}

Combining the computations above gives 
\[
t_{[\co,m,1]}(B_{j})=(-1)^{j}t_{[\co,-\db]}(B_{j}).
\]

Since $\Ic$ has a well-defined weak backward type (either $A$ or
$B$), by the definition of $\db$ we get that $\big|\left.t_{[\co,-\db]}\right|_{\Ic}\big|\leq2$.
Since $B_{j}\in\Ic$, the equation above implies $|t_{[\co,m,1]}(B_{j})|\leq2.$
This means that $\left\{ B_{j}\right\} _{j=1}^{M+1}\subset\sigma_{[\co,m,1]}$.
Hence, for each $1\leq j\leq M+1$ we may denote by $I_{[\co,m,1]}^{j}$
the spectral band in $\sigma_{[\co,m,1]}$ which contains $B_{j}$.
Now, similarly to the argument in step 1, we deduce that each $I_{[\co,m,1]}^{j}$
is of weak backward type $B$ and not of weak backward type $A$.

We note that just as in the previous step, we should still prove that
$I_{[\co,m,1]}^{j_{0}}\neq I_{[\co,m,1]}^{j_{1}}$ if $j_{0}\neq j_{1}$).

\textbf{Step 3: Band interlacing}: As mentioned before, the interlacing
follows by the corresponding interlacing of $\left\{ A_{i}\right\} _{i=1}^{M}$
and $\left\{ B_{j}\right\} _{j=1}^{M+1}$. The interlacing of $\left\{ A_{i}\right\} _{i=1}^{M}$
and $\left\{ B_{j}\right\} _{j=1}^{M+1}$ results from the interlacing
of the zeros of two successive dilated Chebyshev polynomials, as these
belong to a family of orthogonal polynomials. Writing this explicitly,
we note that $0<\frac{j}{M+2}<\frac{j}{M+1}<\frac{j+1}{M+2}<\pi$,
for all $1\leq j\leq M$, so that the sets $\left\{ \frac{i}{M+1}\right\} _{i=1}^{M}$,
$\left\{ \frac{j}{M+2}\right\} _{j=1}^{M+1}$ interlace. The sets
$\left\{ A_{i}\right\} _{i=1}^{M}$ , $\left\{ B_{j}\right\} _{j=1}^{M+1}$
are obtained as a monotone function acting on these sets and hence
also interlace. Indeed, to see this recall that $A_{i}:=E_{\co}^{\Ic}\left(2\cos(\pi\frac{i}{M+1})\right)$,
$B_{j}:=E_{\co}^{\Ic}\left(2\cos(\pi\frac{j}{M+2})\right)$ and note
the strict monotonicity of the cosine on $\left[0,\pi\right]$ and
the strict monotonicity of $\tc$ on $\Ic$ together with $E_{\co}^{\Ic}=\left(\left.\tc\right|_{\Ic}\right)^{-1}$.

Hence, we get that either

\begin{equation}
B_{1}<A_{1}<B_{2}<\dots<A_{M}<B_{M+1},\label{eq:=000020proof=000020-=000020interlacing=000020of=000020points=000020-=0000201}
\end{equation}
or 
\begin{equation}
B_{1}>A_{1}>B_{2}>\dots>A_{M}>B_{M+1}.\label{eq:=000020proof=000020-=000020interlacing=000020of=000020points=000020-=0000202}
\end{equation}

Whether (\ref{eq:=000020proof=000020-=000020interlacing=000020of=000020points=000020-=0000201})
is used or (\ref{eq:=000020proof=000020-=000020interlacing=000020of=000020points=000020-=0000202})
is determined by $\sgn(\left.\tc'\right|_{\Ic})$ which is indeed
constant (see Proposition~\ref{prop:=000020Floquet-Bloch=000020via=000020transfer=000020matrix}).
We now draw a few conclusions from this interlacing. In the previous
two steps we have seen that $\left\{ A_{i}\right\} _{i=1}^{M}\subset\sigc\cap\sigma_{[\co,m]}$
and $\left\{ B_{j}\right\} _{j=1}^{M}\subset\sigc\cap\sigma_{[\co,m,1]}$.
Applying Proposition~\ref{prop:ThreeConsecSpecCantIntersect} with
$\co'=[\co,m,1]$, $m'=0$ yields $\sigc(V)\cap\sigma_{[\co,m]}(V)\cap\sigma_{[\co,m,1]}(V)=\emptyset$
(as always, for $V>4$). We get in particular that for all $i$, $A_{i}\notin\sigma_{[\co,m,1]}$
and all $j$, $B_{j}\notin\sigma_{[\co,m]}$. We use this to observe
that for some $i_{0}\neq i_{1}$, the spectral band $I_{[\co,m]}^{i_{0}}\subset\sigcm$
contains $A_{i_{0}}$, the spectral band $I_{[\co,m]}^{i_{1}}\subset\sigcm$
contains $A_{i_{1}},$ and by the interlacing ((\ref{eq:=000020proof=000020-=000020interlacing=000020of=000020points=000020-=0000201})
and (\ref{eq:=000020proof=000020-=000020interlacing=000020of=000020points=000020-=0000202}))
there is some point $B_{j}\notin\sigma_{[\co,m]}$ between $A_{i_{0}}$
and $A_{i_{1}}$. This means in particular that all the spectral bands
$\left\{ I_{[\co,m]}^{i}\right\} _{i=1}^{M}$, defined in step 1,
are distinct. In exactly the same manner we conclude that all the
spectral bands $\left\{ I_{[\co,m,1]}^{j}\right\} _{j=1}^{M+1}$,
defined in step 2, are distinct.  By Proposition~\ref{prop:ThreeConsecSpecCantIntersect},
$I_{[\co,m]}^{i}\cap I_{[\co,m,1]}^{j}=\emptyset$ holds for all $i,j$
since both are contained in $\sigc(V)$. Thus, the desired strong
interlacing property (\ref{eq:Interlacing_strict}) of Definition~\ref{def:ForwardType}
follows for $n=1$. We should just note that if the interlacing of
the sets $\left\{ A_{i}\right\} _{i=1}^{M}$ , $\left\{ B_{j}\right\} _{j=1}^{M+1}$
is as in (\ref{eq:=000020proof=000020-=000020interlacing=000020of=000020points=000020-=0000202}),
we should reshuffle the indices in order to get the interlacing as
in (\ref{eq:=000020proof=000020-=000020interlacing=000020of=000020points=000020-=0000201}).
Namely, we permute the indices of $\left\{ A_{i}\right\} _{i=1}^{M}$
by $1\leftrightarrow M$, $2\leftrightarrow M-1$, ..., and permute
the indices of $\left\{ B_{j}\right\} _{j=1}^{M+1}$ by $1\leftrightarrow M+1$,
$2\leftrightarrow M$, and so on. Obviously, this affects the indices
of the spectral bands $\left\{ I_{[\co,m]}^{i}\right\} _{i=1}^{M}$
and $\left\{ I_{[\co,m,1]}^{j}\right\} _{j=1}^{M+1}$, and we get
the strong interlacing property of Definition~\ref{def:ForwardType},~(\ref{eq:Interlacing_strict})
for $n=1$.

Using the interlacing property we may also deduce that the spectral
bands $\Icmi$ are of backward type $A$. We already obtained in step
1 that they are of weak backward type $A$. But, thanks to the interlacing
property there is another spectral band to the left and to the right
of each $\Icmi$ which is also included in $\Ic$. Hence, each $\Icmi$
is strictly included in $\Ic$ and it is of backward type $A$. Thus,
$\left\{ I_{[\co,m]}^{i}\right\} _{i=1}^{M}$ satisfy also \ref{enu:A1Property}.

\textbf{Step 4: Uniqueness of the bands}: We show now that the spectral
bands $\left\{ I_{[\co,m]}^{i}\right\} _{i=1}^{M}$ are the only spectral
bands of $\sigcm$ which are contained in $\Ic$ and that $\left\{ I_{[\co,m,1]}^{j}\right\} _{j=1}^{M+1}$
are the only spectral bands of $\sigma_{[\co,m,1]}$ which are contained
in $\Ic$.

\noindent Let $J\subset\sigma_{[\co,m]}$ such that $J\subset I$.
We will show that $A_{i}\in J$ for some $1\leq i\leq M$ and conclude
that $J=I_{[\co,m]}^{i}$, which proves the uniqueness in property
\ref{enu:=000020A=000020property}.

\noindent Due to Corollary~\ref{cor:=000020Weak=000020Backward=000020Type},
$\Ic$ has a well-defined weak backward type (either $A$ or $B$).
By definition of $\db$, we therefore get that $\Ic\subset\sigma_{\co}\cap\spectv{\co,-\db}$.
Hence, also $J\subset\sigma_{\co}\cap\spectv{\co,-\db}$. Then Proposition~\ref{prop:ThreeConsecSpecCantIntersect}
implies $\sigma_{\co}(V)\cap\sigma_{[\co,1-\db]}(V)\cap\spectv{\co,-\db}(V)=\emptyset$
for $V>4$ and so we conclude $J\cap\sigma_{[\co,1-\db]}=\emptyset$.
Thus, Proposition~\ref{prop:=000020Floquet-Bloch=000020via=000020transfer=000020matrix}
leads to $\left|t_{[\co,1-\db]}(E)\right|>2$ for all $E\in J$. Similarly,
we have $\sigma_{\co}(V)\cap\sigma_{[\co,m-1]}(V)\cap\spectv{\co,m}(V)=\emptyset$
for $V>4$ by Proposition~\ref{prop:ThreeConsecSpecCantIntersect}
and so $J\subset\sigma_{\co}\cap\spectv{\co,m}$. Thus, $\abs{\tracev{\co,m-1}\rbr E}>2$
follows for all $E\in J$.

\noindent Since $t_{[\co,1-\db]}$ and $\tracev{\co,m-1}$ are continuous
in $E$, the signs of $t_{[\co,1-\db]}(E)$ and $\tracev{\co,m-1}\rbr E$
are constant for all $E\in J$ by the previous considerations. Thus,
we can choose an $\xi\in\left\{ 0,1\right\} $ such that 
\[
\tracev{\co,m-1}\rbr E+\rbr{-1}^{\xi}t_{[\co,1-\db]}\rbr E\neq0\qquad\text{for all }E\in J.
\]
\\
Since $J\subset\Ic\subset\spectv{\co,-\db}$, we have $\abs{\tracev{\co,-\db}(E)}\leq2$
for all $E\in J$. In addition, $\left|\tracev{\co,m}(E)\right|=2$
if $E$ is the left or right edge of $J$. Note that the sign of $\tcm(E)$
changes if $E$ is the left respectively the right edge of $J$. Therefore,
by the intermediate value theorem there exists $E_{0}\in J$ such
that 
\[
\tracev{\co,m}\left(E_{0}\right)+\rbr{-1}^{\xi}\tracev{\co,-\db}\rbr{E_{0}}=0.
\]
Then Lemma~\ref{lem:ExtChebyForm2} (applied for $\ell=-\delta_{B}$)
gives
\begin{align*}
 & S_{m-1+\db}\rbr{\tc(E_{0})}\sbr{\underbrace{\tracev{\co,m-1}(E_{0})+\rbr{-1}^{\xi}\tracev{\co,1-\db}(E_{0})}_{\neq0\text{ as }E_{0}\in J}}\\
= & \sbr{S_{m-2+\db}\rbr{\tc(E_{0})}+\rbr{-1}^{\xi}}\sbr{\underbrace{\tracev{\co,m}(E_{0})+\rbr{-1}^{\xi}\tracev{\co,-\db}\rbr{E_{0}}}_{=0}}.
\end{align*}
We conclude $S_{m-1+\db}\rbr{\tc(E_{0})}=0$ for $E_{0}\in J\subset I$.

\noindent Since by definition of the dilated Chebyshev polynomails
$S_{0}\equiv1$, we get $m-1+\db\neq0$. Hence $m-1+\db\geq1$ and
we conclude $\abs{\trace{\co}\rbr{E_{0}}}<2$, since dilated Chebyshev
polynomials do not vanish outside $\left(-2,2\right)$, see Lemma~\ref{Lem-Chebyshev-Traces}~(\ref{enu:Chebyshev-Traces_x_geq_2-Sn(x)_geq_1}).
Therefore, there exists some $\theta\in\rbr{0,\pi}$ such that $\trace{\co}\rbr{E_{0}}=2\cos\theta$
and
\[
0=S_{m-1+\db}\rbr{2\cos\theta}=\frac{\sin\rbr{\rbr{m+\db}\theta}}{\sin\theta},
\]
where we used Lemma~\ref{lem:=000020Cheby=000020Poly=000020-=000020explicit=000020expression}
in the last equality. We conclude that $\theta=\frac{i\pi}{m+\db}$
for some $1\leq i\leq m-1+\db$. Therefore, $\tc(E_{0})=2\cos\left(\frac{i\pi}{m+\db}\right)$
or equivalently, $E_{0}=E_{\co}^{I}(2\cos(\frac{i\pi}{m+\db}))$.
But this is exactly the definition of $A_{i}$ in the beginning of
the proof, and so $E_{0}=A_{i}$. Thus, $J=\Icmi$ follows proving
the uniqueness in (\ref{enu:=000020backward=000020implies=000020forward=000020-=000020partial=000020_=000020A=000020bands}).

\medskip{}
In order to show the uniqueness for the spectral bands $\left\{ I_{[\co,m,1]}^{j}\right\} _{j=1}^{M+1}$,
we repeat the arguments above, mainly replacing $m$ with $m+1$ and
using $\sigma_{[\co,m,1]}=\sigma_{[\co,m+1]}$ and $\sigma_{[\co,m,1,-1]}=\sigma_{[\co,m,0]}$
by Lemma~\ref{lem:=000020Trace=000020Depends=000020On=000020Value=000020Only}.
Briefly, if we assume that $J$ is a spectral band of $\spectv{\co,m,1}$
such that $J\subset I$, we are able to conclude that there exists
$1\leq j\leq m+\db$ such that $E_{0}:=2\cos\rbr{\frac{j\pi}{m+1+\db}}\in J$.
Thus, $E_{0}=B_{j}$ for some $1\leq j\leq m+\delta_{B}$ follows
where $\left\{ B_{j}\right\} _{j=1}^{m+\db}$ where defined in step
2. Hence, $J=I_{[\co,m,1]}^{j}$ follows proving the uniqueness in
(\ref{enu:=000020backward=000020implies=000020forward=000020-=000020partial=000020_=000020B=000020bands}).
\end{proof}
In the course of proving Lemma~\ref{lem:=000020backward=000020implies=000020forward=000020-=000020partial}
we have gained some information regarding the location of the spectral
bands $I_{[\co,m]}^{i}$. We state this here as a separate corollary,
since it is useful in a proof which appears in \cite[eq. (7.11)]{BanBecLoe_24}.
\begin{cor}
\label{cor:=000020comes=000020after=000020lemma=000020backward=000020implies=000020forward}Let
$V>4$, $\co\in\Co$ and $m\in\N$ be such that $[\co,m]\in\Co$.
If $\Ic(V)$ is a spectral band in $\sigma_{\co}(V)$ of weak backward
type $B$. There exist unique $\left\{ E_{i}\right\} _{i=1}^{m}\subseteq\Ic$
such that
\[
\tc(E_{i})=2\cos\left(\frac{i\pi}{m+1}\right)\cdot\sign(\tc(L(\Ic))),
\]
 where $L(\Ic)$ is the left edge of $\Ic$ . In addition, for all
$1\leq i\leq m$, $E_{i}\in I_{[\co,m]}^{i}$, where $\Icmi$ are
the spectral bands from Lemma~\ref{lem:=000020backward=000020implies=000020forward=000020-=000020partial}
(also Definition~\ref{def:ForwardType}).
\end{cor}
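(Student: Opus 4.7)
The plan is to observe that this corollary is essentially a repackaging of what has already been produced in Step~1 of the proof of Lemma~\ref{lem:=000020backward=000020implies=000020forward=000020-=000020partial} in the case $\db=1$ (weak backward type $B$), where $M=m-1+\db=m$. In that step, $m$ points
\[
A_i:=E_\co^\Ic\!\left(2\cos\!\left(\tfrac{i\pi}{m+1}\right)\right)\in\Ic,\qquad i=1,\dots,m,
\]
were produced, each satisfying $\tc(A_i)=2\cos(i\pi/(m+1))$, and each $A_i$ was shown to belong to a spectral band $\Icmi$ of $\sigcm(V)$ strictly contained in $\Ic$. My candidates for the $E_i$ will be exactly these $A_i$'s, possibly after a reversal of the index.

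Existence and uniqueness of the $E_i$'s follow at once from the strict monotonicity of $\tc|_{\Ic}$ (Proposition~\ref{prop:=000020Floquet-Bloch=000020via=000020transfer=000020matrix}): $\tc$ is a bijection $\Ic\to[-2,2]$, so each prescribed value $2\cos(i\pi/(m+1))\cdot\sign(\tc(L(\Ic)))\in(-2,2)$ has exactly one preimage in $\Ic$. The only remaining task is to match the sign factor with the left-to-right indexing convention for the bands $\Icmi$ used in Definition~\ref{def:ForwardType}. If $\tc(L(\Ic))=2$, then $\tc|_{\Ic}$ is strictly decreasing (so $\sign(\tc(L(\Ic)))=+1$) and, since $2\cos(i\pi/(m+1))$ is decreasing in $i$, the points $A_i$ are arranged left-to-right in $\Ic$ in the order of increasing $i$; this is the case (\ref{eq:=000020proof=000020-=000020interlacing=000020of=000020points=000020-=0000201}) of the lemma, no relabelling is needed, and setting $E_i:=A_i$ yields both $E_i\in\Icmi$ and the required formula. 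If $\tc(L(\Ic))=-2$, then $\tc|_{\Ic}$ is strictly increasing (so $\sign(\tc(L(\Ic)))=-1$) and the $A_i$'s are arranged in the reverse order, which is case (\ref{eq:=000020proof=000020-=000020interlacing=000020of=000020points=000020-=0000202}); following the permutation $i\mapsto m+1-i$ performed at the end of Step~3 of the lemma's proof to align both the points and the bands $\Icmi$ left-to-right, I set $E_i:=A_{m+1-i}$ and compute
\[
\tc(E_i)=2\cos\!\left(\tfrac{(m+1-i)\pi}{m+1}\right)=-2\cos\!\left(\tfrac{i\pi}{m+1}\right)=2\cos\!\left(\tfrac{i\pi}{m+1}\right)\cdot\sign(\tc(L(\Ic))),
\]
and $E_i\in\Icmi$ by construction of the permutation.

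The main (mild) obstacle is purely bookkeeping: verifying that the reshuffling performed inside Step~3 of the lemma's proof to align the $A_i$'s with the left-to-right ordering of the bands $\Icmi$ is captured exactly by the factor $\sign(\tc(L(\Ic)))$ in the statement. Once this correspondence is spelled out in the two sign cases above, the uniqueness of each $E_i$ (from bijectivity of $\tc|_{\Ic}$) and the containment $E_i\in\Icmi$ (inherited from the analogous assertion for the $A_i$'s in the lemma) are immediate.
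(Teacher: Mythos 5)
Your proposal is correct and is essentially the argument the paper intends: the corollary is stated as a byproduct of Step 1 (case $\db=1$, $M=m$) of the proof of Lemma~\ref{lem:=000020backward=000020implies=000020forward=000020-=000020partial}, with existence and uniqueness coming from the bijectivity of $\tc|_{\Ic}$ onto $[-2,2]$ and the factor $\sign(\tc(L(\Ic)))$ absorbing the index reversal performed in Step 3 to order the bands $\Icmi$ left to right. Your sign bookkeeping in the two monotonicity cases is accurate.
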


In order to upgrade Lemma~\ref{lem:=000020backward=000020implies=000020forward=000020-=000020partial}~(\ref{enu:=000020backward=000020implies=000020forward=000020-=000020partial=000020_=000020B=000020bands})
to the spectral bands $\left\{ \Icmnj\right\} _{j=1}^{M+1}$ (as required
in \ref{enu:=000020B=000020property} in Definition~\ref{def:ForwardType}),
the following lemma is used.
\begin{lem}
\label{lem:=000020Tower=000020property} Let $V>4$. Let $n\in\N$,
$[\co',n]\in\Co$ and $I$ be a spectral band in $\sigma_{[\co',n]}$
of weak backward type $B$. Then there is a unique spectral band $J$
in $\sigma_{[\co',n+1]}$ such that $J\strict I$. In particular,
$J$ is of backward type $B$.
\end{lem}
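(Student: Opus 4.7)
The plan is to recognize this as an immediate consequence of Lemma~\ref{lem:=000020backward=000020implies=000020forward=000020-=000020partial} applied with $m=1$, combined with the standard identification of the two continued fraction representations $[\co', n, 1]$ and $[\co', n+1]$ of the same rational. Set $\co := [\co', n]$, so that $I$ is a spectral band of $\sigma_{\co}$ of weak backward type $B$. Since $\varphi([\co', n+1]) = \varphi([\co', n, 1]) = \varphi([\co, 1])$, Corollary~\ref{cor:FirstTraceIds} together with Lemma~\ref{lem:=000020Trace=000020Depends=000020On=000020Value=000020Only} gives $\sigma_{[\co', n+1]} = \sigma_{[\co, 1]}$. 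The problem therefore reduces to producing a unique spectral band $J$ of $\sigma_{[\co, 1]}$ with $J \strict I$.

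Now I would invoke Lemma~\ref{lem:=000020backward=000020implies=000020forward=000020-=000020partial} with $m=1$. Because $I$ is of weak backward type $B$, the integer $M$ in that lemma equals $m = 1$. Part~(\ref{enu:=000020backward=000020implies=000020forward=000020-=000020partial=000020_=000020A=000020bands}) then furnishes \emph{exactly one} spectral band $I^{1}_{[\co,1]}$ of $\sigma_{[\co, 1]}$ contained in $I$, and property~\ref{enu:A1Property} promotes this inclusion to the strict inclusion $I^{1}_{[\co,1]} \strict I$. Setting $J := I^{1}_{[\co,1]}$ yields existence, and uniqueness follows at once from the exactness statement in the lemma: any $J'$ with $J' \strict I$ is in particular contained in $I$, so there is only one candidate.

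To conclude that $J$ is of backward type $B$ in $\sigma_{[\co', n+1]}$, one unpacks Definition~\ref{def:=000020backward=000020type}: it requires a spectral band of $\sigma_{[\co', n+1, -1]}$ that strictly contains $J$. Using $\varphi([\co', n+1, -1]) = \varphi([\co', n])$ and Lemma~\ref{lem:=000020Trace=000020Depends=000020On=000020Value=000020Only}, we have $\sigma_{[\co', n+1, -1]} = \sigma_{[\co', n]}$, so the original band $I$ itself plays this role, and $J \strict I$ closes the argument.

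There is no real obstacle here; all the heavy lifting has already been done in Lemma~\ref{lem:=000020backward=000020implies=000020forward=000020-=000020partial}. The only subtlety is the notational bookkeeping between the two equivalent continued fraction representations $[\co', n, 1]$ and $[\co', n+1]$, which swaps the roles of types $A$ and $B$ as described after Corollary~\ref{cor:=000020Weak=000020Backward=000020Type}: the band $J$ is of weak backward type $A$ when labelled inside $\sigma_{[\co, 1]} = \sigma_{[\co', n, 1]}$, but becomes (strongly) backward type $B$ when read inside $\sigma_{[\co', n+1]}$, which is precisely what the statement demands.
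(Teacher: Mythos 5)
Your proof is correct and follows essentially the same route as the paper's: apply Lemma~\ref{lem:=000020backward=000020implies=000020forward=000020-=000020partial} with $\co=[\co',n]$ and $m=1$ (so that $M=1$ because $I$ is of weak backward type $B$) to extract the unique band $J\strict I$ in $\sigma_{[\co,1]}$, then exploit the identification $\sigma_{[\co',n,1]}=\sigma_{[\co',n+1]}$ and $\sigma_{[\co',n+1,-1]}=\sigma_{[\co',n]}$ via Lemma~\ref{lem:=000020Trace=000020Depends=000020On=000020Value=000020Only} to read off backward type $B$ from $J\strict I$. Your unpacking of Definition~\ref{def:=000020backward=000020type} for the type-$B$ conclusion and the remark about the $A$/$B$ swap under the two continued-fraction representations match the paper's argument exactly, just spelled out a bit more explicitly.
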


\begin{proof}
The proof of this lemma follows from Lemma~\ref{lem:=000020backward=000020implies=000020forward=000020-=000020partial}.
Let $I\subset\sigma_{[\co',n]}$ be a spectral band of backward type
$B$. Applying Lemma~\ref{lem:=000020backward=000020implies=000020forward=000020-=000020partial}
for $\co:=[\co',n]$ gives that there exists a unique spectral band
$J\subseteq\sigma_{[\co',n,1]}$ such that $J\strict I$. Equivalently,
$J$ is a spectral band of backward type $A$ when $J$ is considered
a spectral band of $\sigma_{[\co',n,1]}$. Nevertheless, since $\sigma_{[\co',n,1]}=\sigma_{[\co',n+1]}$,
we may consider $J$ as a spectral band of $\sigma_{[\co',n+1]}$
since $\sigma_{[\co',n,1,0]}=\sigma_{[\co',n+1,-1]}$ by Lemma~\ref{lem:=000020Trace=000020Depends=000020On=000020Value=000020Only}
Thus, $J$ is of backward type $B$ as a spectral band of $\sigma_{[\co',n+1]}$
for the same reason, since $J\strict I$.
\end{proof}
Lemma~\ref{lem:=000020Tower=000020property} implies that every spectral
band of backward type $B$ of $\sigma_{[\co,n]}$ contains another
(unique) spectral band of backward type $B$ of $\sigma_{[\co,n+1]}$.
This construction continues indefinitely by recursion, and hence we
call it the \emph{tower property}.
\begin{cor}
[Tower-property]\label{cor:Tower=000020Property}Let $V>4$, $[\co',1]\in\Co$
and $I_{[\co',1]}$ be a spectral band of $\sigma_{[\co',1]}(V)$
of weak backward type $B$. Then there are unique spectral bands $I_{[\co',n]}$
in $\sigma_{[\co',n]}(V)$ of backward type $B$ for all $n\geq2$
such that $I_{[\co',j+1]}\strict I_{[\co',j]}$ for all $j\in\N$.
\end{cor}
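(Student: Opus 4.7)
The plan is to prove the Tower-property by a straightforward induction on $n$, using Lemma~\ref{lem:=000020Tower=000020property} as the inductive step. The base case is given to us: $I_{[\co',1]}$ is a spectral band of $\sigma_{[\co',1]}(V)$ of weak backward type $B$. Applying Lemma~\ref{lem:=000020Tower=000020property} with $n=1$ yields a unique spectral band $I_{[\co',2]}\subseteq\sigma_{[\co',2]}(V)$ of backward type $B$ satisfying $I_{[\co',2]}\strict I_{[\co',1]}$. This establishes the first step of the recursion.

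For the induction step, suppose that we have already constructed spectral bands $I_{[\co',1]},\ldots,I_{[\co',n]}$ with $I_{[\co',n]}$ of backward type $B$ in $\sigma_{[\co',n]}(V)$. The key observation is that backward type $B$ trivially implies weak backward type $B$ (since $I\strict J$ implies $I\subseteq J$, see the definitions just after Corollary~\ref{cor:=000020Weak=000020Backward=000020Type}). Thus $I_{[\co',n]}$ satisfies the hypothesis of Lemma~\ref{lem:=000020Tower=000020property}, which applied to $\co'':=\co'$ with the index $n$ in that lemma yields a unique spectral band $I_{[\co',n+1]}$ of $\sigma_{[\co',n+1]}(V)$ of backward type $B$ with $I_{[\co',n+1]}\strict I_{[\co',n]}$. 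This extends the tower by one level and the recursion proceeds.

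Uniqueness of the entire sequence $\{I_{[\co',n]}\}_{n\geq 2}$ follows from the uniqueness at each step; given the (uniquely determined) band $I_{[\co',n]}$, the next band $I_{[\co',n+1]}$ is the unique spectral band of $\sigma_{[\co',n+1]}(V)$ strictly contained in $I_{[\co',n]}$ as provided by Lemma~\ref{lem:=000020Tower=000020property}. The inclusions $I_{[\co',j+1]}\strict I_{[\co',j]}$ for all $j\in\N$ are guaranteed by construction.

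Since the whole argument is just a clean induction relying on an already-established lemma, there is no substantive obstacle here; the only minor subtlety worth flagging in the write-up is the bookkeeping point that ``backward type $B$'' is stronger than ``weak backward type $B$'', which is precisely what allows the output of one invocation of Lemma~\ref{lem:=000020Tower=000020property} to serve as the input of the next.
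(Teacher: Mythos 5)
Your proof is correct and is exactly the induction over $n$ using Lemma~\ref{lem:=000020Tower=000020property} that the paper itself invokes (the paper merely says the corollary ``follows directly from an induction over $n\geq2$'' and that lemma). The one bookkeeping observation you highlight — that backward type $B$ implies weak backward type $B$, so each step's output satisfies the hypothesis of the next — is precisely the small point that makes the recursion close, and you handle it correctly.
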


\begin{proof}
This follows directly from an induction over $n\geq2$ and Lemma~\ref{lem:=000020Tower=000020property}
\end{proof}
\begin{prop}
\label{prop:=000020backward=000020implies=000020forward} Let $V>4$
and $\co\in\Co$ with $[\co,m]\in\Co$ for all $m\in\N$. Let $\Ic(V)$
be a spectral band in $\sigma_{\co}(V)$ of weak backward type $A$
(respectively weak backward type $B$). Then $\Ic(V)$ is of $m$-forward
type $A$ (respectively $m$-forward type $B$) for all $m\in\N.$
In addition,
\begin{enumerate}
\item \label{enu:=000020backward=000020implies=000020forward_=000020A=000020bands}the
spectral bands $\left\{ \Icmi(V)\right\} _{i=1}^{M}$ mentioned in
the $m$-forward definition (Definition~\ref{def:ForwardType}) are
the only spectral bands in $\sigcm(V)$ which are included in $\Ic(V)$.
\item \label{enu:=000020backward=000020implies=000020forward_=000020B=000020bands}the
spectral bands $\left\{ \Icmnj(V)\right\} _{j=1}^{M+1}$ mentioned
in the $m$-forward definition (Definition~\ref{def:ForwardType})
are the only spectral bands in $\sigcmn(V)$ which satisfy properties
\ref{enu:B1Property} and \ref{enu:B2Property} (in Definition~\ref{def:ForwardType}).
\item The strong interlacing property (\ref{eq:Interlacing_strict}) holds.
\end{enumerate}
\end{prop}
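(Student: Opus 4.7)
The plan is to split the proposition into three parts: property~\ref{enu:=000020A=000020property} together with the uniqueness statement~(1), property~\ref{enu:=000020B=000020property} together with the uniqueness statement~(2), and the strong interlacing~(\ref{eq:Interlacing_strict}). The first part is immediate: Lemma~\ref{lem:=000020backward=000020implies=000020forward=000020-=000020partial}(\ref{enu:=000020backward=000020implies=000020forward=000020-=000020partial=000020_=000020A=000020bands}) already produces exactly $M$ bands $\{I_{[\co,m]}^{i}(V)\}_{i=1}^{M}$ of $\sigcm(V)$ inside $\Ic(V)$, verifies both \ref{enu:A1Property} and \ref{enu:A2Property}, and asserts their uniqueness among bands of $\sigcm(V)$ contained in $\Ic(V)$. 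No further work is needed for property~\ref{enu:=000020A=000020property} or for item~(1).

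For the existence part of property~\ref{enu:=000020B=000020property}, I would induct on $n\in\N$. The base case $n=1$ is exactly Lemma~\ref{lem:=000020backward=000020implies=000020forward=000020-=000020partial}(\ref{enu:=000020backward=000020implies=000020forward=000020-=000020partial=000020_=000020B=000020bands}), producing $M+1$ bands $\{I_{[\co,m,1]}^{j}(V)\}_{j=1}^{M+1}$ of $\sigma_{[\co,m,1]}(V)$ inside $\Ic(V)$, each of weak backward type $B$ and not of weak backward type $A$. For the inductive step, set $\co':=[\co,m]$; each $I_{[\co,m,1]}^{j}(V)$ is then a weak-backward-type-$B$ band of $\sigma_{[\co',1]}(V)$, so the tower property (Corollary~\ref{cor:Tower=000020Property}) supplies, for every $j$, a unique chain $\{I_{[\co,m,n]}^{j}(V)\}_{n\geq 2}$ of backward-type-$B$ bands with $I_{[\co,m,n+1]}^{j}(V)\strict I_{[\co,m,n]}^{j}(V)$. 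This furnishes \ref{enu:B1Property} for all $n\in\N$, while \ref{enu:B2Property} follows at once from Corollary~\ref{cor:=000020Weak=000020Backward=000020Type}, since a band of (weak) backward type $B$ is not of weak backward type $A$.

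Uniqueness (item~(2)) and the strong interlacing~(\ref{eq:Interlacing_strict}) are both handled by the same induction on $n$. The base case $n=1$ is given by Lemma~\ref{lem:=000020backward=000020implies=000020forward=000020-=000020partial}(\ref{enu:=000020backward=000020implies=000020forward=000020-=000020partial=000020_=000020B=000020bands}) and Lemma~\ref{lem:=000020backward=000020implies=000020forward=000020-=000020partial}(\ref{enu:=000020backward=000020implies=000020forward=000020-=000020partial=000020_=000020interlacing}), respectively. For the inductive step in uniqueness, any band $J\subset\sigcmn(V)$ fulfilling \ref{enu:B1Property} must lie strictly inside some (by inductive hypothesis uniquely determined) $I_{[\co,m,n-1]}^{j}(V)$, and Lemma~\ref{lem:=000020Tower=000020property} then forces $J=I_{[\co,m,n]}^{j}(V)$. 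For the interlacing, the bands $\{I_{[\co,m]}^{i}(V)\}_{i=1}^{M}$ do not depend on $n$, and the strict inclusion $I_{[\co,m,n]}^{j}(V)\strict I_{[\co,m,n-1]}^{j}(V)$ implies that the right (respectively left) endpoint of $I_{[\co,m,n]}^{j}(V)$ is strictly smaller (respectively larger) than that of $I_{[\co,m,n-1]}^{j}(V)$; every relation of the form $I_{[\co,m,n-1]}^{j}(V)\precd I_{[\co,m]}^{j}(V)$ or $I_{[\co,m]}^{j-1}(V)\precd I_{[\co,m,n-1]}^{j}(V)$ is therefore inherited at level $n$ by comparing endpoints.

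The essential technical work is already encapsulated in Lemma~\ref{lem:=000020backward=000020implies=000020forward=000020-=000020partial} (existence, uniqueness and interlacing at $n=1$, via the zeros of the dilated Chebyshev polynomials) and in the tower property Lemma~\ref{lem:=000020Tower=000020property} (which upgrades weak backward type $B$ to genuine backward type $B$ one level higher). The present argument is therefore a careful bookkeeping induction assembling these two ingredients. The only real point that requires attention is that the index $j$ is preserved along each tower, so that the interlacing established at the base case propagates intact to all $n$; this is guaranteed by the uniqueness clause in the tower property, which determines the band in $\sigcmn(V)$ strictly inside $I_{[\co,m,n-1]}^{j}(V)$ by $j$ alone.
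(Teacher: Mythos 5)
Your overall strategy coincides with the paper's: item (1) is read off from Lemma~\ref{lem:=000020backward=000020implies=000020forward=000020-=000020partial}, and the bands for $n\geq 2$ are produced by the tower property (Corollary~\ref{cor:Tower=000020Property}), with uniqueness and the interlacing propagated along the towers. Your endpoint argument for inheriting (\ref{eq:Interlacing_strict}) from $n=1$ to general $n$ is fine.

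However, there is a genuine gap at the base case $n=1$ of property \ref{enu:B1Property}. That property requires $I_{[\co,m,1]}^{j}(V)\strict I_{[\co,m,0]}^{j}(V)=\Ic(V)$, i.e.\ \emph{strict} inclusion in $\Ic(V)$, whereas Lemma~\ref{lem:=000020backward=000020implies=000020forward=000020-=000020partial}~(\ref{enu:=000020backward=000020implies=000020forward=000020-=000020partial=000020_=000020B=000020bands}) only delivers $I_{[\co,m,1]}^{j}(V)\subseteq\Ic(V)$ — the remark following that lemma states explicitly that \ref{enu:B1Property} is obtained there ``only in its weak version.'' For the interior bands one could sandwich them between the $I_{[\co,m]}^{i}$ via the interlacing, but for the outermost bands $j=1$ and $j=M+1$ there is nothing inside $\Ic(V)$ to their left, respectively right, so this trick cannot rule out that they touch an edge of $\Ic(V)$. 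The paper closes exactly this gap with a separate estimate: writing $\Ic(V)=[E_-,E_+]$, it shows $|t_{[\co,m,1]}(E_{\pm})|>2$ via Lemma~\ref{lem:=000020Trace=000020Cheby=000020Formula}, Lemma~\ref{Lem-Chebyshev-Traces} and the reverse triangle inequality, choosing $l=m-1$ or $l=m$ according to whether $\Ic(V)$ is of weak backward type $A$ or $B$ and invoking Proposition~\ref{prop:ThreeConsecSpecCantIntersect} to get $|t_{[\co,1-\db]}(E_{\pm})|>2$. Since the edges of $\Ic(V)$ then lie outside $\sigma_{[\co,m,1]}(V)$, no band $I_{[\co,m,1]}^{j}(V)$ can reach them, which yields the required strictness. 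Your proof needs this (or an equivalent) argument before the tower induction can start.
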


\begin{proof}
As before, we omit all the $V$ dependencies here but note that the
proof relies on various results using $V>4$. Combining Lemma~\ref{lem:=000020backward=000020implies=000020forward=000020-=000020partial}
with Lemma~\ref{lem:=000020Tower=000020property} implies the existence
and uniqueness of the spectral bands $\left\{ \Icmi\right\} _{i=1}^{M}$
and $\left\{ I_{[\co,m,1]}^{j}\right\} _{j=1}^{M+1}$, and these spectral
bands satisfy properties \ref{enu:A1Property},\ref{enu:A2Property}
and (\ref{eq:Interlacing_strict}), see Definition~\ref{def:ForwardType}.
Let $1\leq j\leq M+1$. Then Corollary~\ref{cor:Tower=000020Property}
(applied for $\co'=[\co,m]$) asserts that for all $n\in\N$, there
exist a unique spectral band $\Icmnj$ such that
\[
\Icmnj\strict I_{[\co,m,n-1]}^{j}\strict\ldots\strict I_{[\co,m,1]}^{j}\subseteq\Ic.
\]
By construction $\Icmnj$ is of backward type $B$ for all $n>1$
and not of weak backward type $A$ by Corollary~\ref{cor:=000020Weak=000020Backward=000020Type}.
Thus, for all $n>1$, the spectral bands $\left\{ I_{[\co,m,n]}^{j}\right\} _{j=1}^{M+1}$satisfy
properties \ref{enu:B1Property} and \ref{enu:B2Property}. All that
is left to show is $I_{[\co,m,1]}^{j}\strict\Ic$ for all $1\leq j\leq M+1$.

Let $E_{-}<E_{+}$ be chosen such that $\Ic=[E_{-},E_{+}]$. Thus,
$|t_{\co}(E_{\pm})|=2$ holds by Proposition~\ref{prop:=000020Floquet-Bloch=000020via=000020transfer=000020matrix}.
In addition, Lemma~\ref{lem:=000020backward=000020implies=000020forward=000020-=000020partial}
implies $I_{[\co,m,1]}^{1}\precd\ldots\precd I_{[\co,m,1]}^{M+1}$
and $I_{[\co,m,1]}^{j}\subseteq\Ic$. Therefore, it is sufficient
to prove $|t_{[\co,m,1]}(E_{\pm})|>2$ in order to conclude that $I_{[\co,m,1]}^{j}\strict\Ic$
for all $1\leq j\leq M+1$.

As $|t_{\co}(E_{\pm})|=2$ , Lemma~\ref{Lem-Chebyshev-Traces} implies
$|S_{l+1}\left(t_{\co}(E_{\pm})\right)|=l+2$. Thus, applying Lemma~\ref{lem:=000020Trace=000020Cheby=000020Formula}
and the reversed triangle inequality gives that for $m\geq l\geq-1$,
\begin{align}
|t_{[\co,m,1]}(E_{\pm})| & =|t_{[\co,m+1]}(E_{\pm})|\label{eq:LowerTrace_ProofforwardProperty}\\
 & =|S_{l+1}\left(t_{\co}(E_{\pm})\right)t_{[\co,m-l]}(E_{\pm})-S_{l}\left(t_{\co}(E_{\pm})\right)t_{[\co,m-l-1]}(E_{\pm})|\nonumber \\
 & \geq(l+2)|t_{[\co,m-l]}(E_{\pm})|-(l+1)|t_{[\co,m-l-1]}(E_{\pm})|.\nonumber 
\end{align}
We continue estimating the previous term by a suitable choice of $l$
depending whether $\Ic(V)$ is of weak backward type $A$ or $B$.

If $\Ic(V)$ is of weak backward type $A$, set $l=m-1$. Note that
$\Ic(V)\subseteq\sigma_{[\co,0]}(V)$ (since we assume now that $\Ic(V)$
is of weak backward type $A$) and therefore $|t_{[\co,0]}(E_{\pm})|\leq2$.
Then $E_{\pm}\subseteq\sigma_{\co}(V)\cap\sigcz(V)$ leads to $E_{\pm}\not\in\sigma_{[\co,1]}(V)$
by Proposition~\ref{prop:ThreeConsecSpecCantIntersect} and $V>4$.
Thus, $|t_{[\co,1]}(E_{\pm})|>2$ is concluded from Proposition~\ref{prop:=000020Floquet-Bloch=000020via=000020transfer=000020matrix}.
Substituting $|t_{[\co,0]}(E_{\pm})|\leq2$ and$|t_{[\co,1]}(E_{\pm})|>2$
in Equation~(\ref{eq:LowerTrace_ProofforwardProperty}) gives
\[
|t_{[\co,m,1]}(E_{\pm})|\geq(m+1)|t_{[\co,1]}(E_{\pm})-m|t_{[\co,0]}(E_{\pm})|>2
\]
finishing the proof in this case.

If $\Ic(V)$ is of weak backward type $B$, set $l=m$ and note that
$|t_{[\co,-1]}(E_{\pm})|\leq2$. In addition, $|t_{[\co,0]}(E_{\pm})|>2$
holds as $E_{\pm}\in\Ic(V)$ and $\Ic(V)$ is not of weak backward
type $A$ by Corollary~\ref{cor:=000020Weak=000020Backward=000020Type}
and $V>4$. Then Equation~(\ref{eq:LowerTrace_ProofforwardProperty})
leads to 
\[
|t_{[\co,m,1]}(E_{\pm})|\geq(m+2)|t_{[\co,0]}(E_{\pm})|-(m+1)|t_{[\co,-1]}(E_{\pm})|>2
\]
finishing the proof.
\end{proof}
Proposition~\ref{prop:=000020backward=000020implies=000020forward}
is comparable to \cite[lem. 3.3]{Raym95}. Nevertheless, Proposition~\ref{prop:=000020backward=000020implies=000020forward}
is slightly stronger in three aspects: using everywhere the strict
inclusion $\strict$ rather than $\subset$; stating the properties
for all $m,n\in\N$; and also by having in \ref{enu:B1Property} $I_{[\co,m,n]}^{j}(V)\strict I_{[\co,m,n-1]}^{j}(V)$
for all $n\in\N$ rather than just $I_{[\co,m,n]}^{j}(V)\strict I_{\co}^{j}(V)$.
. This strengthening is crucial in \cite{BanBecLoe_24}, and that
is why we choose to deviate from the original exposition in \cite[lem. 3.3]{Raym95}.

Proposition~\ref{prop:=000020backward=000020implies=000020forward}
shows the implication between (weak) backward type and forward type.
Therefore, we are motivated to include both in one definition (Definition~\ref{def:ABTypes})
and to prove their equivalence if $V>4$, see Theorem~\ref{Thm-V>4_AllTypeA_B}.
\begin{defn}
\label{def:ABTypes} Let $V>4$ and $m\in\N$. Let $\co\in\Co$ such
that $[\co,m]\in\Co$. A spectral band $I_{\co}$ of $\sigma_{\co}(V)$
is called of
\end{defn}

\begin{itemize}
\item {\em type $A$} if $I_{\co}$ is of backward type $A$ and it is
also of $m$-forward type $A$ for all $m\in\N$.
\item {\em type $B$} if $I_{\co}$ is of backward type $B$ and it is
also of $m$-forward type $B$ for all $m\in\N$.
\end{itemize}
Before proving the main theorem - that each spectral band is of type
$A$ or $B$, we provide a useful corollary of Proposition~\ref{prop:=000020backward=000020implies=000020forward}
for which the following example is a warm up.
\begin{example}
\label{exa:StructureBands_c_1} Let $V>4$. Then a short computation
yields that $\sigma_{[0,0]}(V)=[-2,2]=:I_{[0,0]}(V)$ is of backward
type $A$ and $\sigma_{[0,0,1]}(V)=[-2+V,2+V]=:K_{[0,0,1]}(V)$ is
of backward type $B$, see also Example~\ref{exa:=000020backward=000020type=000020=00005B0,0=00005D=000020and=000020=00005B0,0,1=00005D}.
Thus, Proposition~\ref{prop:=000020backward=000020implies=000020forward}
implies that $I_{[0,0]}(V)$ is of type $A$ and $K_{[0,0,1]}(V)$
is of type $B$. Moreover, Proposition~\ref{prop:=000020backward=000020implies=000020forward}
and Lemma~\ref{lem:=000020Tower=000020property} imply for all $n\geq2$,
that there are $\left\{ I_{[0,0,n]}^{i}(V)\right\} _{i=1}^{n-1}$
of type $A$ and one spectral band $K_{[0,0,n]}(V)$ of type $B$
such that 
\[
\sigma_{[0,0,n]}=\bigcup_{i=1}^{n-1}I_{[0,0,n]}^{i}(V)\cup K_{[0,0,n]}(V),\qquad K_{[0,0,l]}(V)\strict K_{[0,0,l-1]}(V)\,\textrm{for}\,2\leq l\leq n,
\]
and 
\[
I_{[0,0,n]}^{1}(V)\}\precd I_{[0,0,n]}^{2}(V)\precd\ldots\precd I_{[0,0,n]}^{n-1}(V)\precd K_{[0,0,n]}(V).
\]
The structure is sketched in Figure~\ref{fig:StructureBands_c_1}.

\begin{figure}[hbt]
\includegraphics[scale=0.7]{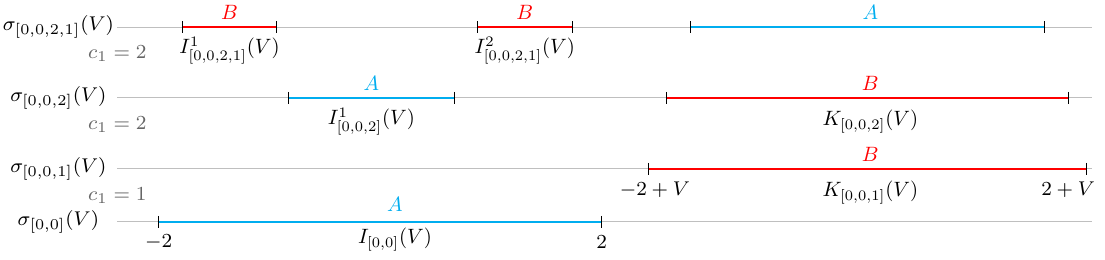}\caption{The first spectral bands defined in Example~\ref{exa:StructureBands_c_1}.}
\label{fig:StructureBands_c_1}
\end{figure}
\end{example}

\begin{cor}
\label{cor:EverySpectralBand=00003DIcmi=000020or=000020Icmnj}Let
$V>4$ and $\co'=[0,c_{0},c_{1},\ldots,c_{k}]\in\Co$ be such that
$c_{k}\geq1$ if $k\geq1$ and $\varphi(\co')\in[0,1]$. Consider
a spectral band $J$ in $\sigma_{\co'}(V)$.
\end{cor}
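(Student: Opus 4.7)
The plan is to split according to the weak backward type of $J$ (well-defined and mutually exclusive by Corollary~\ref{cor:=000020Weak=000020Backward=000020Type}) and to pin down $J$ inside the hierarchical forward structure using the uniqueness statements in Proposition~\ref{prop:=000020backward=000020implies=000020forward}. Write $\co'=[\co,c_{k}]$ with $\co:=[0,0,c_{1},\ldots,c_{k-1}]$, and for the type-$B$ sub-case set $\co'':=[0,0,c_{1},\ldots,c_{k-2}]$. Throughout, I use the spectral identities $\sigma_{[\co',0]}(V)=\sigma_{\co}(V)$ and $\sigma_{[\co',-1]}(V)=\sigma_{[\co,c_{k}-1]}(V)$ furnished by Corollary~\ref{cor:FirstTraceIds}.

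\emph{Case A: $J$ of weak backward type $A$.} Then $J\subseteq I$ for some spectral band $I$ of $\sigma_{[\co',0]}(V)=\sigma_{\co}(V)$. Applying Proposition~\ref{prop:=000020backward=000020implies=000020forward} to $I$ with $m=c_{k}$, its first uniqueness clause asserts that $\{I_{[\co,c_{k}]}^{i}(V)\}_{i}$ are the \emph{only} spectral bands of $\sigma_{[\co,c_{k}]}(V)=\sigma_{\co'}(V)$ included in $I$. Hence $J=I_{[\co,c_{k}]}^{i}(V)$ for some $i$, and by the same proposition $J$ is of strong backward type $A$.

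\emph{Case B: $J$ of weak backward type $B$.} Here $J\subseteq K_{c_{k}-1}$ for some spectral band $K_{c_{k}-1}$ of $\sigma_{[\co,c_{k}-1]}(V)$. The idea is to descend a chain of ancestors $K_{c_{k}-1}\supseteq K_{c_{k}-2}\supseteq\ldots\supseteq K_{1}$ down to level~$1$, identify the base $K_{1}$ via Proposition~\ref{prop:=000020backward=000020implies=000020forward}, and then propagate up through the tower. The crucial sub-step is to certify that \emph{every} $K_{n}$ in the chain is of weak backward type $B$ in $\sigma_{[\co,n]}(V)$: if some $K_{n}$ were of weak backward type $A$, it would sit inside a spectral band $L$ of $\sigma_{[\co,n,0]}(V)=\sigma_{\co}(V)$, and then Case~A reasoning applied to $L$ at level $c_{k}$ (noting $J\subseteq K_{n}\subseteq L$) would force $J$ itself to be some $I_{[\co,c_{k}]}^{i}(V)$, i.e. of backward type $A$, contradicting Corollary~\ref{cor:=000020Weak=000020Backward=000020Type}. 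Having reached $K_{1}\subseteq M$ with $M$ a spectral band of $\sigma_{[\co,0]}(V)=\sigma_{\co''}(V)$, I apply the second uniqueness clause of Proposition~\ref{prop:=000020backward=000020implies=000020forward} to $M$ with parameters $(m,n)=(c_{k-1},1)$ (noting $\sigma_{[\co'',c_{k-1},1]}(V)=\sigma_{[\co,1]}(V)$ by Corollary~\ref{cor:FirstTraceIds}); the uniqueness, using that $K_{1}$ is of weak backward type $B$ and not of weak backward type $A$, gives $K_{1}=I_{[\co'',c_{k-1},1]}^{j}(V)$ for some $j$. Finally, the tower uniqueness of Corollary~\ref{cor:Tower=000020Property}, combined with the second uniqueness clause of Proposition~\ref{prop:=000020backward=000020implies=000020forward} at level $c_{k}$, propagates the identification to $J=I_{[\co'',c_{k-1},c_{k}]}^{j}(V)$, which is moreover of strong backward type $B$.

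I expect the main obstacle to be the ancestor-type-certification in Case~B: one must carefully iterate the dichotomy Corollary~\ref{cor:=000020Weak=000020Backward=000020Type} along the entire tower while juggling the spectral identities of Corollary~\ref{cor:FirstTraceIds}, and one must verify that the index $j$ produced by Proposition~\ref{prop:=000020backward=000020implies=000020forward} at the base of the tower is consistent with the tower label at level $c_{k}$---both traceable to the strict-inclusion uniqueness inherited from Proposition~\ref{prop:ThreeConsecSpecCantIntersect}.
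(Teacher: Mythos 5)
Your proposal is correct and follows essentially the same route as the paper: split by the weak backward type of $J$ (Corollary~\ref{cor:=000020Weak=000020Backward=000020Type}), identify $J$ via the uniqueness clauses of Proposition~\ref{prop:=000020backward=000020implies=000020forward} in the type-$A$ case, and in the type-$B$ case descend a chain of type-$B$ ancestors to level $1$ and propagate back up through the tower (your certification of the intermediate types via the Case-A uniqueness is a harmless repackaging of the paper's direct appeal to Proposition~\ref{prop:ThreeConsecSpecCantIntersect}). The one thing you skip is the degenerate low-$k$ situations that appear explicitly as the first alternatives in the statement ($\varphi(\co')=0$ in part (a); $k\le 1$, i.e.\ $J=K_{[0,0,n]}$, in part (b)): there your predecessor words $\co$ resp.\ $\co''$ evaluate to $\infty$, so $\sigma_{\co}(V)=\R$ is not a periodic spectrum and Proposition~\ref{prop:=000020backward=000020implies=000020forward} cannot be invoked; these cases must be dispatched separately by the explicit computation of Example~\ref{exa:StructureBands_c_1}, as the paper does.
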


\begin{enumerate}
\item \label{enu:EverySpectralBand=00003DIcmi=000020or=000020Icmnj=000020_=000020type=000020A}If
$J$ is of weak backward type $A$, then $J$ is of backward type
$A$ and either
\begin{itemize}
\item $J=[-2,2]$ and $\varphi(\co')=0$, or
\item there is a unique spectral band $\Ic$ in $\sigma_{\co}(V)$ with
$\co=[0,c_{0},c_{1},\ldots,c_{k-1}]$ and a $1\leq i\leq M$(where
$M=c_{k}-1$ if $\Ic$ is of type $A$ and $M=c_{k}$ if $\Ic$ is
of type $B$) such that $J=I_{[\co,m]}^{i}$ with $m=c_{k}$ where
the the latter is the unique $i$th spectral band associated with
$\Ic$ defined in \ref{enu:=000020A=000020property}.
\end{itemize}
\item \label{enu:EverySpectralBand=00003DIcmi=000020or=000020Icmnj=000020_=000020type=000020B}If
$J$ is of weak backward type $B$, then $J$ is of backward type
$B$ and either
\begin{itemize}
\item $J=K_{[0,0,n]}$ from Example~\ref{exa:StructureBands_c_1}, $\varphi(\co)=\frac{1}{n}$
and $k=1$, or
\item there is a unique spectral band $\Ic$ in $\sigc(V)$ with $\co=[0,c_{0},c_{1},\ldots,c_{k-2}]$
and a $1\leq j\leq M+1$(where $M=c_{k-1}-1$ if $\Ic$ is of type
$A$ and $M=c_{k-1}$ if $\Ic$ is of type $B$) such that $J=I_{[\co,m,n]}^{j}$
with $m=c_{k-1},n=c_{k}$ where the latter is the unique $j$th spectral
band associated with $\Ic$ defined in \ref{enu:=000020B=000020property}.
\end{itemize}
In addition, there is a unique spectral band $I_{[\co,c_{k-1},1]}$
in $\sigma_{[\co,c_{k-1},1]}(V)$ of type $B$ such that either $J=I_{[\co,c_{k-1},1]}^{j}$
(if $c_{k}=1$) or $J\strict I_{[\co,c_{k-1},1]}^{j}$ (if $c_{k}>1$).

\end{enumerate}
\begin{proof}
Define $\co=[0,c_{0},c_{1},\ldots,c_{k-1}]$.

(\ref{enu:EverySpectralBand=00003DIcmi=000020or=000020Icmnj=000020_=000020type=000020A})
Let $J\subseteq\sigma_{\co'}(V)$ be of weak backward type $A$. Note
first that $\sigma_{[0,0]}(V)=[-2,2]$, where the corresponding spectral
band is of type $A$, see Example~\ref{exa:StructureBands_c_1}.
If $\varphi(\co')\in[0,1]$, we conclude $k\geq1$ and $\varphi([\co',0])=\varphi(\co)\geq0$.
. Then there is a spectral band $\Ic\subseteq\sigc$(V) with $J\subseteq\Ic$.
By Corollary~\ref{cor:=000020Weak=000020Backward=000020Type}, $\Ic$
is either of weak backward type $A$ or $B$. Thus, Proposition~\ref{prop:=000020backward=000020implies=000020forward}~(\ref{enu:=000020backward=000020implies=000020forward_=000020A=000020bands})
implies the statement and in particular the uniqueness of the bands
$\left\{ I_{[\co,m]}^{i}(V)\right\} _{i=1}^{M}$ for $m=c_{k}$.

(\ref{enu:EverySpectralBand=00003DIcmi=000020or=000020Icmnj=000020_=000020type=000020B})
If $k=0$, then $\co'=[0,0]$ and the spectral band in $\sigma_{[0,0]}(V)$
is of type $A$, see Example~\ref{exa:StructureBands_c_1}. If $k=1$,
then $\co'=[0,0,n]$ with $n\geq1$ by assumption. Thus, the only
spectral band $J\subseteq\sigma_{\co'}(V)$ of weak backward type
$B$ is the spectral band $K_{[0,0,n]}$ described in Example~\ref{exa:StructureBands_c_1}.
Hence, $J=K_{[0,0,n]}$ follows satisfying $K_{[0,0,n]}\subseteq K_{[0,0,1]}=[-2+V,2+V]=\sigma_{[0,0,1]}(V)$,
where $K_{[0,0,1]}$ is of type $B$. Note that $K_{[0,0,n]}\strict K_{[0,0,1]}$
if $n>1$ and else $K_{[0,0,n]}=K_{[0,0,1]}$.

Next, we treat the case $k\geq2$ with $c_{k}\geq1$. If $c_{k}=1$,
then $\sigma_{[\co',-1]}(V)=\sigc(V)$ holds by Lemma~\ref{lem:=000020Trace=000020Depends=000020On=000020Value=000020Only}.
Thus, there is a spectral band $\Ic\subseteq\sigc(V)$ such that $J\strict\Ic$.
Set $m=c_{k-1}$. Observe that $J$ is a spectral band in $\sigma_{[\co,m,1]}(V)$.
Thus, Proposition~\ref{prop:=000020backward=000020implies=000020forward}~(\ref{enu:=000020backward=000020implies=000020forward_=000020B=000020bands})
implies $J=I_{[\co,m,c_{1}]}^{j}$ for some $1\leq j\leq M+1$.

If $c_{k}\geq2$, then $\sigma_{[\co',-1]}(V)=\sigma_{[\co,c_{k-1},c_{k}-1]}(V)$
where $c_{k}-1\geq1$. Thus, there is a spectral band $J_{c_{k}-1}$
in $\sigma_{[\co,c_{k-1},c_{k}-1]}(V)=\sigma_{[\co',-1]}(V)$ with
$J\subseteq J_{c_{k}-1}$. Since $V>4$, Corollary~\ref{cor:=000020Weak=000020Backward=000020Type}
asserts that $J_{c_{k}-1}$ is either of weak backward type $A$ or
$B$. We claim that $J_{c_{k}-1}$ is of weak backward type $B$.
Therefore assume towards contradiction that $J_{c_{k}-1}$ is of weak
backward type $A$. Then $J_{c_{k}-1}\subseteq\sigma_{[\co,c_{k-1},c_{k}-1,0]}(V)=\sigma_{[\co',0]}(V)$
follows. Thus, $J\subseteq\sigma_{\co'}(V)$, $J\subseteq\sigma_{[\co',0]}(V)$
and $J\subseteq\sigma_{[\co',-1]}(V)$ while $J\neq\emptyset$, contradicting
Proposition~\ref{prop:ThreeConsecSpecCantIntersect} using $V>4$,
Hence, $J_{c_{k}-1}$ is of weak backward type $B$. Thus, we can
inductively conclude that there are spectral bands $J_{l}$ in $\sigma_{[\co,c_{k-1},l]}(V)$
of weak backward type $B$ for all $1\leq l\leq c_{k}-1$ such that
\[
J\subseteq J_{c_{k}-1}\subseteq J_{c_{k}-2}\subseteq\ldots\subseteq J_{1}.
\]
Since $J_{1}\subseteq\sigma_{[\co,c_{k-1},1]}(V)$ is of weak backward
type $B$, $J_{1}$ is included in a spectral band $\Ic$ in $\sigc(V)$.
Set $m=c_{k-1}$ and $n=c_{k}$. Observe that $J$ is a spectral band
in $\sigma_{[\co,m,n]}(V)$ and $\Ic$ is either of weak backward
type $A$ or $B$. Thus, Proposition~\ref{prop:=000020backward=000020implies=000020forward}~(\ref{enu:=000020backward=000020implies=000020forward_=000020B=000020bands})
implies $J=I_{[\co,m,n]}^{j}$ for some $1\leq j\leq M+1$. In particular,
$J=I_{[\co,c_{k-1},1]}^{j}$ if $n=c_{k}=1$ and $J\strict I_{[\co,c_{k-1},1]}^{j}$
if $n=c_{k}>1$.
\end{proof}
\begin{thm}
\label{Thm-V>4_AllTypeA_B} For all $V>4$ and $\co\in\Co$ with $[\co,m]\in\Co$
for all $m\in\N$, every spectral band in $\sigma_{\co}(V)$ is either
of type $A$ or $B$ and its type is independent of the value of $V>4$.
In addition, for every spectral band $\Ic(V)$ in $\sigma_{\co}(V)$
and all $m,n\in\N$, the spectral bands $\left\{ \Icmi(V)\right\} _{i=1}^{M}$
and $\left\{ \Icmnj(V)\right\} _{j=1}^{M+1}$ mentioned in the $m$-forward
definition (Definition~\ref{def:ForwardType}) are unique and the
strong interlacing property (\ref{eq:Interlacing_strict}) holds.
\end{thm}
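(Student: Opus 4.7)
The plan is to combine the three classification results already established — Corollary~\ref{cor:=000020Weak=000020Backward=000020Type}, Proposition~\ref{prop:=000020backward=000020implies=000020forward}, and Corollary~\ref{cor:EverySpectralBand=00003DIcmi=000020or=000020Icmnj} — and then add an inductive argument to deliver the $V$-independence. The theorem essentially bundles together statements that have already been proved piece by piece for $V>4$; most of the work is in assembling them and recognizing that the combinatorial labels attached to bands are what actually carry the type information.

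First, given an arbitrary spectral band $\Ic(V)$ of $\sigc(V)$, I would apply Corollary~\ref{cor:=000020Weak=000020Backward=000020Type} to obtain a well-defined weak backward type, and then invoke Proposition~\ref{prop:=000020backward=000020implies=000020forward} to upgrade this to $m$-forward type $A$ or $B$ for every $m\in\N$. The same proposition also yields the uniqueness of $\left\{\Icmi(V)\right\}_{i=1}^{M}$ and $\left\{\Icmnj(V)\right\}_{j=1}^{M+1}$ and the strong interlacing property (\ref{eq:Interlacing_strict}). To upgrade weak backward type to the non-weak backward type required by Definition~\ref{def:ABTypes}, I would view $\Ic(V)$ as a descendant of its parent band (or grandparent band, if $\Ic(V)$ is of weak backward type $B$) and apply Corollary~\ref{cor:EverySpectralBand=00003DIcmi=000020or=000020Icmnj}: it identifies $\Ic(V)$ with a specific label of the form $I_{[\cdot,\cdot]}^{i}$ or $I_{[\cdot,\cdot,\cdot]}^{j}$ that is strictly contained in its parent band. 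The small base cases $\co\in\{[0,0],[0,0,1]\}$ are handled directly by Example~\ref{exa:=000020backward=000020type=000020=00005B0,0=00005D=000020and=000020=00005B0,0,1=00005D}.

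For the $V$-independence I would induct on the length $k$ of the continued fraction expansion of $\co$. The base cases are immediate from Example~\ref{exa:=000020backward=000020type=000020=00005B0,0=00005D=000020and=000020=00005B0,0,1=00005D}, where $[0,0]$ is of type $A$ and $[0,0,1]$ is of type $B$ for every $V\neq 0$. For the inductive step, Corollary~\ref{cor:EverySpectralBand=00003DIcmi=000020or=000020Icmnj} assigns every band of $\sigc(V)$ a purely combinatorial label, and whether this label has the form $I_{[\cdot,\cdot]}^{i}$ (type $A$) or $I_{[\cdot,\cdot,\cdot]}^{j}$ (type $B$) depends only on which slot of the parent band the band occupies — information encoded by $\co$ itself, not by $V$. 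The main obstacle I expect is making this last step rigorous: since the bands $\Ic(V)$ themselves move continuously as $V$ varies in $(4,\infty)$, one needs to invoke the uniqueness in Corollary~\ref{cor:EverySpectralBand=00003DIcmi=000020or=000020Icmnj} (together with the strict inclusions and strong interlacing, which prevent band collisions) to be sure that the combinatorial labeling gives a $V$-independent bijection between bands and labels, so that the type of each band is unambiguously defined across all $V>4$.
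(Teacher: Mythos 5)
Your treatment of the first half of the theorem — that every band has a well-defined type, that $\left\{\Icmi(V)\right\}_{i=1}^{M}$ and $\left\{\Icmnj(V)\right\}_{j=1}^{M+1}$ are unique, and that (\ref{eq:Interlacing_strict}) holds — is exactly the paper's argument: apply Corollary~\ref{cor:=000020Weak=000020Backward=000020Type} to get a weak backward type, upgrade to $m$-forward type via Proposition~\ref{prop:=000020backward=000020implies=000020forward}, and promote weak to strict backward type via Corollary~\ref{cor:EverySpectralBand=00003DIcmi=000020or=000020Icmnj}.

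For the $V$-independence you propose an induction on the length of $\co$, passing the label assignment down level by level. This is a genuinely different route from the paper, which uses no induction, and it has a gap precisely where you flag it. The assertion that ``strict inclusions and strong interlacing\ldots{} prevent band collisions'' and therefore yield a ``$V$-independent bijection between bands and labels'' is not a proof: strict inclusion and strong interlacing are properties of the configuration at a \emph{single} $V$, and by themselves say nothing about how the configuration evolves as $V$ varies. What is missing is the analytic ingredient that the paper supplies directly: the edges of each spectral band are continuous functions of $V$ on $(4,\infty)$ (being roots of $t_{\co}(\cdot,V)\mp 2$, which depends polynomially on $V$), so that if the type of $\Ic(V)$ were not constant one could pick $V_n\to V_0$ with type$(\Ic(V_n))\neq$ type$(\Ic(V_0))$; passing to the limit of the left edge $a(V_n)$ and using the Hausdorff continuity of $V\mapsto\sigma_{[\co,0]}(V)$ and $V\mapsto\sigma_{[\co,-1]}(V)$ forces $a(V_0)$ into $\sigma_{\co}(V_0)\cap\sigma_{[\co,0]}(V_0)\cap\sigma_{[\co,-1]}(V_0)$, which is empty by Proposition~\ref{prop:ThreeConsecSpecCantIntersect}. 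Once that continuity-plus-three-intersection argument is in place it works for each $\co$ directly, so the inductive wrapper is superfluous; without it, your inductive step does not close.
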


\begin{rem*}
Theorem~\ref{Thm-V>4_AllTypeA_B} collects all the previous statements
of this section. This result is comparable to \cite[lem. 3.3]{Raym95},
as was discussed before. The independence of the spectral band type
in the value of $V>4$ was not explicitly mentioned in \cite[lem. 3.3]{Raym95}.
Showing this is based on combining a continuity argument and the three
intersection property (Proposition~\ref{prop:ThreeConsecSpecCantIntersect}).
Further discussions on the role of this independence may be found
in \cite{BanBecLoe_24}.
\end{rem*}
\begin{proof}
Let $V>4$ and $\co\in\Co$ be such that $[\co,m]\in\Co$ for all
$m\in\N$. Let $\Ic(V)$ be a spectral band in $\sigc(V)$. By Corollary~\ref{cor:=000020Weak=000020Backward=000020Type},
we have that $\Ic(V)$ has a well-defined weak backward type (either
$A$ or $B$) for all $V>4$. Suppose $\Ic(V)$ is of weak backward
type $A$ (respectively $B$). Then Proposition~\ref{prop:=000020backward=000020implies=000020forward}
implies that $\Ic$(V) is of $m$-forward type $A$ (respectively
$m$-forward type $B$) for all $m\in\N$. In addition, Corollary~\ref{cor:EverySpectralBand=00003DIcmi=000020or=000020Icmnj}
asserts that $\Ic(V)$ is also of backward type $A$ (respectively
$B$). Hence, $\Ic(V)$ is of type $A$ (respectively \textbf{$B$).}

According to the previous considerations, $\Ic(V)$ is either of type
$A$ or $B$ for each $V>4$. We explain now why this type is independent
on the value of $V$ (as long as $V>4$). Therefore observe that it
suffices to prove that if $\Ic(V_{0})$ is of type $A$ (respectively
$B$) for one $V_{0}>4$, then $\Ic(V)$ is of type $A$ (respectively
$B$) for all $V>4$. Assume towards contradiction this is not the
case. Then there is a $V_{0}>4$ and a sequence $\left\{ V_{n}\right\} _{n\in\N}\subseteq(4,\infty)$
such that $\lim_{n\to\infty}V_{n}=V_{0}$ and the type of $\Ic(V_{0})$
is different to the type $\Ic(V_{n})$ for all $n\in\N$. Without
loss of generality assume $\Ic(V_{0})$ is of type $A$ and $\Ic(V_{n})$
is of type $B$ for all $n\in\N$ (the other case is treated similarly).
In particular, $\Ic(V_{0})\subseteq\sigma_{[\co,0]}(V_{0})$ and $\Ic(V_{n})\subseteq\sigma_{[\co,-1]}(V_{n})$
for all $n\in\N$. In order to continue, we need the following observations.

Let $\co'\in\Co$. For $V\in\R$, the preimage $t_{\co'}(\cdot,V)^{-1}(\left\{ \pm2\right\} )$
coincides with the edges of the spectral bands, confer the discussion
at Proposition~\ref{prop:=000020Floquet-Bloch=000020via=000020transfer=000020matrix}.
Thus, if $J(V)=[a(V),b(V)]$ is a spectral band of $\sigma_{\co'}(V)$,
then $|t_{\co'}(a(V),V)|=2=|t_{\co'}(b(V),V)|$. From the definition
of $t_{\co'}$ it is immediate that $(4,\infty)\ni V\mapsto a(V)\in\R$
and $(4,\infty)\ni V\mapsto b(V)\in\R$ are continuous. Note that
indeed this edges are continuous on $V\in\R\setminus\left\{ 0\right\} $,
see also a discussion in \cite[cor. 3.2]{BanBecLoe_24}. Thus, $(4,\infty)\ni V\mapsto\sigma_{\co'}(V)$
is also continuous (as a finite union of intervals with continuous
edges) in the Hausdorff metric.

Let $\Ic(V)=[a(V),b(V)]$. By assumption, we have $a(V_{0})\in\Ic(V_{0})\subseteq\sigma_{[\co,0]}(V_{0})$
and $a(V_{n})\in\Ic(V_{n})\subseteq\sigma_{[\co,-1]}(V_{n})$ for
all $n\in\N$. By continuity of $V\mapsto a(V)$, $V\mapsto\sigma_{[\co,0]}(V)$
and $V\mapsto\sigma_{[\co,-1]}(V)$, we conclude
\[
\sigma_{[\co,0]}(V_{0})\ni a(V_{0})=\lim_{n\to\infty}a(V_{n})\in\lim_{n\to\infty}\sigma_{[\co,-1]}(V_{n})=\sigma_{[\co,-1]}(V_{0}).
\]
Thus, $a(V_{0})\in\sigc(V_{0})\cap\sigma_{[\co,0]}(V_{0})\cap\sigma_{[\co,-1]}(V_{0})$
follows contradicting Proposition~\ref{prop:ThreeConsecSpecCantIntersect}
and $V>4$.
\end{proof}

\section{The integrated density of state for Sturmian Hamiltonian \protect\label{sec:=000020The=000020IDS}}

A Sturmian Hamiltonian, $\Ham$ with $\alpha\notin\Q$ gives rise
to periodic Hamiltonians $\Hrat$ whose spectra converge to $\sigma\left(\Ham\right)$
(Proposition~\ref{prop:=000020Convergence=000020to=000020aperiodic=000020spectrum}).
The spectra of these periodic operators exhibit a special structure,
as is described in the previous section and summarized in Theorem~\ref{Thm-V>4_AllTypeA_B}.
We employ it in this section in order to study the integrated density
of states of $\Ham$ for $V>4$.

\subsection{A light introduction to the integrated density of states and its
gap labels. \protect\label{subsec:=000020Intro=000020to=000020IDS}}

We briefly introduce the integrated density of states for the Sturmian
Hamiltonian, $\Ham$. First, restricting $\Ham$ to $\ell^{2}(\{1,\ldots,n\})$,
we obtain a hermitian $n\times n$ matrix, denoted by $\Ham|_{[1,n]}$.
We denote its set of $n$ eigenvalues by $\sigma\left(\Ham|_{[1,n]}\right)$
and use it to define 
\begin{equation}
\IDS(E):=\lim_{n\to\infty}\frac{\#\set{\lambda\in\sigma\left(\left.\Ham\right|_{[1,n]}\right)}{\lambda\leq E}}{n}.\label{eq:=000020IDS=000020definition}
\end{equation}
The limit in (\ref{eq:=000020IDS=000020definition}) is known to exist
for all $\alpha\in[0,1]\backslash\Q$, $V\in\R$ and $E\in\R$, see
e.g. \cite{Sim82-review,Hof93,DaFi22-book_1}. The function $E\mapsto N_{\alpha,V}(E)$
is called the integrated density of states (IDS) of $H_{\alpha,V}$.
There are a few equivalent ways to define the IDS in our case. Here,
we choose the way which is computationally the most convenient within
the framework developed in this paper. This definition of the IDS
is common in the physics literature. Within the mathematics literature,
it is also known by the name the integrated (normalized) empirical
spectral distribution. Two fundamental properties of the IDS in our
setting are:
\begin{enumerate}[label=(IDS\arabic*)]
\item \begin{flushleft}
\label{enu:=000020IDS-property-1} The IDS, $\IDS:\R\rightarrow\left[0,1\right]$
is a monotone, non-decreasing and a continuous function.
\par\end{flushleft}
\item \begin{flushleft}
\label{enu:=000020IDS-property-2} We have $E\in\R\backslash\sigma(\Ham)$
if and only if there exists an $\varepsilon>0$ such that the restriction
$\IDS$ is constant on $\left(E-\varepsilon,E+\varepsilon\right)$.
\par\end{flushleft}

\end{enumerate}
In particular, we have that the IDS is constant on the spectral gaps,
i.e., on the connected components of $\R\backslash\sigma(\Ham)$.
The values that the IDS attains at the gaps are also called the gap
labels. The gap labelling theory is a general theory \cite{Bell92-Gap,BelBovGhe92,DamFill23-GapLabel},
which predicts the set of all possible gap labels of an operator.
Applying the gap labelling theory to $\Ham$ leads to the following
assertion.
\begin{prop}
For all $\alpha\in[0,1]\setminus\Q$ and $V\in\R\setminus\{0\}$,
\[
\set{\IDS(E)}{E\in\R\backslash\sigma(\Ham)}\subset\set{l\alpha\mod 1}{l\in\Z}\cup\{1\}.
\]
\end{prop}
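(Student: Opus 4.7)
The plan is to derive the gap labels for $H_{\alpha,V}$ from the explicit gap labels of its rational Floquet--Bloch approximations, combined with continuity of the IDS in $\alpha$. First I would record the elementary fact for the rational case: if $p,q$ are coprime and $\co\in\Co$ satisfies $\varphi(\co)=p/q$, then by Proposition~\ref{prop:=000020there=000020are=000020q=000020spectral=000020bands} the spectrum $\sigma_{\co}(V)$ consists of exactly $q$ disjoint closed bands, and a standard Floquet--Bloch calculation for $q$-periodic Jacobi matrices shows that the IDS $N_{p/q,V}$ is constant on each spectral gap and increases by exactly $1/q$ across each band. Hence the gap labels of $H_{p/q,V}$ lie in $\{j/q : 0\le j\le q\}$.

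Next I would fix $E \in \R\setminus\sigma(H_{\alpha,V})$ and let $\alpha_k=p_k/q_k$ be the continued fraction convergents of $\alpha$. By Proposition~\ref{prop:=000020Convergence=000020to=000020aperiodic=000020spectrum} the spectra converge in the Hausdorff metric, so for all sufficiently large $k$ the point $E$ also lies in a spectral gap of $H_{\alpha_k,V}$. Combining Hausdorff spectral continuity at irrational $\alpha$ with the classical continuity of the IDS with respect to the underlying dynamics (vague convergence of the density of states measures) yields
\[
N_{\alpha,V}(E) \;=\; \lim_{k\to\infty} N_{\alpha_k,V}(E) \;=\; \lim_{k\to\infty}\frac{j_k}{q_k}
\]
for some integers $0 \le j_k \le q_k$.

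The crux of the argument is to exhibit a fixed $l\in\Z$ with $j_k \equiv l\, p_k \pmod{q_k}$ for all large $k$. Granted this, writing $j_k = l\, p_k + m_k q_k$, the inequality $0 \le j_k \le q_k$ forces $\{m_k\}$ to be bounded, hence eventually constant equal to some $m \in \Z$ (along a subsequence, which suffices as the full limit exists). Then
\[
N_{\alpha,V}(E) \;=\; \lim_{k\to\infty}(l\,\alpha_k + m_k) \;=\; l\alpha + m \;\equiv\; l\alpha \pmod{1},
\]
which displays $N_{\alpha,V}(E)$ as an element of $\{l\alpha\bmod 1 : l\in\Z\}\cup\{1\}$ (the label $1$ arises if $E$ lies to the right of the whole spectrum).

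The main obstacle is the congruence $j_k \equiv l\, p_k \pmod{q_k}$. This is the content of the Bellissard gap labelling theorem \cite{Bell92-Gap,BelBovGhe92} applied to the Sturmian subshift, where the image of the canonical trace on $K_0$ of the associated $C^*$-algebra equals the frequency module $\Z+\alpha\Z$. A more hands-on route, available when $V>4$, is to track the chosen gap through the hierarchical band structure provided by Theorem~\ref{Thm-V>4_AllTypeA_B}: each spectral band can be given a combinatorial address relative to the $A/B$ decomposition, the number of bands lying to the left of the gap at level $k$ is computable from this address via the recursions $p_{k+1}=c_{k+1}p_k+p_{k-1}$ and $q_{k+1}=c_{k+1}q_k+q_{k-1}$, and the Bezout identity $p_k q_{k-1}-p_{k-1}q_k=(-1)^{k-1}$ then yields the required mod-$q_k$ stability of $j_k$, identifying the integer $l$ explicitly from the address of the gap.
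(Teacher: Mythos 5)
The paper offers no proof for this Proposition at all: it cites the gap labelling theorem directly, and the proof of Theorem~\ref{thm:=000020All=000020gaps=000020are=000020there=000020V>4} later reiterates that ``the gap labelling theorem \cite[prop. 5.2.4]{Bell92-Gap} already provides the inclusion.'' Your proposal ultimately appeals to the same external result, so the core source agrees. What differs is the surrounding rational-approximation scaffolding, and there you have a genuine gap at the crux. Granting $N_{\alpha_k,V}(E)=j_k/q_k\to N_{\alpha,V}(E)$, you claim $j_k\equiv l\,p_k\pmod{q_k}$ for a fixed $l$ and assert this ``is the content of the Bellissard gap labelling theorem.'' But Bellissard's theorem, applied to $\Ham$, gives $N_{\alpha,V}(E)=l\alpha+m$ for some $l,m\in\Z$; combined with $j_k/q_k\to l\alpha+m$ and $|\alpha-\alpha_k|\le 1/(q_kq_{k+1})$, this only yields $j_k-l\,p_k-m\,q_k=o(q_k)$, not $=0$. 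To actually force the congruence you would need a quantitative rate $|N_{\alpha_k,V}(E)-N_{\alpha,V}(E)|<1/q_k$, and vague convergence of the density-of-states measures provides no such rate. So the approximation route, as stated, does not close — and if it did, you would not need to invoke Bellissard in the first place, since the congruence plus your telescoping argument already delivers the conclusion.

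Your fallback, tracking the gap through the hierarchical band structure for $V>4$, is sound: the coding of Proposition~\ref{prop:Coding_Spectrum} and the IDS formula of Proposition~\ref{prop:IDSFormula} do furnish exactly the mod-$q_k$ stability you want, via the band addresses and the recursions (\ref{eq:=000020recursion=000020for=000020p_k,=000020q_k}). However this is precisely the machinery the paper builds for the opposite (harder) inclusion in Theorem~\ref{thm:=000020All=000020gaps=000020are=000020there=000020V>4}, and it is restricted to $V>4$, whereas the Proposition under review is asserted for all $V\neq0$. For the statement as given, the clean and correct argument is the one the paper uses: cite the abstract gap labelling theorem for $\Ham$ directly and dispense with the approximation scheme.
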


The question which was raised by Mark Kac (though in the context of
the Almost-Mathieu operator) is whether there is an equality above,
or in his words, ``Are all gaps there?''. Since then this problem
was given the name ``The Dry Ten Martini Problem'' \cite{Sim82-review}.
It is shown in Theorem~\ref{thm:=000020All=000020gaps=000020are=000020there=000020V>4}
that there is indeed equality if $V>4$ (in \cite{BanBecLoe_24} this
result is extended to all $V\neq0$).

As a first step towards the proof of Theorem~\ref{thm:=000020All=000020gaps=000020are=000020there=000020V>4},
we show how the definition of the IDS in (\ref{eq:=000020IDS=000020definition})
may be restated in terms of the spectral bands of the periodic approximations
$\Hrat$, as presented previously. Therefore, note that $\left\{ E\right\} =[E,E]$
is an interval and so we can use the notation $I\precd\left\{ E\right\} $
for another interval $I$.
\begin{prop}
\label{prop:IDSBandCounting} Let $V\in\R\setminus\{0\}$ and $\alpha\in[0,1]\setminus\Q$
with continued fraction expansion $\left(c_{k}\right)_{k=0}^{\infty}$.
Consider its convergents $\varphi([0,c_{0},\ldots,c_{k}])=\frac{p_{k}}{q_{k}},k\in\N$
with $p_{k},q_{k}$ coprime. Then for all $E\in\R$, we have 
\begin{align}
\IDS(E) & =\lim_{k\to\infty}\frac{\#\set I{I\text{ is a spectral band of }\sigma(H_{\pqk,V})\text{ with }I\precd\{E\}}}{q_{k}}.\label{eq:=000020IDS=000020via=000020band=000020counting}
\end{align}
\end{prop}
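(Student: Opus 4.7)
The plan is to compare the aperiodic IDS, $N_{\alpha,V}$, with the integrated density of states $N_{p_k/q_k,V}$ of the periodic approximants, and to identify the latter with the band-counting expression on the right-hand side of (\ref{eq:=000020IDS=000020via=000020band=000020counting}) up to a negligible error of order $1/q_k$.

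First I would exploit the standard Floquet-Bloch description of the IDS for $q$-periodic Jacobi operators. By Proposition~\ref{prop:=000020there=000020are=000020q=000020spectral=000020bands}, for $V\neq 0$ the spectrum $\sigma_{\co_k}(V)$ consists of exactly $q_k$ pairwise disjoint closed bands which one can order as $I^1\precd I^2\precd\ldots\precd I^{q_k}$. A classical fact from one-dimensional Floquet-Bloch theory (see \cite[ch.~5]{Simon2011}) is that the IDS of a $q$-periodic Jacobi operator is continuous, is constant equal to $j/q$ on the gap between $I^j$ and $I^{j+1}$, and increases strictly from $(j-1)/q$ to $j/q$ as $E$ traverses $I^j$. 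Denoting the IDS of the $q_k$-periodic operator $H_{p_k/q_k,V}$ by $N_{p_k/q_k,V}$, this immediately gives the pointwise estimate
\[
\left| N_{p_k/q_k,V}(E) - \frac{\#\set{I}{I\text{ is a band of }\sigma_{\co_k}(V)\text{ with }I\precd\{E\}}}{q_k}\right| \leq \frac{1}{q_k}
\]
for every $E\in\R$ and every $k\in\N$, where $\co_k=[0,0,c_1,\ldots,c_k]$.

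Next I would show $N_{p_k/q_k,V}(E)\to N_{\alpha,V}(E)$ for every $E\in\R$. Since $\alpha\notin\Q$ and $\alpha_k\to\alpha$, the Sturmian sequences $\omega_{\alpha_k}$ converge pointwise to $\omega_\alpha$ (outside a negligible set of discontinuity points of the Sturmian coding), which implies strong convergence $H_{\alpha_k,V}\to H_{\alpha,V}$. This yields weak convergence of the associated spectral measures per site and hence weak convergence of the IDS measures $dN_{p_k/q_k,V}\to dN_{\alpha,V}$ in the sense of distribution functions. Because $N_{\alpha,V}$ is continuous by property~\ref{enu:=000020IDS-property-1}, weak convergence upgrades to pointwise convergence $N_{p_k/q_k,V}(E)\to N_{\alpha,V}(E)$ for every $E\in\R$. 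Alternatively one can invoke the continuity of the map $\beta\mapsto N_{\beta,V}(E)$ at irrational $\beta$ as established in the same circle of ideas as \cite[thm.~1]{BIT91} (used in Proposition~\ref{prop:=000020Convergence=000020to=000020aperiodic=000020spectrum}); either route gives the desired convergence.

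Combining both ingredients and letting $k\to\infty$ (so that $1/q_k\to 0$) yields the formula. The main obstacle is the second step: one must cite or prove the pointwise convergence of the periodic IDS to the aperiodic IDS with sufficient care. The subtle point is that weak convergence of measures only guarantees convergence of distribution functions at continuity points of the limit, which is why the continuity of $N_{\alpha,V}$ is used crucially here; the first step, by contrast, is essentially a bookkeeping consequence of Floquet-Bloch theory and Proposition~\ref{prop:=000020there=000020are=000020q=000020spectral=000020bands}.
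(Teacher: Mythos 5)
Your proposal takes a genuinely different route from the paper. The paper avoids any question of convergence of the periodic integrated density of states by an explicit combinatorial observation: the block $\left(\omega_{\alpha}(1),\ldots,\omega_{\alpha}(q_k)\right)$ and the period $\left(\omega_{\alpha_k}(1),\ldots,\omega_{\alpha_k}(q_k)\right)$ coincide up to a cyclic shift, so the finite restrictions $\left.\Ham\right|_{[1,q_k]}$ and $\left.H_{\pqk,V}\right|_{[1,q_k]}$ are unitarily equivalent. Passing to the subsequence $n_k=q_k$ in~(\ref{eq:=000020IDS=000020definition}) then reduces everything to comparing $\left.H_{\pqk,V}\right|_{[1,q_k]}$ with the Floquet matrix $H_{\co_k,V}(0)$, a rank-two perturbation, whose eigenvalue count differs from the band count by a bounded amount. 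You instead compare $\IDS$ with the periodic IDS $N_{p_k/q_k,V}$ and then invoke a limit theorem. Your first step — the estimate $\left|N_{p_k/q_k,V}(E)-\frac{1}{q_k}\#\set{I}{I\precd\{E\}}\right|\leq\frac{1}{q_k}$ — is correct and clean.

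The gap is in your second step. You assert that pointwise convergence $\omega_{\alpha_k}\to\omega_\alpha$ gives strong operator convergence $H_{\alpha_k,V}\to H_{\alpha,V}$ (true), that this gives weak convergence of spectral measures per site (true), and ``hence weak convergence of the IDS measures.'' That ``hence'' is a non sequitur: the IDS is not the spectral measure at a fixed site, but a spatial (or ergodic) average, and strong operator convergence controls only finitely many sites at a time. As a concrete counterexample to the implication you invoke, take $V_N(n)=\chi_{\{|n|>N\}}(n)$ on $\ell^2(\Z)$; then $V_N\to 0$ pointwise and $-\Delta+V_N\to-\Delta$ strongly, yet the IDS of $-\Delta+V_N$ equals the IDS of $-\Delta+1$ for every $N$ and does not converge to the IDS of $-\Delta$. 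The point is that the IDS is blind to behaviour on any fixed finite window, which is exactly what strong convergence controls. Thus the weak convergence $dN_{p_k/q_k,V}\to dN_{\alpha,V}$ must be proved by a genuinely different mechanism: either the cyclic-shift/unitary-equivalence argument the paper uses, or a uniform pattern-convergence result for the underlying subshifts (in the spirit of Beckus--Pogorzelski or Avron--Simon for almost periodic operators). Your fallback suggestion to ``invoke the continuity of $\beta\mapsto N_{\beta,V}(E)$ in the same circle of ideas as \cite{BIT91}'' also does not close the gap as stated: \cite[thm.~1]{BIT91} concerns Hausdorff continuity of the spectrum map, not continuity of the IDS, and the latter is a separate (and in general more delicate) statement. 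The conclusion you want is true, but it needs a correct citation or proof, and at that point you have essentially reproduced the content of the paper's cyclic-shift argument.
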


\begin{proof}
We start by noting that the words $\left(\omega_{\alpha}(1),\ldots,\omega_{\alpha}(q_{k})\right)$
and $\left(\ohn(1),\ldots,\ohn(q_{k})\right)$ are equal up to a cyclic
shift. This can be deduced for example by combining Lemma~\ref{lem:=000020WordRecursion}
with \cite[prop. 2.2.24]{Lothaire2002} (using that our words $W_{k}$
are the words $s_{k}$ in \cite[prop. 2.2.24]{Lothaire2002} up to
a cyclic shift). This means that the matrices $\left.\Ham\right|_{[1,q_{k}]}$
and $\left.H_{\pqk,V}\right|_{[1,q_{k}]}$ are unitarily equivalent
(since their diagonals are equal up to a cyclic shift). Using this
observation and passing to the subsequence $n_{k}:=q_{k},\,k\in\N$,
in the limit of (\ref{eq:=000020IDS=000020definition}) yields 
\begin{align*}
\IDS(E) & =\lim_{k\to\infty}\frac{\#\set{\lambda\in\sigma\left(\left.\Ham\right|_{[1,q_{k}]}\right)}{\lambda\leq E}}{q_{k}}\\
 & =\lim_{k\to\infty}\frac{\#\set{\lambda\in\sigma\left(\left.H_{\pqk,V}\right|_{[1,q_{k}]}\right)}{\lambda\leq E}}{q_{k}}.
\end{align*}

At this point the reader is referred to Appendix~\ref{sec:=000020Floquet-Bloch=000020Theory}
and in particular Proposition~\ref{prop:=000020Floquet-Bloch} where
the Floquet-Bloch theory is summarized. Assigning to each $H_{\pqk,V}$
a $q_{k}\times q_{k}$-hermitian matrix $H_{\cok,V}(\theta)$ with
$\theta\in[0,\pi]$, the union (over $\theta\in[0,\pi]$) of these
matrices eigenvalues equals to $\sigma_{\co_{k}}(V)=\sigma(H_{\pqk,V})$,
see Proposition~\ref{prop:=000020Floquet-Bloch}. From now on, set
$\theta=0$. The matrices $\left.H_{\pqk,V}\right|_{[1,q_{k}]}$ and
$H_{\cok,V}(0)$ differ by a matrix of rank two using eq.~(\ref{eq:=000020finite-dim=000020Ham=000020matrices}).
Hence, the counting functions $\#\set{\lambda\in\sigma\left(\left.H_{\pqk,V}\right|_{[1,q_{k}]}\right)}{\lambda\leq E}$
and $\#\set{\lambda\in\sigma\left(H_{\cok,V}(0)\right)}{\lambda\leq E}$
differ by at most two\footnote{To be more precise, the difference is a traceless matrix of rank two.
By appropriately applying perturbation theory one can show that this
results in at most a difference of one in the eigenvalue counting,
see e.g. \cite[cor. III.2]{BanBecLoe_24}.}. Hence, we may replace the numerator in the limit above to get 
\[
\IDS(E)=\lim_{k\to\infty}\frac{\#\set{\lambda\in\sigma\left(H_{\cok,V}(0)\right)}{\lambda\leq E}}{q_{k}}.
\]
 According to Proposition~\ref{prop:=000020Floquet-Bloch} and in
particular eq.~(\ref{eq:=000020Floquet-Bloch=000020-=000020spectral=000020bands=000020decomposition}),
$H_{\cok,V}(0)$ has exactly one eigenvalue in each spectral band
$\sigma_{\co_{k}}(V)$. Thus, the number of spectral bands $I$ in
$\sigma_{\co_{k}}(V)=\sigma(H_{\pqk,V})$ satisfying $I\precd\{E\}$
differs at most by one from $\#\set{\lambda\in\sigma\left(H_{\cok,V}(0)\right)}{\lambda\leq E}$.
Hence, (\ref{eq:=000020IDS=000020via=000020band=000020counting})
follows.
\end{proof}
\begin{rem*}
The equivalence between (\ref{eq:=000020IDS=000020definition}) and
(\ref{eq:=000020IDS=000020via=000020band=000020counting}) was conjectured
in \cite[sec. 5]{Casdagli1986}. An explanation of this equivalence
is given at the end of section 2 in \cite{Raym95}. The proof above
contains an elaborated argument\footnote{The denominators in (\ref{eq:=000020IDS=000020definition}) and (\ref{eq:=000020IDS=000020via=000020band=000020counting})
may differ by one, even if $E$ is in a spectral gap, as opposed to
what is written in \cite{Raym95}. This was also pointed out to LR
by Mark Embree. However, this does not affect the value to which the
limit converges.}.
\end{rem*}

\subsection{Symbolic representation (coding) of the periodic spectra\protect\label{subsec:=000020Coding}}

Fix $\alpha\notin\Q$ and consider the spectrum $\sigma\left(\Ham\right)$
for $V>4$. We use the spectra of periodic operators to provide covers
of $\sigma\left(\Ham\right)$ allowing us to represent the IDS as
a power series, see eq.~(\ref{lem:LIsACover}) in Section~\ref{subsec:=000020formula=000020for=000020IDS}.
Towards this we define.
\begin{defn}
\label{def:CoveringSets} For $\co\in\Co$ with $[\co,1]\in\Co$ we
define the level $\L_{\co;V}$ by
\[
\L_{\co,V}:=\left\{ I:\begin{array}{l}
I\text{ is a spectral band of}~\sigma_{\co}(V)~\textrm{of type \ensuremath{A} or \ensuremath{B} }\text{ or}\\
I\text{ is a spectral band of}~\sigma_{[\co,1]}(V)~\textrm{of type \ensuremath{B} }
\end{array}\right\} .
\]
\end{defn}

We equip the set $\L_{\co,V}$ with the order relation $\precd$,
i.e $[a,b]\precd[c,d]$ if $b<c$, which was already introduced before
Definition~\ref{def:ForwardType}. This is in fact a total order
relation on $\L_{\co,V}$ if $V>4$, as is shown next in Lemma~\ref{lem:PrecIsTotal}.

Let us first consider some examples. The lowest level is $\L_{[0,0],V}=\{[-2,2],[V-2,V+2]\}$.
Observe that if $V>4$, then $[-2,2]\precd[V-2,V+2]$. A sketch of
$\L_{[0,0],V}$ and other sets can be found in Figure~\ref{fig:labelset=000020example}.
Observe that $L_{[0,0],V}$ and $L_{[0,0,1],V}$ both contain the
interval $[-2+V,2+V]$. Thus, these sets $L_{[0,0],V}$ and $L_{[0,0,1],V}$
are not disjoint, in general.

\begin{figure}[hbt]
\includegraphics[scale=0.74]{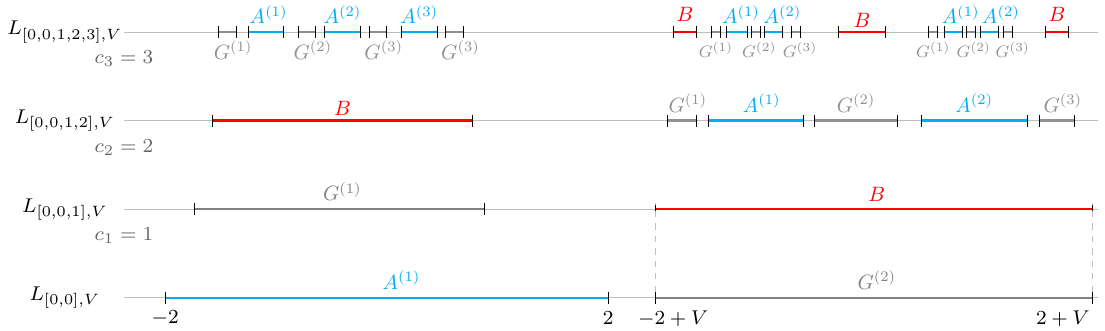}\caption{Visualisation of the sets $\protect\L_{\protect\co,V}$ for some $\protect\co\in\protect\Co$.
A label $A^{(i)},B$ or $G^{(i)}$ is assigned to each spectral band.
This label assignment describes the coding map $b_{V}$ from Proposition~\ref{prop:Coding_Spectrum}
for this particular cases.}
\label{fig:labelset=000020example}
\end{figure}

\begin{lem}
\label{lem:PrecIsTotal} Let $V>4$ and $\co,[\co,1]\in\Co$. Then,
for all $I,I'\in\L_{\co,V}$, we either have $I=I'$ or $I\cap I'=\emptyset$.
In particular, $(\L_{\co,V},\precd)$ is totally ordered.
\end{lem}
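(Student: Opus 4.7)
The plan is to reduce the statement to pairwise disjointness of distinct elements of $\L_{\co,V}$, from which the total ordering under $\precd$ is immediate: if $I=[a,b]$ and $I'=[c,d]$ are distinct closed bounded intervals with $I\cap I'=\emptyset$, then necessarily either $b<c$ or $d<a$, i.e.\ $I\precd I'$ or $I'\precd I$.

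To establish pairwise disjointness I would split into three cases according to which of $\sigma_{\co}(V)$ and $\sigma_{[\co,1]}(V)$ the bands $I$ and $I'$ come from. When both lie in $\sigma_{\co}(V)$, Proposition~\ref{prop:=000020there=000020are=000020q=000020spectral=000020bands} (which crucially uses $V\neq0$) guarantees that the spectral bands are pairwise disjoint connected components---not even sharing endpoints---so either $I=I'$ or $I\cap I'=\emptyset$. The case where both lie in $\sigma_{[\co,1]}(V)$ is handled in exactly the same way.

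The only non-trivial case is when $I\subseteq\sigma_{\co}(V)$ is a spectral band (necessarily of type $A$ or $B$ by Theorem~\ref{Thm-V>4_AllTypeA_B}) while $I'\subseteq\sigma_{[\co,1]}(V)$ is a spectral band of type $B$. Here I would exploit the definition of backward type $B$: the band $I'$ is contained in some spectral band of $\sigma_{[\co,1,-1]}(V)$. I then invoke Corollary~\ref{cor:=000020monotonicity=000020of=000020single=000020E} with the role of $\co$ replaced by $[\co,1]$ (permissible since $[\co,1,-1]\in\Co$): every $E\in\sigma_{[\co,1]}(V)$ lies in \emph{exactly one} of $\sigma_{[\co,1,0]}(V)$ and $\sigma_{[\co,1,-1]}(V)$. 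The second clause of the definition~(\ref{eq:=000020phi=000020map=000020from=000020C=000020to=000020Q}) applied to the tail $\ldots,1,0$ yields $\varphi([\co,1,0])=\varphi(\co)$, and hence by Lemma~\ref{lem:=000020Trace=000020Depends=000020On=000020Value=000020Only} one has $\sigma_{[\co,1,0]}(V)=\sigma_{\co}(V)$. Consequently, $I'\subseteq\sigma_{[\co,1,-1]}(V)$ forces $I'\cap\sigma_{\co}(V)=\emptyset$, so in particular $I'\cap I=\emptyset$.

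There is no serious obstacle here: the lemma is essentially a bookkeeping consequence of the three-intersection property (Proposition~\ref{prop:ThreeConsecSpecCantIntersect}) packaged via Corollary~\ref{cor:=000020monotonicity=000020of=000020single=000020E}. The only subtlety to be careful with is the collapse $\varphi([\co,1,0])=\varphi(\co)$ coming from the second clause of~(\ref{eq:=000020phi=000020map=000020from=000020C=000020to=000020Q}), which is precisely what explains why a type $B$ band of the ``finer'' spectrum $\sigma_{[\co,1]}(V)$ must sit in the complement of the ``coarser'' spectrum $\sigma_{\co}(V)$---the very feature that allows such bands to serve as gap placeholders in the coding to be introduced.
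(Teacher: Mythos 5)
Your proposal is correct and follows essentially the same route as the paper: same-spectrum pairs are disjoint by Proposition~\ref{prop:=000020there=000020are=000020q=000020spectral=000020bands}, and the mixed case uses that a type-$B$ band of $\sigma_{[\co,1]}(V)$ sits inside $\sigma_{[\co,1,-1]}(V)=\sigma_{[\co,0]}(V)$, which together with the three-intersection property forces disjointness from any band of $\sigma_{\co}(V)$. Your invocation of Corollary~\ref{cor:=000020monotonicity=000020of=000020single=000020E} (applied to $[\co,1]$) is just a repackaging of the paper's direct appeal to Proposition~\ref{prop:ThreeConsecSpecCantIntersect}, so the arguments coincide in substance.
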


\begin{proof}
Let $I,I'\in\L_{\co,V}$. We only need to show that $I\cap I'=\emptyset$,
since then either $I\precd I'$ or $I'\precd I$ follows. If $I,I'$
are both spectral bands of $\sigma_{\co}(V)$ respectively $\sigma_{[\co,1]}(V)$,
then $I\cap I'=\emptyset$ as they are disjoint connected components
of the same spectrum, see Proposition~\ref{prop:=000020there=000020are=000020q=000020spectral=000020bands}.
Otherwise, assume $I\subseteq\sigma_{\co}(V)$ and $I'\subseteq\sigma_{[\co,1]}(V)$.
Then $I'$ is necessarily of type $B$ and in particular (by the backward
type $B$ property) there exists a spectral band $J'\subseteq\sigma_{[\co,1,-1]}(V)$
such that $I'\subseteq J'$. But $\sigma_{[\co,1,-1]}(V)=\sigma_{[\co,0]}(V)$
and by Proposition~\ref{prop:ThreeConsecSpecCantIntersect} we have
\[
\sigma_{\co}(V)\cap\sigma_{[\co,1]}(V)\cap\sigma_{[\co,0]}(V)=\emptyset,
\]
so that $I\cap I'=I\cap J'\cap I'=\emptyset$.
\end{proof}
Let $\alpha\in[0,1]\setminus\Q$ with continued fraction expansion
given by $\left(c_{k}\right)_{k=0}^{\infty}$. Define the finite continued
fraction expansions $\co_{k}:=[0,c_{0},c_{1},\ldots,c_{k}]\in\Co$
for $k\in\N_{0}$. In the following, we say $\L_{\co_{k},V}$ is a
cover of a set $A\subseteq\R$ if $A\subseteq\bigcup_{I\in\L_{\co_{k},V}}I$.
\begin{lem}
\label{lem:LIsACover} Let $V>4$ and $\alpha\in[0,1]\setminus\Q$
with continued fraction expansion $\left(c_{k}\right)_{k=0}^{\infty}$
and $\co_{k}:=[0,c_{0},c_{1},\ldots,c_{k}]\in\Co$ for $k\in\N_{0}$.
Then the following holds.
\begin{enumerate}
\item \label{enu:LIsACover_=000020of=000020k+1}For all $k\in\N_{0}$, $\L_{\co_{k},V}$
is a cover of $\L_{\co_{k+1},V}$.
\item \label{enu:LIsACover_=000020of=000020spectrum=000020alpha} For all
$k\in\N_{0}$, $\L_{\co_{k},V}$ is a cover of $\sigma(\Ham)$.
\item \label{enu:LIsACover_=000020convergence=000020sigma=000020alpha}For
all $k\in\N_{0}$, $\Lambda_{k}(V):=\sigma_{\co_{k}}(V)\cup\sigma_{[\co_{k},1]}(V)=\bigcup_{I\in\L_{\co_{k},V}}I$.
Furthermore, $\lim_{k\to\infty}\bigcup_{I\in\L_{\co_{k},V}}I=\bigcap_{k\in\N_{0}}\Lambda_{k}(V)=\sigma(\Ham)$
where the limit is taken in the Hausdorff metric.
\end{enumerate}
\end{lem}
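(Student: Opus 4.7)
The plan is to establish the set-theoretic identity $\bigcup_{I\in\L_{\co_k,V}} I = \Lambda_k(V)$ first, since this is the engine from which items (\ref{enu:LIsACover_=000020of=000020k+1}) and (\ref{enu:LIsACover_=000020of=000020spectrum=000020alpha}) will follow quickly, with the remaining convergence content of (\ref{enu:LIsACover_=000020convergence=000020sigma=000020alpha}) being a direct quotation of Proposition~\ref{prop:=000020Convergence=000020to=000020aperiodic=000020spectrum}.

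For that identity, the inclusion $\supseteq$ is immediate since every element of $\L_{\co_k,V}$ is by definition a spectral band of $\sigma_{\co_k}(V)$ or of $\sigma_{[\co_k,1]}(V)$. For $\subseteq$, I would take $E\in\Lambda_k(V)$. If $E\in\sigma_{\co_k}(V)$, then $E$ lies in a spectral band of $\sigma_{\co_k}(V)$, which is of type $A$ or $B$ by Corollary~\ref{cor:=000020Weak=000020Backward=000020Type} together with Theorem~\ref{Thm-V>4_AllTypeA_B}, and hence in $\L_{\co_k,V}$. If $E\in\sigma_{[\co_k,1]}(V)$, the same dichotomy puts $E$ in a spectral band $J$ of type $A$ or $B$. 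If $J$ is of type $B$, then $J\in\L_{\co_k,V}$ directly. If $J$ is of type $A$, then by the backward type $A$ property $J$ is contained in some spectral band $I$ of $\sigma_{[\co_k,1,0]}(V)=\sigma_{\co_k}(V)$ (using Lemma~\ref{lem:=000020Trace=000020Depends=000020On=000020Value=000020Only}), and $I$ lies in $\L_{\co_k,V}$ by the first case. Combining this identity with Proposition~\ref{prop:=000020Convergence=000020to=000020aperiodic=000020spectrum}, which supplies both the monotonicity $\Lambda_{k+1}(V)\subseteq\Lambda_k(V)$ and the Hausdorff convergence to $\bigcap_k\Lambda_k(V)=\sigma(\Ham)$, settles (\ref{enu:LIsACover_=000020convergence=000020sigma=000020alpha}); item (\ref{enu:LIsACover_=000020of=000020spectrum=000020alpha}) then follows at once from $\sigma(\Ham)\subseteq\Lambda_k(V)=\bigcup_{I\in\L_{\co_k,V}} I$.

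For (\ref{enu:LIsACover_=000020of=000020k+1}) I would argue via the forward-type structure. Write $\co_{k+1}=[\co_k,c_{k+1}]$. Each $J\in\L_{\co_{k+1},V}$ is a spectral band of $\sigma_{[\co_k,c_{k+1}]}(V)$ of type $A$ or $B$, or a type $B$ spectral band of $\sigma_{[\co_k,c_{k+1},1]}(V)$. In each of the three cases I would apply Corollary~\ref{cor:EverySpectralBand=00003DIcmi=000020or=000020Icmnj}: if $J$ is of type $A$ in $\sigma_{\co_{k+1}}(V)$, then $J = I_{[\co_k,c_{k+1}]}^i \strict I$ for some spectral band $I$ of $\sigma_{\co_k}(V)$; if $J$ is of type $B$ in $\sigma_{\co_{k+1}}(V)$, the ``In addition'' clause of that corollary supplies a unique type $B$ spectral band $\widetilde{I}\subseteq\sigma_{[\co_k,1]}(V)$ with $J\subseteq\widetilde{I}$; and if $J$ is a type $B$ band of $\sigma_{[\co_{k+1},1]}(V)=\sigma_{[\co_k,c_{k+1},1]}(V)$, then applying part (b) of the corollary to $\co'=[\co_k,c_{k+1},1]$ gives $J=I_{[\co_k,c_{k+1},1]}^j\subseteq I$ for some spectral band $I$ of $\sigma_{\co_k}(V)$. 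In every case the containing band is of type $A$ or $B$ by Theorem~\ref{Thm-V>4_AllTypeA_B} and hence already lies in $\L_{\co_k,V}$.

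The main obstacle I expect is the middle case: a type $B$ band of $\sigma_{\co_{k+1}}(V)$ is indexed by the forward structure starting at level $\co_{k-1}$, so its natural parent via Definition~\ref{def:ForwardType} lives one level too deep to be in $\L_{\co_k,V}$. The bridge is the tower property (Corollary~\ref{cor:Tower=000020Property}) together with the final clause of Corollary~\ref{cor:EverySpectralBand=00003DIcmi=000020or=000020Icmnj}, which together produce the unique type $B$ band of $\sigma_{[\co_k,1]}(V)$ sitting above $J$; this is precisely the element of $\L_{\co_k,V}$ that is needed to complete the cover.
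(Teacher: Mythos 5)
Your proposal is correct and follows essentially the same route as the paper: part (a) is proved by the same three-case analysis on the elements of $\L_{\co_{k+1},V}$, with the type $B$ bands of $\sigma_{\co_{k+1}}(V)$ handled via Corollary~\ref{cor:EverySpectralBand=00003DIcmi=000020or=000020Icmnj}~(\ref{enu:EverySpectralBand=00003DIcmi=000020or=000020Icmnj=000020_=000020type=000020B}), and parts (b) and (c) reduce to the identity $\bigcup_{I\in\L_{\co_k,V}}I=\Lambda_k(V)$ plus Proposition~\ref{prop:=000020Convergence=000020to=000020aperiodic=000020spectrum}. The only (harmless) cosmetic difference is that in your first and third cases of (a) you invoke the corollary where the paper gets the containment more directly from the backward-type property, namely $J\subseteq\sigma_{[\co_{k+1},0]}(V)=\sigma_{\co_k}(V)$ respectively $J\subseteq\sigma_{[\co_{k+1},1,-1]}(V)=\sigma_{\co_k}(V)$.
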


\begin{proof}
If $I\subseteq\sigma_{[\co_{k+1},1]}(V)$ is of type $B$, then it
is contained in $\sigma_{[\co_{k+1},1,-1]}(V)=\sigma_{\co_{k}}(V)$
using Lemma~\ref{lem:=000020Trace=000020Depends=000020On=000020Value=000020Only}
and Corollary~\ref{cor:FirstTraceIds}. Thus, $\L_{\co_{k},V}$ covers
$I$ and so $\L_{\co_{k},V}$ is a cover of all spectral bands in
$\sigma_{[\co_{k+1},1]}(V)$ of type $B$. If $I\subseteq\sigma_{\co_{k+1}}(V)$
is of type $A$, then it is contained in $\sigma_{[\co_{k+1},0]}(V)=\sigma_{\co_{k}}(V)$
using Lemma~\ref{lem:=000020Trace=000020Depends=000020On=000020Value=000020Only}
and Corollary~\ref{cor:FirstTraceIds}. Thus, $\L_{\co_{k},V}$ covers
$I$. If $I\subseteq\sigma_{\co_{k+1}}(V)$ is of type $B$, then
Corollary~\ref{cor:EverySpectralBand=00003DIcmi=000020or=000020Icmnj}~(\ref{enu:EverySpectralBand=00003DIcmi=000020or=000020Icmnj=000020_=000020type=000020B})
imply that there is a $J\subseteq\sigma_{[\co_{k},1]}(V)$ of type
$B$ with $I\subseteq J$. Thus, $\L_{\co_{k},V}$ covers $I$ as
well. Combined with the previous considerations, we obtain that $\L_{\co_{k},V}$
is a cover of $\sigma_{\co_{k+1}}(V)$ and all spectral bands in $\sigma_{[\co_{k+1},1]}(V)$
of type $B$, namely $\L_{\co_{k},V}$ is a cover of $\L_{\co_{k+1},V}$.
Thus, (\ref{enu:LIsACover_=000020of=000020k+1}) is proven.

Having this, (\ref{enu:LIsACover_=000020of=000020spectrum=000020alpha})
follows from Proposition~\ref{prop:=000020Convergence=000020to=000020aperiodic=000020spectrum}.

By definition, we have $\bigcup_{I\in\L_{\co_{k},V}}I\subseteq\Lambda_{k}(V)$.
Moreover, every spectral band of $\sigma_{[\co_{k},1]}(V)$ of type
$A$ is contained in $\sigma_{\co_{k}}(V)$. Thus, $\bigcup_{I\in\L_{\co_{k},V}}I=\Lambda_{k}(V)$
and now (\ref{enu:LIsACover_=000020convergence=000020sigma=000020alpha})
follows from Proposition~\ref{prop:=000020Convergence=000020to=000020aperiodic=000020spectrum}.
\end{proof}
By the first part of the last lemma, every spectral band of $\L_{\co_{k+1},V}$
is contained in a unique spectral band of $\L_{\co_{k},V}$. We may
use this in order to construct a symbolic representation (coding)
of each spectral band in level $\L_{\co_{k},V}$ in terms of the spectral
bands in all previous levels in which it is recursively included.
\begin{defn}
\label{def:SpectralCodes} Let $\alpha\in[0,1]\setminus\Q$ with continued
fraction expansion $\left(c_{k}\right)_{k=0}^{\infty}$ and $\co_{k}:=[0,c_{0},c_{1},\ldots,c_{k}]\in\Co$
for $k\in\N_{0}$. Consider the countable alphabet $\Aa:=\{A^{(i)}:i\in\N\}\cup\{G^{(i)}:i\in\N\}\cup\{B\}$.
A (finite or infinite) \emph{spectral-$\alpha$-code} is either a
finite sequence $\wo=\left(\w(0),\w(1),\ldots,\w(k)\right)\in\Aa^{k+1}$
or an infinite sequence $\wo=\left(\w(0),\w(1),\ldots\right)\in\Aa^{\N_{0}}$
satisfying the following:
\begin{itemize}
\item[($\Sigma$1)]  $\w(0)\in\{A^{(1)},G^{(2)}\}$,
\item[($\Sigma$2)]  If $\w(j)\in\{A^{(i)}:i\in\N\}$ then $\w(j+1)\in\{A^{(i)}:1\leq i\leq c_{j+1}-1\}\cup\{G^{(i)}:1\leq i\leq c_{j+1}\}$,
\item[($\Sigma$3)]  If $\w(j)=B$ then $\w(j+1)\in\{A^{(i)}:1\leq i\leq c_{j+1}\}\cup\{G^{(i)}:1\leq i\leq c_{j+1}+1\}$,
\item[($\Sigma$4)]  If $\w(j)\in\{G^{(i)}:i\in\N\}$ then $\w(j+1)=B$.
\end{itemize}
The set of all infinite spectral-$\alpha$-codes will be denotes $\Sigma_{\alpha}$.
Similarly, the set of all spectral-$\alpha$-codes in $\Aa^{k+1}$
is denoted by $\Sigma_{\co_{k}}$. Moreover, the set $\Scks\subseteq\Sck$
is defined as those $\wo=(\w(0),\dots,\w(k))\in\Sck$, who additionally
satisfy
\begin{itemize}
\item[($\Sigma$5)] $\w(k)\in\{A^{(i)}:i\in\N\}\cup\{B\}$.
\end{itemize}
\end{defn}

A depiction of conditions $(\Sigma2)-(\Sigma4)$ in Definition~\ref{def:SpectralCodes}
appears in Figure~\ref{fig:LocalRules}. 

\begin{figure}
\includegraphics[scale=0.7]{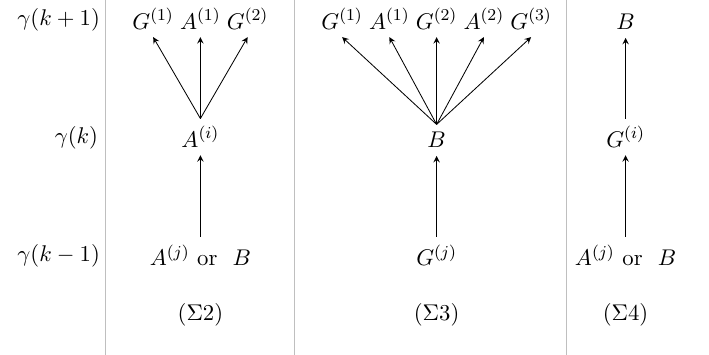}

\caption{Visualisation of the properties $(\Sigma2),(\Sigma3),(\Sigma4)$ with
$c_{k+1}=2$. Each figure shows the possible descendens $\protect\w(k+1)$
of $\protect\w(k)$ as well as from which element in $\protect\Aa$,
the element $\protect\w(k)$ could come from.}
\label{fig:LocalRules}
\end{figure}

\begin{rem}
There is a merit in embedding all the codes defined above in a tree
graph. Our depiction of the codes in Figure~\ref{fig:LocalRules}
and Figure~\ref{fig:CountingExample} uses this point of view. The
tree representation explicitly appears in \cite{BanBecLoe_24} using
a directed rooted tree with a strict (i.e., irreflexive) partial order
relation defined on its vertex set. It is called the \emph{spectral-$\alpha$-tree}
in \cite{BanBecLoe_24}. Here, we confine ourselves to the original
presentation of \cite{Raym95} using the symbolic representation of
codes (and appeal to the tree only via the figures). Finally, we note
that in \cite{BanBecLoe_24} the vertices of the tree graph are labeled
only by $A$ and $B$, as opposed to using also the label $G^{(i)}$
in the current paper.
\end{rem}

The previous definition is in close relation with the forward property
of a spectral band, see Definition~\ref{def:ForwardType}. This is
made precise in the following Proposition~\ref{prop:Coding_Spectrum}.
Before, we define a partial order $\lessdot$ on $\Aa:=\{A^{(i)}:i\in\N\}\cup\{G^{(i)}:i\in\N\}\cup\{B\}$
by setting 
\[
G^{(1)}\lessdot A^{(1)}\lessdot G^{(2)}\lessdot A^{(2)}\lessdot\ldots\,.
\]

Let $\alpha\in[0,1]\setminus\Q$ with infinite continued fraction
expansion $\left(c_{k}\right)_{k=0}^{\infty}$, convergents $\varphi(\co_{k})$
and $\co_{k}=[0,c_{0},c_{1},\ldots,c_{k}]\in\Co$ for $k\in\N_{0}$.
If $\wo,\eta\in\bigsqcup_{k\in\N_{0}}\Sigma_{\co_{k}}\cup\Sigma_{\alpha}$,
define

\[
\wo\lessdot\eta:\iff\begin{cases}
\w(0)\lessdot\eta(0),\text{ or }\\
\w(j)=\eta(j)\text{ and }\w(j+1)\lessdot\eta(j+1)\text{ for some }j\in\Nz.
\end{cases}
\]
This defines a partial order on $\bigsqcup_{k\in\N_{0}}\Sigma_{\co_{k}}$
respectively $\Sigma_{\alpha}$. We continue defining an encoding
of $\bigcup_{k\in\N_{0}}L_{\co_{k},V}$ via the spectral-$\alpha$-codes
$\bigsqcup_{k\in\N_{0}}\Sigma_{\co_{k}}$ preserving the partial order
relations, the types and inclusions. This statement deviates slightly
from \cite{Raym95} and follows the lines of \cite[prop. 7.1]{BanBecLoe_24}.
The reader is referred to Figure~\ref{fig:labelset=000020example},
where an example of some spectra are plotted together with the associated
code as described in the following proposition.
\begin{prop}
\label{prop:Coding_Spectrum}Let $\alpha\in[0,1]\setminus\Q$ with
infinite continued fraction expansion $\left(c_{k}\right)_{k=0}^{\infty}$
and $\co_{k}=[0,c_{0},c_{1},\ldots,c_{k}]\in\Co$ for $k\in\N_{0}$.
Then there exists for each $V>4$, a unique map 
\[
b_{V}:\bigsqcup_{k\in\N_{0}}\Sigma_{\co_{k}}\to\bigcup_{k\in\N_{0}}\Lck
\]
with the following properties:
\end{prop}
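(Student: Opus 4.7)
The plan is to define $b_V$ by recursion on the length $k+1$ of the spectral-$\alpha$-code, using the hierarchical structure of spectral bands established in Theorem~\ref{Thm-V>4_AllTypeA_B} together with the tower property (Corollary~\ref{cor:Tower=000020Property}). The guiding principle is that the local rules $(\Sigma 2)$, $(\Sigma 3)$, $(\Sigma 4)$ are tailored exactly to the three mechanisms that produce descendants in $\L_{\co_{k+1},V}$ from a band in $\L_{\co_k,V}$: a type-$A$ parent gives $c_{k+1}-1$ interlaced type-$A$ descendants plus $c_{k+1}$ type-$B$ descendants (the $G$'s); a type-$B$ parent gives $c_{k+1}$ interlaced type-$A$ descendants plus $c_{k+1}+1$ type-$B$ descendants; and a type-$B$ band sitting in $\sigma_{[\co_k,1]}(V)$ contains by the tower property a single type-$B$ descendant in $\sigma_{[\co_{k+1},1,-1]}(V)$ via a chain in $\sigma_{[\co_k,2]},\ldots,\sigma_{[\co_k,c_{k+1}]}$. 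These three rules together uniquely determine $b_V$ on codes of length $k+2$ from its values on codes of length $k+1$.

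For the base case $k=0$, the content of Example~\ref{exa:=000020backward=000020type=000020=00005B0,0=00005D=000020and=000020=00005B0,0,1=00005D} and Example~\ref{exa:StructureBands_c_1} give $\L_{[0,0],V}=\{[-2,2],\,[V-2,V+2]\}$, the first of type $A$ in $\sigma_{[0,0]}(V)$ and the second of type $B$ in $\sigma_{[0,0,1]}(V)$, so I set $b_V(A^{(1)}):=[-2,2]$ and $b_V(G^{(2)}):=[V-2,V+2]$. For the inductive step, let $\wo=(\w(0),\ldots,\w(k+1))\in\Scks$ extended by $(\Sigma1)$--$(\Sigma4)$, and write $\wo':=(\w(0),\ldots,\w(k))\in\Sigma_{\co_k}$ so that $\Ic:=b_V(\wo')\in\L_{\co_k,V}$ is already defined. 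In the case $\w(k)\in\{A^{(i)}:i\in\N\}\cup\{B\}$, $\Ic$ is a spectral band of $\sigma_{\co_k}(V)$ of type $A$ or type $B$; applying Theorem~\ref{Thm-V>4_AllTypeA_B} with $m=c_{k+1}$ furnishes unique spectral bands $\{I_{[\co_k,c_{k+1}]}^{i}\}_{i=1}^{M}\subseteq\sigma_{\co_{k+1}}(V)$ and $\{I_{[\co_k,c_{k+1},1]}^{j}\}_{j=1}^{M+1}\subseteq\sigma_{[\co_{k+1},1]}(V)$ inside $\Ic$, obeying the strict interlacing~(\ref{eq:Interlacing_strict}), with $M=c_{k+1}-1$ or $M=c_{k+1}$ respectively. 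I then set $b_V(\wo):=I_{[\co_k,c_{k+1}]}^{i}$ if $\w(k+1)=A^{(i)}$ and $b_V(\wo):=I_{[\co_k,c_{k+1},1]}^{j}$ if $\w(k+1)=G^{(j)}$. In the remaining case $\w(k)\in\{G^{(i)}:i\in\N\}$, the rule $(\Sigma4)$ forces $\w(k+1)=B$; here $\Ic$ is a type-$B$ band of $\sigma_{[\co_k,1]}(V)$, and iterating Lemma~\ref{lem:=000020Tower=000020property} ($c_{k+1}-1$ times) produces a unique type-$B$ band $J\strict \Ic$ inside $\sigma_{[\co_k,c_{k+1}]}(V)=\sigma_{\co_{k+1}}(V)$, which I assign as $b_V(\wo)$.

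The properties of $b_V$ follow from the construction. Preservation of types and strict inclusion $b_V(\wo)\strict b_V(\wo')$ are explicit in Theorem~\ref{Thm-V>4_AllTypeA_B} and the tower property. Preservation of the order $\precd$ reduces by induction to preservation at a single recursive step, where it follows from the strict interlacing (\ref{eq:Interlacing_strict}) inside each parent band, combined with Lemma~\ref{lem:PrecIsTotal} to compare siblings with different parents. Every element of $\bigcup_{k\in\N_0}\L_{\co_k,V}$ is hit because Corollary~\ref{cor:EverySpectralBand=00003DIcmi=000020or=000020Icmnj} states that every spectral band in $\L_{\co_{k+1},V}$ arises as one of the descendants enumerated above from a (unique) band of $\L_{\co_k,V}$. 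Uniqueness of the entire map $b_V$ is then inherited from the local uniqueness at each recursive step: in the forward case from the uniqueness clause of Theorem~\ref{Thm-V>4_AllTypeA_B}, and in the tower case from Corollary~\ref{cor:Tower=000020Property}, together with the forced assignment on the base $\Sigma_{\co_0}$.

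The main obstacle is the index bookkeeping that matches the local superscripts $A^{(i)}$ and $G^{(j)}$ on the symbolic side with the correct position inside the interlacing~(\ref{eq:Interlacing_strict}) on the spectral side. Concretely, I need that inside each parent $\Ic$ of type $A$ the descendants of $\L_{\co_{k+1},V}$ occur in the alternating order $G^{(1)}\precd A^{(1)}\precd G^{(2)}\precd\cdots\precd G^{(c_{k+1})}$, and inside a parent of type $B$ in the order $G^{(1)}\precd A^{(1)}\precd\cdots\precd A^{(c_{k+1})}\precd G^{(c_{k+1}+1)}$. This amounts to a careful induction on $k$ showing that the order convention \emph{at the boundary} between two adjacent parents is consistent (i.e., the rightmost descendant of one parent precedes the leftmost descendant of the next). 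This follows from Proposition~\ref{prop:ThreeConsecSpecCantIntersect} applied to the $G$-descendants living in $\sigma_{[\co_{k+1},1]}(V)$, but the matching of indices requires tracking whether $\left.\tc\right|_{\Ic}$ is increasing or decreasing (as encountered in the proof of Lemma~\ref{lem:=000020backward=000020implies=000020forward=000020-=000020partial}) and, if need be, reversing the labelling.
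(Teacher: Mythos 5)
Your proposal is correct and follows essentially the same route as the paper: induction on the code length, with the forward-type structure (Theorem~\ref{Thm-V>4_AllTypeA_B} / Proposition~\ref{prop:=000020backward=000020implies=000020forward}) handling parents ending in $A^{(i)}$ or $B$, the tower property handling parents ending in $G^{(i)}$, surjectivity from Corollary~\ref{cor:EverySpectralBand=00003DIcmi=000020or=000020Icmnj}, and order preservation from the strict interlacing together with Lemma~\ref{lem:PrecIsTotal}. The only cosmetic difference is that the paper treats $k=1$ as a second explicit base case (since its surjectivity argument invokes Corollary~\ref{cor:EverySpectralBand=00003DIcmi=000020or=000020Icmnj} only for $k\geq2$) and deduces uniqueness of $b_V$ globally from the two total orders rather than from local uniqueness at each step, but neither affects the validity of your argument.
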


\begin{enumerate}
\item \label{enu:Coding_Spectrum_bijective}For each $k\in\N_{0}$, $b_{V}$
bijectively maps $\Sigma_{\co_{k}}$ onto $\Lck$.
\item \label{enu:Coding_Spectrum_Inclusion}For each $k\in\N$, we have
for all $\gamma\in\Sigma_{\co_{k-1}}$ and $\eta=(\eta(0),\ldots,\eta(k))\in\Sigma_{\co_{k}}$,
\[
\gamma=(\eta(0),\ldots,\eta(k-1))\quad\Leftrightarrow\quad b_{V}(\eta)\subseteq b_{V}(\gamma)\quad\Leftrightarrow\quad b_{V}(\gamma)\cap b_{V}(\eta)\neq\emptyset.
\]
\item \label{enu:Coding_Spectrum_order_preserving}Let $\wo,\eta\in\bigsqcup_{k\in\N_{0}}\Sigma_{\co_{k}}$.
Then $\wo\lessdot\eta$ if and only if $b_{V}(\gamma)\precd b_{V}(\eta)$.
\item \label{enu:Coding_Spectrum_Type}If $\gamma\in\Sigma_{\co_{k}}$ for
some $k\in\N_{0}$, then
\begin{enumerate}
\item $\gamma(k)\in A^{(i)}$ if and only if $b_{V}(\gamma)\subseteq\sigma_{\co_{k}}(V)$
is of type $A$.
\item $\gamma(k)\in B$ if and only if $b_{V}(\gamma)\subseteq\sigma_{\co_{k}}(V)$
is of type $B$.
\item $\gamma(k)\in G^{(i)}$ if and only if $b_{V}(\gamma)\subseteq\sigma_{[\co_{k},1]}(V)$
is of type $B$ and $b_{V}(\gamma)\cap\sigma_{\co_{k}}(V)=\emptyset$.
\end{enumerate}
\end{enumerate}
\begin{rem}
Note that Proposition~\ref{prop:Coding_Spectrum}~(\ref{enu:Coding_Spectrum_Type})
asserts that the spectral band $b_{V}(\gamma)$ for $\gamma\in\Sigma_{\co_{k}}$
is contained in a spectral gap of $\sigma_{\co_{k}}(V)$ if and only
if $\gamma(k)=G^{(i)}$ for some $i\in\N$. This in particular explains
the notation $G^{(i)}$ standing for a gap.
\end{rem}

\begin{proof}
We first note that every such map satisfying (\ref{enu:Coding_Spectrum_bijective})
and (\ref{enu:Coding_Spectrum_order_preserving}) must be unique since
$\Sigma_{\co_{k}}$ is totally ordered and $\Lck$ is totally ordered
by Lemma~\ref{lem:PrecIsTotal}.

First, suppose that such a map exists and justify that the equivalence
$b_{V}(\eta)\subseteq b_{V}(\gamma)\Leftrightarrow b_{V}(\gamma)\cap b_{V}(\eta)\neq\emptyset$
in (\ref{enu:Coding_Spectrum_Inclusion}) holds. Suppose $b_{V}(\gamma)\cap b_{V}(\eta)\neq\emptyset$
holds. Since $L_{\co_{k-1},V}$ is a cover of $L_{\co_{k},V}$ by
Lemma~\ref{lem:LIsACover}, we conclude $b_{V}(\eta)\subseteq\bigcup_{I\in L_{\co_{k-1},V}}I$.
Since the spectral bands in $L_{\co_{k-1},V}$ do not touch (Lemma~\ref{lem:PrecIsTotal})
and $b_{V}(\gamma)\cap b_{V}(\eta)\neq\emptyset$, we conclude $b_{V}(\eta)\subseteq b_{V}(\gamma)$.
The reverse implication is trivial.

We continue inductively defining the map $b_{V}$.

\uline{Induction base:} If $k=0$, we have $\co_{0}=[0,0]$, $\Sigma_{[0,0]}=\left\{ (A^{(1)}),(G^{(2)})\right\} $
and $L_{\co_{0},V}=\left\{ [-2,2],[-2+V,2+V]\right\} $, see also
Example~\ref{exa:StructureBands_c_1}. Define $b_{V}(A^{(1)})=[-2,2]$
and $b_{V}(G^{(2)})=[-2+V,2+V]$ satisfying (\ref{enu:Coding_Spectrum_bijective})-(\ref{enu:Coding_Spectrum_Type})
for $V>4$ by construction.

If $k=1$, we have $\co_{1}=[0,0,c_{1}]$ where $c_{1}\in\N$. By
Example~\ref{exa:StructureBands_c_1}, we have
\[
\sigma_{[0,0,c_{1}]}(V)=\bigcup_{j=1}^{c_{1}-1}I_{[0,0,c_{1}]}^{j}(V)\cup K_{[0,0,c_{1}]}(V)
\]
where
\[
I_{[0,0,c_{1}]}^{1}(V)\precd I_{[0,0,c_{1}]}^{2}(V)\precd\ldots\precd I_{[0,0,c_{1}]}^{c_{1}-1}(V)\precd K_{[0,0,c_{1}]}(V)
\]
and $I_{[0,0,c_{1}]}^{j}(V)\subseteq[-2,2]$ are of type $A$ and
$K_{[0,0,c_{1}]}(V)\subseteq[-2+V,2+V]$ is of type $B$. Since $\sigma_{[0,0,c_{1},1,-1]}(V)=\sigma_{[0,0]}(V)$,
every spectral band in $\sigma_{[0,0,c_{1},1]}(V)$ of type $B$ is
contained in $\sigma_{[0,0]}(V)=[-2,2]$. Applying Proposition~\ref{prop:=000020backward=000020implies=000020forward}
to the spectral band $[-2,2]$ of type $A$ with $m=c_{1}$ implies
that the spectral bands of type $B$ in $\sigma_{[0,0,c_{1},1]}(V)$
are $\left\{ J^{i}(V)\right\} _{i=1}^{c_{1}}$ with
\[
J^{1}(V)\precd I_{[0,0,c_{1}]}^{1}(V)\precd J^{2}(V)\precd\ldots\precd I_{[0,0,c_{1}]}^{c_{1}-1}(V)\precd J^{c_{1}}(V)
\]
and $J^{i}(V)\subseteq[-2,2]$. Thus, 
\[
L_{\co_{1},V}=\set{I_{[0,0,c_{1}]}^{i}(V)}{1\leq i\leq c_{1}-1}\cup\set{J^{i}(V)}{1\leq i\leq c_{1}}\cup\left\{ K_{[0,0,c_{1}]}(V)\right\} .
\]
Since $J^{c_{1}}(V)\subseteq[-2,2]$ and $K_{[0,0,c_{1}]}(V)\subseteq[-2+V,2+V]$,
we conclude $J^{c_{1}}(V)\precd K_{[0,0,c_{1}]}(V)$ using $V>4$.
In addition, we have
\[
\Sigma_{\co_{1}}=\set{(A^{(1)},A^{(i)})}{1\leq i\leq c_{1}-1}\cup\set{(A^{(1)},G^{(i)})}{1\leq i\leq c_{1}}\cup\left\{ (G^{(2)},B)\right\} .
\]
With this at hand, define $b_{V}\left((A^{(1)},A^{(i)})\right):=I_{[0,0,c_{1}]}^{i}(V)$,
$b_{V}\left((A^{(1)},G^{(i)})\right):=J^{i}(V)$ and $b_{V}\left((G^{(2)},B)\right):=K_{[0,0,c_{1}]}(V)$
satisfying (\ref{enu:Coding_Spectrum_bijective})-(\ref{enu:Coding_Spectrum_Type})
by construction.

\uline{Induction step:} Let $k\geq2$ be such that $b_{V}:\bigsqcup_{l=0}^{k}\Sigma_{\co_{k}}\to\bigcup_{l=0}^{k}\Lck$
satisfies (\ref{enu:Coding_Spectrum_bijective})-(\ref{enu:Coding_Spectrum_Type}).
We show how to extend $b_{V}:\Sigma_{\co_{k+1}}\to L_{\co_{k+1},V}$.
Let $\gamma'=(\gamma(0),\ldots,\gamma(k))\in\Sigma_{\co_{k}}$.

If $\gamma(k)=A^{(l)}$, then set $M=c_{k+1}-1$ and if $\gamma(k)=B$,
then set $M=c_{k+1}$. By the induction hypothesis and property (\ref{enu:Coding_Spectrum_Type}),
$b_{V}(\gamma')\subseteq\sigma_{\co_{k}}(V)$ is of type $A$ if $\gamma(k)=A^{(l)}$
and of type $B$ if $\gamma(k)=B$. Thus, Proposition~\ref{prop:=000020backward=000020implies=000020forward}
implies that there are exactly $\left\{ I_{\co_{k+1}}^{i}\right\} _{i=1}^{M}\subseteq\sigma_{\co_{k+1}}(V)$
of type $A$ and $\left\{ I_{[\co_{k+1},1]}^{j}\right\} _{j=1}^{M+1}\subseteq\sigma_{[\co_{k+1},1]}(V)$
of type $B$ such that $I_{\co_{k+1}}^{i},I_{[\co_{k+1},1]}^{j}\strict b_{V}(\gamma')$
and
\[
I_{[\co_{k+1},1]}^{1}\precd I_{\co_{k+1}}^{1}\precd\ldots\precd I_{\co_{k+1}}^{M}\precd I_{[\co_{k+1},1]}^{M+1}.
\]
Furthermore, ($\Sigma$2) implies that all choices for $\gamma(k+1)$
are $\set{A^{(i)}}{1\leq i\leq M}\cup\set{G^{(j)}}{1\leq j\leq M+1}$.
Then define for $\gamma=(\gamma(0),\ldots,\gamma(k+1))\in\Sigma_{\co_{k+1}}$,
\[
b_{V}\left(\gamma\right)=\begin{cases}
I_{\co_{k+1}}^{i} & \textrm{if}\,\gamma(k+1)=A^{(i)},\\
I_{[\co_{k+1},1]}^{j} & \textrm{if}\,\gamma(k+1)=G^{(j)}.
\end{cases}
\]
Thus, $b_{V}(\gamma)\subseteq b_{V}(\gamma')$ holds, namely this
definition satisfies (\ref{enu:Coding_Spectrum_Inclusion}) as well
as (\ref{enu:Coding_Spectrum_Type}). Furthermore, let $\gamma_{1}:=(\gamma(0),\ldots,\gamma(k),\gamma(k+1))$
and $\gamma_{2}:=(\gamma(0),\ldots,\gamma(k),\eta(k+1))$ for $\gamma(k)\in\left\{ A^{(l)},B\right\} $
and ${\color{blue}}\gamma(k+1),\eta(k+1)\in\set{A^{(i)}}{1\leq i\leq M}\cup\set{G^{(j)}}{1\leq j\leq M+1}$.
By construction, we conclude
\begin{equation}
\gamma_{1}\lessdot\gamma_{2}\quad\Leftrightarrow\quad b_{V}(\gamma_{1})\precd b_{V}(\gamma_{2}).\label{eq:Coding_Spectrum_Order=000020Preserving}
\end{equation}

If $\gamma(k)=G^{(l)}$, then the induction hypothesis and property
(\ref{enu:Coding_Spectrum_Type}) imply that $b_{V}(\gamma')\subseteq\sigma_{[\co_{k},1]}(V)$
is of type $B$ and $b_{V}(\gamma')\cap\sigma_{\co_{k}}(V)=\emptyset$.
Thus, Corollary~\ref{cor:Tower=000020Property} asserts that there
is a unique spectral band $J$ in $\sigma_{\co_{k+1}}(V)$ of type
$B$ with $J\subseteq b_{V}(\gamma')$. Note that if $c_{k+1}=1$,
then $J=b_{V}(\gamma')$. Define \textbf{$b_{V}(\gamma(0),\ldots,\gamma(k+1)):=J$}.
This definition satisfies (\ref{enu:Coding_Spectrum_Inclusion}) as
well as (\ref{enu:Coding_Spectrum_Type}).

By the previous considerations, we have defined the map $b_{V}:\Sigma_{\co_{k+1}}\to L_{\co_{k+1},V}$
and by construction it is injective and it satisfies (\ref{enu:Coding_Spectrum_Inclusion})
and (\ref{enu:Coding_Spectrum_Type}). Next, we prove that $b_{V}:\Sigma_{\co_{k+1}}\to L_{\co_{k+1},V}$
is also surjective. Therefore, let $J\in L_{\co_{k+1},V}$.

If $J\subseteq\sigma_{\co_{k+1}}(V)$ is of type $A$, then Corollary~\ref{cor:EverySpectralBand=00003DIcmi=000020or=000020Icmnj}~(\ref{enu:EverySpectralBand=00003DIcmi=000020or=000020Icmnj=000020_=000020type=000020A})
and $k\geq2$ imply that there is a unique spectral band $I_{\co_{k}}\subseteq\sigma_{\co_{k}}(V)$
such that $J=I_{[\co_{k},c_{k+1}]}^{i}$ for some $1\leq i\leq M$
(where $M=c_{k+1}-1$ if $I_{\co_{k}}$ is of type $A$ and $M=c_{k+1}$
if $I_{\co_{k}}$ is of type $B$). Since $b_{V}:\Sigma_{\co_{k}}\to L_{\co_{k},V}$
is bijective by induction hypothesis, there is a $\gamma=(\gamma(0),\ldots,\gamma(k))\in\Sigma_{\co_{k}}$
with $b_{V}(\gamma)=I_{\co_{k}}$. Moreover, property (\ref{enu:Coding_Spectrum_Type})
asserts $\gamma(k)=A^{(l)}$ if $I_{\co_{k}}$ is of type $A$ and
$\gamma(k)=B$ if $I_{\co_{k}}$ is of type $B$. Hence, $\gamma':=(\gamma(0),\ldots,\gamma(k),A^{(i)})\in\Sigma_{\co_{k+1}}$
by ($\Sigma$2) or ($\Sigma$3) and $b_{V}(\gamma')=J=I_{[\co_{k},c_{k+1}]}^{i}$.

If $J\subseteq\sigma_{[\co_{k+1},1]}(V)$ is of type $B$, then Corollary~~\ref{cor:EverySpectralBand=00003DIcmi=000020or=000020Icmnj}~(\ref{enu:EverySpectralBand=00003DIcmi=000020or=000020Icmnj=000020_=000020type=000020B})
and $k\geq2$ imply that there is a unique spectral band $I_{\co_{k}}\subseteq\sigma_{\co_{k}}(V)$
such that $J=I_{[\co_{k},c_{k+1},1]}^{i}$ for some $1\leq i\leq M+1$
(where $M=c_{k+1}-1$ if $I_{\co_{k}}$ is of type $A$ and $M=c_{k+1}$
if $I_{\co_{k}}$ is of type $B$). Since $b_{V}:\Sigma_{\co_{k}}\to L_{\co_{k},V}$
is bijective by induction hypothesis, there is a $\gamma=(\gamma(0),\ldots,\gamma(k))\in\Sigma_{\co_{k}}$
with $b_{V}(\gamma)=I_{\co_{k}}$. Moreover, property (\ref{enu:Coding_Spectrum_Type})
asserts $\gamma(k)=A^{(l)}$ if $I_{\co_{k}}$ is of type $A$ and
$\gamma(k)=B$ if $I_{\co_{k}}$ is of type $B$. Hence, $\gamma':=(\gamma(0),\ldots,\gamma(k),G^{(i)})\in\Sigma_{\co_{k+1}}$
by ($\Sigma$2) or ($\Sigma$3) and $b_{V}(\gamma')=J=I_{[\co_{k},c_{k+1},1]}^{i}$.

If $J\subseteq\sigma_{\co_{k+1}}(V)$ is of type $B$, then Corollary~\ref{cor:EverySpectralBand=00003DIcmi=000020or=000020Icmnj}~(\ref{enu:EverySpectralBand=00003DIcmi=000020or=000020Icmnj=000020_=000020type=000020B})
implies that there is a unique spectral band $I_{[\co_{k},1]}\subseteq\sigma_{[\co_{k},1]}(V)$
of type $B$ such that $J\subseteq I_{[\co_{k},1]}$. Since $b_{V}:\Sigma_{\co_{k}}\to L_{\co_{k},V}$
is bijective by induction hypothesis, there is a $\gamma=(\gamma(0),\ldots,\gamma(k))\in\Sigma_{\co_{k}}$
with $b_{V}(\gamma)=I_{[\co_{k},1]}$. Moreover, property (\ref{enu:Coding_Spectrum_Type})
asserts $\gamma(k)=G^{(l)}$. Thus, $\gamma'=(\gamma(0),\ldots,\gamma(k),B)\in\Sigma_{\co_{k+1}}$
by ($\Sigma$4) and $b_{V}(\gamma')=J$.

It is left to prove that $b_{V}:\Sigma_{\co_{k+1}}\to L_{\co_{k+1},V}$
satisfies (\ref{enu:Coding_Spectrum_order_preserving}). Let $\gamma,\eta\in\Sigma_{\co_{k+1}}$.
We need to treat two cases. If $\gamma(k)=\eta(k)$, then the equivalence
$\wo\lessdot\eta$ $\Leftrightarrow$ $b_{V}(\gamma)\precd b_{V}(\eta)$
follows from (\ref{eq:Coding_Spectrum_Order=000020Preserving}). If
$\gamma(k)\neq\eta(k)$, then there is a $0\leq l\leq k-1$ such that
$\gamma(j)=\eta(j)$ for all $j\leq l$ and $\gamma(l+1)\neq\eta(l+1)$.
Since $l+1\leq k$, the induction hypothesis and (\ref{enu:Coding_Spectrum_order_preserving})
yield the equivalence $\gamma'\lessdot\eta'$ $\Leftrightarrow$ $b_{V}(\gamma')\precd b_{V}(\eta')$
where $\gamma'=(\gamma(0),\ldots,\gamma(l+1))$ and $\eta'=(\eta(0),\ldots,\eta(l+1))$.
Thus, $b_{V}(\gamma')\cap b_{V}(\eta')=\emptyset$ holds. By property
(\ref{enu:Coding_Spectrum_Inclusion}), we have $b_{V}(\gamma)\subseteq b_{V}(\gamma')$
and $b_{V}(\eta)\subseteq b_{V}(\eta')$. Hence, the previous considerations
imply that $\wo\lessdot\eta$ if and only if $b_{V}(\gamma)\precd b_{V}(\eta)$
proving (\ref{enu:Coding_Spectrum_order_preserving}) for $b_{V}:\Sigma_{\co_{k+1}}\to L_{\co_{k+1},V}$.
\end{proof}
\begin{cor}
\label{cor:Image=000020b_V=000020of=000020codes=000020ending=000020with=000020A-B}Let
$\alpha\in[0,1]\setminus\Q$ with infinite continued fraction expansion
$\left(c_{k}\right)_{k=0}^{\infty}$ and $\co_{k}=[0,c_{0},c_{1},\ldots,c_{k}]\in\Co$
for $k\in\N_{0}$. For all $V>4$ and $k\in\N_{0}$, the image $b_{V}(\Scks)$
equals to $\set I{I\,\textrm{spectral band of}\,\sigma_{\co_{k}}(V)}$.
\end{cor}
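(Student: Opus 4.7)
The plan is to deduce the corollary directly from Proposition~\ref{prop:Coding_Spectrum} combined with Theorem~\ref{Thm-V>4_AllTypeA_B}. Recall from Definition~\ref{def:SpectralCodes} that $\Scks$ consists of exactly those $\wo=(\w(0),\dots,\w(k))\in\Sck$ whose last letter $\w(k)$ lies in $\{A^{(i)}:i\in\N\}\cup\{B\}$, i.e., codes that do \emph{not} end with some $G^{(i)}$. Meanwhile, by Definition~\ref{def:CoveringSets}, $\Lck$ is the disjoint union (in view of Proposition~\ref{prop:Coding_Spectrum}~(\ref{enu:Coding_Spectrum_Inclusion})) of the spectral bands of $\sigma_{\co_k}(V)$ and the spectral bands of $\sigma_{[\co_k,1]}(V)$ that are of type $B$.

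The first step is to apply Proposition~\ref{prop:Coding_Spectrum}~(\ref{enu:Coding_Spectrum_Type}) to translate the constraint on the last letter into a geometric statement about the band $b_V(\wo)$: for $\wo\in\Sck$, we have $\w(k)\in\{A^{(i)}:i\in\N\}$ iff $b_V(\wo)$ is a type $A$ band in $\sigma_{\co_k}(V)$; $\w(k)=B$ iff $b_V(\wo)$ is a type $B$ band in $\sigma_{\co_k}(V)$; and $\w(k)=G^{(i)}$ iff $b_V(\wo)$ is a type $B$ band in $\sigma_{[\co_k,1]}(V)$ disjoint from $\sigma_{\co_k}(V)$. Hence $b_V(\Scks)$ equals the set of all bands in $\Lck$ that are of type $A$ or $B$ and sit inside $\sigma_{\co_k}(V)$.

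The second step is to invoke Theorem~\ref{Thm-V>4_AllTypeA_B}, which asserts that for $V>4$ every spectral band of $\sigma_{\co_k}(V)$ is either of type $A$ or type $B$. Combined with the characterization above, this yields
\[
b_V(\Scks)=\set{I}{I\text{ is a spectral band of }\sigma_{\co_k}(V)},
\]
as claimed. Finally, since Proposition~\ref{prop:Coding_Spectrum}~(\ref{enu:Coding_Spectrum_bijective}) provides the bijection $b_V\colon\Sck\to\Lck$, the restriction of $b_V$ to $\Scks$ is injective, so the equality is in fact a bijection between $\Scks$ and the spectral bands of $\sigma_{\co_k}(V)$. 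There is no genuine obstacle here: the corollary is a bookkeeping consequence of the labeling properties in Proposition~\ref{prop:Coding_Spectrum} together with the dichotomy of band types established in Theorem~\ref{Thm-V>4_AllTypeA_B}.
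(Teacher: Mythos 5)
Your argument is correct and follows exactly the paper's route: the paper likewise deduces the corollary immediately from Proposition~\ref{prop:Coding_Spectrum} (the labeling/bijectivity properties of $b_V$) together with Theorem~\ref{Thm-V>4_AllTypeA_B} (every band of $\sigma_{\co_k}(V)$ is of type $A$ or $B$ for $V>4$). You have merely spelled out the bookkeeping that the paper leaves implicit.
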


\begin{proof}
This follows immediately from Proposition~\ref{prop:Coding_Spectrum}
and Theorem~\ref{Thm-V>4_AllTypeA_B} asserting that every spectral
band in $\sigma_{\co_{k}}(V)$ is either of type $A$ or $B$ if $V>4$.
\end{proof}
Now we can use the previous considerations, to assign to each infinite
code in $\Sigma_{\alpha}$ an element in $\sigma\left(\Ham\right)$.
\begin{lem}
\label{lem:Code2Energy} Let $V>4$ and $\alpha\in[0,1]\setminus\Q$.
For all $\wo\in\Sigma_{\alpha}$, the set $\bigcap_{k\in\N_{0}}b_{V}(\wo|_{[0,k]})$
contains exactly one element and $\bigcap_{k\in\N_{0}}b_{V}(\wo|_{[0,k]})\subseteq\sigma\left(\Ham\right)$.
\end{lem}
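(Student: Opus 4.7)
My plan is to split the statement into the easy containment $\bigcap_k b_V(\wo|_{[0,k]})\subseteq\sigma(\Ham)$ and the harder uniqueness-of-point claim, and to address them in that order. Throughout I write $I_k:=b_V(\wo|_{[0,k]})$ for brevity.

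First I will set up the nested sequence. For each $k\in\N_0$, the truncation $\wo|_{[0,k-1]}$ is precisely the first $k$ entries of $\wo|_{[0,k]}$, so Proposition~\ref{prop:Coding_Spectrum}~(\ref{enu:Coding_Spectrum_Inclusion}) yields $I_{k+1}\subseteq I_k$. Since each $I_k$ is a non-empty compact interval (being a spectral band of $\sigma_{\co_k}(V)$ or of $\sigma_{[\co_k,1]}(V)$), Cantor's nested intervals theorem gives that $\bigcap_{k\in\N_0}I_k$ is itself a non-empty compact interval $[a,b]$ with $a\leq b$.

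Next I will obtain the containment in the aperiodic spectrum. Since $I_k\in\L_{\co_k,V}$, we have $I_k\subseteq\bigcup_{I\in\L_{\co_k,V}}I=\Lambda_k(V)$ by Lemma~\ref{lem:LIsACover}~(\ref{enu:LIsACover_=000020convergence=000020sigma=000020alpha}). Intersecting over $k$ and invoking the same lemma, which identifies $\bigcap_{k\in\N_0}\Lambda_k(V)$ with $\sigma(\Ham)$, gives $\bigcap_{k\in\N_0}I_k\subseteq\sigma(\Ham)$. This is a straightforward consequence of the already-established covering properties of the levels.

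The main obstacle is showing that $a=b$, i.e.\ that the nested intervals shrink to a point. The obstruction to a purely combinatorial/diameter argument is the presence of codes with $\gamma(k)=G^{(l)}$ followed by $\gamma(k+1)=B$ when $c_{k+1}=1$: in the proof of Proposition~\ref{prop:Coding_Spectrum} one has $b_V(\gamma|_{[0,k+1]})=b_V(\gamma|_{[0,k]})$ in this case, so the inclusion $I_{k+1}\subseteq I_k$ is not always strict, and uniform shrinkage of diameters is not immediate. To sidestep this, I will invoke the (already cited in Section~1.2) result of \cite{BIST89} that $\sigma(\Ham)$ has Lebesgue measure zero for every $\alpha\in[0,1]\setminus\Q$ and every $V\neq 0$. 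Combined with the containment from the previous paragraph, $[a,b]\subseteq\sigma(\Ham)$ forces the interval $[a,b]$ itself to have Lebesgue measure zero, whence $a=b$. This yields $\bigcap_{k\in\N_0}I_k=\{a\}$, completing the argument.
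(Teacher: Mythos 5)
Your argument is correct and is essentially identical to the paper's own proof: both use Proposition~\ref{prop:Coding_Spectrum}~(\ref{enu:Coding_Spectrum_Inclusion}) to get a nested sequence of closed intervals, Cantor's intersection theorem for non-emptiness, Lemma~\ref{lem:LIsACover} for the containment in $\sigma(\Ham)$, and the Lebesgue-measure-zero result of \cite{BIST89} to rule out a nondegenerate interval. Your explicit remark about why the inclusions need not be strict (the $G^{(l)}\to B$ step with $c_{k+1}=1$) is a nice clarification of why the measure-zero input is genuinely needed, but the route is the same.
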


\begin{proof}
Consider the sequence $\left\{ b_{V}(\wo|_{[0,k]})\right\} _{k\in\N_{0}}$
of intervals. This is a decreasing nested sequence of non-empty closed
intervals, see Proposition~\ref{prop:Coding_Spectrum}~(\ref{enu:Coding_Spectrum_Inclusion}).
Applying Cantor intersection theorem yields that $\bigcap_{k\in\N_{0}}b_{V}(\wo|_{[0,k]})$
is non-empty. Furthermore, it must be closed and convex (as intersection
of closed and convex sets). Hence, it may be either an interval or
a single point. Lemma~\ref{lem:LIsACover} asserts $b_{V}(\wo|_{[0,k]})\subset\bigcup_{I\in L_{\co_{k},V}}I=\Sigma_{k}(V)$
and 
\[
\bigcap_{k\in\N_{0}}b_{V}(\wo|_{[0,k]})\subset\bigcap_{k\in\N_{0}}\Sigma_{k}(V)=\sigma\left(\Ham\right).
\]
 According to \cite{BIST89}, $\sigma\left(\Ham\right)$ is of Lebesgue
measure zero if $V\neq0$. Thus, $\sigma\left(\Ham\right)$ cannot
contain an interval, and therefore $\bigcap_{k\in\N_{0}}b_{V}(\wo|_{[0,k]})$
is a single point (which is contained in $\sigma\left(\Ham\right)$).
\end{proof}
A consequence of this lemma is, that we now get a well-defined map
$\emap:\Sigma_{\alpha}\to\sigma(\Ham)$ by setting $\emap(\wo)$ to
be the unique element in $\bigcap_{k\in\N_{0}}b_{V}(\wo|_{[0,k]})$,
which exists by Lemma~\ref{lem:Code2Energy}.
\begin{lem}
\label{lem:EIsABijection} Let $V>4$ and $\alpha\in[0,1]\setminus\Q$.
Then the map $\emap:\Sigma_{\alpha}\to\sigma(\Ham)$ is a bijection.
\end{lem}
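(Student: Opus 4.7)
The plan is to prove injectivity and surjectivity separately, both relying essentially on the structural results of Proposition~\ref{prop:Coding_Spectrum} combined with Lemma~\ref{lem:LIsACover} and Lemma~\ref{lem:PrecIsTotal}. The uniqueness of the intersecting point (giving well-definedness of $\emap$) has already been handled in Lemma~\ref{lem:Code2Energy}.

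For injectivity, suppose $\wo, \eta \in \Sigma_\alpha$ with $\wo \neq \eta$. Let $k$ be the smallest index at which they differ, so that $\wo|_{[0,k]}$ and $\eta|_{[0,k]}$ are distinct elements of $\Sigma_{\co_k}$. By the bijectivity in Proposition~\ref{prop:Coding_Spectrum}~(\ref{enu:Coding_Spectrum_bijective}), $b_V(\wo|_{[0,k]})$ and $b_V(\eta|_{[0,k]})$ are distinct spectral bands in $\L_{\co_k,V}$, and by Lemma~\ref{lem:PrecIsTotal} they are disjoint. Since $\emap(\wo) \in b_V(\wo|_{[0,k]})$ and $\emap(\eta) \in b_V(\eta|_{[0,k]})$, we conclude $\emap(\wo) \neq \emap(\eta)$.

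For surjectivity, let $E \in \sigma(\Ham)$. By Lemma~\ref{lem:LIsACover}~(\ref{enu:LIsACover_=000020of=000020spectrum=000020alpha}), $\L_{\co_k,V}$ is a cover of $\sigma(\Ham)$, and by Lemma~\ref{lem:PrecIsTotal} the bands in $\L_{\co_k,V}$ are pairwise disjoint. Thus, for each $k\in\N_0$, there exists a unique band $I_k \in \L_{\co_k,V}$ with $E \in I_k$. By Proposition~\ref{prop:Coding_Spectrum}~(\ref{enu:Coding_Spectrum_bijective}), there is a unique $\gamma^{(k)} \in \Sigma_{\co_k}$ with $b_V(\gamma^{(k)}) = I_k$. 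Since $E \in I_{k+1} \cap I_k$ is nonempty, Proposition~\ref{prop:Coding_Spectrum}~(\ref{enu:Coding_Spectrum_Inclusion}) forces $b_V(\gamma^{(k+1)}) \subseteq b_V(\gamma^{(k)})$ and $\gamma^{(k+1)}|_{[0,k]} = \gamma^{(k)}$. Hence the sequence of finite codes is consistent, defining a unique infinite code $\wo \in \Sigma_\alpha$ by $\wo(j) := \gamma^{(j)}(j)$, with $\wo|_{[0,k]} = \gamma^{(k)}$ for all $k \in \N_0$. Then $E \in \bigcap_{k \in \N_0} b_V(\wo|_{[0,k]})$, and by Lemma~\ref{lem:Code2Energy} this intersection is the singleton $\{\emap(\wo)\}$, so $\emap(\wo) = E$.

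The main obstacle, if any, is verifying that the finite codes $\gamma^{(k)}$ from the surjectivity argument actually assemble into a valid element of $\Sigma_\alpha$ (i.e., that they satisfy the compatibility $\gamma^{(k+1)}|_{[0,k]} = \gamma^{(k)}$). This is where the precise statement in Proposition~\ref{prop:Coding_Spectrum}~(\ref{enu:Coding_Spectrum_Inclusion}) does the essential work: the equivalence $b_V(\gamma) \cap b_V(\eta) \neq \emptyset \Leftrightarrow b_V(\eta) \subseteq b_V(\gamma)$ (for $\eta$ one level deeper than $\gamma$) combined with the bijectivity of $b_V$ at each level forces the consistency. Everything else is essentially bookkeeping.
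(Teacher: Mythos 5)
Your proof is correct and follows essentially the same route as the paper: the surjectivity argument is the paper's argument verbatim (unique band of $\L_{\co_k,V}$ containing $E$ at each level, consistency of the finite codes via Proposition~\ref{prop:Coding_Spectrum}~(\ref{enu:Coding_Spectrum_Inclusion}), then Lemma~\ref{lem:Code2Energy}), while the paper obtains injectivity implicitly from the uniqueness of the code rather than via your separate (equally valid) disjointness argument at the first index of disagreement.
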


\begin{proof}
Let $\varphi(\co_{k}),k\in\Nz$ be the convergents of $\alpha$. For
each $k\in\N$, we have $\sigma(\Ham)\subseteq\bigcup_{I\in\L_{\co_{k},V}}I$
by Lemma~\ref{lem:LIsACover}. Furthermore, if $E\in\sigma(\Ham)$
is fixed, then for each $k\in\N$, there exists a unique $I\in\L_{\co_{k},V}$
such that $E\in I$, since the spectral bands in $\L_{\co_{k},V}$
are disjoint. By Proposition~\ref{prop:Coding_Spectrum}, \textbf{$b_{V}:\Sigma_{\co_{k}}\to L_{\co_{k},V}$}
is a bijection for each $k\in\N$. Thus, there exists a unique $\wo_{k}\in\Sigma_{\co_{k}}$
for each $k\in\N$ such that $E\in b_{V}(\wo_{k})$. Then $E\in b_{V}(\gamma_{k})\cap b_{V}(\gamma_{k+1})\neq\emptyset$
follows. Thus, Proposition~\ref{prop:Coding_Spectrum}~(\ref{enu:Coding_Spectrum_Inclusion})
asserts $b_{V}(\gamma_{k+1})\subseteq b_{V}(\gamma_{k})$ and the
codes $\gamma_{k+1}$ and $\gamma_{k}$ coincides on the first $k+1$
digits. Hence, we inductively conclude for all $j\geq k$, $b_{V}(\gamma_{j})\subseteq b_{V}(\gamma_{k})$
and the codes $\gamma_{j}\in\Sigma_{j}$ and $\gamma_{k}\in\Sigma_{k}$
coincides on the first $k+1$ digits. Since $k$ and $j$ were arbitrary,
there is a unique $\wo\in\Sigma_{\alpha}$ such that $\wo$ and $\wo_{k}$
have the same first $k+1$ digits for all $k\in\N$. We claim $E_{\alpha,V}(\wo)=E$.
By definition of $E_{\alpha,V}$, we get $E_{\alpha,V}(\wo)\in\bigcap_{k\in\N_{0}}b_{V}(\wo_{k})$.
On the other hand, also $E\in\bigcap_{k\in\N_{0}}b_{V}(\wo_{k})$
follows from our choice of $\wo_{k}$. The uniqueness from Lemma~\ref{lem:Code2Energy}
then yields $E_{\alpha,V}(\wo)=E$.
\end{proof}
\begin{lem}
\label{lem:EIsOrderPreserving} Let $V>4$ and $\alpha\in[0,1]\setminus\Q$,
then the map $\emap:\Sigma_{\alpha}\to\sigma(\Ham)$ is order preserving,
i.e. if $\wo,\eta\in\Sigma_{\alpha}$, then 
\[
\wo\lessdot\eta\quad\Leftrightarrow\quad E_{\alpha,V}(\wo)<E_{\alpha,V}(\eta).
\]
\end{lem}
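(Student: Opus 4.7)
The plan is to reduce the statement to the finite-level order-preserving property already established in Proposition~\ref{prop:Coding_Spectrum}~(\ref{enu:Coding_Spectrum_order_preserving}), by localizing the comparison of $\wo$ and $\eta$ at the first index where they disagree.

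First, I would prove the forward implication. Suppose $\wo\lessdot\eta$. Since $\wo\ne\eta$, there is a smallest $k\in\N_0$ with $\wo(k)\ne\eta(k)$; then by the definition of $\lessdot$ on $\Sigma_\alpha$, the same relation holds for the truncations, i.e.\ $\wo|_{[0,k]}\lessdot\eta|_{[0,k]}$ in $\Sigma_{\co_k}$. By Proposition~\ref{prop:Coding_Spectrum}~(\ref{enu:Coding_Spectrum_order_preserving}) this yields
\[
b_V(\wo|_{[0,k]})\precd b_V(\eta|_{[0,k]}).
\]
Recall that $I\precd J$ means $I$ lies strictly to the left of $J$ and, in particular, $I\cap J=\emptyset$. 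By Lemma~\ref{lem:Code2Energy} and the definition of $\emap$, we have $E_{\alpha,V}(\wo)\in b_V(\wo|_{[0,k]})$ and $E_{\alpha,V}(\eta)\in b_V(\eta|_{[0,k]})$. Since the first interval lies strictly to the left of the second, $E_{\alpha,V}(\wo)<E_{\alpha,V}(\eta)$ follows.

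For the reverse implication, assume $E_{\alpha,V}(\wo)<E_{\alpha,V}(\eta)$. In particular $\wo\ne\eta$, and since the infinite codes are ordered in the dictionary-type sense induced by the total order $\lessdot$ on $\Aa$ (and extended to $\Sigma_\alpha$ via the first disagreement), one has $\wo\lessdot\eta$ or $\eta\lessdot\wo$. The case $\eta\lessdot\wo$ would give $E_{\alpha,V}(\eta)<E_{\alpha,V}(\wo)$ by the forward direction, contradicting the assumption. Hence $\wo\lessdot\eta$.

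The only subtlety is checking that $\Sigma_\alpha$ is totally ordered by $\lessdot$, which is needed so that the dichotomy $\wo\lessdot\eta$ or $\eta\lessdot\wo$ in the reverse direction is exhaustive; this follows directly from the totality of $\lessdot$ on the alphabet $\Aa$ applied at the first digit where $\wo$ and $\eta$ differ. Everything else is bookkeeping around the key fact that the nested intervals $b_V(\wo|_{[0,k]})$ shrink to the point $E_{\alpha,V}(\wo)$, combined with the strict (not merely weak) nature of $\precd$ that guarantees the disjointness needed to turn the set inclusion into a strict numerical inequality.
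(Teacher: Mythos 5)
Your forward direction matches the paper's argument and is fine. Your reverse direction takes a slightly different route — trichotomy plus the already-proved forward implication, rather than the paper's direct deduction of $\wo|_{[0,k_0]}\lessdot\eta|_{[0,k_0]}$ from the disjointness of the bands containing the two energies — and in principle that strategy works. However, there is a genuine gap in your justification of the trichotomy: you claim it ``follows directly from the totality of $\lessdot$ on the alphabet $\Aa$,'' but $\lessdot$ on $\Aa$ is explicitly only a \emph{partial} order. The chain $G^{(1)}\lessdot A^{(1)}\lessdot G^{(2)}\lessdot A^{(2)}\lessdot\ldots$ does not include the letter $B$, which is incomparable to every other letter. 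So totality of $\lessdot$ on $\Sigma_\alpha$ is not automatic and must be argued.

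The missing step is to show that $B$ cannot occur at the first index $k_0$ where two distinct codes $\wo,\eta\in\Sigma_\alpha$ disagree. This follows from $(\Sigma1)$ and $(\Sigma4)$: if $k_0=0$ then $\wo(0),\eta(0)\in\{A^{(1)},G^{(2)}\}$, both comparable; if $k_0\geq1$ and, say, $\wo(k_0)=B$, then $(\Sigma4)$ forces $\wo(k_0-1)=G^{(i)}$, hence $\eta(k_0-1)=G^{(i)}$ as well (they agree below $k_0$), and $(\Sigma4)$ again forces $\eta(k_0)=B$, contradicting $\wo(k_0)\neq\eta(k_0)$. With this observation in place, $\wo(k_0)$ and $\eta(k_0)$ both lie in the linearly ordered part of $\Aa$ and the trichotomy you invoke is valid. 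The paper makes exactly this observation (``Note that $\wo(k_0)\neq B\neq\eta(k_0)$'') but then proceeds directly rather than via contradiction; your contradiction-based variant is equally legitimate once the totality is correctly established.
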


\begin{proof}
Let $\varphi(\co_{k}),k\in\Nz$ be the convergents of $\alpha$. Also
let $\wo,\eta\in\Sigma_{\alpha}$ with $\wo\lessdot\eta$. Then there
is some $k\in\N_{0}$ such that $\wo|_{[0,k]}\lessdot\eta|_{[0,k]}.$
Thus, Proposition~\ref{prop:Coding_Spectrum}~(\ref{enu:Coding_Spectrum_order_preserving})
leads to 
\[
E_{\alpha,V}(\wo)\in b_{V}(\wo|_{[0,k]})\precd b_{V}(\eta|_{[0,k]})\ni E_{\alpha,V}(\eta),
\]
implying $E_{\alpha,V}(\wo)<E_{\alpha,V}(\eta)$.

Conversely, suppose $E_{\alpha,V}(\wo)<E_{\alpha,V}(\eta)$, then
$\gamma\neq\eta$ follows by Lemma~\ref{lem:EIsABijection}. Thus,
there exists a $k_{0}\in\N_{0}$ such that for all $k<k_{0}$, $\wo|_{[0,k]}=\eta|_{[0,k]}$
and $\wo(k_{0})\neq\eta(k_{0})$. Note that $\wo(k_{0})\neq B\neq\eta(k_{0})$.
Hence, either $\wo(k_{0})\lessdot\eta(k_{0})$ or $\eta(k_{0})\lessdot\wo(k_{0})$.
Since 
\[
b_{V}\left(\wo|_{[0,k_{0}]}\right)\ni E_{\alpha,V}(\wo)<E_{\alpha,V}(\eta)\in b_{V}\left(\eta|_{[0,k_{0}]}\right),
\]
we conclude $\wo|_{[0,k_{0}]}\lessdot\eta|_{[0,k_{0}]}$ from Lemma~\ref{lem:PrecIsTotal}
Proposition~\ref{prop:Coding_Spectrum}. Hence, $\gamma\lessdot\eta$
follows.
\end{proof}
\begin{lem}
\label{lem:CodesWhichAreSpecBands} Let $V>4$ and $\alpha\in[0,1]\setminus\Q$
with convergents $\varphi(\co_{k}),k\in\N_{0}$. Furthermore, let
$\wo\in\Sigma_{\alpha}$ with $\emap(\wo)=:E$. Then for all $k\in\N$,
the image of $\set{\eta\in\Scks}{\eta\lessdot\wo|_{[0,k]}}$ under
$b_{V}$ equals $\set I{I\text{ is a spectral band of }\sigma_{\co_{k}}(V)\text{ with }I\precd\{E\}}$.
\end{lem}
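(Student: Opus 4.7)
The plan is to reduce the claim to the order-preservation of $b_V$ stated in Proposition~\ref{prop:Coding_Spectrum}~(\ref{enu:Coding_Spectrum_order_preserving}), together with the identification of $b_V(\Scks)$ with the spectral bands of $\sigma_{\co_k}(V)$ supplied by Corollary~\ref{cor:Image=000020b_V=000020of=000020codes=000020ending=000020with=000020A-B}. Since $b_V:\Sigma_{\co_k}\to\Lck$ is a bijection, the lemma reduces to verifying that for any $\eta\in\Scks$,
\[
\eta\lessdot\wo|_{[0,k]} \quad\Longleftrightarrow\quad b_V(\eta)\precd\{E\}.
\]
By the very definition of $\emap$, the energy $E$ lies in $J:=b_V(\wo|_{[0,k]})$; writing $J=[c,d]$, this means $c\le E\le d$.

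For the forward direction, Proposition~\ref{prop:Coding_Spectrum}~(\ref{enu:Coding_Spectrum_order_preserving}) gives $b_V(\eta)\precd J$, so the right endpoint of $b_V(\eta)$ is strictly less than $c\le E$, i.e. $b_V(\eta)\precd\{E\}$.

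For the reverse direction, I would invoke the total order on $\Lck$ from Lemma~\ref{lem:PrecIsTotal}: since $b_V(\eta),J\in\Lck$, they are either equal, or $b_V(\eta)\precd J$, or $J\precd b_V(\eta)$. The case $b_V(\eta)=J$ is impossible, because then the right endpoint of $b_V(\eta)$ equals $d\ge E$, contradicting $b_V(\eta)\precd\{E\}$. The case $J\precd b_V(\eta)$ is also impossible, because then the left endpoint of $b_V(\eta)$ would exceed $d\ge E$, again contradicting $b_V(\eta)\precd\{E\}$. Hence $b_V(\eta)\precd J$, and applying Proposition~\ref{prop:Coding_Spectrum}~(\ref{enu:Coding_Spectrum_order_preserving}) once more yields $\eta\lessdot\wo|_{[0,k]}$.

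The only real subtlety is that $b_V(\wo|_{[0,k]})$ need not itself be a spectral band of $\sigma_{\co_k}(V)$: when $\wo(k)=G^{(i)}$ it is a type-$B$ band of $\sigma_{[\co_k,1]}(V)$ sitting inside a spectral gap of $\sigma_{\co_k}(V)$, so one cannot argue via a trichotomy internal to $\sigma_{\co_k}(V)$. The total order on the enlarged set $\Lck$ provided by Lemma~\ref{lem:PrecIsTotal} is exactly what is needed to handle the $A$-, $B$- and $G$-labels of $\wo(k)$ uniformly.
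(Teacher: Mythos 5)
Your proposal is correct and follows essentially the same route as the paper: reduce via the bijectivity of $b_V$ and Corollary~\ref{cor:Image=000020b_V=000020of=000020codes=000020ending=000020with=000020A-B}, use the order-preservation in Proposition~\ref{prop:Coding_Spectrum}~(\ref{enu:Coding_Spectrum_order_preserving}) for the forward direction, and use the total order on $\L_{\co_k,V}$ from Lemma~\ref{lem:PrecIsTotal} together with $E\in b_V(\wo|_{[0,k]})$ to pin down the position of $b_V(\eta)$ relative to $b_V(\wo|_{[0,k]})$ for the reverse direction. The paper phrases the reverse step as disjointness-plus-$I\precd\{E\}$ forces $I\precd J$, while you phrase it as an explicit trichotomy; these are the same argument.
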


\begin{proof}
Assume $\eta\in\Scks$ with $\eta\lessdot\wo|_{[0,k]}$. Then $b_{V}(\eta)$
is a spectral band of $\sigma_{\co_{k}}$ with $b_{V}(\eta)\precd b_{V}(\wo|_{[0,k]}).$
In particular, $b_{V}(\eta)\cap b_{V}(\wo|_{[0,k]})=\emptyset$ follows
from Proposition~\ref{prop:Coding_Spectrum}~(\ref{enu:Coding_Spectrum_Inclusion})
and (\ref{enu:Coding_Spectrum_order_preserving}). By construction
of the map $\emap$, we have $E\in b_{V}(\wo|_{[0,k]})$, which shows
\[
b_{V}\left(\set{\eta\in\Scks}{\eta\lessdot\wo|_{[0,k]}}\right)\subset\set I{I\text{ is a spectral band of }\sigma_{\co_{k}}\text{ with }I\precd\{E\}}.
\]
To show the other inclusion, consider a spectral band $I$ of $\sigma_{\co_{k}}$
with $I\precd\{E\}$. We apply Proposition~\ref{prop:Coding_Spectrum}
to get $\eta\in\Scks$ such that $b_{V}(\eta)=I$ and $b_{V}(\eta)\cap b_{V}(\wo|_{[0,k]})=\emptyset$.
Combining $E\in b_{V}(\wo|_{[0,k]})$ and $I\precd\{E\}$, we get
that $b_{V}(\eta)\precd b_{V}(\wo|_{[0,k]})$. Hence, by Proposition~\ref{prop:Coding_Spectrum}
$\eta\lessdot\wo|_{[0,k]}$ which finishes the proof.
\end{proof}
As a consequence of the considerations of this subsection, we conclude
with the following statement.
\begin{prop}
\label{prop:IDS_Code=000020representation}Let $V>4$ and $\alpha\in[0,1]\setminus\Q$
with convergents $\frac{p_{k}}{q_{k}}=\varphi(\co_{k}),k\in\N_{0}$,
where $p_{k},q_{k}$ are coprime. For each $E\in\sigma(\Ham)$, there
is a unique $\gamma\in\Sigma_{\alpha}$ such that $E=E_{\alpha,V}(\gamma)$
and 
\begin{equation}
N_{\alpha,V}(E)=N_{\alpha,V}(E_{\alpha,V}(\gamma))=\lim_{k\to\infty}\frac{\#\set{\eta\in\Scks}{\eta\lessdot\wo|_{[0,k]}}}{q_{k}}.\label{eq:=000020IDS=000020via=000020coding}
\end{equation}
\end{prop}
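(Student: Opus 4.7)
The plan is to assemble the proposition by chaining four of the preceding results; the statement is a packaging of earlier work rather than a substantially new argument.

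First, I would deduce the existence and uniqueness of $\gamma \in \Sigma_\alpha$ with $E = \emap(\gamma)$ directly from Lemma~\ref{lem:EIsABijection}, which establishes that $\emap : \Sigma_\alpha \to \sigma(\Ham)$ is a bijection (the standing hypotheses $V>4$ and $\alpha \notin \Q$ in the proposition are precisely what that lemma requires). This gives the first two equalities $N_{\alpha,V}(E) = N_{\alpha,V}(\emap(\gamma))$ at once.

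Next, to obtain the limit formula, I would invoke Proposition~\ref{prop:IDSBandCounting} to rewrite
\[
\IDS(E) = \lim_{k\to\infty}\frac{\#\{I : I \text{ is a spectral band of } \sigma(H_{p_k/q_k,V}),\ I \precd \{E\}\}}{q_k},
\]
and then use Proposition~\ref{prop:=000020Floquet-Bloch=000020via=000020transfer=000020matrix} to replace $\sigma(H_{p_k/q_k,V})$ with $\sigma_{\co_k}(V)$, so that the numerator counts spectral bands of $\sigma_{\co_k}(V)$ strictly to the left of $\{E\}$.

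Finally, I would apply Lemma~\ref{lem:CodesWhichAreSpecBands}, which gives the key identification
\[
b_V\bigl(\{\eta \in \Scks : \eta \lessdot \gamma|_{[0,k]}\}\bigr) = \{I : I \text{ is a spectral band of } \sigma_{\co_k}(V),\ I \precd \{E\}\}.
\]
Since $b_V : \Sigma_{\co_k} \to \Lck$ is a bijection by Proposition~\ref{prop:Coding_Spectrum}(\ref{enu:Coding_Spectrum_bijective}), its restriction to the subset $\Scks$ is injective; hence the two finite sets above have the same cardinality. Substituting this identity of cardinalities into the previous display yields the claimed formula, completing the proof. No step presents a genuine obstacle; all the combinatorial and spectral work has already been done in Sections~4 and in the preparatory lemmas of this subsection, so the main care needed is purely notational, namely matching the code $\gamma$ associated to $E$ with the spectral-band cover levels $\Lck$ via the map $b_V$.
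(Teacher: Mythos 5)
Your proposal matches the paper's proof step for step: the existence and uniqueness of $\gamma$ via Lemma~\ref{lem:EIsABijection}, the reformulation of the IDS as a band count via Proposition~\ref{prop:IDSBandCounting} and Proposition~\ref{prop:=000020Floquet-Bloch=000020via=000020transfer=000020matrix}, the identification of that band set with the code set via Lemma~\ref{lem:CodesWhichAreSpecBands}, and the cardinality equality from injectivity of $b_V$ on $\Scks$. The argument is correct and essentially identical to the paper's.
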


\begin{rem*}
Note that the latter statement asserts that the value $N_{\alpha,V}(E_{\alpha,V}(\gamma))$
is independent of $V>4$ as the limit on the right hand side is so.
In fact, the value is independent for all $V>0$ as proven in \cite[thm. 1.9 (d)]{BanBecLoe_24}.
\end{rem*}
\begin{proof}
Let $E\in\sigma(\Ham)$. The existence of a unique $\gamma\in\Sigma_{\alpha}$
such that $E=E_{\alpha,V}(\gamma)$ is proven in Lemma~\ref{lem:EIsABijection}.
Thus, $N_{\alpha,V}(E)=N_{\alpha,V}(E_{\alpha,V}(\gamma))$ holds.
Proposition~\ref{prop:IDSBandCounting} leads to 
\[
N_{\alpha,V}(E)=\lim_{k\to\infty}\frac{\#\set I{I\text{ is a spectral band of }\sigma(H_{\pqk,V})\text{ with }I\precd\{E\}}}{q_{k}}.
\]
First note that $\sigma(H_{\pqk,V})=\sigma_{\co_{k}}(V)$ by Proposition~\ref{prop:=000020Floquet-Bloch=000020via=000020transfer=000020matrix}.
Let $b_{V}:\bigsqcup_{k\in\N_{0}}\Sigma_{\co_{k}}\to\bigcup_{k\in\N_{0}}L_{\co_{k},V}$
be the map defined in Proposition~\ref{prop:Coding_Spectrum} satisfying
$E=E_{\alpha,V}(\gamma)\in b_{V}(\wo|_{[0,k]})$ for all $k\in\N_{0}$.
Thus, Lemma~\ref{lem:CodesWhichAreSpecBands} leads to
\begin{align*}
b_{V}\left(\set{\eta\in\Scks}{\eta\lessdot\wo|_{[0,k]}}\right) & =\set I{I\subseteq\sigma_{\co_{k}}(V)\text{ spectral band with }I\precd\left\{ E\right\} }.
\end{align*}
Hence, Proposition~\ref{prop:Coding_Spectrum}~(\ref{enu:Coding_Spectrum_bijective})
(asserting that $b_{V}$ is injective if restricted to $\Scks$) implies
\[
\sharp\set{\eta\in\Scks}{\eta\lessdot\wo|_{[0,k]}}=\sharp\set I{I\subseteq\sigma_{\co_{k}}(V)\text{ spectral band with }I\precd\left\{ E\right\} }
\]
finishing the proof.
\end{proof}

\subsection{A Formula for the IDS via the spectral coding\protect\label{subsec:=000020formula=000020for=000020IDS}}

In this subsection we use the hierarchical structure of the periodic
spectra and its coding in order to provide an explicit formula for
the IDS, $N_{\alpha,V}$. Proposition~\ref{prop:IDS_Code=000020representation}
is the starting point for the current subsection. Next, we provide
some counting arguments in order to express the numerator in (\ref{eq:=000020IDS=000020via=000020coding})
and to obtain a convenient formula for the IDS, which is eventually
proven in Proposition~\ref{prop:IDSFormula}.
\begin{example}
\label{exa:CountingCodes} We provide a guiding example to demonstrate
some of the counting arguments which are developed in this subsection.
Let $\alpha\notin\Q$ whose continued fraction expansion that starts
with $(2,1,1,2,\dots)$ and consider for example $\wo=(G^{(2)},B,G^{(2)},B,G^{(2)},\dots)\in\Sigma_{\alpha}$.
We would like to compute the IDS at $E:=E_{\alpha,V}(\wo)$ using
the sequence in Proposition~\ref{prop:IDSBandCounting}. Here we
show how to compute the $k=4$ element of this sequence. First observe
that $\frac{5}{13}=\frac{p_{k}}{q_{k}}=\varphi([0,0,2,1,1,2])$. Since
bands and codes are in a one-to-one relation by Proposition~\ref{prop:Coding_Spectrum},
we can rather think of codes and count the number of codes $\eta\in\Sigma_{\co_{4}}^{\textrm{spec}}$
of length $5$ with $\eta\lessdot\wo|_{[0,4]}$. We demonstrate this
situation in Figure~\ref{fig:CountingExample} - to better illustrate
the example, we adapt here the tree formalism from \cite{BanBecLoe_24},
even though it did not originally appear in \cite{Raym95}.

\begin{figure}[h]
\includegraphics[width=1.05\textwidth]{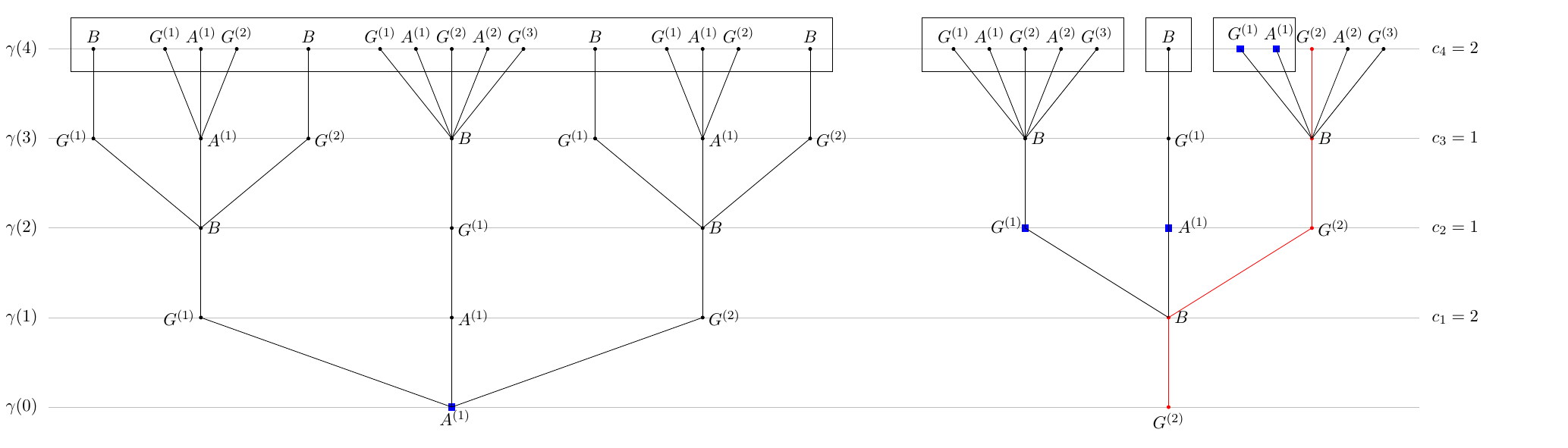} \caption{Visualization of the codes in Example~\ref{exa:CountingCodes} as
an ordered graph. The rectangles mark all the vertices in level $4$
descending from a blue marked vertex in level $j\in\left\{ 0,1,2,3,4\right\} $.
}
\label{fig:CountingExample}
\end{figure}

The beginning of the code $\wo$ is marked red and we need to count
the codes $\eta=(\eta(0),\ldots,\eta(4))\in\Sigma_{\co_{4}}^{\textrm{spec}}$
which correspond to spectral bands and which are to the left of this
path $\gamma$. Specifically, one needs to count the vertices in the
fourth level which are inside the rectangles, but only those vertices
that are labeled $A^{(i)}$ or $B$ should be counted since $\eta\in\Sigma_{\co_{4}}^{\textrm{spec}}$.
To do so, we follow the red path and at each level $j\in\left\{ 0,1,2,3,4\right\} $
we mark the vertices that branch off to the left. In Figure~\ref{fig:CountingExample},
we marked them with blue squares. The set of paths ending at a blue
square in level $j$ with label $\star=A$ if $\eta(j)=A^{(i)}$ and
$\star=G$ if $\eta(j)=G$ are denoted by $\Gamma_{j}(\gamma,\star)$.
For instance, $\Gamma_{0}(\gamma,A)=\left\{ (A^{(1)})\right\} $ and
$\Gamma_{0}(\gamma,G)=\emptyset$. Then we use the evolution laws
$(\Sigma2),(\Sigma3)$ and $(\Sigma4)$ to calculate how many codes
of length 5 are descendants of these blue squares and end with an
$A^{(i)}$ or an $B$. This number is denoted by $d_{j}^{4}(\star)$,
see Definition~\ref{def:=000020d_j^k}. For instance, $d_{0}^{4}(A)=8$
and $d_{1}^{4}(G)=3$. Then $d_{j}^{4}(\star)\cdot\sharp\paths_{j}(\w,\star)$
is the total number of codes $\eta\in\Sigma_{\co_{4}}^{\textrm{spec}}$
where $\eta(j)$ has the label $\star$ and $\eta\lessdot\wo|_{[0,4]}$
since $\eta(j)\lessdot\wo(j)$. Let $\sharp D_{j}^{4}(\gamma)=d_{j}^{4}(A)\cdot\sharp\paths_{j}(\w,A)+d_{j}^{4}(G)\cdot\sharp\paths_{j}(\w,G)$
be the sum of these numbers for the different labels $\star\in\left\{ A,G\right\} $,
see Lemma~\ref{lem:CountingFormalism} and eq. (\ref{eq:=000020Formula=000020for=000020Ds}).
For this specific example, we now can check directly in Figure~\ref{fig:CountingExample}
that
\begin{align*}
\sharp\Des_{0}^{4}(\wo) & =8\cdot1+5\cdot0 & =8,\\
\sharp\Des_{1}^{4}(\wo) & =2\cdot0+3\cdot0 & =0,\\
\sharp\Des_{2}^{4}(\wo) & =1\cdot1+2\cdot1 & =3,\\
\sharp\Des_{3}^{4}(\wo) & =1\cdot0+1\cdot0 & =0,\\
\sharp\Des_{4}^{4}(\wo) & =1\cdot1+0\cdot1 & =1,
\end{align*}
and $\sharp\set{\eta\in\Sigma_{\co_{4}}^{\textrm{spec}}}{\eta\lessdot\wo|_{[0,4]}}=12$
coinciding with the sum of $\sharp\Des_{j}^{4}(\wo)$. Thus,
\[
\frac{\#\set{\eta\in\Sigma_{\co_{4}}^{\textrm{spec}}}{\eta\lessdot\wo|_{[0,4]}}}{q_{4}}=\frac{12}{13}
\]
follows.
\end{example}

In this subsection, we perform this counting in a general manner and
the final result is given in Proposition~\ref{prop:IDSFormula}.
Let $\alpha\in[0,1]\setminus\Q$ and $\wo\in\Sigma_{\alpha}$ an infinite
spectral-$\alpha$-code. Define: 
\[
\Des_{j}^{k}(\wo):=\set{\wo'\in\Scks}{\w'(j)\lessdot\w(j)\text{ and }\forall i<j,~\wo'(i)=\wo(i)},
\]
for all $0\leq j\leq k$. Notice that in particular, 
\[
\Des_{0}^{k}(\wo)=\set{\wo'\in\Scks}{\w'(0)\lessdot\w(0)},
\]
and for example $\Des_{0}^{k}(\wo)=\emptyset$ if $\wo(0)=A^{(1)}$.

We intend to employ the sets $\Des_{j}^{k}(\wo)$ in order to compute
the numerator in (\ref{eq:=000020IDS=000020via=000020coding}), see
e.g. (\ref{enu:=000020lem-CountingFormalism=000020-=0000201}) in
the following lemma.
\begin{lem}
\label{lem:CountingFormalism} Let $V>4,\alpha\in[0,1]\setminus\Q$
with convergents $\varphi(\co_{k}),\,k\in\N_{0},$ and $\wo\in\Sigma_{\alpha}$.
Then the following assertions hold.
\end{lem}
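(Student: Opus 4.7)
The plan is to decompose the set $\set{\eta\in\Scks}{\eta\lessdot\wo|_{[0,k]}}$ according to the first coordinate at which $\eta$ differs from $\wo$, and then within each piece to split off the letter at that coordinate from the remaining suffix. First I would observe that the definition of $\lessdot$ on $\Scks$ together with axioms $(\Sigma 1)$–$(\Sigma 4)$ implies that for every $\eta\lessdot\wo|_{[0,k]}$ there is a \emph{unique} minimal index $j\in\{0,\dots,k\}$ such that $\eta(i)=\wo(i)$ for all $i<j$ and $\eta(j)\lessdot\wo(j)$. Hence $\eta\in\Des_{j}^{k}(\wo)$, and the family $\{\Des_{j}^{k}(\wo)\}_{j=0}^{k}$ is pairwise disjoint with union equal to $\set{\eta\in\Scks}{\eta\lessdot\wo|_{[0,k]}}$. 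This yields the first assertion by a simple cardinality count.

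For the product formula $\sharp\Des_{j}^{k}(\wo)=d_{j}^{k}(A)\cdot\sharp\paths_{j}(\wo,A)+d_{j}^{k}(G)\cdot\sharp\paths_{j}(\wo,G)$, I would factorize each $\Des_{j}^{k}(\wo)$ by the letter $\eta(j)$. Note that if $\wo(j-1)\in\{G^{(i)}:i\in\N\}$, then $(\Sigma 4)$ forces $\eta(j)=B=\wo(j)$, so $\Des_{j}^{k}(\wo)=\emptyset$ and the identity is trivial. Otherwise $\wo(j-1)\in\{A^{(i)}:i\in\N\}\cup\{B\}$, and by $(\Sigma 2)$–$(\Sigma 3)$ both $\wo(j)$ and every admissible $\eta(j)$ belong to $\{A^{(i)}\}\cup\{G^{(i)}\}$, on which $\lessdot$ is a total order. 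The number of admissible $\eta(j)\lessdot\wo(j)$ of $A$-type (resp.\ $G$-type) is by construction $\sharp\paths_{j}(\wo,A)$ (resp.\ $\sharp\paths_{j}(\wo,G)$).

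Next, once the letter $\eta(j)$ is fixed, counting valid continuations $(\eta(j+1),\dots,\eta(k))$ subject to $(\Sigma 2)$–$(\Sigma 4)$ and the terminal constraint $(\Sigma 5)$ reduces to a purely combinatorial count depending only on the \emph{type} ($A$ or $G$) of $\eta(j)$, not on its index. The reason is that the continuation rules $(\Sigma 2)$ and $(\Sigma 3)$ only distinguish whether the preceding letter is of $A$/$G$-type (both treated by $(\Sigma 2)$) versus $B$-type, while a $G$-letter is automatically followed by a $B$ via $(\Sigma 4)$ and the $B$ then admits the same range of continuations independently of which $G^{(i)}$ preceded it. Calling these common counts $d_{j}^{k}(A)$ and $d_{j}^{k}(G)$ gives the claimed product formula after summing over the two types of $\eta(j)$.

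The main obstacle I anticipate is verifying rigorously that the continuation count is indeed a function only of the type $\star\in\{A,G\}$ of $\eta(j)$. This is a bit delicate because of the asymmetric rules $(\Sigma 2)$–$(\Sigma 4)$ and the dependence on the digits $c_{j+1},\dots,c_{k}$; one must argue by induction on $k-j$, using that $(\Sigma 4)$ collapses a $G$-step into an automatic $B$-step whose successor is then governed by $(\Sigma 3)$ exactly as an $A$-successor would be. Once this type-invariance is established, the rest of the argument is straightforward bookkeeping.
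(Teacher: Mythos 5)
Your decomposition of $\{\eta\in\Scks:\eta\lessdot\wo|_{[0,k]}\}$ by the first index at which $\eta$ differs from $\wo$, the use of $(\Sigma4)$ to see $\Des_j^k(\wo)=\emptyset$ when $\wo(j)=B$, and the observation that $(\Sigma2)$/$(\Sigma3)$ force $\eta(j)$ to be $A$- or $G$-type are all correct and match the paper's own proof of items (\ref{enu:=000020lem-CountingFormalism=000020-=0000201}), (\ref{enu:=000020lem-CountingFormalism=000020-=0000203}), (\ref{enu:=000020lem-CountingFormalism=000020-=0000204}); your factorization by $\eta(j)$ implicitly gives item (\ref{enu:=000020lem-CountingFormalism=000020-=0000202}) as well. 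One clarification of scope: the product formula $\sharp\Des_{j}^{k}(\wo)=d_{j}^{k}(A)\,\sharp\paths_{j}(\wo,A)+d_{j}^{k}(G)\,\sharp\paths_{j}(\wo,G)$ that occupies most of your second and third paragraphs is \emph{not} part of this lemma; it is equation~(\ref{eq:=000020Formula=000020for=000020Ds}), which the paper establishes only after Definition~\ref{def:=000020d_j^k}. The ``obstacle'' you anticipate --- showing that the continuation count depends only on the type $\star\in\{A,G\}$ of $\eta(j)$ and not on its index --- is thus not an issue for proving this lemma (which deliberately stops at the fiber-sum identity of item (\ref{enu:=000020lem-CountingFormalism=000020-=0000202})). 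The paper resolves that type-invariance not by an induction on $k-j$ as you sketch, but by the explicit $3\times3$ transfer-matrix recursion of Lemma~\ref{lem:counting1}, which encodes precisely the rules $(\Sigma2)$--$(\Sigma4)$ and makes the type-dependence transparent (and yields the closed forms for $d_j^k$ in Lemma~\ref{lem:counting2}). Your inductive argument would likely work too, but the matrix approach is what feeds the IDS formula downstream.
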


\begin{enumerate}
\item \label{enu:=000020lem-CountingFormalism=000020-=0000201}We have 
\[
\set{\eta\in\Scks}{\eta\lessdot\wo|_{[0,k]}}=\bigsqcup_{j=0}^{k}\Des_{j}^{k}(\wo).
\]
\item \label{enu:=000020lem-CountingFormalism=000020-=0000202}Let $\res_{j}^{k}:\Sck\to\Sigma_{\co_{j}},\wo\mapsto\res_{j}^{k}(\wo)=\wo|_{[0,j]}$
be the restriction map. Then
\[
\sharp\Des_{j}^{k}(\wo)=\sum_{\eta\in\res_{j}^{k}(\Des_{j}^{k}(\wo))}\#\left((\res_{j}^{k})^{-1}(\eta)\cap\Scks\right).
\]
\item \label{enu:=000020lem-CountingFormalism=000020-=0000203}If $\eta\in\Des_{j}^{k}(\wo)$,
then $\eta(j)\neq B\neq\w(j)$. In particular $\Des_{j}^{k}=\emptyset$
if $\w(j)=B$.
\item \label{enu:=000020lem-CountingFormalism=000020-=0000204}For any $0\leq j\leq k$,
we have $\res_{j}^{k}(\Des_{j}^{k}(\wo))=\paths_{j}(\wo,A)\sqcup\paths_{j}(\wo,G),$
where 
\begin{align*}
\paths_{j}(\w,A) & :=\left\{ \eta\in\Sigma_{\co_{j}}:\eta\lessdot\wo|_{[0,j]},\,\eta|_{[0,j-1]}=\wo|_{[0,j-1]},\,\eta(j)\in\{A^{(i)}\colon i\in\N\}\right\} ,\\
\paths_{j}(\w,G) & :=\left\{ \eta\in\Sigma_{\co_{j}}:\eta\lessdot\wo|_{[0,j]},\,\eta|_{[0,j-1]}=\wo|_{[0,j-1]},\,\eta(j)\in\{G^{(i)}\colon i\in\N\}\right\} .
\end{align*}
\end{enumerate}
\begin{proof}
(\ref{enu:=000020lem-CountingFormalism=000020-=0000201}) Let $\eta\in\Des_{j}^{k}(\wo)$
for $0\leq j\leq k$. By construction of $\lessdot$ on $\Scks$,
$\eta(j)\lessdot\w(j)$ implies $\eta\lessdot\wo|_{[0,k]}$. Thus
\[
\Des_{j}^{k}(\wo)\subseteq\{\wo'\in\Sigma_{\co_{k}}^{s}:\eta\lessdot\wo|_{[0,k]}\},
\]
follows for all $0\leq j\leq k$. For the converse inclusion observe
that if $\eta\in\Scks$ satisfies $\eta\lessdot\wo|_{[0,k]},$ then
either $\eta(0)\lessdot\w(0)$, or there is some $0\leq j\leq k$
such that 
\[
\eta(j)\lessdot\w(j)\quad\text{ and }\quad\eta(i)=\w(i)\text{ for all }0\leq i<j.
\]

(\ref{enu:=000020lem-CountingFormalism=000020-=0000202}) Let $\eta\in\res_{j}^{k}(\Des_{j}^{k}(\wo))$,
so we have $\eta(j)\lessdot\w(j)$. Thus, if $\tilde{\wo}\in(\res_{j}^{k})^{-1}(\{\eta\})$,
then $\tilde{\wo}|_{[0,j]}=\eta$ holds and so $\tilde{\wo}\lessdot\wo|_{[0,k]}$.
If, additionally, $\tilde{\wo}\in\Scks$, then we $\tilde{\wo}\in\Des_{j}^{k}(\wo)$.
With this the claim follows.

(\ref{enu:=000020lem-CountingFormalism=000020-=0000203}) This follows
$(\Sigma4)$ in Definition~\ref{def:SpectralCodes} asserting that
if $\w(j)=B$, then $\w(j-1)=G^{(i)}$ for some $i\in\N$. Therefore,
there is no code $\eta\in\Sigma_{\co_{j}}$ with $\eta(j-1)=\w(j-1),\eta(j)\lessdot\w(j)$
and $\eta(j)=B$ or $\w(j)=B$.

(\ref{enu:=000020lem-CountingFormalism=000020-=0000204}) This follows
directly from (\ref{enu:=000020lem-CountingFormalism=000020-=0000203}).
\end{proof}
To continue the counting arguments, we find it useful to partition
finite codes according to their finite letter. Towards this we introduce
another notation. Let $\alpha\in[0,1]\setminus\Q$ with continued
fraction expansion $\left(c_{k}\right)_{k=0}^{\infty}$ and $\co_{k}=[0,c_{0},\dots,c_{k}]\in\Co$.
Consider a subset $\Ee\subseteq\Sigma_{\co_{k}}$. Define

\begin{align*}
\ext(\Ee,A) & :=\#\set{\eta\in\Ee}{\eta(k)=A^{(i)}\text{ for some }i\in\N},\\
\ext(\Ee,G) & :=\#\set{\eta\in\Ee}{\eta(k)=G^{(i)}\text{ for some }i\in\N},\\
\ext(\Ee,B) & :=\#\set{\eta\in\Ee}{\eta(k)=B}.
\end{align*}

For instance, $\ext(\Ee,A)$ is the number of those spectral-$\alpha$-codes
in $\Ee$ which end on an $A^{(i)}$. Let $l\in\N$. Then $\ext((\res_{k}^{k+l})^{-1}(\Ee),A)$
is the number of those codes in $\Sigma_{\co_{k+l}}$ that are extensions
of codes in $\Ee$ and end with a letter $A^{(i)}$. The next lemma
provides identities for counting the number of such code extensions.
Therefore, recall that the convergents $\pqk=\varphi(\co_{k}),\,k\in\N_{0},$
of $\alpha\in[0,1]\setminus\Q$ with $p_{k},q_{k}$ coprime satisfy
the recursive relation (\ref{eq:=000020recursion=000020for=000020p_k,=000020q_k}).
\begin{lem}
[{\cite[prop. 4.1]{Raym95}}]\label{lem:counting1} Let $\alpha\in[0,1]\setminus\Q$
with continued fraction expansion $\left(c_{k}\right)_{k=0}^{\infty}$
and convergents $\pqk=\varphi(\co_{k}),\,k\in\N_{0},$ where $p_{k},q_{k}$
are coprime.
\begin{enumerate}
\item \label{enu:counting1_a_one=000020step=000020transfer=000020matrices}Let
$\Ee\subseteq\Sigma_{\co_{k}},k\in\N_{0}$. Then 
\[
\left(\begin{array}{c}
\ext\left(\left(\res_{k}^{k+1}\right)^{-1}\left(\Ee\right),G\right)\\
\ext\left(\left(\res_{k}^{k+1}\right)^{-1}\left(\Ee\right),B\right)\\
\ext\left(\left(\res_{k}^{k+1}\right)^{-1}\left(\Ee\right),A\right)
\end{array}\right)=T_{k+1}\left(\begin{array}{c}
\ext\left(\Ee,G\right)\\
\ext\left(\Ee,B\right)\\
\ext\left(\Ee,A\right)
\end{array}\right),
\]
where 
\[
T_{k+1}:=T_{k+1}\left(\co_{k+1}\right):=\left(\begin{array}{ccc}
1 & c_{k+1}+1 & c_{k+1}\\
1 & 0 & 0\\
0 & c_{k+1} & c_{k+1}-1
\end{array}\right).
\]
\item \label{enu:counting1_b_starting=000020from=000020l=000020to=000020k}If
$\ell\in\N$, then 
\[
\left(\begin{array}{c}
\ext\left(\left(\res_{k}^{k+\ell}\right)^{-1}\left(\Ee\right),G\right)\\
\ext\left(\left(\res_{k}^{k+\ell}\right)^{-1}\left(\Ee\right),B\right)\\
\ext\left(\left(\res_{k}^{k+\ell}\right)^{-1}\left(\Ee\right),A\right)
\end{array}\right)=S_{k+\ell}S_{k}^{-1}\left(\begin{array}{c}
\ext\left(\Ee,G\right)\\
\ext\left(\Ee,B\right)\\
\ext\left(\Ee,A\right)
\end{array}\right).
\]
with $S_{k}:=S_{k}\left(\co_{k}\right):=T_{k}T_{k-1}\dots T_{1}$
and $S_{0}=\Id$ the identity matrix.
\item \label{enu:counting1_a_computing=000020S_k=000020explicitly}The matrix
$S_{k}$ is given by 
\[
S_{k}=\left(\begin{array}{ccc}
p_{k}+\left(-1\right)^{k} & q_{k}-\left(-1\right)^{k} & q_{k}-p_{k}-\left(-1\right)^{k}\\
p_{k-1}-\left(-1\right)^{k} & q_{k-1}+\left(-1\right)^{k} & q_{k-1}-p_{k-1}+\left(-1\right)^{k}\\
p_{k}-p_{k-1}+\left(-1\right)^{k} & q_{k}-q_{k-1}-\left(-1\right)^{k} & q_{k}-q_{k-1}-p_{k}+p_{k-1}-\left(-1\right)^{k}
\end{array}\right)
\]
and its inverse $S_{k}^{-1}$ equals to
\[
S_{k}^{-1}=\left(-1\right)^{k}\left(\begin{array}{ccc}
1-p_{k} & p_{k-1}-1 & p_{k-1}+p_{k}-1\\
q_{k}-1 & 1-q_{k-1} & 1-q_{k}-q_{k-1}\\
1-p_{k}-q_{k} & q_{k-1}+p_{k-1}-1 & p_{k}+p_{k-1}+q_{k}+q_{k-1}-1
\end{array}\right).
\]
\end{enumerate}
\end{lem}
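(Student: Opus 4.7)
The plan is to treat the three parts of the lemma sequentially, with part (a) being purely combinatorial, part (b) an algebraic consequence, and part (c) an induction on $k$ that exploits the classical recursion~(\ref{eq:=000020recursion=000020for=000020p_k,=000020q_k}).

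For part (a), I would fix a code $\eta \in \Sigma_{\co_k}$ and count how many one-letter extensions $\tilde{\eta} \in \Sigma_{\co_{k+1}}$ it admits, classified by the type ($A$, $B$, or $G$) of the final letter $\tilde{\eta}(k+1)$. The three cases are read off directly from Definition~\ref{def:SpectralCodes}: if $\eta(k) \in \{G^{(i)}\}$ then $(\Sigma 4)$ forces $\tilde{\eta}(k+1) = B$ and gives exactly one extension; if $\eta(k) = B$, then $(\Sigma 3)$ permits $c_{k+1}$ extensions ending in an $A^{(i)}$ and $c_{k+1}+1$ ending in a $G^{(i)}$; and if $\eta(k) \in \{A^{(i)}\}$, then $(\Sigma 2)$ permits $c_{k+1}-1$ extensions ending in $A^{(i)}$ and $c_{k+1}$ ending in $G^{(i)}$. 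Since $\ext((\res_k^{k+1})^{-1}(\Ee),\cdot)$ decomposes as a sum over $\eta \in \Ee$ of these local extension counts, the vector of counts at level $k+1$ is obtained from the vector at level $k$ by multiplication by $T_{k+1}$.

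For part (b), iterating part (a) along the chain $\res_k^{k+1}, \res_{k+1}^{k+2}, \dots, \res_{k+\ell-1}^{k+\ell}$ shows that the transformation from level $k$ to level $k+\ell$ is $T_{k+\ell}T_{k+\ell-1}\cdots T_{k+1}$. A one-line cofactor computation (expanding along the second row) gives $\det T_j = 1$ for every $j$, so $\det S_k = 1$, and in particular $S_k$ is invertible. The identity $T_{k+\ell}\cdots T_{k+1} = S_{k+\ell}S_k^{-1}$ is then immediate from the definition $S_k := T_k \cdots T_1$.

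For part (c), I would prove the formula for $S_k$ by induction on $k \in \Nz$, using $p_{-1}=1$, $q_{-1}=0$, $p_0=0$, $q_0=1$ to verify the base case $S_0 = \Id$. For the inductive step, compute $S_{k+1} = T_{k+1}S_k$ row by row: the new first row is $(c_{k+1}+1)\cdot\mathrm{Row}_2(S_k) + c_{k+1}\cdot\mathrm{Row}_3(S_k)$, the new second row is $\mathrm{Row}_1(S_k)$, and the new third row is $c_{k+1}\cdot\mathrm{Row}_2(S_k) + (c_{k+1}-1)\cdot\mathrm{Row}_3(S_k)$. In each of the nine resulting entries the sub-expressions $c_{k+1}p_k + p_{k-1}$ and $c_{k+1}q_k + q_{k-1}$ appear and can be collapsed via~(\ref{eq:=000020recursion=000020for=000020p_k,=000020q_k}) to $p_{k+1}$ and $q_{k+1}$, after which $(-1)^k$ is rewritten as $-(-1)^{k+1}$ to match the claimed form. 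For the expression for $S_k^{-1}$, since $\det S_k = 1$ one may either compute $S_k^{-1}$ as the adjugate matrix from the explicit form of $S_k$, or verify $S_k \cdot X = \Id$ directly for the proposed $X$; I would opt for the direct verification, which again reduces to repeated use of $c_{k}p_{k-1}+p_{k-2}=p_k$, $c_kq_{k-1}+q_{k-2}=q_k$, together with the standard identity $p_kq_{k-1}-p_{k-1}q_k = (-1)^{k+1}$, which follows from $\det S_k = 1$ via the determinant of the top-left $2\times 2$ block.

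The main obstacle is purely bookkeeping: the inductive step in (c) produces nine scalar identities, each of which simplifies only after both of the recursions and, for the off-diagonal entries, the determinant identity $p_k q_{k-1}-p_{k-1}q_k = (-1)^{k+1}$ are applied. No deeper idea is needed beyond organizing the algebra cleanly; as a sanity check at each stage one may verify that all column sums of $S_k$ agree with those predicted by the branching rules (each column of $T_j$ counts total extensions from a single letter).
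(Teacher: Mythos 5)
Your overall strategy matches the paper's own (which is very terse on (a) and (c)): a direct combinatorial count of one-letter extensions for (a), iteration of (a) for (b), and a row-operation induction for $S_k$ plus a direct verification of $S_k^{-1}$ for (c), with the determinant identity $q_k p_{k-1}-p_k q_{k-1}=(-1)^k$ entering along the way. Filling in the inductive row operations is a reasonable elaboration.

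There is, however, a concrete discrepancy between your own branching analysis and the matrix $T_{k+1}$ as written in the lemma, which you do not address and which would stop your proof of part (c). From a code ending in a $G^{(i)}$, property $(\Sigma4)$ forces exactly one extension, ending in $B$, and no others; so the count of extensions from type $G$ to type $G$ is $0$, and the $(1,1)$-entry of the transition matrix should be $0$, not $1$ as in the stated $T_{k+1}$. You nevertheless assert that your branching counts ``give $T_{k+1}$'' without comparing entry by entry. The mismatch is not harmless: the induction in (c) already fails at the base step, since the stated formula gives $S_1=T_1$ with $(1,1)$-entry $p_1+(-1)^1=1-1=0$, contradicting the written $T_1$. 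The column-sum sanity check you mention at the end would also have caught it: the first column of the written $T_j$ sums to $2$, which would say a $G$-node has two children, contradicting $(\Sigma4)$. The resolution is that the written $T_{k+1}$ contains a misprint (its top-left entry should be $0$); the formulas for $S_k$ and $S_k^{-1}$ in (c), the downstream Lemma~\ref{lem:counting2}, and the worked Example~\ref{exa:CountingExample} are all consistent with the corrected matrix. Either flag and fix this, or consistently use the corrected matrix throughout; as written your argument cannot close the loop on (c).
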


\begin{proof}
(\ref{enu:counting1_a_one=000020step=000020transfer=000020matrices})
This is a direct consequence of the defining properties $(\Sigma2)$
to $(\Sigma4)$.

(\ref{enu:counting1_b_starting=000020from=000020l=000020to=000020k})
This is proven via induction over $\ell\in\N$. For $\ell=1$ this
statement is just part (\ref{enu:counting1_a_one=000020step=000020transfer=000020matrices})
of this Lemma. Now assume the statement holds for an arbitrary $\ell\in\N$.
By observing $\res_{k}^{k+\ell+1}=\res_{k}^{k+\ell}\circ\res_{k+\ell}^{k+\ell+1}$
we get 
\[
\left(\res_{k}^{k+\ell+1}\right)^{-1}\left(\Ee\right)=\left(\res_{k+\ell}^{k+\ell+1}\right)^{-1}\left(\left(\res_{k}^{k+\ell}\right)^{-1}\left(\Ee\right)\right).
\]
Using (\ref{enu:counting1_a_one=000020step=000020transfer=000020matrices})
and our induction hypothesis then yields
\begin{align*}
\vect{\ext\left(\left(\res_{k}^{k+\ell+1}\right)^{-1}\left(\Ee\right),G\right)} & =\vect{\ext\left(\left(\res_{k+\ell}^{k+\ell+1}\right)^{-1}\left(\left(\res_{k}^{k+\ell}\right)^{-1}\left(\Ee\right)\right),G\right)}\\
 & =T_{k+\ell+1}\vect{\ext\left(\left(\res_{k}^{k+\ell}\right)^{-1}\left(\Ee\right),G\right)}\\
 & =T_{k+\ell+1}S_{k+\ell}S_{k}^{-1}\vect{\ext\left(\Ee,G\right)}\\
 & =S_{k+\ell+1}S_{k}^{-1}\vect{\ext\left(\Ee,G\right)}.
\end{align*}
(\ref{enu:counting1_a_computing=000020S_k=000020explicitly}) The
stated form for $S_{k}$ can be computed inductively. Then one checks
by direct computations that $S_{k}^{-1}$ is given by the stated matrix.
Occasionally, the formula $q_{k}p_{k-1}-p_{k}q_{k-1}=(-1)^{k}$ (see
\cite[thm. 2]{Khinchin_book64}) is used. We leave the computational
details to the reader.
\end{proof}
Let $\wo\in\Sigma_{\co_{j}}$ such that $\w(j)=A^{(i)}$ for some
$i$. Then, for $k>j$, we wish to count how many $\eta\in\Scks$
there are such that $\eta|_{[0,j]}=\wo$. By the previous lemma, this
number depends only on $j$ and $k$, but it does not depend on the
particular $\wo\in\Sigma_{\co_{j}}$ satisfying $\w(j)=A^{(i)}$.
Indeed, using Lemma~\ref{lem:counting1} we define:
\begin{defn}
\label{def:=000020d_j^k}For $\alpha\in[0,1]\setminus\Q$, define
\[
d_{j}^{k}(A):=\left(\begin{array}{ccc}
0 & 1 & 1\end{array}\right)S_{k}S_{j}^{-1}\left(\begin{array}{c}
0\\
0\\
1
\end{array}\right),
\]
and 
\[
d_{j}^{k}(G):=\left(\begin{array}{ccc}
0 & 1 & 1\end{array}\right)S_{k}S_{j}^{-1}\left(\begin{array}{c}
1\\
0\\
0
\end{array}\right).
\]
\end{defn}

With this definition we may express the number of elements in the
sets $\Des_{j}^{k}(\wo)$ (which are used in Lemma~\ref{lem:CountingFormalism}~(\ref{enu:=000020lem-CountingFormalism=000020-=0000201}))
\begin{equation}
\#\Des_{j}^{k}(\wo)=d_{j}^{k}(A)\cdot\#\paths_{j}(\wo,A)+d_{j}^{k}(G)\cdot\#\paths_{j}(\wo,G),\label{eq:=000020Formula=000020for=000020Ds}
\end{equation}
where the sets $\paths_{j}(\wo,A)$, $\paths_{j}(\wo,G)$ were defined
in Lemma~\ref{lem:CountingFormalism}. To verify identity (\ref{eq:=000020Formula=000020for=000020Ds})
one first observes that the set $\Des_{j}^{k}(\wo)$ may be decomposed
into codes $\eta\in\Des_{j}^{k}(\wo)$ for which $\eta(j)=A^{(i)}$
(for some $i$) and codes $\eta\in\Des_{j}^{k}(\wo)$ for which $\eta(j)=G^{(i)}$
(for some $i$). This decomposition is thorough, since there are no
codes $\eta\in\Des_{j}^{k}(\wo)$ for which $\eta(j)=B$, see Lemma~\ref{lem:CountingFormalism}~(\ref{enu:=000020lem-CountingFormalism=000020-=0000203}).
To count the codes $\eta\in\Des_{j}^{k}(\wo)$ for which $\eta(j)=A^{(i)}$,
we notice that the prefix of each such code, $\left(\eta(0),\ldots,\eta(j-1),A^{(i)}\right)$
belongs to $\paths_{j}(\wo,A)$ and there are exactly $d_{j}^{k}(A)$
ways to extend such a prefix to get an element in $\Des_{j}^{k}(\wo)$.

Next, we provide an explicit formula for $d_{j}^{k}(A)$ and $d_{j}^{k}(G)$
using the convergents $\pqk=\varphi(\co_{k}),\,k\in\N_{0},$ of $\alpha$
with $p_{k},q_{k}$ coprime. Therefore, we like to remind the reader
on the recursive definition of $\left\{ p_{k}\right\} _{k=-1}^{\infty}$
and $\left\{ q_{k}\right\} _{k=-1}^{\infty}$ in (\ref{eq:=000020recursion=000020for=000020p_k,=000020q_k})
with initial condition
\[
p_{-1}=1,\quad p_{0}=0,\quad q_{-1}=0,\quad q_{0}=1.
\]

\begin{lem}
\label{lem:counting2} Let $\alpha\in[0,1]\setminus\Q$ with convergents
$\pqk=\varphi(\co_{k}),\,k\in\N_{0},$ with $p_{k},q_{k}$ coprime.
Consider the numbers 
\[
\Pp_{j}^{k}:=(-1)^{j}[q_{j}p_{k}-p_{j}q_{k}]\quad\quad\textrm{for }-1\leq j\leq k.
\]
Then 
\[
d_{j}^{k}(A)=\Pp_{j-1}^{k}-\Pp_{j}^{k}\quad\text{ and }\quad d_{j}^{k}(G)=\Pp_{j}^{k}
\]
hold for $0\leq j\leq k$.
\end{lem}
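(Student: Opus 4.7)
My plan is to prove this by direct computation, exploiting the explicit formulas for $S_k$ and $S_j^{-1}$ given in Lemma~\ref{lem:counting1}~(\ref{enu:counting1_a_computing=000020S_k=000020explicitly}). The strategy is to compute $(0,1,1)S_k$ once as a row vector, then apply it to the two standard basis vectors $S_j^{-1}e_3$ and $S_j^{-1}e_1$, and finally simplify. The key preliminary observation is that $(0,1,1)S_k$, i.e.\ the sum of the second and third rows of $S_k$, simplifies dramatically: all $(-1)^k$ terms cancel and we obtain
\[
(0,1,1)\,S_k \;=\; (p_k,\;q_k,\;q_k-p_k),
\]
which is pleasant because it no longer depends on the parity of $k$.

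Next, I would read off from the explicit form of $S_j^{-1}$ that
\[
S_j^{-1}\begin{pmatrix}0\\0\\1\end{pmatrix} = (-1)^j\begin{pmatrix}p_{j-1}+p_j-1\\1-q_j-q_{j-1}\\p_j+p_{j-1}+q_j+q_{j-1}-1\end{pmatrix},\qquad
S_j^{-1}\begin{pmatrix}1\\0\\0\end{pmatrix}=(-1)^j\begin{pmatrix}1-p_j\\q_j-1\\1-p_j-q_j\end{pmatrix}.
\]
Plugging into the definitions gives
\[
d_j^k(A) = (-1)^j\bigl[p_k(p_{j-1}+p_j-1) + q_k(1-q_j-q_{j-1}) + (q_k-p_k)(p_j+p_{j-1}+q_j+q_{j-1}-1)\bigr],
\]
and an analogous expression for $d_j^k(G)$. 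The main step is then careful expansion: nearly everything cancels (all products of the form $p_k p_j$, $p_k p_{j-1}$, $q_k q_j$, $q_k q_{j-1}$, and the constant terms $p_k$, $q_k$), and one is left with only the ``mixed cross terms''. For $d_j^k(A)$ the surviving terms collapse to $(-1)^j\bigl[(q_k p_j - p_k q_j) + (q_k p_{j-1} - p_k q_{j-1})\bigr]$, which is exactly $\mathcal{P}_{j-1}^k - \mathcal{P}_j^k$ after matching signs via $(-1)^j(q_k p_j - p_k q_j) = -\mathcal{P}_j^k$ and $(-1)^j(q_k p_{j-1} - p_k q_{j-1}) = \mathcal{P}_{j-1}^k$. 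Similarly for $d_j^k(G)$ the expansion collapses to $(-1)^j(q_j p_k - p_j q_k) = \mathcal{P}_j^k$.

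I expect no genuine conceptual obstacle here; the entire proof is a mechanical verification. The only mild pitfall is keeping the signs straight between $(-1)^j$, $(-1)^{j-1}$, and the alternating role of $q_n p_m - p_n q_m$ versus $p_n q_m - q_n p_m$ — the identity $q_k p_{k-1} - p_k q_{k-1} = (-1)^k$ (recorded in the proof of Lemma~\ref{lem:counting1}) is what makes the bookkeeping consistent. I would present the cancellations for $d_j^k(A)$ in a short display, then simply remark that the computation for $d_j^k(G)$ is analogous (and actually shorter, since fewer terms arise).
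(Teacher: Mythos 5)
Your proof is correct and follows essentially the same approach as the paper's, namely a direct computation via the explicit formulas for $S_k$ and $S_j^{-1}$ from Lemma~\ref{lem:counting1}. Your observation that $(0,1,1)S_k = (p_k,\,q_k,\,q_k-p_k)$ is a nice streamlining that eliminates the $(-1)^k$ terms at the outset — the paper's sketch instead carries them through two intermediate entries and lets them cancel only on the final addition — and as a side effect the identity $q_kp_{k-1}-p_kq_{k-1}=(-1)^k$ that you cite is not actually needed in your version of the bookkeeping.
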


\begin{proof}
We sketch the computation of $d_{j}^{k}(G)$. We have 
\begin{align*}
S_{k}S_{j}^{-1}\cdot\left(\begin{array}{c}
1\\
0\\
0
\end{array}\right) & =(-1)^{j}S_{k}\cdot\left(\begin{array}{c}
1-p_{j}\\
q_{j}-1\\
1-p_{j}-q_{j}
\end{array}\right).
\end{align*}
Performing this matrix multiplication and simplifying then yields
\begin{align*}
\left(\begin{array}{ccc}
0 & 1 & 0\end{array}\right)\cdot S_{k}S_{j}^{-1}\cdot\left(\begin{array}{c}
1\\
0\\
0
\end{array}\right)= & (-1)^{j}[q_{j}p_{k-1}-p_{j}q_{k-1}-(-1)^{k}],\\
\left(\begin{array}{ccc}
0 & 0 & 1\end{array}\right)\cdot S_{k}S_{j}^{-1}\cdot\left(\begin{array}{c}
1\\
0\\
0
\end{array}\right)= & (-1)^{j}[p_{j}q_{k-1}+q_{j}p_{k}-q_{j}p_{k-1}-p_{j}q_{k}+(-1)^{k}],
\end{align*}
and hence
\begin{align*}
d_{j}^{k}(G)=(-1)^{j}[q_{j}p_{k}-p_{j}q_{k}]=\Pp_{j}^{k}.
\end{align*}
To compute $d_{j}^{k}(A)$, one proceeds analogously.
\end{proof}
Now, we have collected all the pieces in order to provide the promised
formula for the IDS $N_{\alpha,V}$.
\begin{prop}
\label{prop:IDSFormula} Let $V>4,\alpha\in[0,1]\setminus\Q$ with
convergents $\pqk=\varphi(c_{k}),k\in\N_{0},$ with $p_{k},q_{k}$
coprime. Then, for each $\wo\in\Sigma_{\alpha}$, 
\[
\IDS(E_{\alpha,V}(\wo))=\sum_{k=-1}^{\infty}(-1)^{k}\ypi_{k}(\wo)(q_{k}\alpha-p_{k})
\]
where 
\[
\ypi_{-1}(\wo):=\#\paths_{0}(\wo,A)\quad\text{and}\quad\ypi_{k}(\wo):=\#\paths_{k}(\wo,G)-\#\paths_{k}(\wo,A)+\#\paths_{k+1}(\wo,A),\,k\in\N_{0}.
\]
\end{prop}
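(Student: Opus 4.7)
The plan is to start from the representation of $N_{\alpha,V}$ derived in Proposition~\ref{prop:IDS_Code=000020representation}, namely
\[
N_{\alpha,V}(E_{\alpha,V}(\gamma)) \;=\; \lim_{k\to\infty}\frac{\#\{\eta\in\Scks : \eta\lessdot\gamma|_{[0,k]}\}}{q_k},
\]
and then to evaluate the numerator via the counting machinery of Subsection~\ref{subsec:=000020formula=000020for=000020IDS}. Specifically, by Lemma~\ref{lem:CountingFormalism}~(\ref{enu:=000020lem-CountingFormalism=000020-=0000201}) together with the identity (\ref{eq:=000020Formula=000020for=000020Ds}), I would rewrite the numerator as
\[
\sum_{j=0}^{k}\bigl[\,d_j^k(A)\,\#\paths_j(\gamma,A)+d_j^k(G)\,\#\paths_j(\gamma,G)\bigr].
\]
Substituting the explicit expressions $d_j^k(A)=\Pp_{j-1}^k-\Pp_j^k$ and $d_j^k(G)=\Pp_j^k$ from Lemma~\ref{lem:counting2}, splitting the sum and shifting the index $j\mapsto j+1$ in the $d_j^k(A)$ part, the numerator becomes
\[
\Pp_{-1}^k\,\#\paths_0(\gamma,A)+\sum_{j=0}^{k-1}\Pp_j^k\,\bigl[\#\paths_{j+1}(\gamma,A)+\#\paths_j(\gamma,G)-\#\paths_j(\gamma,A)\bigr]+\Pp_k^k\bigl[\cdots\bigr].
\]
The boundary terms simplify thanks to $\Pp_{-1}^k = -(q_{-1}p_k-p_{-1}q_k) = q_k$ and $\Pp_k^k = 0$, so after dividing by $q_k$ one arrives at
\[
\frac{\#\{\eta\in\Scks : \eta\lessdot\gamma|_{[0,k]}\}}{q_k}\;=\;\#\paths_0(\gamma,A)+\sum_{j=0}^{k-1}\frac{\Pp_j^k}{q_k}\,\ypi_j(\gamma),
\]
with $\ypi_j(\gamma)$ exactly as defined in the statement.

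The next step is to recognize $\Pp_j^k/q_k=(-1)^j(q_j\alpha_k-p_j)$, which converges to $(-1)^j(q_j\alpha-p_j)$ as $k\to\infty$. Combined with the readily checked identities $(-1)^{-1}(q_{-1}\alpha-p_{-1})=1$ and $\ypi_{-1}(\gamma)=\#\paths_0(\gamma,A)$, passing to the limit formally yields the desired series $\sum_{k\geq -1}(-1)^k\ypi_k(\gamma)(q_k\alpha-p_k)$.

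The main obstacle is the interchange of the sum with the limit $k\to\infty$, for which I would invoke dominated convergence. On the one hand, the local evolution rules ($\Sigma$2)--($\Sigma$4) bound $\#\paths_j(\gamma,A)$ and $\#\paths_j(\gamma,G)$ by $c_j+1$, so $|\ypi_j(\gamma)|\leq 2(c_{j+1}+2)$. On the other hand, since for $k>j$ the convergent $\alpha_k$ lies between $\alpha_j$ and $\alpha_{j+1}$, one has
\[
\left|\frac{\Pp_j^k}{q_k}\right|=|q_j\alpha_k-p_j|\;\leq\;|q_j\alpha_{j+1}-p_j|\;=\;\frac{1}{q_{j+1}},
\]
using $q_jp_{j+1}-p_jq_{j+1}=\pm 1$. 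Since $q_{j+1}\geq c_{j+1}q_j$ and $q_j$ grows at least as the Fibonacci numbers, the product $|\ypi_j(\gamma)|\cdot|\Pp_j^k/q_k|$ is bounded by $2(c_{j+1}+2)/q_{j+1}$, which is a summable majorant independent of $k$. This justifies both the absolute convergence of the limiting series and the interchange, completing the derivation of the formula.
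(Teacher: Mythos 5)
Your proof is correct and follows the same algebraic route as the paper's: Proposition~\ref{prop:IDS_Code=000020representation}, then Lemma~\ref{lem:CountingFormalism}, identity~(\ref{eq:=000020Formula=000020for=000020Ds}) and Lemma~\ref{lem:counting2}, the index shift, and the recognition of $\Pp_j^k/q_k = (-1)^j(q_j\alpha_k - p_j)$. Where you genuinely diverge is in the final limit interchange: the paper argues by monotone convergence, observing that $\mu_j(\gamma)\ge 0$ for $j\ge 1$ makes $f_k(j)\ge 0$ for $j\ge 1$ and then splitting the sum into even- and odd-indexed subsequences along $k\to\infty$ through the even integers. Your dominated convergence argument is a clean alternative and arguably more robust, since it does not rely on the sign structure of the $\ypi_j$; the pointwise bound $|\Pp_j^k/q_k|\le 1/q_{j+1}$ (valid for all $k>j$, with $\Pp_k^k=0$) together with a bound $|\ypi_j(\gamma)|\lesssim c_{j+1}$ gives a $k$-independent summable majorant once one uses $q_{j+1}\ge c_{j+1}q_j$ and the exponential growth of $q_j$. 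One small inaccuracy in the intermediate step: from ($\Sigma$2)--($\Sigma$4) the counts $\#\paths_j(\gamma,A)$, $\#\paths_j(\gamma,G)$ are bounded by $c_j+1$ (not $c_{j+1}+1$), while $\#\paths_{j+1}(\gamma,A)\le c_{j+1}$, so the resulting bound on $|\ypi_j(\gamma)|$ involves both $c_j$ and $c_{j+1}$ rather than the stated $2(c_{j+1}+2)$. (The table in Remark~\ref{rem:PiComputation} in fact gives the sharper $-1\le \ypi_j(\gamma)\le c_{j+1}$.) The majorant remains summable either way, since $c_j/q_{j+1}\le 1/q_j$ and $c_{j+1}/q_{j+1}\le 1/q_j$, so the conclusion stands.
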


\begin{proof}
Let $\wo\in\Sigma_{\alpha}$. Our starting point is Proposition~\ref{prop:IDS_Code=000020representation},
to which we consequently apply Lemma~\ref{lem:CountingFormalism}~(\ref{enu:=000020lem-CountingFormalism=000020-=0000201}),
(\ref{eq:=000020Formula=000020for=000020Ds}) and Lemma~\ref{lem:counting2},
\begin{align*}
\IDS(E_{\alpha,V}(\wo))= & \lim_{k\to\infty}\frac{\#\set{\eta\in\Scks}{\eta\lessdot\wo}}{q_{k}}\\
= & \lim_{k\to\infty}\frac{\sum_{j=0}^{k}\#\Des_{j}^{k}(\wo)}{q_{k}}\\
= & \lim_{k\to\infty}\frac{1}{q_{k}}\sum_{j=0}^{k}\#\paths_{j}(\wo,A)\cdot d_{j}^{k}(A)+\#\paths_{j}(\wo,G)\cdot d_{j}^{k}(G)\\
= & \lim_{k\to\infty}\frac{1}{q_{k}}\sum_{j=0}^{k}\#\paths_{j}(\wo,A)\cdot(\Pp_{j-1}^{k}-\Pp_{j}^{k})+\#\paths_{j}(\wo,G)\cdot\Pp_{j}^{k}\\
= & \lim_{k\to\infty}\frac{1}{q_{k}}\sum_{j=0}^{k}\Pp_{j}^{k}\cdot\left(\#\paths_{j}(\wo,G)-\#\paths_{j}(\wo,A)+\#\paths_{j+1}(\wo,A)\right)\\
 & \quad+\frac{1}{q_{k}}\Pp_{-1}^{k}\cdot\#\paths_{0}(\wo,A)-\frac{1}{q_{k}}\underbrace{\Pp_{k}^{k}}_{=0}\cdot\#\paths_{k+1}(\wo,A)\\
= & \lim_{k\to\infty}\frac{1}{q_{k}}\sum_{j=-1}^{k}\Pp_{j}^{k}\cdot\ypi_{j}(\wo)\\
= & \lim_{k\to\infty}\sum_{j=-1}^{k}\underbrace{(-1)^{j}\ypi_{j}(\wo)\cdot\left(q_{j}\frac{p_{k}}{q_{k}}-p_{j}\right)}_{=:f_{k}(j)}.
\end{align*}
where in the last two equalities we substitute $\ypi_{j}(\wo)$ from
this proposition statement and the $\Pp_{j}^{k}$ from Lemma~\ref{lem:counting2}.

According to \cite[thm. 4]{Khinchin_book64}, the sequence $\left\{ \frac{p_{2l}}{q_{2l}}\right\} _{l=1}^{\infty}$
is monotonically increasing, $\left\{ \frac{p_{2l-1}}{q_{2l-1}}\right\} _{l=1}^{\infty}$
is monotonically decreasing and both sequences converge to $\alpha$.
Hence, we conclude for all $k\in\N$,

\[
(-1)^{j}\left(q_{j}\frac{p_{k}}{q_{k}}-p_{j}\right)\geq0\quad\textrm{ for all }j\leq k\qquad\textrm{and}\qquad(-1)^{k}\left(q_{k}\alpha-p_{k}\right)\geq0.
\]
Since $\mu_{j}(\gamma)\geq0$ if $j\geq1$, we conclude $f_{k}(j)\geq0$
for all $j\in\N$. Thus, $\lim_{k\to\infty}\sum_{j=-1}^{k}f_{k}(j)$
converges absolutely using that the limit exists. For each $k\in\N$,
we also have $f_{2k}(2l)$ is monotone increasing in $l\in\N$ and
$f_{2k}(2l-1)$ is monotone decreasing in $l\in\N$. Note also that
$\lim_{k\to\infty}f_{k}(j)=(-1)^{j}\ypi_{j}(\wo)\cdot\left(q_{j}\alpha-p_{j}\right)$
for each $j\geq-1$. Hence, the monotone convergence theorem applied
to the following two summands separately leads to 
\[
\IDS(E_{\alpha,V}(\wo))=\lim_{k\to\infty}\left(\sum_{j=0}^{2k}f_{2k}(2j-1)+\sum_{l=0}^{2k}f_{2k}(2l)\right)=\sum_{j=-1}^{\infty}(-1)^{j}\ypi_{j}(\wo)\cdot\left(q_{j}\alpha-p_{j}\right)
\]

where we used in the first step that the limit exists and so we can
pass to the subsequence of even numbers $2k$.
\end{proof}
\begin{rem}
\label{rem:PiComputation} Recognizing the importance of the functions
$\ypi_{k}(\wo),\,k\geq-1$ for the representation of the IDS in Proposition~\ref{prop:IDSFormula},
we wish to elaborate on their possible values and their connection
to the spectral code.

We have 
\[
\ypi_{-1}(\wo)=\begin{cases}
0, & \w(0)=A^{(1)},\\
1, & \w(0)=G^{(2)}.
\end{cases}
\]
For $k\in\Nz$ one can read the value of $\ypi_{k}(\wo)$ from the
following table.

\begin{center}

\begin{tabular}{|l|l|l|l|l|l|l|l|l|}
\hline 
~ & \multicolumn{3}{c|}{$k=0$} & \multicolumn{5}{c|}{$k\geq1$}\tabularnewline
\hline 
$\w(k)$ & \multicolumn{2}{c|}{$A^{(1)}$} & \multicolumn{1}{c|}{$G^{(2)}$} & \multicolumn{2}{c|}{$A^{(i)}$} & \multicolumn{2}{c|}{$B$} & \multicolumn{1}{c|}{$G^{(i)}$}\tabularnewline
\hline 
$\w(k+1)$ & $A^{(j)}$ & $G^{(j)}$ & $B$ & $A^{(j)}$ & $G^{(j)}$ & $A^{(j)}$ & $G^{(j)}$ & $B$\tabularnewline
\hline 
$\#\paths_{k}(\wo,A)$ & $0$ & $0$ & $1$ & $i-1$ & $i-1$ & $0$ & $0$ & $i-1$\tabularnewline
$\#\paths_{k}(\wo,G)$ & $0$ & $0$ & $0$ & $i$ & $i$ & $0$ & $0$ & $i-1$\tabularnewline
$\#\paths_{k+1}(\wo,A)$ & $j-1$ & $j-1$ & $0$ & $j-1$ & $j-1$ & $j-1$ & $j-1$ & $0$\tabularnewline
\hline 
$\ypi_{k}(\wo)$ & $j-1$ & $j-1$ & $-1$ & $j$ & $j$ & $j-1$ & $j-1$ & $0$\tabularnewline
\hline 
\end{tabular}

\end{center}

Therefore we can conclude for all $k\in\Nmo$ 
\[
\ypi_{k}(\wo)=\delta_{A,\wo}(k)+\#\paths_{k+1}(\wo,A)-\delta_{k,0}
\]
where 
\[
\delta_{A,\wo}(k):=\begin{cases}
1 & \w_{k}\in\{A^{(i)}:i\in\N\}\text{ and }k\geq0,\\
0 & \text{else}
\end{cases}
\]
\end{rem}

\begin{rem}
Comparing the IDS formula in Proposition~\ref{prop:IDSFormula} to
the formula in \cite[thm. 4.7]{Raym95}, shows that they are similar
up to an additional term of $-\alpha$ which appears in \cite[thm. 4.7]{Raym95},
but not in Proposition~\ref{prop:IDSFormula}. The source for this
difference is the connection between the coefficients $\ypi_{k}(\wo)$,
we used above, and similar coefficients $\pi_{k}(\wo)$ in \cite{Raym95}.
It can be checked that the connection between both type of coefficients
is given by
\[
\pi_{k}(\wo)=\ypi_{k}(\wo)+\delta_{k,0}.
\]
\end{rem}

We conclude this subsection by making a connection between the set
of all possible infinite codes, $\wo\in\Sigma_{\alpha}$ and the set
of all possible infinite sequences, $\left(\ypi_{k}\right)_{k\in\Nmo}$.
The latter set is given by 
\[
\yPi_{\alpha}:=\left\{ (\ypi_{k})_{k\in\Nmo}\in\Nmo^{\Nmo}\;\middle|\;\begin{array}{ll}
\ypi_{-1}\in\{0,1\} & \text{ and }\ypi_{-1}=1\iff\ypi_{0}=-1,\\
\ypi_{0}\in\{-1,\dots,c_{1}-1\} & \text{ and }\ypi_{0}=c_{1}-1\implies\ypi_{1}=0,\\
\ypi_{j}\in\{0,\dots,c_{j+1}\} & \text{ and }\ypi_{j}=c_{j+1}\implies\ypi_{j+1}=0\text{ for }j\geq1
\end{array}\right\} ,
\]

where $(c_{k})_{k\in\N}$ is the continued fraction expansion of $\alpha$.
\begin{lem}
[{\cite[prop. 4.4]{Raym95}}] \label{lem:PiCode2E} Let $\alpha\in[0,1]\setminus\Q$.
Then there is a bijection between $\Sigma_{\alpha}$ and $\yPi_{\alpha}$.
The bijection is explicitly given by $\wo\mapsto\left(\ypi_{k}(\wo)\right)_{k\in\Nmo}$
where 
\[
\ypi_{-1}(\wo):=\#\paths_{0}(\wo,A)\quad\text{and}\quad\ypi_{k}(\wo):=\#\paths_{k}(\wo,G)-\#\paths_{k}(\wo,A)+\#\paths_{k+1}(\wo,A),\,k\in\N_{0}.
\]
\end{lem}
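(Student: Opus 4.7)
The strategy is to use the explicit case table for $\ypi_k(\wo)$ from Remark~\ref{rem:PiComputation}, which expresses $\ypi_k(\wo)$ for $k \geq 0$ entirely in terms of the types and indices of the adjacent letters $\w(k), \w(k+1)$. From this table we (Step~1) verify that the map $\Phi : \wo \mapsto (\ypi_k(\wo))_{k \in \Nmo}$ lands in $\yPi_\alpha$, and (Step~2) construct an explicit level-by-level inverse, giving both injectivity and surjectivity simultaneously.

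Step~1 is direct bookkeeping. The definition $\ypi_{-1}(\wo) = \#\paths_0(\wo,A)$ together with $(\Sigma1)$ yields $\ypi_{-1} \in \{0,1\}$, with $\ypi_{-1} = 1 \iff \w(0) = G^{(2)}$. By $(\Sigma4)$ this is equivalent to $\w(1) = B$, and the $(G,B)$-row of the table gives $\ypi_0(\wo) = -1$, so $\ypi_{-1}=1 \iff \ypi_0 = -1$. For each $k \geq 0$ the table combined with the index bounds in $(\Sigma2)$--$(\Sigma3)$ gives $\ypi_0 \in \{-1, 0, \dots, c_1-1\}$ and $\ypi_k \in \{0, \dots, c_{k+1}\}$ for $k \geq 1$. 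The boundary implication ``$\ypi_k$ at its upper bound $\Rightarrow \ypi_{k+1} = 0$'' is checked case by case: the maximum is attained only when $\w(k+1) = G^{(j_{\max})}$ for the largest admissible index (namely $G^{(c_{k+1})}$ if $\w(k) \in A^{(\cdot)}$, and $G^{(c_{k+1}+1)}$ if $\w(k) = B$), which by $(\Sigma4)$ forces $\w(k+2) = B$ and hence $\ypi_{k+1} = 0$ via the $(G,B)$-row.

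For Step~2 the key observation is that once $\w(k)$ is known, the pair $(\ypi_k, \ypi_{k+1})$ determines $\w(k+1)$ uniquely. Indeed, if $\w(k) = G^{(\cdot)}$ then $\w(k+1) = B$ is forced by $(\Sigma4)$; otherwise $\w(k+1) \in \{A^{(\cdot)}\} \cup \{G^{(\cdot)}\}$, and the \emph{type} of $\w(k+1)$ is read from the look-ahead $\ypi_{k+1}$ (because $\w(k+1) \in G^{(\cdot)}$ forces $\w(k+2) = B$, hence $\ypi_{k+1} = 0$, whereas $\w(k+1) \in A^{(\cdot)}$ at level $k+1 \geq 1$ forces $\ypi_{k+1} \geq 1$ via the table). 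The \emph{index} $j$ of $\w(k+1)$ is then recovered directly from $\ypi_k$ by the appropriate shift read off from the table: $j = \ypi_k + 1$ when $\w(k) = A^{(1)}$ with $k = 0$ or when $\w(k) = B$, and $j = \ypi_k$ when $\w(k) \in A^{(\cdot)}$ with $k \geq 1$. Starting from $\w(0)$ read off from $\ypi_{-1}$, this yields an inductive reconstruction algorithm, proving injectivity. For surjectivity, applying the same algorithm to an arbitrary $(\ypi_k) \in \yPi_\alpha$ produces a candidate $\wo$; the conditions defining $\yPi_\alpha$ are precisely what ensures that every recovered index falls inside the range dictated by $(\Sigma2)$--$(\Sigma3)$, so $\wo \in \Sigma_\alpha$, and $\Phi(\wo) = (\ypi_k)$ holds by direct substitution back into the table.

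The main obstacle is the fact that $\ypi_k$ alone does \emph{not} distinguish whether $\w(k+1)$ is of type $A$ or of type $G$: both give the same formula for $\ypi_k$ in terms of the index $j$. Resolving this ambiguity requires the look-ahead to $\ypi_{k+1}$, and it is precisely the maximality clauses in the definition of $\yPi_\alpha$ (such as $\ypi_k = c_{k+1} \Rightarrow \ypi_{k+1} = 0$) that exclude the inconsistent combinations which could otherwise arise during the inverse construction. Once this look-ahead mechanism is set up, every remaining verification reduces to cross-checking rows of the table from Remark~\ref{rem:PiComputation}.
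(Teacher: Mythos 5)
Your proposal is correct and follows essentially the same route as the paper: well-definedness is checked against the case table of Remark~\ref{rem:PiComputation}, and the inverse is built level by level, using the look-ahead value $\ypi_{k+1}$ (zero versus positive) to decide whether $\w(k+1)$ is a $G$- or an $A$-letter and reading its index off $\ypi_k$. If anything, your explicit tracking of the index shifts ($j=\ypi_k$ versus $j=\ypi_k+1$ depending on whether $\w(k)$ is $A^{(\cdot)}$ with $k\geq1$, or $B$, or $A^{(1)}$ at $k=0$) is slightly more careful than the paper's wording of the induction step.
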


\begin{proof}
It is straightforward to verify that the map in the statement is well
defined: given $\wo\in\Sigma_{\alpha}$, one can check that $\left(\ypi_{k}(\wo)\right)_{k\in\Nmo}\in\yPi_{\alpha}$.
To see this compare the definition of $\yPi_{\alpha}$ with Remark~\ref{rem:PiComputation}
which shows a table which characterizes $(\ypi_{k})_{k\in\Nmo}$.
To show that this map is a bijection, we take $(\ypi_{k})_{k\in\Nmo}\in\yPi_{\alpha}$
and inductively compute the corresponding $\w(k)$. On the way, we
prove that $\gamma(k)$ is uniquely determined by $(\mu_{-1},\mu_{0},\ldots,\mu_{k+1})$.

To aid this computation confer the table in Remark~\ref{rem:PiComputation}
and recall that for all $k\in\Nmo$ 
\[
\ypi_{k}(\wo)=\delta_{A,\wo}(k)+\#\paths_{k+1}(\wo,A)-\delta_{k,0}.
\]

First, if $\ypi_{-1}=1$, then we set $\w(0)=G^{(2)}$ and otherwise
if $\ypi_{-1}=0$, then we set $\w(0)=A^{(1)}$.

If $\ypi_{0}=-1,$ then $\ypi_{-1}=1$ and therefore $\w(0)=G^{(2)}$.
Now property $(\Sigma4)$ from Definition~\ref{def:SpectralCodes}
implies $\wo(1)=B$. Else, if $\ypi_{0}=i\in\{0,\dots c_{1}-1\}$,
then $\w(0)=A^{(1)}$ follows. If in addition $\ypi_{1}=0$, then
we conclude $\w(1)=G^{(i)}$, and if $\ypi_{1}\neq0$, then $\w(1)=A^{(i)}$
holds. Notice that the code $(\w(0),\w(1))$ generated this way fulfills
Definition~\ref{def:SpectralCodes}. This describes the induction
base of the construction.

Now assume $\wo$ was uniquely determined by $\ypi$ up to $\w(k)$
for some $k\geq1$. If $\w(k)$ was $G^{(i)}$, then property $(\Sigma4)$
yields again $\w(k+1)=B$. If $\w(k)\in\{A^{(i)}:i\in\N\}\cup\{B\}$,
then $(\Sigma2)$ and $(\Sigma4)$ imply $\w(k+1)\in\{A^{(i)}:i\in\N\}\cup\{G^{(i)}:i\in\N\}$.
Assume $\ypi_{k}=j$. Then we get even $\w(k+1)\in\{A^{(j)},G^{(j)}\}$.
If $\ypi_{k+1}=0$, then $\w(k+1)=G^{(j)}$ and otherwise $\w(k+1)=A^{(j)}$.
Thus $\w(k),\ypi_{k}$ and $\ypi_{k+1}$ uniquely determine $\w(k+1)$.
Observe again, that the code $(\w(0),\dots,\w(k+1))$ fulfills Definition~\ref{def:SpectralCodes}.
\end{proof}

\subsection{Finding all the gap labels\protect\label{subsec:=000020all=000020gaps=000020there}}

Our aim in this subsection is to prove the following.
\begin{thm}
\label{thm:=000020All=000020gaps=000020are=000020there=000020V>4}
Let $V>4$. For $\alpha\in[0,1]\setminus\Q$ we have 
\[
\set{\IDS(E)}{E\in\R\setminus\sigma(\Ham)}=\set{\ell\alpha\mod 1}{\ell\in\Z}\cup\left\{ 1\right\} =(\Z+\Z\alpha)\cap[0,1].
\]
\end{thm}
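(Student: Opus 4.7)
The plan is to combine the explicit IDS formula of Proposition~\ref{prop:IDSFormula} with the coding-to-digit bijection of Lemma~\ref{lem:PiCode2E} and a classical Ostrowski-type representation of elements of $\Z+\Z\alpha$. The inclusion ``$\subseteq$'' is precisely the gap labelling theorem recalled just above the statement, so I focus on the reverse inclusion. The labels $0$ and $1$ are realised on the unbounded components $(-\infty,\inf\sigma(\Ham))$ and $(\sup\sigma(\Ham),\infty)$ of $\R\setminus\sigma(\Ham)$, so it remains to handle $y\in(\Z+\Z\alpha)\cap(0,1)$.

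The classical Ostrowski theorem provides a \emph{finite} representation
\[
y=\sum_{k=-1}^{\infty}(-1)^k a_k(q_k\alpha-p_k),
\]
in which $a_k=0$ for all $k$ exceeding some $K$. A routine check shows that the digit constraints for such expansions of elements of $(\Z+\Z\alpha)\cap(0,1)$ are exactly those defining $\yPi_\alpha$, once the boundary digit $a_{-1}\in\{0,1\}$ is used to select between labels near $0$ and labels near $1$. By Lemma~\ref{lem:PiCode2E} there is a unique $\wo\in\Sigma_\alpha$ with $\ypi_k(\wo)=a_k$ for every $k\ge-1$, and substituting into Proposition~\ref{prop:IDSFormula} immediately gives $\IDS(\emap(\wo))=y$. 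What remains is to verify that $E:=\emap(\wo)$ is a boundary point of some gap of $\sigma(\Ham)$.

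For this, inspect Remark~\ref{rem:PiComputation} together with the transition rules $(\Sigma2)$--$(\Sigma4)$: vanishing of $\ypi_k(\wo)$ for $k>K$ forces $\w(k)\notin\{A^{(i)}:i\in\N\}$ for $k>K+1$, so the tail $(\w(k))_{k>K+1}$ coincides with the $\lessdot$-minimal alternating pattern $G^{(1)}BG^{(1)}B\cdots$. By Lemma~\ref{lem:EIsOrderPreserving} together with the order-preserving property of the coding (Proposition~\ref{prop:Coding_Spectrum}\,(\ref{enu:Coding_Spectrum_order_preserving})), this forces $E$ to be the left endpoint of the band $b_V(\wo|_{[0,k]})\in\Lck$ for every $k\ge K$. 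Since $y>0$, the prefix $\wo|_{[0,K]}$ cannot be $\lessdot$-minimal in $\Sigma_{\cok}$, so there is a band $I'\in\Lck$ with $I'\precd b_V(\wo|_{[0,K]})$; Lemma~\ref{lem:PrecIsTotal} then supplies a non-empty open interval strictly between $I'$ and $b_V(\wo|_{[0,K]})$ disjoint from $\Lambda_K(V)\supseteq\sigma(\Ham)$ (by Lemma~\ref{lem:LIsACover}\,(\ref{enu:LIsACover_=000020of=000020spectrum=000020alpha})) and having right endpoint $E$. Continuity of $\IDS$ and its constancy on gaps then force $y=\IDS(E)$ to coincide with the constant IDS value on this gap.

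The main obstacle is the combinatorial identification of Ostrowski digits with elements of $\yPi_\alpha$, together with the verification that a finitely supported digit sequence translates, via the rules $(\Sigma2)$--$(\Sigma4)$, into a code whose image under $\emap$ is genuinely a gap endpoint. Both points are essentially bookkeeping once the machinery of Subsections~\ref{subsec:=000020Coding} and~\ref{subsec:=000020formula=000020for=000020IDS} is in hand, but they require care with the boundary digit $\ypi_{-1}$ (which distinguishes labels close to $0$ from labels close to $1$) and with the interaction between the letters $A^{(i)}$, $B$, and $G^{(i)}$.
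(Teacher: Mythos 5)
Your first half is sound and matches the paper's strategy: the finite digit expansion you attribute to Ostrowski is exactly what Lemma~\ref{lem:=000020Integer=000020as=000020a=000020linear=000020combination=000020of=000020q's} provides (the paper proves it by hand rather than citing it, and also records the extra fact $\ypi_{k_{0}}\geq1$), and combining it with Lemma~\ref{lem:PiCode2E} and Proposition~\ref{prop:IDSFormula} does produce a code $\wo$ with $\IDS(\emap(\wo))=y$. The genuine gap is in your last step. The assertion that the tail $G^{(1)}BG^{(1)}B\cdots$ forces $E=\emap(\wo)$ to be the \emph{left endpoint} of $b_{V}(\wo|_{[0,k]})$ for every $k\geq K$ is false: by the tower property and \ref{enu:B1Property} these bands are \emph{strictly} nested ($b_{V}(\wo|_{[0,k+1]})\strict b_{V}(\wo|_{[0,k]})$), so their left endpoints form a strictly increasing sequence and no single point can be the left endpoint of all of them. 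Consequently the gap you exhibit between $I'$ and $b_{V}(\wo|_{[0,K]})$ has right endpoint equal to the left edge $a_{K}$ of $b_{V}(\wo|_{[0,K]})$, not $E$, and you give no argument that $\IDS(a_{K})=\IDS(E)=y$; neither Lemma~\ref{lem:EIsOrderPreserving} nor Proposition~\ref{prop:Coding_Spectrum}~(\ref{enu:Coding_Spectrum_order_preserving}) says anything about band endpoints. So as written the proof does not establish that $y$ is attained off the spectrum.

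The paper avoids this entirely by producing a \emph{second} digit sequence $\ypi'=(\ypi_{-1},\dots,\ypi_{k_{0}-1},\ypi_{k_{0}}-1,c_{k_{0}+2},0,c_{k_{0}+4},0,\dots)\in\yPi_{\alpha}$ (here $\ypi_{k_{0}}\geq1$ is used), hence a second code $\wo'\neq\wo$ and energy $E'\neq E$ (injectivity of $\emap$, Lemma~\ref{lem:EIsABijection}); a telescoping use of (\ref{eq:=000020recursion=000020for=000020p_k,=000020q_k}) shows $\IDS(E')=\IDS(E)$, and monotonicity of the IDS together with \ref{enu:=000020IDS-property-2} then forces the whole interval between $E$ and $E'$ to lie in $\R\setminus\sigma(\Ham)$. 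If you want to keep your single-code route, you must replace the false endpoint claim by an actual proof that the intervals $(a_{k-1},a_{k})$ between consecutive left edges of the nested bands are disjoint from $\Lambda_{k}(V)\supseteq\sigma(\Ham)$ (using that $b_{V}(\wo|_{[0,k]})$ is the $\precd$-minimal, indeed the unique or leftmost, element of $\Lck$ inside its parent), and then take the union over $k>K$ and invoke continuity of $\IDS$ to conclude that $\IDS$ is constant equal to $y$ on a nondegenerate interval with right endpoint $E$. That is a workable but genuinely different and more delicate argument than what you wrote.
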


A main tool in the proof of the theorem is Proposition~\ref{prop:IDSFormula}.
Therefore, we need the following auxiliary lemma.
\begin{lem}
[{\cite[prop. 5.2]{Raym95}}] \label{lem:=000020Integer=000020as=000020a=000020linear=000020combination=000020of=000020q's}
Let $\alpha\in[0,1]\setminus\Q$ with convergents $\pqk=\varphi(\co_{k})$.
For each $\ell\in\Z$, there is some $k_{0}\in\N$ and $\ypi=(\ypi_{j})_{j\in\Nmo}\in\yPi_{\alpha}$
such that $\ypi_{j}=0$ for $j>k_{0}$ and 
\begin{equation}
\ell=\sum_{j=-1}^{\infty}(-1)^{j}\ypi_{j}q_{j}=\sum_{j=-1}^{k_{0}}(-1)^{j}\ypi_{j}q_{j}.\label{eq:EllSum}
\end{equation}
Moreover, if $\ell\notin\{-1,0,1\}$, then $\mu_{k_{0}}\geq1$.
\end{lem}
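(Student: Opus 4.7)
My plan is to prove the lemma by strong induction on $|\ell|$, essentially running a signed version of Ostrowski's algorithm.

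For the base cases $\ell \in \{-1, 0, 1\}$ I construct $\mu$ explicitly: take $\mu \equiv 0$ for $\ell = 0$; take $\mu_0 = 1$ (with all other entries zero) for $\ell = 1$; and take $\mu_{-1} = 1$, $\mu_0 = -1$ (with all other entries zero) for $\ell = -1$. In each case the tuple lies in $\yPi_\alpha$ by direct inspection of the defining constraints, and the identity $\ell = \sum_{j=-1}^\infty (-1)^j \mu_j q_j$ collapses to $\ell = \mu_0$ thanks to $q_{-1} = 0$ and $q_0 = 1$. One may take $k_0 = 0$, and the final clause $\mu_{k_0} \geq 1$ is vacuous here.

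For the inductive step, assume $|\ell| \geq 2$. I select the unique $k \geq 1$ with $q_k \leq |\ell| < q_{k+1}$; such a $k$ exists since $\{q_j\}_{j \geq 1}$ is strictly increasing to infinity. I then peel off a nonzero leading digit at a level chosen so that (i) its signed contribution matches $\sgn(\ell)$, and (ii) the residual has strictly smaller absolute value than $|\ell|$. When $(-1)^k \cdot \sgn(\ell) = +1$, the natural choice is level $k$ with digit $\mu_k := \lfloor |\ell|/q_k \rfloor \in \{1, \ldots, c_{k+1}\}$, where the upper bound uses $|\ell| < q_{k+1} = c_{k+1} q_k + q_{k-1}$; the residual $\ell - (-1)^k \mu_k q_k$ has absolute value $< q_k \leq |\ell|$. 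When the signs disagree, I instead peel off a digit at level $k+1$ and compensate with a level-$k$ digit, exploiting $q_{k+1} - q_{k-1} = c_{k+1} q_k$ to keep the residual small. Invoking the induction hypothesis on the residual yields a finitely supported tuple in $\yPi_\alpha$, which I concatenate with the leading digit(s) to build $\mu$ for $\ell$; the new leading index $k_0$ is either $k$ or $k+1$, and it inherits a strictly positive digit from the greedy step, establishing the final clause $\mu_{k_0} \geq 1$.

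The principal obstacle is honoring the adjacency constraint $\mu_j = c_{j+1} \implies \mu_{j+1} = 0$ (together with its variant at $j = 0$, namely $\mu_0 = c_1 - 1 \implies \mu_1 = 0$ and $\mu_{-1} = 1 \iff \mu_0 = -1$): a conflict can arise if the inductively constructed residual happens to carry $\mu_{k-1} = c_k$ while the current leading digit $\mu_k$ is nonzero. Such conflicts are resolved by a carry operation based on the recursion $c_k q_{k-1} + q_{k-2} = q_k$: the offending pair is rewritten by incrementing the next higher digit, which preserves both the signed sum and the sign pattern while strictly raising the leading index. Since each such rewrite increases the leading index (capped above by a suitable function of $\log |\ell|$), the carry process terminates, and the resulting $\mu$ satisfies all constraints of $\yPi_\alpha$ with $\mu_j = 0$ for $j > k_0$ and $\mu_{k_0} \geq 1$.
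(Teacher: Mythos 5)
Your greedy skeleton (strong induction on $|\ell|$, peeling off a leading digit at the level $k$ with $q_{k}\leq|\ell|<q_{k+1}$) is a reasonable reorganization of the paper's interval-based induction, and you correctly identify the crux of the lemma: the adjacency constraints of $\yPi_{\alpha}$ (that $\mu_{j}=c_{j+1}$ forces $\mu_{j+1}=0$) can be violated when the residual's representation ends in a full digit just below your greedy digit. But your resolution of that crux does not work as stated. Because the sum $\sum_{j}(-1)^{j}\mu_{j}q_{j}$ carries alternating signs, the recursion $c_{k}q_{k-1}+q_{k-2}=q_{k}$ does \emph{not} let you combine a full digit at level $k-1$ with a unit at level $k-2$ into a unit at level $k$: the signed identity reads $(-1)^{k-1}c_{k}q_{k-1}=-(-1)^{k}q_{k}+(-1)^{k-2}q_{k-2}$, so the only sum-preserving rewrite of the offending configuration is $(\mu_{k-2},\,c_{k},\,\mu_{k})\mapsto(\mu_{k-2}+1,\,0,\,\mu_{k}-1)$, which \emph{decrements} the higher digit and pushes the disturbance two levels \emph{down}. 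Your described move (``incrementing the next higher digit'') changes the value of the sum by $\pm2(q_{k}-q_{k-2})\neq0$, and the accompanying termination argument (``strictly raising the leading index'') is built on that incorrect direction. The correct carry cascades downward and terminates only because the index is bounded below, at which point you must separately handle the exceptional constraints at $j=0$ and $j=-1$ (the range $\{-1,\dots,c_{1}-1\}$ and the biconditional $\mu_{-1}=1\iff\mu_{0}=-1$), none of which your proposal addresses. Finally, after a carry the leading index may drop (if $\mu_{k}=1$), so the ``Moreover'' clause $\mu_{k_{0}}\geq1$ also needs to be re-verified for the new top digit rather than inherited from the greedy step.

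For comparison, the paper sidesteps carries entirely by strengthening the induction hypothesis: it proves alongside the representation that a representation ending in a full digit $\mu_{k_{0}-1}=c_{k_{0}}$ can only occur when $\ell$ lies in an explicit boundary subinterval ($[-q_{k_{0}},-q_{k_{0}}+q_{k_{0}-1})$ for $k_{0}$ even, $[q_{k_{0}}-q_{k_{0}-1},q_{k_{0}})$ for $k_{0}$ odd), and then checks that the residual produced by the peeling step never lands in that subinterval, so the conflict simply never arises. If you want to salvage your route, you should either import such an invariant into your strong induction, or carefully carry out the downward-cascading rewrite described above, including the base-level cases; as written, the argument has a genuine gap exactly where the lemma's difficulty lies.
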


\begin{proof}
We prove the statement inductively over $m\in\N$ that
\begin{enumerate}
\item \label{enu:=000020integer-proof=000020-=0000201} For all $\ell\in[-q_{2m},q_{2m-1})$
there is a $k_{0}\leq2m$ and a $\ypi=\left(\mu_{j}\right)\in\yPi_{\alpha}$
satisfying (\ref{eq:EllSum}) $\mu_{j}=0$ if $j>k_{0}$.
\item \label{enu:=000020integer-proof=000020-=0000202} If $\mu_{k_{0}-1}=c_{k_{0}}$,
then $\ell\in[-q_{k_{0}},-q_{k_{0}}+q_{k_{0}-1})$ if $k_{0}$ is
even \\
and $\ell\in[-q_{k_{0}-1}+q_{k_{0}},q_{k_{0}})$ if $k_{0}$ is odd.
\end{enumerate}
To do this, we check the claim in an alternating manner on the positive
and negative part of these intervals. Also recall the recursive behavior
of the sequence $(q_{k})_{k\in\Nz}$, that is 
\[
q_{-1}=0,\quad q_{0}=1\quad\text{and}\quad q_{k}=c_{k}q_{k-1}+q_{k-2}\text{ for }k\in\N.
\]
First, let $m=1$ and consider $\ell\in[-q_{2},q_{1})=[-q_{2},-q_{0})\cup\{-q_{0}\}\cup(-q_{0},q_{1})$
and we separately consider each of the cases 
\[
\ell\in\{-q_{0}\},\quad\ell\in(-q_{0},q_{1})\quad\text{ and }\quad\ell\in[-q_{2},-q_{0}).
\]
If $\ell=-q_{0}=-1$, then we choose $k_{0}=0\leq2m-1$ and $\ypi:=(\ypi_{j})_{j\in\Nmo}:=(1,-1,0,0,\dots)$.
In this case we get $\ypi\in\yPi_{\alpha}$ and 
\[
\sum_{j=-1}^{\infty}(-1)^{j}\ypi_{j}q_{j}=\ypi_{0}=-1=\ell.
\]
For the second case, if $\ell\in(-q_{0},q_{1})\cap\Z=[0,c_{1})\cap\Z$,
choose $k_{0}=0\leq2m-1$ and $\ypi:=(0,\ell,0,0,\dots)$. Again,
observe that $\ypi\in\yPi_{\alpha}$ satisfies (\ref{eq:EllSum}).
For the third case we assume $\ell\in[-q_{2},-q_{0})$. Notice that
we can decompose this interval into the $c_{2}$ intervals 
\[
[-q_{2},-q_{0})=\bigsqcup_{j=1}^{c_{2}}[-q_{0}-(c_{2}+1-j)q_{1},q_{0}-(c_{2}-j)q_{1})
\]
That is, there is a unique $\ypi_{1}\in\{1,\dots,c_{2}\}$ with 
\[
-q_{0}-\ypi_{1}q_{1}\leq\ell<-q_{0}-(\ypi_{1}-1)q_{1}.
\]
Equivalently, we can write this as 
\[
-q_{0}\leq\ell+\ypi_{1}q_{1}<-q_{0}+q_{1}=c_{1}-1.
\]
If $\ell+\ypi_{1}q_{1}$ happen to be $-q_{0}=-1$, then we can apply
the first case to it and get 
\[
\ell+\ypi_{1}q_{1}=\sum_{j=-1}^{0}(-1)^{j}\ypi_{j}q_{j},
\]
with $\ypi_{-1}=1$ and $\ypi_{0}=-1$. Then $\ypi:=(1,-1,\ypi_{1},0,0,\dots)\in\yPi_{\alpha}$
and $k_{0}=1\leq2m-1$ satisfy (\ref{eq:EllSum}) for the given $\ell$.
We proceed similarly when 
\[
-q_{0}<\ell+\ypi_{1}q_{1}<-q_{0}+q_{1}=c_{1}-1.
\]
More precisely, if this holds, the second case yields 
\[
\ell+\ypi_{1}q_{1}=\sum_{j=-1}^{0}(-1)^{j}\ypi_{j}q_{j},
\]
with $\ypi_{-1}=0$ and $\ypi_{0}=\ell+\ypi_{1}q_{1}$. Thus for $\ypi:=(0,\ypi_{0},\ypi_{1},0,0,\dots)$
we have $\ypi\in\yPi_{\alpha}$ as $\ypi_{0}\neq c_{1}-1$ (we need
this since $\ypi_{1}\neq0$) and, by construction, $\ypi$ satisfies
(\ref{eq:EllSum}) for the given $\ell$ and $k_{0}=1\leq2m-1$. This
ends the induction base.

For the induction step suppose (\ref{enu:=000020integer-proof=000020-=0000201})
and (\ref{enu:=000020integer-proof=000020-=0000202}) hold for some
$m\in\N$. Let $\ell\in[-q_{2m+2},q_{2m+1})$. Again, we separately
discuss the three cases 
\[
\ell\in[-q_{2m+2},-q_{2m}),\quad\ell\in[-q_{2m},q_{2m-1})\quad\text{ and }\quad\ell\in[q_{2m-1},q_{2m+1}).
\]
For $\ell\in[-q_{2m},q_{2m-1})$ there is nothing to do, as (\ref{eq:EllSum})
holds by the induction hypothesis.

If $\ell\in[q_{2m-1},q_{2m+1})=[q_{2m-1},q_{2m-1}+c_{2m+1}q_{2m})$,
then there exists some $\ypi_{2m}\in\{1,\dots,c_{2m+1}\}$ such that
\begin{equation}
q_{2m-1}+(\ypi_{2m}-1)q_{2m}\leq\ell<q_{2m-1}+\ypi_{2m}q_{2m},\label{eq:EllEstimate}
\end{equation}

or equivalently 
\[
-q_{2m}+q_{2m-1}\leq\ell-\ypi_{2m}q_{2m}<q_{2m-1}.
\]
In particular we observe $\ell-\ypi_{2m}q_{2m}\notin[-q_{2m},-q_{2m}+q_{2m-1})$.
Thus the induction hypothesis implies that there exists $(\ypi_{-1},\dots,\ypi_{2m-1},0,0,\dots)\in\yPi_{\alpha}$
with $\ypi_{2m-1}\neq c_{2m}$ such that 
\[
\ell-\ypi_{2m}q_{2m}=\sum_{j=-1}^{2m-1}(-1)^{j}\ypi_{j}q_{j}.
\]
Note that if $\ypi_{2m-1}=c_{2m},$ then the induction hypothesis
for (\ref{enu:=000020integer-proof=000020-=0000202}) and $k_{0}=2m$
yields $\ell-\ypi_{2m}q_{2m}\in[-q_{2m},-q_{2m}+q_{2m-1})$, a contradiction.

Therefore we set $\ypi:=(\ypi_{-1},\dots,\ypi_{2m-1},\ypi_{2m},0,0,\dots)$
and observe $\ypi\in\yPi_{\alpha}$, as $\ypi_{2m-1}\neq c_{2m}$.
Then (\ref{eq:EllSum}) hold for the given $\ell$ and $k_{0}=2m\leq2(m+1)-1$.
Note that if $\ypi_{2m}=c_{2m+1}$ i.e. $k_{0}=2m+1$ odd then (\ref{eq:EllEstimate})
implies 
\[
\ell\in[q_{2m+1}-q_{2m},q_{2m+1})=[-q_{k_{0}-1}+q_{k_{0}},q_{k_{0}}).
\]
This proves (\ref{enu:=000020integer-proof=000020-=0000202}) if $k_{0}$
is odd for $m+1$.

Finally, we suppose $\ell\in[-q_{2m+2},-q_{2m})=[-c_{2m+2}q_{2m+1}-q_{2m},-q_{2m})$.
Then there exists some $\ypi_{2m+1}\in\{1,\dots,c_{2m+2}\}$ such
that 
\begin{equation}
-\ypi_{2m+1}q_{2m+1}-q_{2m}\leq\ell<-(\ypi_{2m+1}-1)q_{2m+1}-q_{2m},\label{eq:EllEstimate2}
\end{equation}
or equivalently 
\[
-q_{2m}\leq\ell+\ypi_{2m+1}q_{2m+1}<-q_{2m}+q_{2m+1}=-q_{2m}+c_{2m+1}q_{2m}+q_{2m-1}.
\]
In particular we observe $\ell+\ypi_{2m+1}q_{2m+1}\notin[-q_{2m}+q_{2m+1},q_{2m+1})$.
Thus, the induction hypothesis implies that there some $(\ypi_{-1},\dots,\ypi_{2m},0,0,\dots)\in\yPi_{\alpha}$
with $\ypi_{2m}\neq c_{2m+1}$ such that 
\[
\ell+\ypi_{2m+1}q_{2m+1}=\sum_{j=-1}^{2m}(-1)^{j}\ypi_{j}q_{j}.
\]
Note that if $\ypi_{2m}=c_{2m+1},$ then the induction hypothesis
for (\ref{enu:=000020integer-proof=000020-=0000202}) and $k_{0}=2m+1$
yields $\ell+\ypi_{2m+1}q_{2m+1}\in[-q_{2m}+q_{2m+1},q_{2m+1})$,
a contradiction.

Therefore we set $\ypi:=(\ypi_{-1},\dots,\ypi_{2m},\ypi_{2m+1},0,0,\dots)$
and observe $\ypi\in\yPi_{\alpha}$, as $\ypi_{2m}\neq c_{2m+1}$.
Then (\ref{eq:EllSum}) hold for the given $\ell$ and $k_{0}=2m+1\leq2(m+1)-1$.
Note that if $\ypi_{2m+1}=c_{2m+2}$ i.e. $k_{0}=2m+2$ even then
(\ref{eq:EllEstimate2}) implies 
\[
\ell\in[-q_{2m+2},-q_{2m+2}-q_{2m+1}).
\]
Thus, (\ref{enu:=000020integer-proof=000020-=0000202}) holds if $k_{0}$
is even for $m+1$.

Note that $\ypi_{k_{0}}\geq0$ holds for all $k_{0}\geq1$. The cases
where $k_{0}\in\{-1,0\}$ and $\mu_{k_{0}}=0$ are exactly when $\ell\in\{-1,0,1\}$.
Therefore $\ell\in\Z\setminus\{-1,0,1\}$ and $\ypi_{k_{0}}\neq0$
imply $\ypi_{k_{0}}\geq1$.
\end{proof}
\begin{proof}
[Proof of Theorem~\ref{thm:=000020All=000020gaps=000020are=000020there=000020V>4}]
We start by recalling that the gap labelling theorem \cite[prop. 5.2.4]{Bell92-Gap}
already provides the inclusion, 
\[
\mathcal{G}:=\{\IDS(E):E\in\R\setminus\sigma(\Ham)\}\subseteq\set{\ell\alpha\mod 1}{\ell\in\Z}\cup\left\{ 1\right\} .
\]

First note, that the spectrum $\sigma(\Ham)$ is a compact subset
of $\R$. Then if $E<\inf\sigma(\Ham)$, we obtain $\IDS(E)=0$. Similarly
if $E>\sup\sigma(\Ham)$, then $\IDS(E)=1$ and so $\{0,1\}\subset\mathcal{G}$
holds. We continue proving $\ell\alpha\mod 1\in\mathcal{G}$ for $\ell\in\Z\setminus\{-1,0\}.$
The case $\ell=-1$ will be treated separately at the end.

Let $\ell\in\Z\setminus\{-1,0\}$ and let $\ypi=(\ypi_{-1},\ypi_{0},\dots)\in\yPi_{\alpha}$
be such that there is some $k_{0}\in\Nz$ with $\ell=\sum_{j=-1}^{k_{0}}(-1)^{j}\ypi_{j}q_{j}$
and $\ypi_{j}=0$ for all $j\geq k_{0}+1$, which exists by Lemma~\ref{lem:=000020Integer=000020as=000020a=000020linear=000020combination=000020of=000020q's}.
In addition, Lemma~\ref{lem:=000020Integer=000020as=000020a=000020linear=000020combination=000020of=000020q's}
asserts that $\mu_{k_{0}}\geq1$. Then define

\begin{equation}
\ypi'=(\ypi_{-1},\ypi_{0},\dots\ypi_{{k_{0}}-1},(\ypi_{k_{0}}-1),c_{{k_{0}}+2},0,c_{{k_{0}}+4},0,c_{{k_{0}}+6},\dots).
\end{equation}
Observe that $\ypi'\in\yPi_{\alpha}$ using $\mu_{k_{0}}\geq1$. Then
Lemma~\ref{lem:PiCode2E} implies that there are unique $\wo,\wo'\in\Sigma_{\alpha}$
such that $E:=E_{\alpha,V}(\wo)$ and $E':=E_{\alpha,V}(\wo')$ satisfy
\[
\IDS(E)=\sum_{j=-1}^{\infty}(-1)^{j}\ypi_{j}(q_{j}\alpha-p_{j})\quad\text{and}\quad\IDS(E')=\sum_{j=-1}^{\infty}(-1)^{j}\ypi'_{j}(q_{j}\alpha-p_{j}).
\]

With this choice we get 
\begin{align*}
[0,1]\ni\IDS(E) & =\sum_{j=-1}^{\infty}(-1)^{j}\ypi_{j}(q_{j}\alpha-p_{j})\\
 & =\sum_{j=-1}^{k_{0}}(-1)^{j}\ypi_{j}(q_{j}\alpha-p_{j})\\
 & =\underbrace{\sum_{j=-1}^{k_{0}}(-1)^{j}\ypi_{j}q_{j}\alpha}_{=\ell\alpha}-\underbrace{\sum_{j=-1}^{k_{0}}(-1)^{j}\ypi_{j}p_{j}}_{\in\Z}
\end{align*}
and therefore $\IDS(E)=\ell\alpha\mod 1$.

We also claim $E\neq E'$. Assume differently, i.e. $E=E'$, then
$\wo=\wo'$ due to Lemma~\ref{lem:EIsABijection}. Hence, $\ypi_{j}(\wo)=\ypi_{j}(\wo')$
follows for all $j\in\Nmo$ by the definition of $(\ypi_{j})_{j\in\Nmo}$
in Proposition~\ref{prop:IDSFormula}. This yields a contradiction
for $j\geq k_{0}$.

Next we observe 
\begin{align*}
|\IDS(E')-\IDS(E)|= & \,|(-1)^{{k_{0}}+1}(q_{k_{0}}\alpha-p_{k_{0}})+(-1)^{{k_{0}}+1}c_{{k_{0}}+2}(q_{{k_{0}}+1}\alpha-p_{{k_{0}}+1})\\
 & \,+\sum_{j={k_{0}}+3}^{\infty}(-1)^{j}\ypi_{j}'(q_{j}\alpha-p_{j})|\\
= & \,\left|(-1)^{{k_{0}}+1}(q_{{k_{0}}+2}\alpha-p_{{k_{0}}+2})+\sum_{j={k_{0}}+3}^{\infty}(-1)^{j}\ypi_{j}'(q_{j}\alpha-p_{j})\right|.
\end{align*}
Using the recursion formulas for $\{p_{k}\}_{k\in\Nmo}$ and $\{q_{k}\}_{k\in\Nmo}$
from eq. (\ref{eq:=000020recursion=000020for=000020p_k,=000020q_k}),
we inductively conclude 
\begin{align*}
 & \left|(-1)^{{k_{0}}+1}(q_{{k_{0}}+2n}\alpha-p_{{k_{0}}+2n})+\sum_{j={k_{0}}+2n+1}^{\infty}(-1)^{j}\ypi_{j}'(q_{j}\alpha-p_{j})\right|\\
= & \left|(-1)^{{k_{0}}+1}q_{{k_{0}}+2(n+1)}\alpha-p_{{k_{0}}+2(n+1)})+\sum_{j={k_{0}}+2(n+1)+1}^{\infty}(-1)^{j}\ypi_{j}'(q_{j}\alpha-p_{j})\right|,
\end{align*}
for all $n\in\N$. Hence, we obtain 
\[
|\IDS(E')-\IDS(E)|\leq|q_{k+2n}\alpha-p_{k+2n}|+\left|\sum_{j=k+2n+1}^{\infty}(-1)^{j}\ypi_{j}'(q_{j}\alpha-p_{j})\right|,
\]
for all $n\in\N$. Sending $n\to\infty$ and using that the sum exists,
we conclude 
\[
|\IDS(E')-\IDS(E)|\leq\lim_{n\to\infty}|q_{k+2n}\alpha-p_{k+2n}|\leq0,
\]
by properties of the Diophantine approximation \cite[thm. 9]{Khinchin_book64}.
Therefore $\ell\alpha\mod 1=\IDS(E)=\IDS(E')$ while $E\neq E'$.
Since the IDS is monotonously increasing \ref{enu:=000020IDS-property-1}
and constant on the gaps \ref{enu:=000020IDS-property-2}, we conclude
that $(E,E')$ is completely contained in $\R\setminus\sigma(\Ham)$.
That is for all $E''\in(E,E')$ we still get 
\[
\IDS(E'')=\IDS(E)=\ell\alpha\mod 1
\]
and so we conclude $\ell\alpha\mod 1\in\mathcal{G}$.

We now treat the last case $\ell=-1$ and the gap label $\ell\alpha\mod 1=1-\alpha$
for $\ell=-1$. Let $\ypi=(1,-1,0,0,\dots)$ and $\ypi'=(0,c_{1}-1,0,c_{3},0,c_{5}\dots)$.
By Lemma~\ref{lem:PiCode2E} there are again unique $\wo,\wo'\in\Sigma_{\alpha}$
such that $\ypi(\wo)=\ypi$ and $\ypi(\wo')=\ypi'$. Following similar
computations as above, we get $\IDS(E_{\alpha,V}(\wo))=1-\alpha=\IDS(E_{\alpha,V}(\wo'))$.
Since $\wo\neq\wo'$, Lemma~\ref{lem:EIsABijection} implies $E_{\alpha,V}(\wo)\neq E_{\alpha,V}(\wo')$.
Hence we also get in this case $1-\alpha\in\mathcal{G}$.
\end{proof}

\appendix
\appendix
\renewcommand{\thesection}{\Roman{section}}

\section{A recursive relation for periods of mechanical words \protect\label{sec:RecursiveRelation_Sturmian}}

This appendix is devoted to the proof of Lemma~\ref{lem:=000020WordRecursion}.
Let $\alpha\in[0,1]\setminus\Q$ with infinite continued fraction
expansion $(c_{k})_{k=0}^{\infty}$ and convergents $\alk=\varphi([0,c_{0},c_{1},\ldots,c_{k}])$
for $k\in\N_{0}$. Recall from (\ref{eq:=000020Sturmian=000020period=000020def})
the definition 
\[
W_{k}(i):=\omega_{\alpha_{k}}(i),\qquad0\leq i\leq q_{k}-1,
\]
for the period of those mechanical words $\omega_{\alk}$. Further
recall the statement of Lemma~\ref{lem:=000020WordRecursion}:
\[
W_{0}=0,\quad\quad W_{1}=\underbrace{0\ldots0}_{c_{1}-1}1,
\]
and if $k\geq2$, then
\[
W_{k}=\begin{cases}
W_{k-2}W_{k-1}^{c_{k}},\quad & k\equiv0\mod 2,\\
W_{k-1}^{c_{k}}W_{k-2},\quad & k\equiv1\mod 2,
\end{cases}
\]
where the power means a concatenation of words. We now bring three
auxiliary lemmas (Lemma~\ref{lem:=000020period=000020prefixes},
Lemma~\ref{lem:=000020period=000020suffixes} and Lemma~\ref{lem:=000020sub-periods=000020of=000020periods})
which are needed to prove the recursion property of the Sturmian periods
as stated in Lemma~\ref{lem:=000020WordRecursion}.
\begin{lem}
\label{lem:=000020period=000020prefixes} [Period prefixes] Let $\alpha\in[0,1]\setminus\Q$
and $k\geq2$.
\begin{enumerate}
\item \label{enu:=000020period=000020prefixes_a}If $k\equiv0\mod 2$ then
\begin{equation}
W_{k}(i)=W_{k-1}(i)\qquad\textrm{for all }0\leq i\leq q_{k-1}-2\label{eq:=000020lem-period=000020initials=000020-=000020even=000020k=000020-=0000201}
\end{equation}
and 
\begin{equation}
W_{k}(i)=W_{k-2}(i\mod{q_{k-2}})\qquad\textrm{for all }0\leq i\leq q_{k-1}-2.\label{eq:=000020lem-period=000020initials=000020-=000020even=000020k=000020-=0000202}
\end{equation}
\item \label{enu:=000020period=000020prefixes_b}If $k\equiv1\mod 2$ then
\begin{equation}
W_{k}(i)=W_{k-1}(i\mod{q_{k-1}})\qquad\textrm{for all }0\leq i\leq q_{k}-2\label{eq:=000020lem-period=000020initials=000020-=000020odd=000020k=000020-=0000201}
\end{equation}
and
\begin{equation}
W_{k}(i)=W_{k-2}(i)\qquad\textrm{for all }0\leq i\leq q_{k-2}-2.\label{eq:=000020lem-period=000020initials=000020-=000020odd=000020k=000020-=0000202}
\end{equation}
\end{enumerate}
\end{lem}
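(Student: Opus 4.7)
The plan is to reduce all four identities to a single pointwise comparison of floors and then to settle the latter by an elementary Diophantine argument. Using Lemma~\ref{lem:=000020lower=000020mechanical=000020word} we rewrite $W_k(i)=\lfloor(i+1)\alpha_k\rfloor-\lfloor i\alpha_k\rfloor$; together with the $q_{k-2}$- (respectively $q_{k-1}$-)periodicity of $\omega_{\alpha_{k-2}}$ (respectively $\omega_{\alpha_{k-1}}$), this turns (A.1), (A.3) into the pointwise equalities $\omega_{\alpha_k}(i)=\omega_{\alpha_{k-1}}(i)$ on the stated ranges, and (A.2), (A.4) into $\omega_{\alpha_k}(i)=\omega_{\alpha_{k-2}}(i)$. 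Consequently each of the four assertions follows at once from the subclaim: for $k'\in\{k-1,k-2\}$,
\[
\lfloor j\alpha_k\rfloor=\lfloor j\alpha_{k'}\rfloor\qquad\text{for all }0\leq j\leq N(k,k'),
\]
where $N(k,k-1)=q_{k-1}-1$ in case (A.1), $N(k,k-2)=q_{k-1}-1$ in case (A.2), $N(k,k-1)=q_k-1$ in case (A.3), and $N(k,k-2)=q_{k-2}-1$ in case (A.4).

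To prove the subclaim I would use the classical identity $\alpha_k-\alpha_{k-1}=(-1)^{k-1}/(q_{k-1}q_k)$ (\cite[thm.~2]{Khinchin_book64}) and its telescoped companion $\alpha_k-\alpha_{k-2}=(-1)^k c_k/(q_k q_{k-2})$, obtained from the recursion~\eqref{eq:=000020recursion=000020for=000020p_k,=000020q_k}. Suppose the two floors differed for some admissible $j$. Since $\alpha_k$ and $\alpha_{k'}$ lie on opposite (respectively the same) side of $\alpha$ according to the parity of $k-k'$, there would exist an integer $m$ lying strictly between $j\alpha_k$ and $j\alpha_{k'}$. Writing $m-j\alpha_k=(mq_k-jp_k)/q_k$ with $\gcd(p_k,q_k)=1$ and using $j<q_k$, the value $j\alpha_k$ is non-integer (except trivially at $j=0$), and the numerator is a non-zero integer, giving $|m-j\alpha_k|\geq 1/q_k$. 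The analogous bound $|m-j\alpha_{k'}|\geq 1/q_{k'}$ holds except in the degenerate situation $m=j\alpha_{k'}$, which forces $q_{k'}\mid j$ and is handled by direct substitution using $q_{k'}\alpha_k=p_{k'}+(\alpha_k-\alpha_{k'})q_{k'}$. Combining the two lower bounds yields
\[
j\,|\alpha_k-\alpha_{k'}|\;\geq\;\tfrac{1}{q_k}+\tfrac{1}{q_{k'}},
\]
which, when tested against the explicit value of $|\alpha_k-\alpha_{k'}|$ and the relation $q_k=c_k q_{k-1}+q_{k-2}$, fails on $0\leq j\leq N(k,k')$, producing the desired contradiction.

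The splitting into these four subcases matches exactly the parities appearing in the statement, so the reduction above together with the subclaim yields (A.1)–(A.4) directly. The main obstacle is the sharpness of the thresholds $N(k,k')$: the naive estimate $j|\alpha_k-\alpha_{k'}|<1/\max(q_k,q_{k'})$ does not suffice, and one really needs the combined bound $1/q_k+1/q_{k'}$ that arises only after separating the distance from $m$ to each of $j\alpha_k$ and $j\alpha_{k'}$. This in turn forces a careful treatment of the degenerate sub-case $m=j\alpha_{k'}$, which is the one point where the argument is not purely mechanical; for (A.2) one checks that the finitely many multiples $j=nq_{k-2}\leq q_{k-1}-1$ satisfy $nc_k/q_k<1$ (equivalently, $n<q_k/c_k$, which follows from $q_{k-1}/q_{k-2}<q_k/c_k$), while for (A.3) the unique nontrivial value $j=q_{k-1}$ is settled by the explicit formula $q_{k-1}\alpha_k=p_{k-1}+1/q_k$.
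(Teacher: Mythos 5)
Your overall strategy coincides with the paper's: reduce each identity to $\lfloor j\alpha_k\rfloor = \lfloor j\alpha_{k'}\rfloor$ on a suitable range of $j$, and when the floors differ produce an integer $m$ between $j\alpha_k$ and $j\alpha_{k'}$ whose Diophantine distance to the convergents is too large. The paper, however, always works with the integer that is \emph{strictly} separated from one of the two quantities, so a single one-sided bound $|m/j-\alpha_k|\ge 1/(jq_k)$ (or its $k'$-analogue, chosen case by case) combined with the range restriction on $j$ already contradicts the exact value of $|\alpha_k-\alpha_{k'}|$; no degenerate case ever arises. You instead add the two one-sided bounds to get $j\,|\alpha_k-\alpha_{k'}|\ge 1/q_k + 1/q_{k'}$, which forces a separate treatment of the degenerate case $m=j\alpha_{k'}$, i.e.\ $q_{k'}\mid j$.

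That separate treatment contains an error in the third identity (your (A.3), $W_k(i)=W_{k-1}(i\bmod q_{k-1})$ for $0\le i\le q_k-2$): you call $j=q_{k-1}$ the ``unique nontrivial value'', but if $c_k\ge 2$ then every multiple $j=nq_{k-1}$ with $1\le n\le c_k$ lies in the admissible range $j\le q_k-1$ and must be checked. The fix is the same computation you already performed for $n=1$: for $j=nq_{k-1}$ one has $j\alpha_k = np_{k-1}+n/q_k$, and since $n\le c_k < q_k$ the two floors agree. After this correction your argument is complete and equivalent in spirit to the paper's, though slightly less economical than the one-sided version the paper uses.
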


\begin{proof}
(\ref{enu:=000020period=000020prefixes_a}) Start by treating the
case $k\equiv0\mod 2$. Since $k$ is even, standard theory of rational
convergents (\cite[thm. 4]{Khinchin_book64}) imply $\frac{p_{k-2}}{q_{k-2}}<\frac{p_{k}}{q_{k}}<\frac{p_{k-1}}{q_{k-1}}$.

We start by showing that for all $0\leq i\leq q_{k-1}-1$ , $\fl{\frac{p_{k}}{q_{k}}\thinspace i}=\fl{\frac{p_{k-1}}{q_{k-1}}\thinspace i}$
from which (\ref{eq:=000020lem-period=000020initials=000020-=000020even=000020k=000020-=0000201})
of the Lemma follows when using Lemma~\ref{lem:=000020lower=000020mechanical=000020word}
and (\ref{eq:=000020Sturmian=000020period=000020def}).

Assume towards contradiction that there exists $0\leq i\leq q_{k-1}-1$
such that $\fl{\frac{p_{k}}{q_{k}}\thinspace i}\neq\fl{\frac{p_{k-1}}{q_{k-1}}\thinspace i}$.
Clearly, $i>0$ must hold. Using $\frac{p_{k}}{q_{k}}<\frac{p_{k-1}}{q_{k-1}}$,
we infer that there exists an $m\in\N$ such that 
\[
\frac{p_{k}}{q_{k}}\thinspace i<m\leq\frac{p_{k-1}}{q_{k-1}}\thinspace i,
\]
or equivalently, 
\begin{equation}
\frac{p_{k}}{q_{k}}<\frac{m}{i}\leq\frac{p_{k-1}}{q_{k-1}}.\label{eq:=000020lem-potentials=000020of=000020consequative=000020aprroximants=000020-=0000202-2}
\end{equation}
Since $k$ is even, \cite[thm. 2]{Khinchin_book64} implies
\begin{equation}
\frac{p_{k-1}}{q_{k-1}}-\frac{p_{k}}{q_{k}}=\frac{1}{q_{k-1}q_{k}}\label{eq:=000020lem-potentials=000020of=000020consequative=000020aprroximants=000020-=0000203-2}
\end{equation}
By (\ref{eq:=000020lem-potentials=000020of=000020consequative=000020aprroximants=000020-=0000202-2}),
$mq_{k}-ip_{k}>0$ holds and so $mq_{k}-ip_{k}\geq1$. Thus, (\ref{eq:=000020lem-potentials=000020of=000020consequative=000020aprroximants=000020-=0000203-2})
and $i\leq q_{k-1}-1$ lead to 
\[
\frac{1}{q_{k-1}q_{k}}=\frac{p_{k-1}}{q_{k-1}}-\frac{p_{k}}{q_{k}}\geq\frac{m}{i}-\frac{p_{k}}{q_{k}}=\frac{mq_{k}-ip_{k}}{iq_{k}}>\frac{mq_{k}-ip_{k}}{q_{k-1}q_{k}}\geq\frac{1}{q_{k-1}q_{k}},
\]
a contradiction.

Next, we show that for all $0\leq i\leq q_{k-1}-1$ , $\fl{\frac{p_{k-2}}{q_{k-2}}\thinspace i}=\fl{\frac{p_{k}}{q_{k}}\thinspace i}$
from which (\ref{eq:=000020lem-period=000020initials=000020-=000020even=000020k=000020-=0000202})
of the Lemma follows when using Lemma~\ref{lem:=000020lower=000020mechanical=000020word}
and (\ref{eq:=000020Sturmian=000020period=000020def}).\\
 Assume towards contradiction that there exists $0\leq i\leq q_{k-1}-1$
such that $\fl{\frac{p_{k-2}}{q_{k-2}}\thinspace i}\neq\fl{\frac{p_{k}}{q_{k}}\thinspace i}$.
Clearly, $i>0$ must hold. Using $\frac{p_{k-2}}{q_{k-2}}<\frac{p_{k}}{q_{k}}$,
we infer that there exists an $m\in\N$ such that 
\[
\frac{p_{k-2}}{q_{k-2}}\thinspace i<m\leq\frac{p_{k}}{q_{k}}\thinspace i,
\]
or equivalently, 
\begin{equation}
\frac{p_{k-2}}{q_{k-2}}<\frac{m}{i}\leq\frac{p_{k}}{q_{k}}.\label{eq:=000020lem-potentials=000020of=000020consequative=000020aprroximants=000020-=0000202-1-1}
\end{equation}
 Let $(c_{i})_{i\in\N_{0}}$be the infinite continued fraction expansion
of $\alpha$. Since $k$ is even, \cite[thm. 3]{Khinchin_book64}
implies
\begin{equation}
\frac{p_{k}}{q_{k}}-\frac{p_{k-2}}{q_{k-2}}=\frac{c_{k}}{q_{k}q_{k-2}}\label{eq:=000020lem-potentials=000020of=000020consequative=000020aprroximants=000020-=0000203-1-1}
\end{equation}
By (\ref{eq:=000020lem-potentials=000020of=000020consequative=000020aprroximants=000020-=0000202-1-1}),
$mq_{k-2}-ip_{k-2}>0$ holds and so $mq_{k-2}-ip_{k-2}\geq1$. Thus,
(\ref{eq:=000020lem-potentials=000020of=000020consequative=000020aprroximants=000020-=0000202-1-1})
and (\ref{eq:=000020lem-potentials=000020of=000020consequative=000020aprroximants=000020-=0000203-1-1})
lead to
\[
\frac{c_{k}}{q_{k}q_{k-2}}=\frac{p_{k}}{q_{k}}-\frac{p_{k-2}}{q_{k-2}}\geq\frac{m}{i}-\frac{p_{k-2}}{q_{k-2}}=\frac{mq_{k-2}-ip_{k-2}}{iq_{k-2}}>\frac{mq_{k-2}-ip_{k-2}}{q_{k-1}q_{k-2}}.
\]
Since $k\geq2$, we have $q_{k-2}>0$ and so (\ref{eq:=000020recursion=000020for=000020p_k,=000020q_k})
and the previous estimate lead to
\[
1\leq mq_{k-2}-ip_{k-2}<\frac{c_{k}q_{k-1}}{q_{k}}=\frac{c_{k}q_{k-1}}{c_{k}q_{k-1}+q_{k-2}}<1,
\]
a contradiction.

(\ref{enu:=000020period=000020prefixes_b}) For the case $k\equiv1\mod 2$,
we get by standard theory of rational convergents (\cite[thm. 4]{Khinchin_book64})
that $\frac{p_{k-1}}{q_{k-1}}<\frac{p_{k}}{q_{k}}<\frac{p_{k-2}}{q_{k-2}}$.

To prove (\ref{eq:=000020lem-period=000020initials=000020-=000020odd=000020k=000020-=0000201})
we need to show that for all $0\leq i\leq q_{k}-1$ , $\fl{\frac{p_{k-1}}{q_{k-1}}\thinspace i}=\fl{\frac{p_{k}}{q_{k}}\thinspace i}$,
which can be done similarly to the way that (\ref{eq:=000020lem-period=000020initials=000020-=000020even=000020k=000020-=0000201})
was proven above (but exchanging the roles of $\frac{p_{k}}{q_{k}}$
and $\frac{p_{k-1}}{q_{k-1}}$). Note that statement (\ref{eq:=000020lem-period=000020initials=000020-=000020odd=000020k=000020-=0000201})
is stronger than (\ref{eq:=000020lem-period=000020initials=000020-=000020even=000020k=000020-=0000201})
in the sense that we show equality for a longer subset (up to $q_{k}-2$
rather than up to $q_{k-1}-2$).

Assume towards contradiction that there exists $0\leq i\leq q_{k}-1$
such that $\fl{\frac{p_{k-1}}{q_{k-1}}\thinspace i}\neq\fl{\frac{p_{k}}{q_{k}}\thinspace i}$.
Clearly, $i>0$ must hold. Using $\frac{p_{k-1}}{q_{k-1}}<\frac{p_{k}}{q_{k}}$,
we infer that there exists an $m\in\N$ such that
\[
\frac{p_{k-1}}{q_{k-1}}<\frac{m}{i}\leq\frac{p_{k}}{q_{k}}.
\]
Thus, $q_{k-1}m-p_{k-1}i\geq1$ follows. Since $k$ is odd, \cite[thm. 2]{Khinchin_book64}
implies
\[
\frac{p_{k}}{q_{k}}-\frac{p_{k-1}}{q_{k-1}}=\frac{1}{q_{k-1}q_{k}}.
\]
Hence, $i\leq q_{k}-1$ lead to 
\[
\frac{1}{q_{k-1}q_{k}}=\frac{p_{k}}{q_{k}}-\frac{p_{k-1}}{q_{k-1}}\geq\frac{m}{i}-\frac{p_{k-1}}{q_{k-1}}=\frac{q_{k-1}m-p_{k-1}i}{iq_{k-1}}>\frac{q_{k-1}m-p_{k-1}i}{q_{k-1}q_{k}}\geq\frac{1}{q_{k-1}q_{k}},
\]
a contradiction.

To prove (\ref{eq:=000020lem-period=000020initials=000020-=000020odd=000020k=000020-=0000202})
we need to show that for all $0\leq i\leq q_{k-2}-1$ , $\fl{\frac{p_{k}}{q_{k}}\thinspace i}=\fl{\frac{p_{k-2}}{q_{k-2}}\thinspace i}$.
Next, we show that for all $0\leq i\leq q_{k-2}-1$ , $\fl{\frac{p_{k-2}}{q_{k-2}}\thinspace i}=\fl{\frac{p_{k}}{q_{k}}\thinspace i}$
from which (\ref{eq:=000020lem-period=000020initials=000020-=000020even=000020k=000020-=0000202})
of the Lemma follows when using Lemma~\ref{lem:=000020lower=000020mechanical=000020word}
and (\ref{eq:=000020Sturmian=000020period=000020def}).

Assume towards contradiction that there exists $0\leq i\leq q_{k-2}-1$
such that $\fl{\frac{p_{k-2}}{q_{k-2}}\thinspace i}\neq\fl{\frac{p_{k}}{q_{k}}\thinspace i}$.
Clearly, $i>0$ must hold. Using $\frac{p_{k}}{q_{k}}<\frac{p_{k-2}}{q_{k-2}}$,
we infer that there exists an $m\in\N$ such that 
\[
\frac{p_{k}}{q_{k}}<\frac{m}{i}\leq\frac{p_{k-2}}{q_{k-2}},
\]
Let $(c_{i})_{i\in\N_{0}}$be the infinite continued fraction expansion
of $\alpha$. Since $k$ is odd, \cite[thm. 3]{Khinchin_book64} implies
\[
\frac{c_{k}}{q_{k}q_{k-2}}=\frac{p_{k-2}}{q_{k-2}}-\frac{p_{k}}{q_{k}}>\frac{p_{k-2}}{q_{k-2}}-\frac{m}{i}=\frac{p_{k-2}i-mq_{k-2}}{iq_{k-2}}.
\]
Hence, $i\leq q_{k-1}-1$ and the recursive relation (\ref{eq:=000020recursion=000020for=000020p_k,=000020q_k})
\[
1\geq\frac{c_{k}q_{k-1}}{q_{k}}>\frac{c_{k}i}{q_{k}}>p_{k-2}i-mq_{k-2}\geq0.
\]
Thus, $p_{k-2}i-mq_{k-2}=0$ follows or equivalently $\frac{m}{i}=\frac{p_{k-2}}{q_{k-2}}$.
This contradicts $i\leq q_{k-2}-1$ and $p_{k-1},q_{k-1}$ are coprime.
\end{proof}
\begin{lem}
\label{lem:=000020period=000020suffixes}[Period suffixes] Let $\alpha\in[0,1]\setminus\Q$
and $k\geq2$.
\begin{enumerate}
\item If $k\equiv0\mod 2$, then 
\begin{equation}
W_{k}(q_{k}-i)=W_{k-1}(q_{k-1}-i)\qquad\textrm{for all }1\leq i\leq q_{k-1}.\label{eq:=000020lem-period=000020suffixes=000020-=000020even=000020k}
\end{equation}
\item If $k\equiv1\mod 2$, then
\begin{equation}
W_{k}(q_{k}-i)=W_{k-2}(q_{k-2}-i)\qquad\textrm{for all }1\leq i\leq q_{k-2}.\label{eq:=000020lem-period=000020suffixes=000020-=000020odd=000020k}
\end{equation}
\end{enumerate}
\end{lem}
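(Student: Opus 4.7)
The plan is to mimic the approach used in Lemma~\ref{lem:=000020period=000020prefixes} but apply it to a ceiling comparison rather than a floor comparison. First, I would rewrite $W_k(q_k-i)$ in a form that allows direct comparison with $W_{k-1}(q_{k-1}-i)$ or $W_{k-2}(q_{k-2}-i)$. Since $\omega_{p/q}$ is $q$-periodic, $W_k(q_k-i)=\omega_{\alpha_k}(-i)$, and by Lemma~\ref{lem:=000020lower=000020mechanical=000020word} together with the identity $\lfloor -x\rfloor=-\lceil x\rceil$,
\[
\omega_{\alpha}(-i)=\lfloor(1-i)\alpha\rfloor-\lfloor -i\alpha\rfloor=\lceil i\alpha\rceil-\lceil(i-1)\alpha\rceil.
\]
Consequently, it suffices to prove $\lceil i\alpha_k\rceil=\lceil i\alpha_{k-1}\rceil$ for all $0\leq i\leq q_{k-1}$ when $k$ is even, and $\lceil i\alpha_k\rceil=\lceil i\alpha_{k-2}\rceil$ for all $0\leq i\leq q_{k-2}$ when $k$ is odd (the case $i=0$ being trivial).

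For $k\equiv0\mod 2$, I would proceed exactly as in the proof of part~(\ref{enu:=000020period=000020prefixes_a}) of Lemma~\ref{lem:=000020period=000020prefixes}, but with the inequalities reversed. Using $\alpha_k<\alpha_{k-1}$, I would assume towards contradiction that $\lceil i\alpha_k\rceil<\lceil i\alpha_{k-1}\rceil$ for some $1\leq i\leq q_{k-1}-1$. Then there exists $m\in\Z$ with $i\alpha_k<m<i\alpha_{k-1}$ (both inequalities strict, since $\gcd(p_k,q_k)=1$ forces $i\alpha_k\notin\Z$ whenever $1\leq i\leq q_k-1$, and analogously for $\alpha_{k-1}$ when $i\leq q_{k-1}-1$). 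This gives $mq_k-ip_k\geq 1$ and $ip_{k-1}-mq_{k-1}\geq 1$, so dividing by $iq_k$ and $iq_{k-1}$ respectively and summing yields
\[
\frac{q_k+q_{k-1}}{i\,q_kq_{k-1}}\leq\frac{p_{k-1}}{q_{k-1}}-\frac{p_k}{q_k}=\frac{1}{q_{k-1}q_k},
\]
forcing $i\geq q_k+q_{k-1}>q_{k-1}$, a contradiction. The boundary case $i=q_{k-1}$ must be treated by direct computation: using $p_{k-1}q_k-p_kq_{k-1}=1$ (valid for $k$ even), one has $q_{k-1}\alpha_k=p_{k-1}-\tfrac{1}{q_k}$, whence $\lceil q_{k-1}\alpha_k\rceil=p_{k-1}=\lceil q_{k-1}\alpha_{k-1}\rceil$.

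The odd case is essentially analogous, but with $\alpha_{k-2}-\alpha_k=\tfrac{c_k}{q_{k-2}q_k}$ in place of $\tfrac{1}{q_{k-1}q_k}$. Assuming $\lceil i\alpha_k\rceil<\lceil i\alpha_{k-2}\rceil$ for some $1\leq i\leq q_{k-2}-1$ leads analogously to $c_k i\geq q_k+q_{k-2}=c_kq_{k-1}+2q_{k-2}$ and hence $i>q_{k-1}>q_{k-2}$ (using $q_{k-1}>q_{k-2}$ for $k\geq 3$ odd), a contradiction. For the boundary $i=q_{k-2}$, the computation $q_{k-2}\alpha_k=p_{k-2}-\tfrac{c_k}{q_k}$ together with $c_k<q_k$ (from $q_k=c_kq_{k-1}+q_{k-2}>c_k$) gives $\lceil q_{k-2}\alpha_k\rceil=p_{k-2}=\lceil q_{k-2}\alpha_{k-2}\rceil$. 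The main obstacle, as in both cases, is precisely these boundary values where the general inequality argument collapses because one side becomes an integer; handling them correctly hinges on the standard identity $p_kq_{k-1}-p_{k-1}q_k=(-1)^{k-1}$ and on $c_k<q_k$.
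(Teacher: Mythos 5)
Your proof is correct, and it takes a genuinely different route from the paper's. The paper proves two auxiliary facts first: that $W_{k}(q_{k}-1)=1$, and that $W_{k}$ restricted to indices $1,\ldots,q_{k}-2$ is a palindrome, i.e.\ $W_{k}(i)=W_{k}(q_{k}-(i+1))$. It then uses the palindrome property to translate the desired suffix equalities into the prefix equalities already established in Lemma~\ref{lem:=000020period=000020prefixes}, handling the boundary indices $i=q_{k-1}$ (even case) and $i=q_{k-2}$ (odd case) by a separate direct computation. You instead rewrite $W_{k}(q_{k}-i)=\omega_{\alpha_{k}}(-i)=\lceil i\alpha_{k}\rceil-\lceil(i-1)\alpha_{k}\rceil$ and prove the ceiling analogues $\lceil i\alpha_{k}\rceil=\lceil i\alpha_{k-1}\rceil$ (resp.\ $\lceil i\alpha_{k}\rceil=\lceil i\alpha_{k-2}\rceil$) directly by the same Diophantine estimate the paper uses for floors in the prefix lemma. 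This exposes a clean floor/ceiling duality between prefixes and suffixes, avoids the palindrome lemma entirely, and is somewhat more uniform; the paper's route has the compensating benefit of isolating the palindrome property of $W_{k}|_{\{1,\ldots,q_{k}-2\}}$, which is a reusable structural fact about Sturmian factors. Both approaches need the same boundary cases to be handled by explicit evaluation using $q_{k}p_{k-1}-p_{k}q_{k-1}=(-1)^{k}$ and, for the odd case, $c_{k}<q_{k}$; your treatment of those is correct as well.
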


\begin{proof}
Using (\ref{eq:=000020lower=000020mechanical=000020word}) in Lemma~\ref{lem:=000020lower=000020mechanical=000020word}
gives for all $k\geq1$, 
\begin{equation}
W_{k}(q_{k}-1)=\lfloor(q_{k}-1+1)\frac{p_{k}}{q_{k}}\rfloor-\lfloor(q_{k}-1)\frac{p_{k}}{q_{k}}\rfloor=1,\label{eq:=000020lem-period=000020suffixes=000020-=0000201}
\end{equation}
which proves the case $i=1$ in (\ref{eq:=000020lem-period=000020suffixes=000020-=000020even=000020k})
and (\ref{eq:=000020lem-period=000020suffixes=000020-=000020odd=000020k}).
If $c_{1}=1$ and $k\in\left\{ 2,3\right\} $, we conclude $q_{1}=1$
from (\ref{eq:=000020recursion=000020for=000020p_k,=000020q_k}),
and hence the statement holds in this case (as we have shown that
it holds for $i=1$). Thus, in the sequel of the proof, when treating
$k\in\{2,3\}$, we will assume $c_{1}>1$.

We show another auxiliary statement which aids in the the proof -
that the subword $\left.W_{k}\right|_{\left\{ 1,\ldots,q_{k}-2\right\} }$
is a palindrome, i.e., 
\begin{equation}
W_{k}(i)=W_{k}(q_{k}-(i+1))\qquad\textrm{for all }1\leq i\leq q_{k}-2.\label{eq:=000020lem-period=000020suffixes=000020-=0000202}
\end{equation}
To prove this identity, observe for $1\leq i\leq q-2$ and $p,q$
coprime,
\begin{align*}
\left\lfloor \left(q-i\right)\frac{p}{q}\right\rfloor -\left\lfloor \left(q-1-i\right)\frac{p}{q}\right\rfloor  & =\left\lfloor -i\frac{p}{q}\right\rfloor -\left\lfloor -\left(i+1\right)\frac{p}{q}\right\rfloor \\
 & =-\left(\left\lfloor i\frac{p}{q}\right\rfloor +1\right)+\left(\left\lfloor \left(i+1\right)\frac{p}{q}\right\rfloor +1\right)\\
 & =\left\lfloor \left(i+1\right)\frac{p}{q}\right\rfloor -\left\lfloor i\frac{p}{q}\right\rfloor .
\end{align*}
Thus, (\ref{eq:=000020lem-period=000020suffixes=000020-=0000202})
follows from (\ref{eq:=000020lower=000020mechanical=000020word})
in Lemma~\ref{lem:=000020lower=000020mechanical=000020word}. We
now proceed to prove the lemma using the above.
\begin{enumerate}
\item Assume that $k\equiv0\mod 2$. For $2\leq i\leq q_{k-1}-1$, we have
\begin{align*}
W_{k}(q_{k}-i) & =W_{k}(i-1)=W_{k-1}(i-1)=W_{k-1}(q_{k-1}-i),
\end{align*}
where the first and third equalities follow from (\ref{eq:=000020lem-period=000020suffixes=000020-=0000202}),
and the second equality follows from (\ref{eq:=000020lem-period=000020initials=000020-=000020even=000020k=000020-=0000201})
in Lemma~\ref{lem:=000020period=000020prefixes}. To finish this
part of the proof we only need to show that (\ref{eq:=000020lem-period=000020suffixes=000020-=000020even=000020k})
holds for $i=q_{k-1}$, i.e. that $W_{k}(q_{k}-q_{k-1})=W_{k-1}(0)=0$.\\
Using \cite[thm. 2]{Khinchin_book64} we calculate 
\begin{align*}
q_{k-1}\frac{p_{k}}{q_{k}} & =\frac{1}{q_{k}}\left(q_{k-1}p_{k}-q_{k}p_{k-1}\right)+p_{k-1}=-\frac{1}{q_{k}}+p_{k-1}.
\end{align*}
Hence,
\begin{align*}
W_{k}(q_{k}-q_{k-1}) & =\left\lfloor (q_{k}-q_{k-1}+1)\frac{p_{k}}{q_{k}}\right\rfloor -\left\lfloor (q_{k}-q_{k-1})\frac{p_{k}}{q_{k}}\right\rfloor \\
 & =\left\lfloor p_{k}-p_{k-1}+\frac{1}{q_{k}}+\frac{p_{k}}{q_{k}}\right\rfloor -\left\lfloor p_{k}-p_{k-1}+\frac{1}{q_{k}}\right\rfloor \\
 & =\left\lfloor \frac{p_{k}+1}{q_{k}}\right\rfloor -\left\lfloor \frac{1}{q_{k}}\right\rfloor =0,
\end{align*}
follows where in the last equality we used that $q_{k}>p_{k}+1$,
which holds if $k>2$ or if $k=2$ and $c_{1}>1$ (which we can assume
since we have already dealt the case $k=2$, $c_{1}=1$ in the beginning
of the proof).
\item Assume that $k\equiv1\mod 2$. For $2\leq i\leq q_{k-2}-1$ we have
\begin{align*}
W_{k}(q_{k}-i) & =W_{k}(i-1)=W_{k-2}(i-1)=W_{k-2}(q_{k-2}-i),
\end{align*}
where the first and third equalities follow from (\ref{eq:=000020lem-period=000020suffixes=000020-=0000202}),
and the second equality follows from (\ref{eq:=000020lem-period=000020initials=000020-=000020odd=000020k=000020-=0000202})
in Lemma~\ref{lem:=000020period=000020prefixes}. To finish this
part of the proof we only need to show that (\ref{eq:=000020lem-period=000020suffixes=000020-=000020odd=000020k})
holds for $i=q_{k-2}$, i.e. that $W_{k}(q_{k}-q_{k-2})=W_{k-2}(0)=0$.\\
Using \cite[thm. 3]{Khinchin_book64} we calculate 
\begin{align*}
q_{k-2}\frac{p_{k}}{q_{k}} & =\frac{1}{q_{k}}\left(q_{k-2}p_{k}-q_{k}p_{k-2}\right)+p_{k-2}=-\frac{c_{k}}{q_{k}}+p_{k-2}.
\end{align*}
Hence, we conclude
\begin{align*}
W_{k}(q_{k}-q_{k-2}) & =\left\lfloor (q_{k}-q_{k-2}+1)\frac{p_{k}}{q_{k}}\right\rfloor -\left\lfloor (q_{k}-q_{k-2})\frac{p_{k}}{q_{k}}\right\rfloor \\
 & =\left\lfloor p_{k}-p_{k-2}+\frac{c_{k}}{q_{k}}+\frac{p_{k}}{q_{k}}\right\rfloor -\left\lfloor p_{k}-p_{k-2}+\frac{c_{k}}{q_{k}}\right\rfloor \\
 & =\left\lfloor \frac{p_{k}+c_{k}}{q_{k}}\right\rfloor -\left\lfloor \frac{c_{k}}{q_{k}}\right\rfloor .
\end{align*}
To conclude $W_{k}(q_{k}-q_{k-2})=0$, we now show $\frac{c_{k}}{q_{k}}<1$
and $\frac{p_{k}+c_{k}}{q_{k}}<1$. The recursions (\ref{eq:=000020recursion=000020for=000020p_k,=000020q_k})
lead to
\[
\frac{c_{k}}{q_{k}}=\frac{1}{q_{k}}\frac{q_{k}-q_{k-2}}{q_{k-1}}<1
\]
and 
\[
\frac{p_{k}+c_{k}}{q_{k}}=\frac{c_{k}\left(p_{k-1}+1\right)+p_{k-2}}{c_{k}q_{k-1}+q_{k-2}}<1,
\]
where to get the last inequality we observe that for $k\geq3$ (recalling
that $k\geq2$ and we consider now odd $k$ values) $p_{k-2}\leq q_{k-2}$
and $p_{k-1}+1\leq q_{k-1}$, and equality in both of these may be
achieved only if $k=3$ and $c_{1}=1$ (which yields $p_{1}=q_{1}=1$
and $p_{2}+1=c_{2}=q_{2}$), but we have already dealt with this case
in the beginning of the proof.
\end{enumerate}
\end{proof}
\begin{lem}
\label{lem:=000020sub-periods=000020of=000020periods}[Sub-periods of the period]
Let $\alpha\in[0,1]\setminus\Q$ and $k\geq2$.
\begin{enumerate}
\item if $k\equiv0\mod 2$, then 
\begin{equation}
W_{k}(i)=W_{k}(i+q_{k-1}\mod{q_{k}})\qquad\textrm{for all }1\leq i\leq q_{k}-2.\label{eq:=000020lem-sub-periods=000020of=000020periods=000020-=000020k=000020even}
\end{equation}
\item if $k\equiv1\mod 2$, then 
\begin{equation}
W_{k}(i\mod{q_{k}})=W_{k}(i+q_{k-1}\mod{q_{k}})\qquad\textrm{for all }-q_{k-1}+1\leq i\leq q_{k}-q_{k-1}-2.\label{eq:=000020lem-sub-periods=000020of=000020periods=000020-=000020k=000020odd}
\end{equation}
\end{enumerate}
\end{lem}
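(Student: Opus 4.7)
The plan is to reduce both statements to direct calculations with floor functions. Since $\omega_{\alpha_k}$ is $q_k$-periodic, the identities $W_k(j \bmod q_k) = \omega_{\alpha_k}(j)$ hold for every $j \in \Z$, so it suffices to prove
\[
\omega_{\alpha_k}(i + q_{k-1}) = \omega_{\alpha_k}(i)
\]
for $i$ in the prescribed range. The key input is the classical identity $q_k p_{k-1} - q_{k-1} p_k = (-1)^{k}$ (see \cite[thm. 2]{Khinchin_book64}), which rewrites as
\[
q_{k-1}\alpha_k = q_{k-1}\frac{p_k}{q_k} = p_{k-1} + \frac{(-1)^{k+1}}{q_k}.
\]

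First I would handle the even case $k\equiv 0 \bmod 2$, where $q_{k-1}\alpha_k = p_{k-1} - \tfrac{1}{q_k}$. Plugging into Lemma~\ref{lem:=000020lower=000020mechanical=000020word} and absorbing the integer $p_{k-1}$ out of the floors gives
\[
\omega_{\alpha_k}(i + q_{k-1}) = \lfloor (i+1)\alpha_k - \tfrac{1}{q_k}\rfloor - \lfloor i\alpha_k - \tfrac{1}{q_k}\rfloor.
\]
Since $\gcd(p_k,q_k)=1$, the fractional part $\{j\alpha_k\}$ equals $(jp_k \bmod q_k)/q_k$, hence lies in $\{0,\tfrac{1}{q_k},\dots,\tfrac{q_k-1}{q_k}\}$. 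The quantity $\lfloor j\alpha_k - \tfrac{1}{q_k}\rfloor$ differs from $\lfloor j\alpha_k\rfloor$ exactly when $\{j\alpha_k\}=0$, i.e.\ when $q_k \mid j$. For $1 \leq i \leq q_k - 2$ neither $i$ nor $i+1$ is divisible by $q_k$, so both floors are unchanged, giving the desired equality.

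For the odd case $k\equiv 1 \bmod 2$, where $q_{k-1}\alpha_k = p_{k-1} + \tfrac{1}{q_k}$, the analogous identity produces $\lfloor j\alpha_k + \tfrac{1}{q_k}\rfloor$ in place of $\lfloor j\alpha_k\rfloor$, and the discrepancy now occurs exactly when $\{j\alpha_k\} = (q_k-1)/q_k$, i.e.\ when $jp_k \equiv -1 \pmod{q_k}$. Using $q_k p_{k-1} - q_{k-1}p_k = -1$ (the odd case of the Khinchin identity), this is equivalent to $j \equiv q_k - q_{k-1} \pmod{q_k}$. For $-q_{k-1}+1 \leq i \leq q_k - q_{k-1} - 2$ one checks that neither $i$ nor $i+1$ is congruent to $q_k - q_{k-1}$ modulo $q_k$ — indeed, both $i, i+1$ lie in the interval $[-q_{k-1}+1, q_k - q_{k-1}-1]$, which strictly avoids $q_k - q_{k-1}$ and $-q_{k-1}$ (the two representatives of $q_k - q_{k-1} \bmod q_k$ inside a window of length $q_k$). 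This yields the odd claim.

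The only real obstacle is the bookkeeping for the endpoints: one has to verify that the range stated in each case is exactly the range on which the floor function is preserved under the shift by $\pm \tfrac{1}{q_k}$. This is precisely what forces the asymmetry between the even range $1 \leq i \leq q_k - 2$ (where the forbidden residues are $0$ and $-1 \equiv q_k-1$) and the odd range $-q_{k-1}+1 \leq i \leq q_k - q_{k-1} - 2$ (where the forbidden residues are $q_k - q_{k-1}$ and $q_k - q_{k-1} - 1$). Once these residue-counting observations are written out, the proof reduces to the direct computations sketched above.
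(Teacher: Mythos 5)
Your proof is correct and follows essentially the same route as the paper's: both reduce via the $q_k$-periodicity and Lemma~\ref{lem:=000020lower=000020mechanical=000020word} to showing that $\lfloor j\tfrac{p_k}{q_k}\mp\tfrac{1}{q_k}\rfloor=\lfloor j\tfrac{p_k}{q_k}\rfloor$ on the relevant range of $j$, using the Khinchin identity $q_kp_{k-1}-q_{k-1}p_k=(-1)^k$ and the fact that the fractional parts of $j\tfrac{p_k}{q_k}$ are multiples of $\tfrac{1}{q_k}$. The only cosmetic difference is that you identify the forbidden residue classes directly where the paper argues the odd case by contradiction.
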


\begin{proof}
As before we use frequently (\ref{eq:=000020lower=000020mechanical=000020word})
in Lemma~\ref{lem:=000020lower=000020mechanical=000020word}.
\begin{enumerate}
\item Let $k\equiv0\mod 2$. Let $1\leq i\leq q_{k}-2$. Since $p_{k},q_{k}$
are coprime, we conclude $j\frac{p_{k}}{q_{k}}\notin\Z$ and $\fl{j\frac{p_{k}}{q_{k}}-\frac{1}{q_{k}}}=\fl{j\frac{p_{k}}{q_{k}}}$
for all $1\leq j\leq q_{k}-1$. Then
\begin{align*}
W_{k}(i+q_{k-1}\mod{q_{k}}) & =\fl{(i+1+q_{k-1})\frac{p_{k}}{q_{k}}}-\fl{(i+q_{k-1})\frac{p_{k}}{q_{k}}}\\
 & =\fl{(i+1)\frac{p_{k}}{q_{k}}+\left(q_{k-1}\frac{p_{k}}{q_{k}}-p_{k-1}\right)}-\fl{i\frac{p_{k}}{q_{k}}+\left(q_{k-1}\frac{p_{k}}{q_{k}}-p_{k-1}\right)}\\
 & =\fl{(i+1)\frac{p_{k}}{q_{k}}-\frac{1}{q_{k}}}-\fl{i\frac{p_{k}}{q_{k}}-\frac{1}{q_{k}}}\\
 & =\fl{(i+1)\frac{p_{k}}{q_{k}}}-\fl{i\frac{p_{k}}{q_{k}}}=W_{k}(i)
\end{align*}
follows where we used \cite[thm. 2]{Khinchin_book64} in the third
equality.
\item Let $k\equiv1\mod 2$. Let $-q_{k-1}+1\leq i\leq q_{k}-q_{k-1}-2$.
\begin{align}
W_{k}(i+q_{k-1}) & =\fl{(i+1+q_{k-1})\frac{p_{k}}{q_{k}}}-\fl{(i+q_{k-1})\frac{p_{k}}{q_{k}}}\nonumber \\
 & =\fl{(i+1)\frac{p_{k}}{q_{k}}+\left(q_{k-1}\frac{p_{k}}{q_{k}}-p_{k-1}\right)}-\fl{i\frac{p_{k}}{q_{k}}+\left(q_{k-1}\frac{p_{k}}{q_{k}}-p_{k-1}\right)}\nonumber \\
 & =\fl{(i+1)\frac{p_{k}}{q_{k}}+\frac{1}{q_{k}}}-\fl{i\frac{p_{k}}{q_{k}}+\frac{1}{q_{k}}}.\label{eq:=000020lem-sub-periods=000020of=000020periods=000020-=0000201}
\end{align}
Thus, it suffices to prove (quite similarly to the proof of the first
part) that for all $-q_{k-1}+1\leq j\leq q_{k}-q_{k-1}-1$,
\[
\fl{j\frac{p_{k}}{q_{k}}+\frac{1}{q_{k}}}=\fl{j\frac{p_{k}}{q_{k}}}.
\]
 Assume towards contradiction that $\fl{j\frac{p_{k}}{q_{k}}+\frac{1}{q_{k}}}\neq\fl{j\frac{p_{k}}{q_{k}}}$
for some $-q_{k-1}+1\leq j\leq q_{k}-q_{k-1}-1$. This means that
there exists $m\in\Z$ such that $j\frac{p_{k}}{q_{k}}+\frac{1}{q_{k}}=m$.
Therefore, 
\begin{equation}
jp_{k}\equiv-1\mod q_{k}.\label{eq:=000020lem-sub-periods=000020of=000020periods=000020-=0000203}
\end{equation}
By \cite[thm. 2]{Khinchin_book64} $-q_{k-1}p_{k}+p_{k-1}q_{k}=-1$.
Thus, $j=-q_{k-1}$ is a solution to (\ref{eq:=000020lem-sub-periods=000020of=000020periods=000020-=0000203}).
In fact, since $p_{k},q_{k}$ are coprime any solution to (\ref{eq:=000020lem-sub-periods=000020of=000020periods=000020-=0000203})
satisfies $j\equiv-q_{k-1}\mod q_{k}$. We now obtain a contradiction
since there is no such value in the range $j\in\{-q_{k-1}+1,\ldots,q_{k}-q_{k-1}-1\}$.
\end{enumerate}
\end{proof}
We now combine the last three lemmas to prove Lemma~\ref{lem:=000020WordRecursion}.
\begin{proof}
[Proof of Lemma \ref{lem:=000020WordRecursion}]A short computation
leads to $W_{0}=0$ with $\alpha_{0}=\frac{0}{1}$ and $W_{1}=\underbrace{0\ldots0}_{c_{1}-1}1$
with $\alpha_{1}=\frac{1}{c_{1}}$.

\begin{enumerate}
\item Let $k\equiv0\mod 2$. We first treat the case $k=2$ and $c_{1}=1$.
In this case, we have $W_{0}=0$ and $W_{1}=1$.. Then the recursion
relation (\ref{eq:=000020recursion=000020for=000020p_k,=000020q_k})
asserts $\alpha_{1}=\frac{p_{1}}{q_{1}}=\frac{1}{1}$ and $\alpha_{2}=\frac{p_{2}}{q_{2}}=\frac{c_{2}}{c_{2}+1}$.
Thus, $W_{2}=0\underbrace{1\ldots1}_{c_{2}}=W_{0}W_{1}^{c_{2}}$ follows
as claimed by (\ref{eq:=000020def=000020Sturmian=000020sequence})
and (\ref{eq:=000020Sturmian=000020period=000020def}). Therefore,
we can from now on assume that if $k=2$, then $c_{1}>1$. \\
Next, observe that $q_{k-1}=q_{k-2}$ can happen only for $k=2$ and
$c_{1}=1$ (as can be verified from (\ref{eq:=000020recursion=000020for=000020p_k,=000020q_k})).
Therefore, we may continue the proof assuming that $q_{k-1}>q_{k-2}$.\\
Applying (\ref{eq:=000020lem-period=000020initials=000020-=000020even=000020k=000020-=0000202})
in Lemma~\ref{lem:=000020period=000020prefixes} establishes the
required statement for the prefix of $W_{k}$, i.e.
\[
W_{k-2}=\left.W_{k}\right|_{\left\{ 0,\ldots,q_{k-2}-1\right\} }=\left.W_{k}\right|_{\left\{ 0,\ldots,q_{k}-c_{k}q_{k-1}-1\right\} },
\]
where we used $q_{k-1}>q_{k-2}$ and the recursive relation (\ref{eq:=000020recursion=000020for=000020p_k,=000020q_k})
of $\left\{ q_{k}\right\} $.\\
Applying (\ref{eq:=000020lem-period=000020suffixes=000020-=000020even=000020k})
in Lemma~\ref{lem:=000020period=000020suffixes} gives
\[
\left.W_{k}\right|_{\left\{ q_{k}-q_{k-1},\ldots,q_{k}-1\right\} }=W_{k-1}.
\]
The last equality together with (\ref{eq:=000020lem-sub-periods=000020of=000020periods=000020-=000020k=000020even})
in Lemma~\ref{lem:=000020sub-periods=000020of=000020periods} yields
\[
\left.W_{k}\right|_{\left\{ q_{k}-c_{k}q_{k-1},\ldots,q_{k}-1\right\} }=W_{k-1}^{c_{k}},
\]
and completes the proof of this part.
\item Let $k\equiv1\mod 2$. \\
Applying (\ref{eq:=000020lem-period=000020suffixes=000020-=000020odd=000020k})
in Lemma~\ref{lem:=000020period=000020suffixes} establishes the
required statement for the suffix of $W_{k}$, i.e.
\[
W_{k-2}=\left.W_{k}\right|_{\left\{ q_{k}-q_{k-2},\ldots,q_{k}-1\right\} }=\left.W_{k}\right|_{\left\{ c_{k}q_{k-1},\ldots,q_{k}-1\right\} },
\]
where we used the recursive relation (\ref{eq:=000020recursion=000020for=000020p_k,=000020q_k})
of $\left\{ q_{k}\right\} $. Applying (\ref{eq:=000020lem-period=000020initials=000020-=000020odd=000020k=000020-=0000201})
in Lemma~\ref{lem:=000020period=000020prefixes} gives
\[
\left.W_{k}\right|_{\left\{ 0,\ldots,q_{k-1}-1\right\} }=W_{k-1}.
\]
We observe that in this case we have $k\geq3$ and $q_{k-2}\geq1$.
Hence, the last equality together with (\ref{eq:=000020lem-sub-periods=000020of=000020periods=000020-=000020k=000020odd})
in Lemma~\ref{lem:=000020sub-periods=000020of=000020periods} yields
\[
\left.W_{k}\right|_{\left\{ 0,\ldots,c_{k}q_{k-1}-1\right\} }=W_{k-1}^{c_{k}},
\]
and completes the proof.
\end{enumerate}
\end{proof}

\section{Floquet-Bloch theory via finite-dimensional Hamiltonian matrices
\protect\label{sec:=000020Floquet-Bloch=000020Theory}}

This appendix complements Section~\ref{sec:=000020Transfer=000020Matrices=000020and=000020Discriminant}
by providing an alternative approach for the spectral analysis of
the periodic operators, $H_{\frac{p}{q},V}$. In Section~\ref{subsec:=000020spectrum=000020via=000020transfer=000020matrices}
the Floquet-Bloch theory is described in terms of transfer matrices
and the discriminant, whereas here we make use of finite Hamiltonian
matrices $\Hc(\theta)$. These matrices $\Hc(\theta)$ play a crucial
role in \cite{BanBecLoe_24} and henceforth it is advantageous to
introduce them already here and make the appropriate connection to
the transfer matrices.

We use here the continued fraction notation, $\co\in\Co$ and denote
$\frac{p}{q}:=\varphi(\co)$. The corresponding operator is 
\[
(\Hrat\psi)(n):=\psi(n+1)+\psi(n-1)+V\ohrat(n)\thinspace\psi(n),
\]
where the potential is given by the mechanical word,
\[
\ohrat(n):=\chi_{\left[1-\frac{p}{q},1\right)}\left(n\frac{p}{q}\mod 1\right),
\]
which is $q$ periodic (see Section~\ref{subsec:=000020Sturmian=000020words=000020and=000020mechanical=000020words}).
To describe the relevant Floquet-Bloch theory, we define the following
finite-dimensional auxiliary matrix

\begin{equation}
\Hc(\theta):=\begin{cases}
2\cos(\theta)+V\ohrat(0), & q=1,\\
\\\begin{pmatrix}V\ohrat(0) & 1+e^{-i\theta}\\
1+e^{i\theta} & V\ohrat(1)
\end{pmatrix}, & q=2,\\
\\\begin{pmatrix}V\ohrat(0) & 1 & 0 & \ldots &  & e^{-i\theta}\\
1 & V\ohrat(1) & 1 & \ldots &  & 0\\
0 & 1 & \ddots &  &  & \vdots\\
\vdots & \ddots &  & \ddots &  & 0\\
0 &  &  &  &  & 1\\
e^{i\theta} & 0 & \cdots & 0 & 1 & V\ohrat(q-1)
\end{pmatrix}, & q>2.
\end{cases}\label{eq:=000020finite-dim=000020Ham=000020matrices}
\end{equation}
 The characteristic polynomial of the matrix above is denoted by 
\[
\polyc(\theta;E):=\det(E-\Hc(\theta)).
\]
Using the auxiliary matrices defined above, we get the following by
standard Floquet-Bloch theory (see e.g., \cite[sec. 7.2]{Tes00},
\cite[sec. 5.3]{Simon2011}).
\begin{prop}
\label{prop:=000020Floquet-Bloch} Let $V\in\R$ and $\co\in\Co$
with $\frac{p}{q}=\varphi(\co)\neq\infty$.
\begin{enumerate}
\item The spectrum of $\Hrat$ is given by
\[
\sigma(\Hrat)=\bigcup_{\theta\in[0,\pi]}\sigma(\Hc(\theta)).
\]
\item Suppose $p$ and $q$ are coprime. Denoting the roots of $\polyc(\theta;\cdot)$
by $\left\{ \lambda_{i}^{\left(\theta\right)}\right\} _{i=1}^{q}$
we have 
\begin{equation}
\lambda_{q-1}^{\left(0\right)}>\lambda_{q-1}^{\left(\pi\right)}>\lambda_{q-2}^{\left(\pi\right)}>\lambda_{q-2}^{\left(0\right)}>\lambda_{q-3}^{\left(0\right)}>\lambda_{q-3}^{\left(\pi\right)}>\ldots\label{eq:=000020Floquet-Bloch=000020-=000020evalues=000020inequalities}
\end{equation}
and get that $\sigma(\Hrat)$ is the following union of $q$ disjoint
closed intervals 
\begin{equation}
\sigma(\Hrat)=\ldots\sqcup\left[\lambda_{q-3}^{\left(\pi\right)},\lambda_{q-3}^{\left(0\right)}\right]\sqcup\left[\lambda_{q-2}^{\left(0\right)},\lambda_{q-2}^{\left(\pi\right)}\right]\sqcup\left[\lambda_{q-1}^{\left(\pi\right)},\lambda_{q-1}^{\left(0\right)}\right],\label{eq:=000020Floquet-Bloch=000020-=000020spectral=000020bands=000020decomposition}
\end{equation}
which are commonly called \emph{spectral bands}.
\end{enumerate}
\end{prop}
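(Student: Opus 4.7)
The plan is to establish (1) through a direct-integral Bloch--Floquet decomposition and (2) by connecting the characteristic polynomial of $\Hc(\theta)$ to the discriminant $\tc$, then invoking the monotonicity results of Proposition~\ref{prop:=000020Floquet-Bloch=000020via=000020transfer=000020matrix}. For (1), I would introduce the Bloch transform $U:\ell^{2}(\Z)\to L^{2}([0,2\pi);\C^{q})$ defined by $(U\psi)(\theta)_{j}=\sum_{k\in\Z}e^{-ik\theta}\psi(kq+j)$ for $j=0,\dots,q-1$, which is unitary. A short calculation using the $q$-periodicity of $\omega_{p/q}$ shows that $U\Hrat U^{*}$ is the multiplication operator by the matrix $\Hc(\theta)$ as defined in (\ref{eq:=000020finite-dim=000020Ham=000020matrices}); the wrap-around entries $e^{\pm i\theta}$ appear precisely from the shift of index $k$ by $\pm 1$. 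Since $\theta\mapsto\Hc(\theta)$ is continuous on the compact torus, the spectrum of the direct integral equals $\bigcup_{\theta}\sigma(\Hc(\theta))$, and the symmetry $\Hc(-\theta)=\overline{\Hc(\theta)}$ reduces the union to $\theta\in[0,\pi]$.

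For (2), the core ingredient is the polynomial identity
\[
\det\bigl(E\cdot\id_{q}-\Hc(\theta)\bigr)=\tc(E)-2\cos\theta,
\]
which I would prove as follows. If $\lambda$ is an eigenvalue of $\Hc(\theta)$ with eigenvector $(\psi(0),\dots,\psi(q-1))$, extend $\psi$ to $\Z$ by the Bloch condition $\psi(n+q)=e^{i\theta}\psi(n)$. Then $\psi$ solves the difference equation (\ref{eq:EVEquation}) at energy $E=\lambda$, and Lemma~\ref{lem:=000020transfer=000020matrix=000020defined} forces $M_{\co}(\lambda)\binom{\psi(0)}{\psi(-1)}=e^{i\theta}\binom{\psi(0)}{\psi(-1)}$. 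Combined with $\det M_{\co}=1$, the eigenvalues of $M_{\co}(\lambda)$ are $e^{\pm i\theta}$ and hence $\tc(\lambda)=2\cos\theta$. Conversely, every root of $\tc(E)-2\cos\theta$ can be lifted to an eigenvector of $\Hc(\theta)$ by reversing this construction. Both sides of the identity are monic polynomials of degree $q$ in $E$ (the leading coefficient of $\tc$ being established inductively from Lemma~\ref{lem:=000020transfer=000020matrix=000020formula}) sharing the same $q$ roots with multiplicity, so they coincide.

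With this identity, the spectral bands of $\sigma(\Hrat)$ correspond exactly to the connected components of $\tc^{-1}([-2,2])$, on each of which $\tc$ is strictly monotone by Proposition~\ref{prop:=000020Floquet-Bloch=000020via=000020transfer=000020matrix}. Coprimality of $p,q$ forces these $q$ components to be pairwise disjoint closed intervals (as in Proposition~\ref{prop:=000020there=000020are=000020q=000020spectral=000020bands}). Each band carries a unique continuous branch $\theta\mapsto\lambda_{i}^{(\theta)}$ solving $\tc(\lambda_{i}^{(\theta)})=2\cos\theta$, and since $\tc$ equals $\pm 2$ at the band edges, the sign of its monotonicity alternates from one band to the next. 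Thus the topmost band has $\tc$ increasing from $-2$ to $+2$, giving $\lambda_{q-1}^{(\pi)}<\lambda_{q-1}^{(0)}$; the next band down has $\tc$ decreasing, giving $\lambda_{q-2}^{(0)}<\lambda_{q-2}^{(\pi)}$, and these ordered band edges immediately yield (\ref{eq:=000020Floquet-Bloch=000020-=000020evalues=000020inequalities}) and (\ref{eq:=000020Floquet-Bloch=000020-=000020spectral=000020bands=000020decomposition}).

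The main obstacle will be the rigorous derivation of the polynomial identity $\det(E\cdot\id_{q}-\Hc(\theta))=\tc(E)-2\cos\theta$, particularly pinning down signs and the leading coefficient for small $q$ and in the corner cases $q=1,2$ where $\Hc(\theta)$ has a degenerate form; once that identity is in hand, everything else in (2) is an exercise in interpreting the results of Section~\ref{sec:=000020Transfer=000020Matrices=000020and=000020Discriminant}, and part (1) reduces to verifying the standard Bloch decomposition.
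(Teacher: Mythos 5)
Your proposal is correct and, for part (1) and the band-interlacing in part (2), follows the same standard Floquet--Bloch route that the paper simply delegates to \cite[sec. 7.2]{Tes00} and \cite[sec. 5.3]{Simon2011}. The one place where you genuinely diverge from the paper is the key identity $\polyc(\theta;E)=\tc(E,V)-2\cos\theta$ (Lemma~\ref{lem:PropCharacPoly}): you derive it by lifting an eigenvector of $\Hc(\theta)$ to a Bloch solution, reading off that $\Mc(\lambda)$ has eigenvalues $e^{\pm i\theta}$, and then matching two monic degree-$q$ polynomials by their roots. The paper explicitly flags this as the ``standard way'' but instead gives a direct computational proof: it expands $\det(E\cdot\id-\Hc(\theta))$ and $\tr(\Mc)$ via the Leibniz formula and constructs an explicit bijection between the contributing permutations (fixed points and adjacent involutions, plus the two cyclic permutations producing $-2\cos\theta$) and the index sequences in the transfer-matrix product. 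Your route is shorter and more conceptual; the paper's buys an identity of polynomials with no appeal to spectral theory and no genericity issues. The only step you should tighten is the claim that the two monic polynomials share their $q$ roots \emph{with multiplicity}: the eigenvector correspondence a priori only matches root sets. This is repaired by noting that for $\theta\in(0,\pi)$ the $q$ roots of $\tc(\cdot)-2\cos\theta$ are simple (strict monotonicity of $\tc$ on $\tc^{-1}((-2,2))$, Proposition~\ref{prop:=000020Floquet-Bloch=000020via=000020transfer=000020matrix}), so the polynomials agree there, and then for all $\theta$ by continuity; alternatively one checks directly that $\Hc(\theta)$ has simple spectrum for $\theta\in(0,\pi)$. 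Also note that the strict inequalities and disjointness in (\ref{eq:=000020Floquet-Bloch=000020-=000020evalues=000020inequalities})--(\ref{eq:=000020Floquet-Bloch=000020-=000020spectral=000020bands=000020decomposition}) rest on Proposition~\ref{prop:=000020there=000020are=000020q=000020spectral=000020bands}, which needs $V\neq0$; you cite it correctly, and this implicit restriction is already present in the paper's own formulation.
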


The general statement of Proposition~\ref{prop:=000020Floquet-Bloch}
within Floquet-Bloch theory is with weak inequalities in (\ref{eq:=000020Floquet-Bloch=000020-=000020evalues=000020inequalities})
and possible intersections of the spectral bands in (\ref{eq:=000020Floquet-Bloch=000020-=000020spectral=000020bands=000020decomposition})
at their edges. Specifically, in our case where the potential is given
by $\ohrat(n)$, this slightly stronger version holds since $p,q$
are coprime - a proof is found in Proposition~\ref{prop:=000020there=000020are=000020q=000020spectral=000020bands}
using transfer matrices.

Since Floquet-Bloch theory may be described either in terms of transfer
matrices (as in Section~\ref{subsec:=000020spectrum=000020via=000020transfer=000020matrices})
and in terms of finite-dimensional Hamiltonian matrices (as in this
appendix), it makes sense to draw a direct connection between both.
Hence, we explicitly state the connection between the trace $\tc$
of the transfer matrix (i.e., the discriminant) and the characteristic
polynomial $\polyc$:

\begin{lem}
\label{lem:PropCharacPoly} For all $\theta\in[0,2\pi]$, 
\[
\polyc(\theta;E)=\tc(E,V)-2\cos(\theta).
\]
\end{lem}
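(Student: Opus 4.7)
The plan is to identify both sides of the claimed equality as monic polynomials in $E$ of degree $q$ that share all zeros (with multiplicity) via the Floquet--Bloch correspondence, and then conclude by a continuity argument in $\theta$.

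First I would verify that both sides are monic polynomials of degree $q$ in $E$. The left-hand side $\polyc(\theta;E)$ is the characteristic polynomial of the $q\times q$ matrix $\Hc(\theta)$, hence monic of degree $q$. For the right-hand side, I would use the factorization $\Mc(E,V)=A_{p/q}(q-1)\cdots A_{p/q}(0)$ from (\ref{eq:=000020lem-transfer=000020matrix=000020defined}); a short induction on $q$ shows that the $(1,1)$-entry of $\Mc(E,V)$ is a monic polynomial of degree $q$ in $E$, whereas the $(2,2)$-entry has degree at most $q-2$. Thus $\tc(E,V)=\tr\Mc(E,V)$, and therefore also $\tc(E,V)-2\cos\theta$, is monic of degree $q$ in $E$.

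Next I would set up the Floquet--Bloch correspondence. A vector $\psi\in\C^{q}$ belongs to $\ker(\Hc(\theta)-E)$ precisely when its quasi-periodic extension $\psi(n+q):=e^{i\theta}\psi(n)$ solves the recurrence (\ref{eq:EVEquation}) on all of $\Z$: the corner entries $e^{\pm i\theta}$ in $\Hc(\theta)$ are designed to encode exactly this identification. For such a solution, Lemma~\ref{lem:=000020transfer=000020matrix=000020defined} yields
\[
\Mc(E,V)\binom{\psi(0)}{\psi(-1)}=\binom{\psi(q)}{\psi(q-1)}=e^{i\theta}\binom{\psi(0)}{\psi(-1)},
\]
so the geometric multiplicity of $E$ as an eigenvalue of $\Hc(\theta)$ equals $\dim\ker\bigl(\Mc(E,V)-e^{i\theta}\id_{2}\bigr)$. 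Since $\det\Mc(E,V)=1$ (Lemma~\ref{lem:=000020transfer=000020matrix=000020defined}), the characteristic polynomial of the $2\times 2$ matrix $\Mc(E,V)$ is $\lambda^{2}-\tc(E,V)\lambda+1$, and so $e^{i\theta}$ is one of its eigenvalues if and only if $e^{2i\theta}-\tc(E,V)e^{i\theta}+1=0$, equivalently $\tc(E,V)=2\cos\theta$.

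Now I would fix $\theta\in(0,\pi)$, so that $e^{i\theta}\neq\pm 1$. Whenever $\tc(E,V)=2\cos\theta$, the eigenvalues $e^{\pm i\theta}$ of $\Mc(E,V)$ are distinct, hence $\dim\ker\bigl(\Mc(E,V)-e^{i\theta}\id_{2}\bigr)=1$. Combined with the previous paragraph this forces every eigenvalue of $\Hc(\theta)$ to have geometric multiplicity one. Because $\Hc(\theta)$ is Hermitian for real $\theta$, algebraic and geometric multiplicities coincide, so the monic degree-$q$ polynomial $\polyc(\theta;\cdot)$ has $q$ distinct real roots. By the previous paragraph these are exactly the zeros of the monic degree-$q$ polynomial $\tc(\cdot,V)-2\cos\theta$, which therefore also splits into $q$ distinct linear factors and must coincide with $\polyc(\theta;\cdot)$. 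This yields the identity for every $\theta\in(0,\pi)$. Both sides of the claimed identity are, for each fixed $E$, trigonometric polynomials in $\theta$ (polynomials in $e^{\pm i\theta}$): on the left only the two corner entries of $EI-\Hc(\theta)$ depend on $\theta$, and on the right this is obvious. Continuity in $\theta$ therefore extends the identity from the dense set $(0,\pi)$ to all $\theta\in[0,2\pi]$. The main subtlety is matching multiplicities of zeros in $E$, which I avoid by restricting to $\theta\in(0,\pi)$, where Hermiticity combined with the multiplicity bound coming from the transfer-matrix side forces all zeros to be simple.
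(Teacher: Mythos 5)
Your proof is correct, but it takes a genuinely different route from the paper. The paper proves the identity by a direct algebraic computation: it expands $\polyc(\theta;E)$ via the Leibniz formula, isolates the two cyclic permutations (which contribute $-2\cos\theta$), and exhibits an explicit weight-preserving bijection between the remaining permutations (products of disjoint involutions $(n\ \ n{+}1)$ and fixed points) and the index sequences appearing in the expansion of $\tr\bigl(\prod_n A_{\varphi(\co)}(n)\bigr)$; the cases $q=1,2$ are checked by hand. You instead use the standard Floquet--Bloch correspondence --- exactly the route the paper mentions and deliberately avoids (cf.\ the remark before its proof, citing \cite[thm. 5.4.1,(iii)]{Simon2011}): both sides are monic degree-$q$ polynomials in $E$; quasi-periodic solutions identify the eigenvalues of $\Hc(\theta)$ with the solutions of $\tc(E,V)=2\cos\theta$; and for $\theta\in(0,\pi)$ the distinctness of $e^{\pm i\theta}$ bounds the geometric multiplicity by one, so Hermiticity forces $q$ simple roots and the two monic polynomials coincide. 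Your approach is shorter and conceptually transparent, at the cost of invoking spectral theory and the multiplicity argument; the paper's is longer but entirely self-contained and combinatorial. Two small points: the interval $(0,\pi)$ is of course not dense in $[0,2\pi]$, so ``continuity'' alone only reaches $[0,\pi]$ --- but your observation that both sides are trigonometric polynomials in $\theta$ does close this (a trigonometric polynomial vanishing on an open interval vanishes identically; alternatively use $\Hc(-\theta)=\overline{\Hc(\theta)}$). And it is worth a sentence confirming that the quasi-periodic identification still reads off correctly from the degenerate corner entries in the cases $q=1$ and $q=2$ of (\ref{eq:=000020finite-dim=000020Ham=000020matrices}), where the hopping and boundary terms merge; it does.
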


A standard way to prove the identity in the lemma is to develop the
Floquet-Bloch theory using both the discriminant $\tc$ and the characteristic
polynomial $\polyc$ and note that these two polynomials have common
roots. See for example \cite[thm. 5.4.1,(iii)]{Simon2011}. Nevertheless,
we bring here a direct computational proof\footnote{An idea towards such a proof is also found in remark 3 after \cite[thm. 5.4.1]{Simon2011}.}
which exploits the structure of the matrix $\Hc(\theta)$.
\begin{proof}
As usual, denote $\frac{p}{q}:=\varphi(\co)$, with coprime $p,q$.
We first prove the statement assuming $q\geq3$ and at the end check
that it holds also for the cases $q=1$ and $q=2$.Start by examining
$\polyc(\theta;E)+2\cos(\theta)=\det\left(E\id-\Hc(\theta)\right)+2\cos(\theta)$
and decomposing it into summands. We use the Leibniz formula for determinants
to get 
\begin{equation}
\polyc(\theta;E)=\sum_{\sigma\in S_{q}}\sgn(\sigma)\prod_{n=1}^{q}\left[E\id-\Hc(\theta)\right]{}_{n,\sigma(n)},\label{eq:=000020Char=000020Poly=000020-=000020Leibnitz=000020formula}
\end{equation}
where $S_{q}$ is the set of all permutations on $\left[q\right]:=\{1,2,\ldots,q\}$.
We examine only permutations with a non-vanishing contribution to
the sum above. Let $\sigma\in S_{q}$ be such permutation and $n\in\left[q\right]$.
We have that $\left[E\id-\Hc(\theta)\right]{}_{n,\sigma(n)}\neq0$
only if $\sigma(n)\in\{n-1,n,n+1\}$ (noting that we consider a cyclic
ordering of the indices in the set $\left[q\right]$, such that if
$n=1$ then $n-1:=q$ and if $n=q$ then $n+1=1$). If $\sigma(n)=n+1$,
then we can have either $\sigma(n+1)=n$ or $\sigma(n+1)=n+2$ (so
that the corresponding product in (\ref{eq:=000020Char=000020Poly=000020-=000020Leibnitz=000020formula})
differs than zero). In the first case, we see that the permutation
$\sigma$ contains an involution, $\left(n\quad n+1\right)$. The
second case imposes that $\sigma$ is the cyclic permutation, $\sigma_{\mathrm{cyc}}^{+}=(1\quad2\quad\ldots\quad q-1\quad q)$,
as all other permutations which satisfy both $\sigma(n)=n+1$ and
$\sigma(n+1)=\sigma(n+2)$ have a vanishing contribution to (\ref{eq:=000020Char=000020Poly=000020-=000020Leibnitz=000020formula}).
Explicitly the contribution of $\sigma_{\mathrm{cyc}}^{+}$ to this
sum is
\[
\sign(\sigma_{\mathrm{cyc}}^{+})\left(\prod_{n=1}^{q-1}\left[E\id-\Hc(\theta)\right]{}_{n,n+1}\right)\left[-\Hc(\theta)\right]_{q,1}=(-1)^{q+1}(-1)^{q-1}(-\ue^{\ui\theta})=-\ue^{\ui\theta}.
\]
 If we repeat the arguments above for the case $\sigma(n)=n-1$ we
get that either $\sigma$ contains the involution $\left(n-1\quad n\right)$
or that it is the cyclic permutation $\sigma_{\mathrm{cyc}}^{-}=(q\quad q-1\quad\ldots\quad2\quad1)$
whose contribution to (\ref{eq:=000020Char=000020Poly=000020-=000020Leibnitz=000020formula})
is $-\ue^{-\ui\theta}$. Hence, the contribution of both $\sigma_{\mathrm{cyc}}^{+}$
and $\sigma_{\mathrm{cyc}}^{-}$ sums to $-2\cos(\theta)$. All other
permutations with non-vanishing contribution to (\ref{eq:=000020Char=000020Poly=000020-=000020Leibnitz=000020formula})
contain only involutions of the form $\left(n-1\quad n\right)$ or
fixed points $\left(n\right)$. We denote the set of such permutations
by $\widetilde{S_{q}}$ and summarize the discussion so far by writing
\begin{equation}
\polyc(\theta;E)+2\cos(\theta)=\sum_{\sigma\in\widetilde{S}_{q}}(-1)^{|I(\sigma)|}\prod_{i\in F(\sigma)}\left(E-V\ohrat(n)\right),\label{eq:=000020char=000020poly=000020expansion}
\end{equation}
where $I(\sigma)$ is the set of involutions $(i\quad i+1)$ of $\sigma$
and $F(\sigma)$ is the set of fixed points of $\sigma$.

Now, we consider $\tc(E,V)$ and decompose it into summands. To do
so, we recall (see Subsection~\ref{subsec:=000020spectrum=000020via=000020transfer=000020matrices})
the definition of $\Mc(E,V)$ as the product of one step transfer
matrices,
\[
A_{\alpha}(n)(E,V):=\begin{pmatrix}E-V\omega_{\alpha}(n) & -1\\
1 & 0
\end{pmatrix},
\]
and write
\begin{align}
\tc(E,V) & =\text{tr}\left(\Mc(E,V)\right)\\
 & =\text{tr}\left(\prod_{n=1}^{q}A_{\varphi(\co)}(n)(E,V)\right)\\
 & =\sum_{\nu\in\left\{ 1,2\right\} ^{q}}\prod_{n=1}^{q}\left[A_{\varphi(\co)}(n)(E,V)\right]_{\nu_{n},\nu_{n+1}},
\end{align}
where we have the interpretation $\nu_{q+1}:=\nu_{1}$ due to the
cyclic property of the trace. In the sum above, a summand which corresponds
to $\nu\in\left\{ 1,2\right\} ^{q}$ is non-zero if and only if there
is no $n$ such that $\nu_{n}=\nu_{n+1}=2$. We denote the set of
all such $\nu\in\left\{ 1,2\right\} ^{q}$ with non-vanishing contribution
by $\mathcal{\widetilde{N}}_{q}$, so that 
\begin{equation}
\tc(E,V)=\sum_{\nu\in\mathcal{\widetilde{N}}_{q}}\prod_{n=1}^{q}\left[A_{\varphi(\co)}(n)(E,V)\right]_{\nu_{n},\nu_{n+1}}.\label{eq:=000020trace=000020expansion}
\end{equation}
For the last part of this proof, we show a bijection $h:\widetilde{S}_{q}\rightarrow\mathcal{\widetilde{N}}_{q}$
such that the contribution of $\sigma\in\widetilde{S}_{q}$ to (\ref{eq:=000020char=000020poly=000020expansion})
equals the contribution of $h(\sigma)$ to (\ref{eq:=000020trace=000020expansion}).
We explicitly construct this bijection as follows: for any fixed point
$n\in F(\sigma)$ we set 
\[
h(\sigma)_{n}=h(\sigma)_{n+1}=1,
\]
and for any involution $\left(n\quad n+1\right)\in I(\sigma)$ we
set 
\[
h(\sigma)_{n}=1,\quad h(\sigma)_{n+1}=2,\quad h(\sigma)_{n+1}=1.
\]
First, the map $h:\widetilde{S}_{q}\rightarrow\mathcal{\widetilde{N}}_{q}$
is well defined, as no two subsequent entries of $h(\sigma)$ may
be equal to $2$. Furthermore, one can see that it is a bijection
and for each $\nu\in\mathcal{\widetilde{N}}_{q}$ one can uniquely
construct the corresponding $\sigma\in\widetilde{S}_{q}$ such that
$h(\sigma)=\nu$. Finally, it is also not hard to check that the contribution
to the corresponding sum ((\ref{eq:=000020char=000020poly=000020expansion})
or (\ref{eq:=000020trace=000020expansion})) is preserved under the
map $h$.

We end the proof by checking that the statement holds for the particular
cases of $q=1$ and $q=2$.

For $q=1$ we have 
\[
\polyc(\theta;E)=E-\left(2\cos+V\ohrat(1)\right),
\]
and 
\[
\tc(E,V)=\tr\left(A_{\pq}(1)(E,V)\right)=\tr\begin{pmatrix}E-V\omega_{\pq}(1) & -1\\
1 & 0
\end{pmatrix}=E-V\omega_{\pq}(1).
\]
For $q=2$ we have 
\begin{align*}
\polyc(\theta;E) & =\det\begin{pmatrix}E-V\ohrat(1) & -\left(1+e^{-i\theta}\right)\\
-\left(1+e^{i\theta}\right) & E-V\ohrat(2)
\end{pmatrix}\\
 & =\left(E-V\omega_{\pq}(2)\right)\left(E-V\omega_{\pq}(1)\right)-2-2\cos(\theta),
\end{align*}
and 
\begin{align*}
\tc(E,V) & =\tr\left(A_{\pq}(2)(E,V)\cdot A_{\pq}(1)(E,V)\right)\\
 & =\tr\left[\begin{pmatrix}E-V\omega_{\pq}(2) & -1\\
1 & 0
\end{pmatrix}\begin{pmatrix}E-V\omega_{\pq}(1) & -1\\
1 & 0
\end{pmatrix}\right]\\
 & =\tr\left[\begin{pmatrix}\left(E-V\omega_{\pq}(2)\right)\left(E-V\omega_{\pq}(1)\right)-1 & -E+V\omega_{\pq}(2)\\
E-V\omega_{\pq}(1) & -1
\end{pmatrix}\right]\\
 & =\left(E-V\omega_{\pq}(2)\right)\left(E-V\omega_{\pq}(1)\right)-2.
\end{align*}
\end{proof}

\section{Dilated Chebychev polynomials of the second kind \protect\label{sec:=000020Cheby=000020poly}}

In this appendix we collect proofs to the statements and identities
around the dilated Chebychev polynomials of second kind. Recall that
we defined these polynomials recursively by setting 
\[
S{-1}(x):=0,\quad S_{0}(x):=1\quad\text{and }\quad S_{n}(x):=xS_{n-1}(x)-S_{n-2}(x)\text{ for all }n\in\N.
\]
We also remind the reader that the classical Chebychev polynomials
of second kind can be defined using the recursion formula 
\[
U_{-1}(x):=0,\quad U_{0}(x):=1\quad\text{and }\quad U_{n}(x):=2xU_{n-1}(x)-U_{n-2}(x)\text{ for all }n\in\N.
\]

\begin{lem}
For all $n\in\Nmo$ and all $x\in\R$ we have $S_{n}(2x)=U_{n}(x)$.
\end{lem}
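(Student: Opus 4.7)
The plan is to prove this by straightforward strong induction on $n \in \N_{-1}$, using that both polynomial families are defined by three-term recursions whose coefficients match after the substitution $x \mapsto 2x$.

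First I would verify the two base cases. For $n = -1$ we have $S_{-1}(2x) = 0 = U_{-1}(x)$, and for $n = 0$ we have $S_0(2x) = 1 = U_0(x)$, directly from the defining initial conditions of both recursions.

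For the inductive step, I would fix $n \in \N$ and assume that $S_{n-1}(2y) = U_{n-1}(y)$ and $S_{n-2}(2y) = U_{n-2}(y)$ hold for all $y \in \R$. Then, substituting $2x$ into the recursion for $S_n$ and applying the induction hypothesis,
\[
S_n(2x) = (2x)\,S_{n-1}(2x) - S_{n-2}(2x) = 2x\,U_{n-1}(x) - U_{n-2}(x) = U_n(x),
\]
where the last equality is the defining recursion for $U_n$. This closes the induction.

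There is no essential obstacle here; the only thing to be careful about is to include $n = -1$ as part of the base of the induction (rather than only $n = 0$), since the recursion at $n = 1$ requires both $S_0$ and $S_{-1}$ to already agree with $U_0$ and $U_{-1}$, respectively.
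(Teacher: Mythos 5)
Your proof is correct and matches the paper's argument exactly: both prove the identity by strong induction over $n \in \Nmo$, verify the two base cases $n=-1$ and $n=0$ from the initial conditions, and close the induction by substituting $2x$ into the defining recursion and invoking the two preceding cases.
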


\begin{proof}
We perform a proof by induction over $n\in\Nmo$. For $n=-1$ and
$n=0$ the statement follows directly from the definition. Therefore
let $n\in\N$ and assume $S_{n-1}(2x)=U_{n-1}(x)$ and $S_{n-2}(2x)=U_{n-2}(x)$
for all $x\in\R$. Then we get 
\begin{align*}
S_{n}(2x)=2xS_{n-1}(2x)-S_{n-2}(2x)=2xU_{n-1}(x)-U_{n-2}(x)=U_{n}(x).
\end{align*}
\end{proof}
\begin{lem}
\label{Lem-Chebyshev-Traces} Let $x\in\mathbb{R}$ and $n\in\N_{0}$.
Then the following holds.
\begin{enumerate}
\item \label{enu:Chebyshev-Traces_invariant}We have $S_{n+1}(x)S_{n-1}(x)-S_{n}(x)^{2}=-1$.
\item \label{enu:Chebyshev-Traces_x_=00003D2}If $|x|=2$, then $\sgn(x)^{n-1}S_{n-1}(x)=n$.
\item If $|x|\geq2$, then $2|S_{n}(x)|-|S_{n-1}|\geq0$.
\item \label{enu:Chebyshev-Traces_x_geq_2}If $|x|\geq2$, then $\sgn(x)^{n}S_{n}(x)=|S_{n}(x)|$
and 
\[
\sgn(x)^{n}xS_{n-1}(x)\geq2\big|S_{n-1}(x)\big|.
\]
\item \label{enu:Chebyshev-Traces_x_geq_2-sign}If $|x|\geq2$, then 
\[
\sgn(x)^{n}\big(S_{n}(x)-\frac{x}{2}S_{n-1}(x)\big)\geq1.
\]
\item \label{enu:Chebyshev-Traces_x_geq_2-Sn(x)_geq_1}If $|x|\geq2$\textup{,
then $|S_{n}(x)|\geq1$.}
\item \label{enu:Chebyshev-Traces_x>2}If $|x|>2$ and $n\geq1$, then 
\[
\sgn(x)^{n}\big(S_{n}(x)-\frac{x}{2}S_{n-1}(x)\big)>1.
\]
\end{enumerate}
\end{lem}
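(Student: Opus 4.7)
All seven parts of Lemma~\ref{Lem-Chebyshev-Traces} can be established via induction on $n$ using only the three-term recursion $S_n(x) = xS_{n-1}(x) - S_{n-2}(x)$ and the initial values $S_{-1} = 0$, $S_0 = 1$. Parts (1) and (2) are self-contained warm-ups: for (1) one verifies the base case $S_1 S_{-1} - S_0^2 = -1$ and shows inductively that $S_{n+1}S_{n-1} - S_n^2 = S_n S_{n-2} - S_{n-1}^2$ by applying the recursion to $S_{n+1}$ (this is already the argument used in the proof of Lemma~\ref{lem:ExtChebyForm2}); for (2) one proves $S_n(2) = n+1$ by induction and then invokes the parity $S_n(-x) = (-1)^n S_n(x)$ to get $S_n(-2) = (-1)^n(n+1)$, so that $\sgn(x)^{n-1}S_{n-1}(x) = n$ in either sign case.

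\textbf{The core: parts (3) and (4).} I would prove (3) and (4) simultaneously by induction on $n$, strengthening the statement to the pair
\[
  (\star)\ \sgn(x)^n S_n(x) = |S_n(x)|, \qquad (\star\star)\ |S_n(x)| \geq |S_{n-1}(x)|,
\]
for $|x| \geq 2$ and $n \geq 0$. The base case is immediate. In the induction step, the recursion together with $\sgn(x)\cdot x = |x|$ and $\sgn(x)^2 = 1$ yields
\[
  \sgn(x)^n S_n(x) = |x|\,|S_{n-1}(x)| - |S_{n-2}(x)|.
\]
The assumption $|x| \geq 2$ and $(\star\star)$ at level $n-1$ give $\sgn(x)^n S_n(x) \geq 2|S_{n-1}| - |S_{n-2}| \geq |S_{n-1}(x)| \geq 0$, which simultaneously establishes $(\star)$ at level $n$ and $(\star\star)$ at level $n$. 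The second inequality of (4), $\sgn(x)^n x S_{n-1}(x) \geq 2|S_{n-1}(x)|$, is then the identity $\sgn(x)^n x S_{n-1} = |x|\,|S_{n-1}|$ combined with $|x| \geq 2$, and (3) is exactly the rewrite of $(\star\star)$ via the recursion.

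\textbf{Parts (5), (6), (7).} The key algebraic observation is that the recursion gives
\[
  S_n(x) - \tfrac{x}{2}S_{n-1}(x) = \tfrac{x}{2}S_{n-1}(x) - S_{n-2}(x),
\]
and applying the sign identities from (4) to both sides yields
\[
  \sgn(x)^n\bigl(S_n(x) - \tfrac{x}{2}S_{n-1}(x)\bigr) = \tfrac{|x|}{2}|S_{n-1}(x)| - |S_{n-2}(x)| = \tfrac{1}{2}\bigl(|S_n(x)| - |S_{n-2}(x)|\bigr),
\]
where the second equality uses $|S_n| = |x|\,|S_{n-1}| - |S_{n-2}|$ from part (4). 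Hence (5) reduces to proving the gap estimate $|S_n(x)| - |S_{n-2}(x)| \geq 2$ for $|x| \geq 2$. I would first establish the sharper single-step bound $|S_n(x)| - |S_{n-1}(x)| \geq 1$ by induction (the step reads $(|x|-1)|S_{n-1}| - |S_{n-2}| \geq |S_{n-1}| - |S_{n-2}| \geq 1$), then telescope. Part (6) is a direct consequence of (5) together with $\sgn(x)^n\tfrac{x}{2}S_{n-1}(x) = \tfrac{|x|}{2}|S_{n-1}(x)| \geq 0$. For (7), when $|x| > 2$ and $n \geq 1$, the same telescoping argument is promoted to a strict inequality: $(|x|-1)|S_{n-1}| - |S_{n-2}| > |S_{n-1}| - |S_{n-2}|$ holds strictly since $|S_{n-1}(x)| \geq 1$ by (6), yielding $|S_n| - |S_{n-2}| > 2$.

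\textbf{Expected obstacle.} The only delicate point is the strictness in (7) at the edge case $n = 1$, where $S_{n-2} = S_{-1} = 0$ and the telescoping argument formally requires two steps. This is handled directly: for $n = 1$ one has $|S_1(x)| - |S_{-1}(x)| = |x| > 2$ by hypothesis, so the identity above gives $\sgn(x)(S_1 - \tfrac{x}{2}S_0) = \tfrac{|x|}{2} > 1$. Apart from this low-index bookkeeping, the challenge is merely to carefully nest the two induction loops so that the bounds at $|x| = 2$ remain tight (giving exactly $|S_n(2)| - |S_{n-1}(2)| = 1$) while leaving room for strict inequality to propagate when $|x| > 2$.
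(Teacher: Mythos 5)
Your proposal is correct and follows the same broad strategy as the paper (induction on $n$ anchored on the three-term recursion and the sign identity $\sgn(x)^n S_n(x)=|S_n(x)|$), but it diverges in two places worth noting. For parts (3)--(4) you run a coupled induction on the stronger invariant $|S_n(x)|\geq|S_{n-1}(x)|$ rather than the literal $2|S_n|-|S_{n-1}|\geq0$, while the paper proves (3) and (4) in two separate inductions; your version is cleaner and the monotonicity is what is actually used downstream. For parts (5)--(7) the paper inducts directly on the target quantity $\sgn(x)^n\bigl(S_n-\tfrac{x}{2}S_{n-1}\bigr)$ using the decomposition $S_{n+1}-\tfrac{x}{2}S_n=\tfrac{x}{2}\bigl(S_n-\tfrac{x}{2}S_{n-1}\bigr)+\bigl(\tfrac{x^2}{4}-1\bigr)S_{n-1}$ and shows the expression is nondecreasing in $n$; you instead rewrite it as $\tfrac{1}{2}\bigl(|S_n(x)|-|S_{n-2}(x)|\bigr)$ and reduce to the single-step gap estimate $|S_n|\geq|S_{n-1}|+1$. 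Both routes are valid and of comparable length; your decomposition makes the strictness in part (7) for $|x|>2$ immediate from the factor $(|x|-1)>1$ acting on $|S_{n-1}|\geq1$, whereas the paper must carry strictness through the recursive chain. One small bookkeeping caveat: your coupled induction for $(\star)$ and $(\star\star)$ invokes the hypothesis at both levels $n-1$ and $n-2$, so two base cases are needed; the $n=-1$ case is trivial since $S_{-1}\equiv0$, but it should be stated. Also, your parenthetical "(3) is exactly the rewrite of $(\star\star)$ via the recursion" is a slight overstatement: $(\star\star)$ simply implies (3) directly, no recursion needed.
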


\begin{proof}
We prove each statement by an induction over $n$.
\begin{enumerate}
\item For $n=0$ and $n=1$, observe 
\[
S_{1}(x)S_{-1}(x)-S_{0}(x)^{2}=-1^{2}=-1,
\]
and 
\[
S_{2}(x)S_{0}(x)-S_{1}(x)^{2}=(x^{2}-1)-x^{2}=-1.
\]
Suppose the statement is true for $n\in\N$ and $n-1$, then 
\begin{align*}
S_{n+1}S_{n-1}-S_{n}^{2} & =\left(xS_{n}-S_{n-1}\right)S_{n-1}-S_{n}^{2}\\
 & =S_{n}\underbrace{\left(xS_{n-1}-S_{n}\right)}_{=S_{n-2}}-S_{n-1}^{2}\\
 & =S_{n}S_{n-2}-S_{n}=-1
\end{align*}
follows.
\item Let $|x|=2$. For $n=0$ and $n=1$, observe in these cases 
\[
\sgn(x)^{-1}S_{-1}(x)=0\quad\textrm{and}\quad\sgn(x)^{0}S_{0}(x)=1.
\]
Suppose the statement is true for $n\in\N$ and $n-1$, then 
\begin{align*}
\sgn(x)^{n+1}S_{n+1} & =\sgn(x)^{n+1}\left(xS_{n}-S_{n-1}\right)\\
 & =|x|(n+1)-n=2(n+1)-n=(n+1)+1.
\end{align*}
\item Let $|x|\geq2$. If $n=0$, then $2|S_{n}|-|S_{n-1}|=2-0\geq0$. Suppose
the statement is true for $n\in\N_{0}$. Then 
\begin{align*}
2|S_{n+1}|-|S_{n}|=2|xS_{n}-S_{n-1}|-|S_{n}|\geq & 2|x|\cdot|S_{n}|-|S_{n-1}|-|S_{n}|\\
\geq & 4|S_{n}|-|S_{n-1}|-|S_{n}|\\
\geq & 2|S_{n}|-|S_{n-1}|\geq0
\end{align*}
by induction hypothesis.
\item Again, let $|x|\geq2$ an consider $n=0$ and $n=1$ for the induction
base. Then 
\[
\sgn(x)^{0}S_{0}(x)=1=|S_{0}(x)|\quad\textrm{and}\quad\sgn(x)^{1}S_{1}(x)=\sgn(x)\cdot x=|x|=|S_{1}(x)|.
\]
Suppose it holds for $n\in\N$. Then 
\begin{align*}
\sgn(x)^{n+1}S_{n+1}(x) & =\sgn(x)^{n+1}\left(xS_{n}-S_{n-1}\right)\\
 & =|x|\cdot|S_{n}|-|S_{n-1}|\geq2|S_{n}|-|S_{n-1}|\geq0,
\end{align*}
where the last follows by the previous induction. Hence, $\sgn(x)^{n+1}S_{n+1}(x)=|S_{n+1}(x)|$
follows proving the first part of (\ref{enu:Chebyshev-Traces_x_geq_2}).
Moreover, this and $|x|\geq2$ lead to 
\[
\sgn(x)^{n}xS_{n-1}(x)=|x|\big|S_{n-1}(x)\big|\geq2\big|S_{n-1}(x)\big|
\]
proving the second part of (\ref{enu:Chebyshev-Traces_x_geq_2}).
\item Let $|x|\geq2$ and suppose $n=0$ for the induction base. Then 
\[
\sgn(x)^{0}\left(S_{0}(x)-\frac{x}{2}S_{-1}(x)\right)=1.
\]
Suppose the statement is true for $n\in\N_{0}$. Then 
\begin{align*}
 & \sgn(x)^{n+1}\big(S_{n+1}(x)-\frac{x}{2}S_{n}(x)\big)\\
= & \underbrace{\sgn(x)\frac{x}{2}}_{\geq1\textrm{ if }|x|\geq2}\sgn(x)^{n}\big(S_{n}(x)-\frac{x}{2}S_{n-1}(x)\big)+\underbrace{\left(\frac{x^{2}}{4}-1\right)}_{\geq0}\underbrace{\sgn(x)^{n-1}S_{n-1}(x)\big)}_{\geq0\textrm{ by (b)}}\\
\geq & \sgn(x)^{n}\big(S_{n}(x)-\frac{x}{2}S_{n-1}(x)\big)
\end{align*}
follows. Thus, the induction hypothesis implies the desired claim.
\item Let $|x|\geq2$ and $n\in\N_{0}$. Then (\ref{enu:Chebyshev-Traces_x_geq_2})
and (\ref{enu:Chebyshev-Traces_x_geq_2-sign}) imply
\[
|S_{n}(x)|=\sgn(x)^{n}S_{n}(x)\geq1+\sgn(x)^{n}\frac{x}{2}S_{n-1}(x)\geq1.
\]
\item Let $|x|>2$. If $n=1$, then 
\[
\sgn(x)^{1}\left(S_{1}(x)-\frac{x}{2}S_{0}(x)\right)=\sgn(x)\big(x-\frac{x}{2}\big)=\frac{|x|}{2}>1
\]
follows. A similar computation as in (\ref{enu:Chebyshev-Traces_x_geq_2-sign})
leads to 
\[
\sgn(x)^{n+1}\big(S_{n+1}(x)-\frac{x}{2}S_{n}(x)\big)\geq\sgn(x)^{n}\big(S_{n}(x)-\frac{x}{2}S_{n-1}(x)\big)>1,
\]
where the last estimate follows by the induction hypothesis.
\end{enumerate}
\end{proof}
\begin{lem}
\label{lem:=000020Cheby=000020Poly=000020-=000020explicit=000020expression}[{\cite[(18.5.2)]{DigLibMathFunc}}]
For all $n\in\N$ and all $\theta\in\R$ we have 
\[
S_{n}(2\cos\theta)=U_{n}(\cos\theta)=\frac{\sin(n+1)\theta}{\sin\theta}.
\]
\end{lem}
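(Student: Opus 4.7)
The plan is to prove this identity by induction on $n$, after first noting that the equality $S_n(2\cos\theta) = U_n(\cos\theta)$ is an immediate substitution $x = \cos\theta$ in the previously established relation $S_n(2x) = U_n(x)$. So the real content is the trigonometric formula for the classical Chebyshev polynomials of the second kind, $U_n(\cos\theta) = \frac{\sin((n+1)\theta)}{\sin\theta}$.

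For the induction base I would check $n = -1$ and $n = 0$ separately: in the first case $U_{-1}(\cos\theta) = 0 = \frac{\sin(0\cdot\theta)}{\sin\theta}$, and in the second case $U_0(\cos\theta) = 1 = \frac{\sin\theta}{\sin\theta}$, valid for $\theta \in \R$ with $\sin\theta \neq 0$ (the statement should be understood as an identity of polynomials in $\cos\theta$, or equivalently on the open set where $\sin\theta \neq 0$, with the equality extending by continuity).

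For the inductive step, assume the formula holds at indices $n-1$ and $n-2$. Apply the defining recursion $U_n(x) = 2x U_{n-1}(x) - U_{n-2}(x)$ with $x = \cos\theta$:
\[
U_n(\cos\theta) \;=\; 2\cos\theta \cdot \frac{\sin(n\theta)}{\sin\theta} \;-\; \frac{\sin((n-1)\theta)}{\sin\theta}.
\]
Then I would invoke the product-to-sum identity $2\cos\theta \sin(n\theta) = \sin((n+1)\theta) + \sin((n-1)\theta)$, which is a direct consequence of the addition formulas for $\sin$. Substituting this into the numerator collapses the expression to $\frac{\sin((n+1)\theta)}{\sin\theta}$, which closes the induction.

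There is no substantive obstacle here: the argument is the standard derivation of the closed-form expression for $U_n$ and relies only on the elementary trigonometric identity $2\cos\theta\sin(n\theta) = \sin((n+1)\theta) + \sin((n-1)\theta)$ together with the two recursions already set up in the paper. The only mild subtlety is the treatment of the points where $\sin\theta = 0$; there both sides are polynomials in $\cos\theta$, so the identity of rational expressions in $\theta$ extends to an identity of polynomials in $\cos\theta$ by continuity (or by noting that both sides agree on the dense open set where $\sin\theta \neq 0$).
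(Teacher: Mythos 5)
The paper does not prove this lemma at all: it is stated with only a citation to DLMF (18.5.2), and no argument is supplied. Your proposed proof is correct and is the standard derivation — the reduction $S_n(2\cos\theta) = U_n(\cos\theta)$ via the earlier lemma $S_n(2x) = U_n(x)$, followed by strong induction on $n$ using the product-to-sum identity $2\cos\theta\sin(n\theta) = \sin((n+1)\theta) + \sin((n-1)\theta)$, with base cases $n = -1$ and $n = 0$ and the standard remark that the right-hand side is interpreted as a polynomial in $\cos\theta$ (equivalently, extended by continuity where $\sin\theta = 0$). There is no gap; the only thing to note is that since the paper leaves this to a reference, your write-up supplies a proof the paper chose to omit rather than mirroring an existing one.
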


\bibliographystyle{alpha}
\bibliography{references_BBL}

\begin{thebibliography}{DEGT08}

\bibitem[BBG92]{BelBovGhe92}
J.~Bellissard, A.~Bovier, and J.-M. Ghez.
\newblock Gap labelling theorems for one-dimensional discrete {S}chr\"odinger
  operators.
\newblock {\em Rev. Math. Phys.}, 4(1):1--37, 1992.

\bibitem[BBL23]{BaBeLo23-MFO}
R.~Band, S.~Beckus, and R.~Loewy.
\newblock {W}orkshop: {A}spects of {A}periodic {O}rder.
\newblock {\em Oberwolfach Rep.}, 2023.
\newblock extended version in arXiv:2309.04351.

\bibitem[BBL24]{BanBecLoe_24}
R.~Band, S.~Beckus, and R.~Loewy.
\newblock {The Dry Ten Martini Problem for Sturmian Hamiltonians}.
\newblock {\em arXiv:2402.16703}, 2024.

\bibitem[Bel92]{Bell92-Gap}
J.~Bellissard.
\newblock Gap {L}abelling {T}heorems for {S}chr\"odinger operators.
\newblock In J.~M.~Luck M.~Waldschmidt, P.~Moussa and C.~Itzykson, editors,
  {\em From number theory to physics ({L}es {H}ouches, 1989)}, pages 538--630.
  Springer, Berlin, 1992.

\bibitem[Ber07]{Berstel_Sturmian_survey}
J.~Berstel.
\newblock Sturmian and episturmian words (a survey of some recent results).
\newblock In {\em Algebraic informatics}, volume 4728 of {\em Lecture Notes in
  Comput. Sci.}, pages 23--47. Springer, Berlin, 2007.

\bibitem[BG13]{BaaGri_book1}
M.~Baake and U.~Grimm.
\newblock {\em Aperiodic {O}rder. {V}ol. 1: A Mathematical Invitation}.
\newblock Cambridge University Press, Cambridge, 2013.

\bibitem[BG17]{BaaGri_book2}
M.~Baake and U.~Grimm, editors.
\newblock {\em Aperiodic {O}rder. {V}ol. 2: Crystallography and Almost
  Periodicity}.
\newblock Cambridge University Press, Cambridge, 2017.

\bibitem[BIST89]{BIST89}
J.~Bellissard, B.~Iochum, E.~Scoppola, and D.~Testard.
\newblock Spectral properties of one-dimensional quasi-crystals.
\newblock {\em Comm. Math. Phys.}, 125(3):527--543, 1989.

\bibitem[BIT91]{BIT91}
J.~Bellissard, B.~Iochum, and D.~Testard.
\newblock Continuity properties of the electronic spectrum of {$1$}{D}
  quasicrystals.
\newblock {\em Comm. Math. Phys.}, 141(2):353--380, 1991.

\bibitem[Cas86]{Casdagli1986}
M.~Casdagli.
\newblock Symbolic dynamics for the renormalization map of a quasiperiodic
  {S}chr\"{o}dinger equation.
\newblock {\em Comm. Math. Phys.}, 107(2):295--318, 1986.

\bibitem[CQ23]{CaoQu_arXiv23}
J.~Cao and Y.~Qu.
\newblock Almost sure dimensional properties for the spectrum and the density
  of states of {S}turmian {H}amiltonians.
\newblock {\em arXiv:2310.07305}, 2023.

\bibitem[Dam07]{Dam07_survey}
D.~Damanik.
\newblock Strictly ergodic subshifts and associated operators.
\newblock In F.~Gesztesy, P.~Deift, C.~Galvez, P.~Perry, and W.~Schlag,
  editors, {\em Spectral theory and mathematical physics: a {F}estschrift in
  honor of {B}arry {S}imon's 60th birthday}, volume~76 of {\em Proc. Sympos.
  Pure Math.}, pages 505--538. Amer. Math. Soc., Providence, RI, 2007.

\bibitem[Dam17]{Dam17-Survey}
D.~Damanik.
\newblock Schr\"odinger operators with dynamically defined potentials.
\newblock {\em Ergodic Theory Dynam. Systems}, 37(6):1681--1764, 2017.

\bibitem[DEG15]{DaEmGo15-survey}
D.~Damanik, M.~Embree, and A.~Gorodetski.
\newblock Spectral properties of {S}chr{\"o}\-dinger operators arising in the
  study of quasicrystals.
\newblock In J.~Kellendonk, D.~Lenz, and J.~Savinien, editors, {\em Mathematics
  of aperiodic order}, volume 309, pages 307--370. Birkh{\"a}user Springer,
  Basel, 2015.

\bibitem[DEGT08]{Damanik2008}
D.~Damanik, M.~Embree, A.~Gorodetski, and S.~Tcheremchantsev.
\newblock The fractal dimension of the spectrum of the {F}ibonacci
  {H}amiltonian.
\newblock {\em Comm. Math. Phys.}, 280(2):499--516, 2008.

\bibitem[DF22]{DaFi22-book_1}
D.~Damanik and J.~Fillman.
\newblock {\em One-dimensional ergodic {S}chr\"{o}dinger operators---{I}.
  {G}eneral theory}, volume 221.
\newblock American Mathematical Society, Providence, RI, 2022.

\bibitem[DF23]{DamFill23-GapLabel}
D.~Damanik and J.~Fillman.
\newblock Gap labelling for discrete one-dimensional ergodic {S}chr\"{o}dinger
  operators.
\newblock In M.~Brown, F.~Gesztesy, P.~Kurasov, A.~Laptev, B.~Simon, G.~Stolz,
  and I.~Wood, editors, {\em From complex analysis to operator theory---a
  panorama}, volume 291, pages 341--404. Birkh\"{a}user/Springer, Cham, 2023.

\bibitem[DG11]{DamanikGorodetski2011}
D.~Damanik and A.~Gorodetski.
\newblock Spectral and quantum dynamical properties of the weakly coupled
  {F}ibonacci {H}amiltonian.
\newblock {\em Comm. Math. Phys.}, 305(1):221--277, 2011.

\bibitem[DG15]{Damanik2015}
D.~Damanik and A.~Gorodetski.
\newblock Almost sure frequency independence of the dimension of the spectrum
  of {S}turmian {H}amiltonians.
\newblock {\em Comm. Math. Phys.}, 337(3):1241--1253, 2015.

\bibitem[DGY16]{DaGoYe16}
D.~Damanik, A.~Gorodetski, and W.~Yessen.
\newblock The {F}ibonacci {H}amiltonian.
\newblock {\em Invent. Math.}, 206(3):629--692, 2016.

\bibitem[DKL00]{DamKilLenz00}
D.~Damanik, R.~Killip, and D.~Lenz.
\newblock Uniform spectral properties of one-dimensional quasicrystals. {III}.
  {$\alpha$}-continuity.
\newblock {\em Comm. Math. Phys.}, 212(1):191--204, 2000.

\bibitem[DL99a]{DamLen99-AbsEig}
D.~Damanik and D.~Lenz.
\newblock Uniform spectral properties of one-dimensional quasicrystals. {I}.
  {A}bsence of eigenvalues.
\newblock {\em Comm. Math. Phys.}, 207(3):687--696, 1999.

\bibitem[DL99b]{DamLen99_II}
D.~Damanik and D.~Lenz.
\newblock Uniform spectral properties of one-dimensional quasicrystals. {II}.
  {T}he {L}yapunov exponent.
\newblock {\em Lett. Math. Phys.}, 50(4):245--257, 1999.

\bibitem[{\relax DLMF}]{DigLibMathFunc}
{\it NIST Digital Library of Mathematical Functions}.
\newblock \url{https://dlmf.nist.gov/}, Release 1.1.12 of 2023-12-15.
\newblock F.~W.~J. Olver, A.~B. {Olde Daalhuis}, D.~W. Lozier, B.~I. Schneider,
  R.~F. Boisvert, C.~W. Clark, B.~R. Miller, B.~V. Saunders, H.~S. Cohl, and
  M.~A. McClain, eds.

\bibitem[DT07]{Damanik_Tcheremchantsev07}
D.~Damanik and S.~Tcheremchantsev.
\newblock Upper bounds in quantum dynamics.
\newblock {\em J. Amer. Math. Soc.}, 20(3):799--827, 2007.

\bibitem[Hof93]{Hof93}
A.~Hof.
\newblock Some remarks on discrete aperiodic {S}chr\"odinger operators.
\newblock {\em J. Statist. Phys.}, 72(5-6):1353--1374, 1993.

\bibitem[Jit07]{Jit07}
S.~Jitomirskaya.
\newblock Ergodic {S}chr\"odinger operators (on one foot).
\newblock In F.~Gesztesy, P.~Deift, C.~Galvez, P.~Perry, and W.~Schlag,
  editors, {\em Spectral theory and mathematical physics: a {F}estschrift in
  honor of {B}arry {S}imon's 60th birthday}, volume 76, Part 2 of {\em Proc.
  Sympos. Pure Math.}, pages 613--647. Amer. Math. Soc., Providence, RI, 2007.

\bibitem[Khi64]{Khinchin_book64}
A.~Ya. Khinchin.
\newblock {\em Continued fractions}.
\newblock University of Chicago Press, Chicago, Ill.-London, 1964.

\bibitem[KKL03]{KiKiLa03}
R.~Killip, A.~Kiselev, and Y.~Last.
\newblock Dynamical upper bounds on wavepacket spreading.
\newblock {\em Amer. J. Math.}, 125(5):1165--1198, 2003.

\bibitem[KKT83]{KohKadTan_prl83}
M.~Kohmoto, L.~P. Kadanoff, and C.~Tang.
\newblock Localization problem in one dimension: Mapping and escape.
\newblock {\em Phys. Rev. Lett.}, 50:1870--1872, Jun 1983.

\bibitem[Len02]{Len02}
D.~Lenz.
\newblock Singular spectrum of {L}ebesgue measure zero for one-dimensional
  quasicrystals.
\newblock {\em Comm. Math. Phys.}, 227(1):119--130, 2002.

\bibitem[Lot02]{Lothaire2002}
M.~Lothaire.
\newblock {\em Algebraic combinatorics on words}, volume~90.
\newblock Cambridge University Press, Cambridge, 2002.

\bibitem[LP86]{LuckPetritis86}
J.~M. Luck and D.~Petritis.
\newblock Phonon spectra in one-dimensional quasicrystals.
\newblock {\em Journal of Statistical Physics}, 42(3):289--310, 1986.

\bibitem[LQW14]{Liu2014}
Q.-H. Liu, Y.-H. Qu, and Z.-Y. Wen.
\newblock The fractal dimensions of the spectrum of {S}turm {H}amiltonian.
\newblock {\em Adv. Math.}, 257:285--336, 2014.

\bibitem[LW04]{LiuWen04}
Q.-H. Liu and Z.-Y. Wen.
\newblock Hausdorff dimension of spectrum of one-dimensional {S}chr\"odinger
  operator with {S}turmian potentials.
\newblock {\em Potential Anal.}, 20(1):33--59, 2004.

\bibitem[Mei14]{Mei14}
M.~Mei.
\newblock Spectra of discrete {S}chr\"{o}dinger operators with primitive
  invertible substitution potentials.
\newblock {\em J. Math. Phys.}, 55(8):082701, 22, 2014.

\bibitem[OK85]{OstKim_phys85}
S.~Ostlund and S.-H. Kim.
\newblock Renormalization of quasiperiodic mappings.
\newblock {\em Phys. Scripta}, T9:193--198, 1985.

\bibitem[Ray]{Raym-AperiodicOrder}
L.~Raymond.
\newblock Constructive gap labelling for one-dimensional {S}chr\"{o}dinger
  operators.
\newblock to appear in: Aperiodic Order, Vol. 4: Schr\"odinger Operators, eds.
  M. Baake, D. Damanik, and N. Ma\~{n}ibo, Cambridge University Press, in
  preparation.

\bibitem[Ray95a]{Raym95}
L.~Raymond.
\newblock A constructive gap labelling for the discrete {S}chr\"odinger
  operator on a quasiperiodic chain, 1995.
\newblock preprint.

\bibitem[Ray95b]{Raym95-thesis}
L.~Raymond.
\newblock {\em {E}tude alg\'{e}brique de milieux quasip\'{e}riodiques}.
\newblock PhD thesis, Aix-Marseille {I}, 1995.

\bibitem[Ray11]{MFO2011_Raym}
L.~Raymond.
\newblock Scaling properties of {S}turmian potential based {S}chr\"{o}dinger
  operator: some useful tools.
\newblock {\em Oberwolfach {R}ep.}, 8(1):142--167, 2011.

\bibitem[Sim82]{Sim82-review}
B.~Simon.
\newblock Almost periodic {S}chr\"odinger operators: a review.
\newblock {\em Adv. in Appl. Math.}, 3(4):463--490, 1982.

\bibitem[Sim11]{Simon2011}
B.~Simon.
\newblock {\em Szego's Theorem and Its Descendants: Spectral Theory for
  L2Perturbations of Orthogonal Polynomials}.
\newblock Princeton University Press, 2011.

\bibitem[S{\"u}t87]{Sut87}
A.~S{\"u}t{\H{o}}.
\newblock The spectrum of a quasiperiodic {S}chr\"odinger operator.
\newblock {\em Comm. Math. Phys.}, 111(3):409--415, 1987.

\bibitem[S{\"u}t89]{Sut89}
A.~S{\"u}t{\H{o}}.
\newblock Singular continuous spectrum on a {C}antor set of zero {L}ebesgue
  measure for the {F}ibonacci {H}amiltonian.
\newblock {\em J. Statist. Phys.}, 56(3-4):525--531, 1989.

\bibitem[Tes00]{Tes00}
G.~Teschl.
\newblock {\em Jacobi operators and completely integrable nonlinear lattices},
  volume~72.
\newblock American Mathematical Society, Providence, RI, 2000.

\end{thebibliography}

\end{document}